\newcommand{\weight}{\mathbf{w}}
\newcommand{\eq}{\mathsf{Eq}}
\newcommand*{\EE}{\mathcal{E}}
\newcommand*{\GG}{\mathcal{G}}
\newcommand*{\MM}{\mathcal{M}}
\newcommand*{\NN}{\mathcal{N}}
\theoremstyle{plain}
\newtheorem{theorem}{Theorem}[section]
\newtheorem{conjecture}[theorem]{Conjecture}
\newtheorem*{conjecture*}{Conjecture}
\newtheorem{proposition}[theorem]{Proposition}
\newtheorem{lemma}[theorem]{Lemma}
\newtheorem{corollary}[theorem]{Corollary}
\theoremstyle{definition}
\newtheorem{definition}[theorem]{Definition}
\newtheorem{example}[theorem]{Example}
\newtheorem{remark}[theorem]{Remark}
\newcommand{\cclass}[1]{\ensuremath{\mbox{\textup{#1}}}\xspace}
\newcommand{\NP}{\cclass{NP}}
\newcommand{\FPT}{\cclass{FPT}}
\newcommand{\Weft}{{\cclass{W}}}
\newcommand{\W}[1]{{\Weft}{{[#1]}}}
\newcommand{\N}{\ensuremath{\mathbb N}\xspace}
\newcommand{\Q}{\ensuremath{\mathbb Q}\xspace}
\newcommand{\Z}{\ensuremath{\mathbb Z}\xspace}
\newcommand{\cB}{\ensuremath{\mathcal B}\xspace}
\newcommand{\cC}{\ensuremath{\mathcal C}\xspace}
\newcommand{\cP}{\ensuremath{\mathcal P}\xspace}
\newcommand{\ar}{\ensuremath{ar}}
\newcommand{\all}{\ensuremath{\mathrm{All}}}
\newcommand{\oh}{\ensuremath{\mathrm{OH}}}
\newcommand{\mo}{\ensuremath{\mathrm{Mo}}}
\newcommand{\id}{\ensuremath{\mathrm{Id}}}
\newcommand{\pmin}{\ensuremath{\mathsf{min}}}
\newcommand{\pmax}{\ensuremath{\mathsf{max}}}
\newcommand{\pmedian}{\ensuremath{\mathsf{median}}}
\DeclareMathOperator{\median}{median}
\newcommand{\pcon}{\ensuremath{\mathsf{con}}}
\newcommand{\fcon}{\ensuremath{\mathrm{con}}}
\newcommand{\ppol}{\ensuremath{\mathrm{pPol}}}
\newcommand{\mpol}{\ensuremath{\mathrm{mPol}}}
\newcommand{\tmpol}{\ensuremath{\mathrm{tmPol}}}
\newcommand{\pol}{\ensuremath{\mathrm{Pol}}}
\newcommand{\inv}{\ensuremath{\mathrm{Inv}}}
\newcommand{\pcclone}[1]{\ensuremath{\langle #1 \rangle_{\not \exists}}}
\newcommand{\ucsp}{\textsc{Unrestricted-CSP}\xspace}
\newcommand{\ohcsp}{\textsc{One-Hot-CSP}\xspace}
\newcommand{\mcsp}{\textsc{Monotone-CSP}\xspace}
\newcommand{\udcsp}{\textsc{udCSP}\xspace}
\newcommand{\cclone}[1]{\ensuremath{\langle #1 \rangle}}
\newcommand{\fgpp}[2]{\ensuremath{\langle #1 \rangle^{#2}_{\mathrm{fgpp}}}}
\newcommand{\efpp}[2]{\ensuremath{\langle #1 \rangle^{#2}_{\mathrm{efpp}}}}
\newcommand{\img}[2]{\ensuremath{\mathcal{H}(#1, #2)}}
\newcommand{\Nat}{\mathbb{N}}
\DeclareMathOperator{\Pol}{Pol}
\DeclareMathOperator{\Inv}{Inv}
\newcommand{\Poly}{\cclass{P}}
\newcommand{\numdom}[1]{\ensuremath{\mathtt{D}_{#1}}}
\newcommand{\problemDef}[3]
{%
    
    \begin{tcolorbox}[arc=0.1mm,boxsep=-0.6mm,left=1.9mm,right=1.9mm,bottom=1.4mm,top=1.4mm,adjusted title={\strut \sc#1},colback=white!5]

    \noindent\textbf{Instance:} #2
    
    \noindent\textbf{Question:} #3
    \end{tcolorbox}
}
\newcommand{\problemDefP}[4]
{%
    
    \begin{tcolorbox}[arc=0.1mm,boxsep=-0.6mm,left=1.9mm,right=1.9mm,bottom=1.4mm,top=1.4mm,adjusted title={\strut \sc#1},colback=white!5]

    \noindent\textbf{Instance:} #2
  
    \noindent\textbf{Parameter:} #3
    
    \noindent\textbf{Question:} #4
    \end{tcolorbox}
}
\newcommand{\thegraph}{assignment\xspace}
\begin{document}
\title{Going Beyond Twin-width? CSPs with \\Unbounded Domain and Few Variables}
 \author[1]{Peter Jonsson}
 \affil[1]{Link\"oping University, Sweden}
 \author[1]{Victor Lagerkvist}
 \author[1]{Jorke M.\ de Vlas}
  \author[2]{Magnus Wahlstr\"om}
 \affil[2]{Royal Holloway, University of London, UK}

\date{}
\maketitle

\pagenumbering{roman}

\begin{abstract}    
We study a model of constraint satisfaction problems geared
  towards instances with few variables but with domain of unbounded size ($\udcsp$).
  Our model is inspired by recent work on FPT algorithms for MinCSP
  where, surprisingly, both upper and lower bounds on the parameterized
  complexity frequently correspond to solving $k$-variable $\udcsp$s.
  For example, both the FPT algorithms for \textsc{Boolean MinCSP} (Kim et al., SODA 2023) 
  and \textsc{Directed Multicut} with three cut requests (Hatzel et al., SODA 2023)
  rely directly on such a reduction, and the canonical W[1]-hardness construction
  in the area (\textsc{Paired Min Cut} by Marx and Razgon (IPL 2009)) corresponds to a $k$-variable $\udcsp$. 
The $\udcsp$ framework represents constraints with unbounded domains via a collection
of unary maps $\MM$ into a finite-domain base language $\Gamma$ -- this places $\udcsp$ in the {\em terra incognita} between finite-
and infinite-domain CSPs, rendering previously studied algebraic approaches inapplicable.
We develop an algebraic theory for studying the complexity of $\udcsp(\Gamma,\MM)$ based on {\em partial multifunctions}, and present a Galois connection with a natural
definability notion for $\udcsp$. 
We illustrate by graph separation: we can now view problems such as {\sc Paired Min Cut}, {\sc 3-Chain Sat}, and {\sc Coupled Min-Cut} in a general framework that explains their inherent parameterized complexity.
These insights guide us when simplifying a key step in the aforementioned
  FPT algorithm for the \textsc{Boolean MinCSP}: a clutter of 
  branching steps is replaced
  by a clean reduction to $\udcsp$.

In more technical detail, we study three different types of maps --
  unrestricted maps, \emph{one-hot} maps, and \emph{monotone} maps.
  For unrestricted maps, the problem is W[1]-hard for all but trivial cases,
  and for one-hot maps, the characterization coincides with
  Marx' FPT dichotomy for Boolean \textsc{Weighted CSPs} (Computational Complexity 2005).
For the monotone maps $\mo$, the project turns out to be much more challenging. In this setting,
  the algebraic objects boil down to {\em ordered polymorphisms} definable by restricted identities. We identify the \emph{connector} polymorphism
    that characterizes when $\udcsp(\Gamma,\mo)$
    can define all permutations or not, and prove that its absence implies
    W[1]-hardness. 
    When $\Gamma$ is binary, then its presence implies membership in FPT; we prove this
    by using results for
    {\em twin-width} of ordered structures (Twin-Width IV;
    Bonnet et al., JACM 2024).  
  For non-binary languages,
  the twin-width connection breaks down completely.
One rescue idea is to replace twin-width by a more restrictive
width notion better suited for non-binary
structures. We show that the obvious example {\em rank-width}
is not feasible. Another idea
is to investigate width notions for general structures
that specializes to twin-width for binary structures, and we introduce the parameter {\em projected grid-rank} and prove that it coincides with (1) the connector polymorphism and (2) twin-width of binary projections, via a Ramsey-style argument. 
Exploiting projected grid-rank algorithmically is not straightforward, however, and a future resolution of $\udcsp(\Gamma, \mo)$ likely needs a proper generalization of twin-width or a functionally equivalent width-measure to higher-arity relations.
\end{abstract}


\newpage

\tableofcontents


\bigskip









\newpage

\pagenumbering{arabic}

\section{Introduction}

In this paper we study the finite-domain {\em constraint satisfaction problem}, i.e., the problem of deciding whether a set of constraints over a set of variables $V$ with domain $D$ admits at least one solution. We are specifically interested in the scenario when $|V|$ is expected to be small but where $|D|$ can be large. We present the background for studying this problem in Section~\ref{sec:background}, summarize the main results of the paper in Section~\ref{sec:summary}, and discuss applications and related research in Section~\ref{sec:applications}.

\subsection{Background} \label{sec:background}
In classical complexity investigations into CSPs, the most typical
setting is to consider a restricted variant of the CSP problem where a typically finite set of relations (a {\em constraint language}) is used to determine the allowed constraints (CSP$(\Gamma)$). 
There is a rich toolbox for
analysing the complexity of such problems: 
on the relational side there are many closure operators on relations which imitate different forms of gadget reductions: {\em primitive positive definability} (pp-definability)
gives simple gadget reductions based on replacing constraints by something equivalent,
and on a higher level one obtains {\em primitive positive constructions/implementations} (pp-constructions/implementations). 
In terms of gadget metaphors, this corresponds to NP-hardness
reductions ``between problem domains'', e.g., a problem over some
domain $D$ can be reduced into a problem over a domain $D'$ by
mapping variables over $D$ to tuples of variables over $D'$. For example, the classical reduction from \textsc{$k$-coloring} to \textsc{$3$-SAT} can be formulated as a pp-construction of the $k$-clique to the Boolean language where each relation is the set of models of a 3-clause. These notions can also be witnessed by algebraic counterparts and the resulting theory often goes under the name the {\em algebraic approach}~\cite{barto2018}. The basic object corresponding to pp-definability is then that of a {\em polymorphism} which describes permissible ways to combine solutions, and to obtain an analogoue of pp-constructions one considers polymorphisms definable by restricted forms of identities ({\em strong Maltsev conditions}). Via this approach the classical complexity of CSP$(\Gamma)$ has been fully classified depending: it is tractable if and only if $\Gamma$ does not have a polymorphism satisfying a non-trivial identity~\cite{Bulatov:focs2017,Zhuk:jacm2020}.

In a direction that is quite orthogonal to the classical
investigations, in parameterized complexity one encounters
problems that are essentially definable as CSPs where the number of
{\em variables} is a parameter, say $k$, and the domain is some large (but
finite) input-defined size $n$; refer to these as \emph{$k$-variable,
  unbounded-domain CSPs} ($\udcsp$). 
  Certainly, parameterizing by the number of variables has been considered earlier in the literature. For a classical example, integer programming is known to be FPT with respect to the number of variables~\cite{Lenstra:mor83}, a fact which has seen some FPT applications (see, e.g.,~\cite{GavenciakKK22ip}).
In a less formal way (i.e., not directly corresponding to a CSP), one can also view the classical method of \emph{colour coding}~\cite{AlonYZ95} as reducing a problem of finding a $k$-vertex subgraph $H$ in a given graph $G$ to a problem of assigning target ``values'' (vertices in $G$) to $k$ ``variables'' (vertices from $H$) in a way such that the edge adjacencies are preserved.
This applies to the hardness side as well, where the classical W[1]-hard problem \textsc{Multicoloured $k$-Clique} is frequently treated as a $k$-variable 2-CSP with input-defined domain size $n$.
Machine learning in fixed dimension --- here, dimension refers to the number of
input neurons --- is also directly related to finite-variable
problems and its computational and parameterized
complexity is an active area of research~\cite{Chen:etal:focs2021,Froese:Hertrich:neurips2023,Froese:etal:jair2022,Khalife:etal:mp2024}.
Yet another example comes from Bringmann et al.~\cite{Bringmann:etal:jcss2016} 
who study the parameterized complexity of
{\sc Steiner Multicut} and 
introduce the problem {\sc  NAE-Integer-3-SAT} 
as a  handy source of hardness (it has subsequently been used
in several works:
for instance, 
Galby et al.~\cite{Galby:etal:sidma2023} use it for studying the {\sc Metric Dimension} problem, 
Ganian et al.~\cite{Ganian:etal:iclr2025} use it for a problem in machine learning, and
Misra et al.~\cite{Misra:etal:algorithmica2024} for problems in game theory.)
Their W[1]-hardness proof is a
reduction from \textsc{Multicoloured $k$-Clique} and, not surprisingly,
{\sc  NAE-Integer-3-SAT} can be viewed as an
instance of the $\udcsp$ problem (see Example~\ref{ex:firstexample}).

However, more recently, this phenomenon has started occurring in a more systematic manner, especially in graph separation
problems and the study of \textsc{MinCSP}s. This is primarily due to the 
breakthrough tool of \emph{flow augmentation} due to Kim et al.~\cite{KimKPW25fa1},
through which graph separation problems can be reduced to instances where the solution is an \emph{$st$-min cut} in an augmented graph. 
Via Menger's theorem, the set of $st$-min cuts in a graph can in turn be encoded as the set of solutions to a $k$-variable CSP:
Given an $st$-max flow $\cP=\{P_1,\ldots,P_k\}$ in a graph $G$,
there is a CSP with a variable $x_i$ for each path $P_i \in \cP$, 
where the value of $x_i$ encodes where $P_i$ will be cut,
and binary constraints between the variables can enforce that a satisfying assignment corresponds to an $st$-min cut in $G$ (see Section~\ref{sec:applications}).
Effectively, this reduces the search space from $\{0,1\}^n$ (all $st$-cuts)
to $[n]^k$ (solutions to the CSP). 
This perspective turns out to be very fruitful. 
For example, the long-standing open problem \textsc{$\ell$-Chain SAT}~\cite{ChitnisEM17} was solved by Kim et al.~\cite{KimKPW25fa1} directly by this method,
and for a more complex application, the FPT algorithm for \textsc{Directed Multicut} with three cut requests~\cite{doi:10.1137/1.9781611977554.ch123} 
was solved by a combination of methods, including flow augmentation to reduce to a $k$-variable CSP, and the use of \emph{twin-width}~\cite{bonnet2024twin}
together with structural arguments to show that the resulting CSP can be solved in FPT time. 
An even more direct connection is for \textsc{MinCSP}. 
For a constraint language $\Gamma$, the problem \textsc{MinCSP$(\Gamma)$} asks, 
given a CSP with constraints from $\Gamma$ and an integer $k$, whether there is an assignment that satisfies all but at most $k$ constraint in the instance.
By varying the constraint language $\Gamma$, one gets a rich variety of constraint optimization problems, where we in particular are interested in knowing for which languages $\Gamma$ the problem \textsc{MinCSP$(\Gamma)$} is FPT parameterized by $k$. 
For Boolean languages, Kim et al.~\cite{KimKPW23fa3} gave a dichotomy of \textsc{MinCSP$(\Gamma)$} as being FPT or W[1]-hard, and here, both the upper and
lower bounds go via $k$-variable $\udcsp$s.
For the upper bound, the most interesting case is for \emph{bijunctive} languages $\Gamma$, where every relation in $\Gamma$ is defined by a 2-CNF formula. 
The FPT algorithm for tractable languages of this type has three steps:
(1) a standard reduction to a graph cut problem; (2) heavy use of flow-augmentation to reduce to an instance where the solution is an $st$-min cut; and (3) a complex branching algorithm to solve such a min-cut instance. 
Step (3) can be seen as a $\udcsp$; in fact, we argue that the results of this paper lead to a simplified, less ad-hoc solution to this step. 
The lower bounds are primarily reductions from 
\textsc{Paired Min Cut}~\cite{MarxR09}\footnote{Marx and Razgon formulate this problem in logical terms. A more common graph-theoretic formulation can be found in~\cite[Lemma~5.7]{Kim:etal:arxiv2022}.}, which is precisely a simple, W[1]-hard $\udcsp$ problem. 
Similar outlines can be given for many other parameterized \textsc{MinCSP}s -- frequently, what decides tractability is whether a problem can encode a W[1]-hard $k$-variable $\udcsp$ such as \textsc{Paired Min Cut}~\cite{Dabrowski:etal:ipec2023,DabrowskiJOOW23soda,OsipovPW24pointalgebra,OsipovW23equality}.

In this paper we wish to make this perspective explicit and precise,
and we seek to develop the machinery to analyse the parameterized
complexity of problems with few variables but with unbounded domain size. To be able to study this problem in a systematic way we follow the classic approach and first fix a finite constraint language $\Gamma$ (over some fixed finite domain). Such restrictions are necessary since the problem is otherwise trivially W[1]-hard.
However, since the domain is variable and part of the input, constraints cannot merely be given as constraints over $\Gamma$, and to circumvent this we consider constraints of the form $R(m_1(x_1), \ldots, m_r(x_r))$ where $x_1, \ldots, x_r$ are variables (at most $k$ many) and $m_1, \ldots, m_r$ are maps from the input domain $\{0, 1, \ldots, n-1\}$ to the domain of $R$. Thus, even if the base language $\Gamma$ is fixed we can still describe make use of unbounded domains by using maps to the base domain. We let $\udcsp(\Gamma)$ denote this generalization of CSP and are then primarily interested in analyzing the parameterized complexity when the the number of variables is used as parameter.  

The driving force behind our approach is now what we can learn about $k$-variable problems when studied in the unifying and very general $\udcsp(\Gamma)$ framework. Thus, is it possible to obtain interesting FPT/W[1] dichotomies or is the number of variables as a parameter inherently too frail? As we will see, this question can be greatly simplified by universal algebra, and by studying restricted problems in this setting we unravel new connections between CSPs, ordered polymorphisms, and the important {\em twin-width} 
parameter~\cite{bonnet2024twin,Bonnet:etal:jacm2022}.

\subsection{Summary of Results} \label{sec:summary}

We begin by describing the most relevant types of maps that we allow in $\udcsp(\Gamma)$ instances. 
We primarily consider constraint languages $\Gamma$ defined over some base  domain $\numdom{d} = \{0, 1, \ldots, d - 1\}$ and then define a {\em map family} to be a set of maps which for each $(a,b) \in \mathbb{N}^2$ contains exactly one set of maps $\subseteq \numdom{a}^{\numdom{b}}$. We extend $\udcsp(\Gamma)$ to use a map family $\MM$ as an additional parameter ($\udcsp(\Gamma, \MM)$) and each instance is then only allowed to use maps from $\MM$. The requirement that the map family contains maps for all possible combinations $(a,b) \in \mathbb{N}^2$ is mainly a convenience to make it easier to speak of $\udcsp(\Gamma, \MM)$ problems with different base domains, and to obtain a more robust algebraic theory.

First, it is straightforward to show that if the map family $\MM$ is completely arbitrary and contains all possible maps then $\udcsp(\Gamma, \MM)$ is W[1]-hard hard except when $\Gamma$ is a trivial language definable over unary relations. For example, even the Boolean equality relation $\eq = \{00, 11\}$ results in a W[1]-hard problem. In contrast, equality constraints are trivial to handle in the standard CSP problem and on the algebraic side one typically allows them to be used regardless of whether $\Gamma$ can actually express it or not. Hence, to obtain a reasonable problem we have to (1) fix a more restrictive map family $\MM$ and (2) attempt to classify $\udcsp(\Gamma, \MM)$ for all possible base languages $\Gamma$. Inspired by the two classical Boolean encodings {\em one-hot} and {\em unary} we define the following two types of maps (for the precise definitions of the associated map families $\oh$ and $\mo$ we refer the reader to Section~\ref{sec:statement}).

\begin{enumerate}
  \item 
    $\oh$: map $m : \numdom{n} \to \numdom{2}$ is called \emph{one-hot} if $m(x) = 1$ for exactly one $x \in \numdom{n}$. We study the resulting problem $\udcsp(\Gamma, \oh)$ for Boolean base languages $\Gamma$.
  \item
    $\mo$: a map $m : \numdom{n} \to \numdom{d}$ is called \emph{monotone} if $x \leq y$ implies that $m(x) \leq m(y)$ according to the natural ordering on $\numdom{n}$ and $\numdom{d}$. The problem $\udcsp(\Gamma, \mo)$ is relevant for any finite-domain base language $\Gamma$. Sometimes we also need anti-monotone maps and we let $\mo'$ denote this map family. Conditions of the type $x \geq a$ are
    monotone while $x \leq a$ are anti-monotone.
\end{enumerate}

We often use Iverson bracket notation for describing maps to $\{0,1\}$. Every one-hot map can be viewed as $[x=a]$ while $[x \geq a]$ and $[x \leq a]$ are
common examples of monotone and anti-monotone maps, respectively.
In fact, every monotone (anti-monotone) map from a domain $D$ to $\{0,1\}$ equals $[x \geq a]$ ($[x \leq a]$) for some $a \in D$.

\begin{example} \label{ex:firstexample}
Let us consider the {\sc NAE-Integer-3-SAT} problem by Bringmann et al~\cite{Bringmann:etal:jcss2016}.
They define the problem as follows: one is given variables $\{x_1,\dots,x_k\}$ that
each take a value in $\{1, \dots, n\}$ and clauses $\{C_1, \dots, C_m\}$ of the form
${\rm NAE}(x_{i_1} \leq a_1, x_{i_2} \leq a_2, x_{i_3} \leq a_3)$,
where $a_1, a_2, a_3 \in \{1, \dots, n\}$, and such a clause is satisfied if not all three inequalities are true and
not all are false (i.e., they are ``not all equal''). The goal is to find an assignment of the variables
that satisfies all given clauses. In our notation this problem can be formulated as $\udcsp(\{R_{\mathrm{NAE}}\},\mo')$
where $R_{\mathrm{NAE}}=\{0,1\}^3 \setminus \{000,111\}$.
\end{example}

To obtain general results in this setting it is necessary to treat constraint languages in a unified way, i.e., to avoid proving hardness and tractability through exhaustive case analyses. However, it is far from clear how e.g.\ universal algebra can be used for this purpose since there are numerous complicating factors --- whether equality is allowed on the relational side, that the underlying map family needs to be taken into account, and to what extent it is possible to restrict the number of cases by considering certain symmetries --- and no existing algebraic theory fits this bill.

\subsubsection{A New Algebraic Approach}
\label{sec:intro_algebra}

The classical algebraic theory for CSPs is fundamentally based on associating each set of relations $\Gamma$ with a closure operator $\langle \Gamma \rangle$ containing all {\em primitive positive definable} (pp-definable) relations, i.e., if $R$ can be expressed as the set of models of $\exists y_1, \ldots, y_n \colon \varphi(x_1, \ldots, x_r, y_1, \ldots, y_n)$ where $\varphi(\cdot)$ is a conjunctive, equality-free formula over atoms from $\Gamma \cup \{\eq_D\}$, where $\eq_D$ is the equality relation over the domain $D$. The main point of introducing this closure operator is that the sets of the form $\langle \Gamma \rangle$ can be dually defined as sets of {\em polymorphisms} $\pol(\Gamma)$, i.e., all functions over $D$ such that $f$ applied component-wise to tuples in $R$ stays inside $R$ for every $R \in \Gamma$. Intuitively, a polymorphism then provides permissable ways to combine solutions of a CSP$(\Gamma)$ instance, and via the CSP dichotomy theorem~\cite{Bulatov:focs2017,Zhuk:jacm2020} we know that the only source of tractability of CSP$(\Gamma)$ is a non-trivial way to combine solutions, which can always be witnessed by an operation in $\pol(\Gamma)$. Unfortunately, nearly all aspects of pp-definitions (and, as a consequence, polymorphisms) are unusable in the $\udcsp(\Gamma, \MM)$ setting, since 
\begin{enumerate}
\item we cannot allow existentially quantified variables since our complexity parameter $|V| = k$ blows up if one does the typical gadget replacement and introduces fresh variables for each constraint, 
\item cannot allow the equality constraint to be used implicitly since $\udcsp(\Gamma, \MM)$ and $\udcsp(\Gamma \cup \{\eq_D\}, \MM)$ may have different complexity, and
\item atoms $R(\cdot)$ should be allowed to use maps from $\MM$.
\end{enumerate}

Some, or at least parts of these difficulties, have been addressed before, and lead to generalizations of the classical algebraic theory.

\begin{enumerate}
  \item By removing existential quantification we get {\em partial polymorphisms} rather than totally defined polymorphisms, i.e., if $f$ is a partial polymorphism we only require the component-wise application of $f$ to tuples of a relation $R$ to stay inside $R$ if each application is defined.
  \item If we additionally do not allow equality then the basic object is a {\em partial multipolymorphism}, i.e., a function $D \to 2^{D}$ where $2^D$ is the powerset of $D$, and we identify the condition that $f$ returns the empty set to mean that it is undefined.
  \item The generalization to atoms of the form $R(m_1(x_1), \ldots, m_r(x_r))$ for maps $m_i \colon \numdom{n} \to \numdom{d}$ (where $\numdom{n}$ is the domain of the relation that we are defining and $\numdom{d}$ the domain of the relation $R$ in the base language) is largely unexplored but has been considered for the specific case when all maps are over the same domain, and when equality is allowed, called {\em functionally guarded pp-definitions\footnote{While Carbonnel refers to them as pp-definitions they can be formulated in a quantifier-free way via our maps terminology.}}~\cite{carbonnel2022Redundancy}. On the functional side we then get partial operations satisfying {\em polymorphism patterns} which can be viewed as a restricted form of identities that were instrumental in resolving the CSP dichotomy theorem.
\end{enumerate}

By combining these different notions we land on the definability notion {\em equality-free functionally guarded primitively positively definable using maps from $\MM$} which we for the sake of brevity refer to as {\em $\MM$-fgpp-definitions} but stress that the definitions are quantifier-free. Naturally, the complicated aspect in this algebraic undertaking is not to combine the notions, which is trivial, but to investigate whether such an assortment of seemingly unrelated requirements actually leads to a useful theory. In particular, arbitrary sets of partial polymorphisms are poorly understood even in the Boolean domain, and virtually nothing general is known about partial multifunctions for arbitrary finite domains. To have any chance of applying the theory we thus need to be able to make simplifying assumptions and describe partial multifunctions in a more effective way.

Here, we dare say that our theory is surprisingly well-behaved even if no particular assumptions are made on $\MM$, and remarkably powerful in the monotone setting. The basic idea to get a functional correspondence to $\MM$-fgpp-definitions over $\Gamma$ is to not directly consider the partial multipolymorphisms of $\Gamma$ ($\mpol(\Gamma)$) but rather to interpret these over different domains with the map family $\MM$ ($\mpol(\Gamma, \MM)$), and we say that each such function {\em $\MM$-preserves} $\Gamma$. This can formally be achieved by generalizing the well-known algebraic notion of {\em concrete homomorphic image} to the multifunction setting, as well as an inverse operation which generalizes {\em retractions}~\cite{barto2018}. We obtain the following general statement.

\begin{theorem} \label{thm:intro1}
(Informal)
Let $\Gamma$ be a constraint language and $R$ a relation. Then $\Gamma$ $\MM$-fgpp-defines $R$ if and only if each multifunction in $\mpol(\Gamma, \MM)$ preserves $R$.
\end{theorem}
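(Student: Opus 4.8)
The plan is to establish Theorem~\ref{thm:intro1} as a Galois connection in the classical style, adapting the three standard ingredients --- a closure operator on the relational side, a closure operator on the functional side, and a pair of ``one-step'' preservation lemmas --- to the quantifier-free, equality-free, map-guarded setting. Concretely, one direction (soundness) is that if $R$ is $\MM$-fgpp-defined from $\Gamma$, then every $f \in \mpol(\Gamma,\MM)$ preserves $R$; the other direction (completeness) is that if $R$ is not $\MM$-fgpp-defined from $\Gamma$, then some $f \in \mpol(\Gamma,\MM)$ preserving $\Gamma$ fails to preserve $R$.

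For soundness I would argue by structural induction on the $\MM$-fgpp-definition. The only primitive step is a single guarded atom $R'(m_1(x_1),\dots,m_r(x_r))$ with $R'\in\Gamma$ and $m_i\in\MM$; conjunction and reuse of variables are handled by the usual componentwise-application bookkeeping, and there is no existential quantification and no implicit equality atom to worry about (this is exactly where the nonstandard restrictions simplify the induction rather than complicate it). The heart of the matter is to check that the notion ``$f$ $\MM$-preserves $\Gamma$'', unpacked via the generalized concrete homomorphic image / retraction apparatus mentioned before the theorem, is precisely the closure condition that makes a guarded atom invariant: given tuples $t^1,\dots,t^m$ in the relation defined by the formula, one pushes them through the maps $m_i$, applies $f$ interpreted over the appropriate domain with $\MM$, and pulls back; the definition of $\mpol(\Gamma,\MM)$ is rigged so this lands inside $R'$ whenever it is defined. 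I expect this to be essentially a definition-chase once the right formalism from the earlier section is in hand.

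For completeness I would take the standard route: define $R^{*}$ to be the ``smallest $\MM$-fgpp-definable relation containing the relevant set of tuples'' --- equivalently, the intersection of all $\MM$-fgpp-definable relations (over the same arity and base domain) that contain a chosen generating set --- and show this intersection is itself $\MM$-fgpp-definable, so that $R$ is $\MM$-fgpp-defined from $\Gamma$ iff $R \supseteq$ (appropriate instance of) $R^{*}$ iff $R$ contains the closure of its own tuples. Then, if $R$ is \emph{not} $\MM$-fgpp-defined from $\Gamma$, there is a tuple $t$ in this closure but not in $R$; the generating data together with $t$ encodes a partial multifunction $f$ which by construction $\MM$-preserves $\Gamma$ (since the closure only ever added tuples forced by guarded $\Gamma$-atoms) but maps the generators of $R$ to something containing $t\notin R$, hence does not preserve $R$. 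The technical content here is the ``indicator'' or ``free structure'' construction: one must verify that taking all $\MM$-guarded $\Gamma$-consequences of a finite set of seed tuples yields a relation that is genuinely $\MM$-fgpp-expressible with a bounded number of variables (no blow-up of the parameter $k$), which is where the absence of existential quantification is doing real work --- the definition is literally the conjunction of all satisfied guarded atoms.

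The main obstacle, I expect, is \emph{not} the Galois-connection skeleton but making the multifunction / cross-domain machinery behave correctly: partial multipolymorphisms are maps $D\to 2^{D}$ with $\emptyset$ read as ``undefined'', and here we additionally evaluate them over varying domains $\numdom{n}$ via $\MM$, so the composition ``push through $m_i$, apply $f$, pull back'' has to be shown well-defined and to compose associatively, and the generalized concrete-homomorphic-image and retraction operations have to be shown to be genuine inverses in the sense needed for both lemmas. A secondary subtlety is finiteness/effectivity: since the base domain $\numdom{d}$ is fixed and finite, $\mpol(\Gamma,\MM)$ restricted to any fixed arity is finite up to the relevant equivalence, so ``every $f$ in $\mpol(\Gamma,\MM)$'' is really a finite check and the closure $R^{*}$ is well-defined; I would flag exactly where finiteness of the base domain (as opposed to the unbounded $n$) is used, since that is the crux of why the theory exists at all. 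Modulo those formal points, the proof is the textbook $\inv$/$\pol$ argument transported to this setting.
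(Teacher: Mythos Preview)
Your proposal is correct and follows essentially the same route as the paper. The paper's soundness lemma (Lemma~\ref{lem:fgpp_to_mpol}) is your ``definition-chase'' for a single guarded atom, argued by contradiction rather than structural induction, and the paper's completeness lemma (Lemma~\ref{lem:mpol_to_fgpp}) is exactly your indicator/free-structure construction: enumerate the tuples of $R$ as rows, form the conjunction of \emph{all} $\MM$-guarded $\Gamma$-atoms satisfied by these rows to get $R'\supseteq R$, and if $t\in R'\setminus R$ then the columns-to-$t$ map is the witnessing partial multifunction.

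One remark: you overestimate the ``main obstacle''. The paper never needs the homomorphic-image and retraction operations to be mutual inverses, nor any associativity law for the push--apply--pull composite; both lemmas are proved by unfolding the definition of $f_M$ once and chasing indices. The cross-domain machinery is set up in Section~\ref{sec:img_and_retraction} to \emph{state} $\mpol(\Gamma,\MM)$ cleanly, but the proofs of Lemmas~\ref{lem:fgpp_to_mpol} and~\ref{lem:mpol_to_fgpp} are direct and do not invoke it further. Your finiteness concern is handled implicitly: since each $D_S$ and $D_R$ is finite, each $\MM^{D_R}_{D_S}$ is finite, so the ``conjunction of all satisfied atoms'' is a genuine finite formula.
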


For the specific case when $\MM$ contains all maps then $\mpol(R, \MM)$ collapses to partial multifunctions definable by polymorphism patterns, and when $\MM = \mo$ contains all monotone maps then we get a description in terms of {\em ordered polymorphism patterns}. Abstractly, these can be defined by systems of identities of the form $f(\ldots) \approx x$ where the term $f(\ldots)$ on the left-hand side is not nested, and where the variables are equipped with an ordering, and we can construct partial multifunctions over some concrete domain $\numdom{d}$ by considering monotone maps from the pattern variables to $\numdom{d}$. This, for example, makes it possible to formulate {\em median}, {\em minimum} and {\em maximum} operations in a straightforward way. Here it might be interesting to note that the latter two operations cannot be defined via the classical algebraic toolbox~\cite{DBLP:conf/dagstuhl/BartoKW17} but are instead treated as special instances of {\em semilattice} operations. This is not a problem when studying the classical complexity of CSPs but preserving the order becomes crucial in the $\udcsp(\Gamma, \mo)$ setting.


\subsubsection{One-Hot CSP}
As a straightforward first problem we consider $\udcsp(\Gamma, \oh)$. We prove that the parameterized complexity of this problem coincides with the parameterized complexity of finding a satisfying assignment of weight $k$~\cite{Marx05CSP}, which is FPT if and only if the language is {\em weakly separable}.

\begin{theorem} \label{thm:intro2}
Let $\Gamma$ be a Boolean constraint language. Then $\udcsp(\Gamma, \oh)$ is FPT if $\Gamma$ is weakly separable and W[1]-hard otherwise.
\end{theorem}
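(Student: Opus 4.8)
The plan is to relate $\udcsp(\Gamma, \oh)$, parameterized by the number $k$ of variables, to the weighted-CSP problem $\weightcsp{\Gamma}$ of Marx (find a satisfying assignment of weight exactly $k$), which by~\cite{Marx05CSP} is \FPT exactly when $\Gamma$ is weakly separable; the claimed dichotomy is then immediate once reductions in both directions are in place.

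For the inclusion side I would one-hot encode. Introduce a Boolean variable $y_{i,a}$ for each $\udcsp$-variable $x_i$ and each value $a$ that some one-hot map queries on $x_i$, plus one ``overflow'' variable per $x_i$ modelling ``$x_i$ takes a value queried nowhere'' (meaningful only when not all of $\numdom{n}$ is queried on $x_i$). Each $\udcsp$-constraint $R([x_{i_1}=a_1],\dots,[x_{i_r}=a_r])$ translates verbatim to $R(y_{i_1,a_1},\dots,y_{i_r,a_r})$, with the overflow variables never occurring in a constraint; a genuine $\udcsp$-solution then corresponds exactly to a satisfying assignment having one true variable in each of the $k$ blocks, so the target is $\weightcsp{\Gamma}$ with parameter $k$, and Marx's algorithm gives \FPT membership for weakly separable $\Gamma$. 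For the hardness side I would go the other way: either reduce from Marx's $\W{1}$-hard instances for non-weakly-separable $\Gamma$, which already have the ``choose one element per colour class, subject to compatibility constraints'' shape of a \textsc{Multicoloured Clique} encoding, or build the reduction from \textsc{Multicoloured Clique} directly; in both cases a colour class of size $n$ becomes a $\udcsp$-variable over $\numdom{n}$, one-hot maps recover the ``element $= u$'' predicates, and a non-weakly-separable $R\in\Gamma$ is used to express the compatibility constraints. Verifying that solvability is preserved should be routine once the shape of those instances is made explicit.

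The main obstacle is the ``exactly one true variable per block'' requirement in the inclusion reduction. This is the conjunction of an ``at most one'' and an ``at least one'' constraint, neither of which is weakly separable, so it cannot simply be added to the $\weightcsp{\Gamma}$ instance, and spurious ``multi-valued'' assignments (with $y_{i,a}=y_{i,b}=1$ for $a\neq b$, modelling the impossible ``$x_i=a$ and $x_i=b$'') can make the plain weight-$k$ instance satisfiable even when the $\udcsp$-instance is not. I expect to handle this in one of two ways. The indirect route is a \emph{rebalancing lemma}: when $\Gamma$ is weakly separable, any weight-$k$ satisfying assignment of the translated instance can be turned into a block-respecting one by shaving extra $1$'s off overfull blocks and feeding them as fresh values into empty ones, using closure under disjoint unions of solutions and under removal of sub-solutions to argue that no constraint is broken. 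The direct route is to bypass the black box and run a block-structured adaptation of Marx's \FPT algorithm for weakly separable languages on the one-hot instance, treating the one-per-block condition as part of the instance structure rather than as constraints. Either way, this is the step where weak separability does the real work, and one must additionally be careful with domain-size effects: a ``full'' block, one on which every value of $\numdom{n}$ is queried, has no overflow escape and must genuinely contain exactly one true variable.
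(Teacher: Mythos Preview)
Your rebalancing route for the \FPT direction is broken. Take $\Gamma=\{\mathsf{Eq}\}$ (affine, hence weakly separable), $k=2$ variables $x_1,x_2$ over $\numdom{2}$, and the single constraint $\mathsf{Eq}([x_1=0],[x_1=1])$. This $\udcsp$-instance is unsatisfiable (over $\numdom{2}$ exactly one bracket evaluates to $1$), yet your encoding has Boolean variables $y_{1,0},y_{1,1},z_2$ with constraint $\mathsf{Eq}(y_{1,0},y_{1,1})$, and $y_{1,0}=y_{1,1}=1$, $z_2=0$ is a weight-$2$ satisfying assignment. Block $1$ is full, so there is no overflow to absorb a shifted $1$, and removing either $1$ from block $1$ violates $\mathsf{Eq}$. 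This is exactly the full-block issue you flag but do not resolve; weak separability lets you peel off disjoint \emph{sub-solutions}, not arbitrary $1$'s, and padding the domain to manufacture overflow changes the answer (here it makes the instance satisfiable). So the black-box reduction to $\weightcsp{\Gamma}$ cannot be salvaged along these lines.

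The paper follows your ``direct route'' but makes it concrete, and it is not a reduction to Marx. After branching to force every constraint to be $0$-valid (weak separability survives this by Lemma~\ref{lem:weakly-separable-restrictions}), it works over $\numdom{n,\bot}$ with $\bot$ sent to $0$ by every one-hot map; constraints become $\bot$-valid and weakly $\bot$-separable, Marx-style enumeration of minimal satisfying assignments goes through (Lemma~\ref{lem:enumerate_minimal}), every solution is a disjoint union of minimal ones (Lemma~\ref{lem:separability}), and a $\bot$-free solution is then found by brute force over partitions of $V$ into supports of minimal solutions. For hardness, ``use a non-weakly-separable $R\in\Gamma$ to express compatibility constraints'' is too vague --- an arbitrary such $R$ need not encode edge constraints, and Marx's hard instances are $n$-variable Boolean instances with target weight $k$ and no intrinsic $k$-block structure, so that reduction is not black-box either. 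The paper instead uses the characterisation of Kratsch et~al.\ that non-weak-separability means $\Gamma$ violates one of two specific partial operations; this lets one efpp-define $\mathsf{Nand}_2$ or $\mathsf{Impl}$ (possibly with constant slots, absorbed by $O(1)$ auxiliary variables), and then reduce directly from \textsc{Independent Set} or \textsc{Clique} respectively.
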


We also resolve the classical complexity of $\udcsp(\Gamma, \oh)$ which, curiously, differs from the complexity of finding a satisfying assignment of weight $k$~\cite{Marx05CSP}. Hence, the two problems are different but happen to follow the same parameterized complexity, and the most important message from Theorem~\ref{thm:intro2} is that we obtain natural and relevant problems by restricting map families. Importantly, $\udcsp(\Gamma, \oh)$ seems to be a robust problem to reduce from and to, and in comparison to weighted SAT one can avoid finicky reductions that need to preserve the number of ones.
%

\subsubsection{Monotone CSP}
Let us now turn to the main contribution of the paper, namely, $\udcsp(\Gamma, \mo)$ where $\mo$ is the set of monotone maps. In light of Theorem~\ref{thm:intro1} we expect the complexity of this problem to be captured by (partial) polymorphisms definable by systems of identities with order, and indeed, this proves to be the case for the P/NP dichotomy.

\begin{theorem} \label{thm:intro3}
    Let $\Gamma$ be a constraint language. If $\Gamma$ is $\mo$-preserved by $\pmin$, $\pmax$ or $\pmedian$, then $\udcsp(\Gamma, \mo)$ is in $P$. Otherwise, $\udcsp(\Gamma, \mo)$ is NP-hard.
\end{theorem}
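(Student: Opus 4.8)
The plan is to treat the two directions separately: the membership side by a direct algorithmic argument, and the NP-hardness side through the algebraic machinery of Section~\ref{sec:intro_algebra}.

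For tractability, the key observation is that every monotone map commutes with $\pmin$, $\pmax$ and $\pmedian$: if $m$ is monotone then $m(\min(a,b)) = \min(m(a),m(b))$ because $\min(a,b)\in\{a,b\}$ and $m$ preserves the order, and likewise for $\pmax$ and, since the median of three chain elements is one of them sitting in the correct place, for $\pmedian$. Consequently $\mo$-preservation of $\Gamma$ by $\pmin$ is equivalent to every relation $R\in\Gamma$ being closed under coordinatewise minimum on its base domain $\numdom{d}$, and analogously for $\pmax$ and $\pmedian$; this equivalence can be read off the Galois connection of Theorem~\ref{thm:intro1}, or checked directly from the constraint form $R(m_1(x_1),\dots,m_r(x_r))$. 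Hence an instance of $\udcsp(\Gamma,\mo)$ unfolds into an ordinary CSP over the input chain $\numdom{n}$ all of whose constraints are min-closed (resp.\ max-closed, median-closed). In the min/max case one establishes generalized arc consistency and outputs the coordinatewise minimum (resp.\ maximum) of the surviving domains, correctness being the standard semilattice argument; in the median case one establishes strong $3$-consistency and reads off a solution greedily, correctness following from Baker--Pixley since median is a majority operation and majority-closed constraints are $2$-decomposable. All of this runs in time polynomial in $n$ and the number of variables: the input size is polynomial in $n$ (the maps are given explicitly), the base relations have bounded arity and size, and each unfolded constraint is a union of boundedly many boxes with interval sides, a representation under which the pruning and consistency operations are polynomial.

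For NP-hardness, suppose $\Gamma$ is $\mo$-preserved by none of $\pmin$, $\pmax$, $\pmedian$; equivalently $\Gamma$ contains relations $R_1,R_2,R_3$ (not necessarily distinct) that are respectively not min-, not max- and not median-closed. By Theorem~\ref{thm:intro1}, hardness propagates along $\mo$-fgpp-definitions and the associated constructions, so it suffices to exhibit a fixed NP-hard target. Note that restricting $\udcsp(\Gamma,\mo)$ to domain size $d$ with identity maps recovers $\mathrm{CSP}(\Gamma)$, and restricting to constant maps shows $\Gamma$ behaves as a language equipped with all constants; so if $\Gamma$ $\mo$-fgpp-constructs any finite-domain language whose CSP is NP-hard, we are done. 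The heart of the argument is therefore a structural analysis of the low-arity ordered polymorphism patterns that $\mo$-preserve $\Gamma$: one shows that any nontrivial (non-projection, Taylor-like) ordered pattern already forces $\pmin$, $\pmax$ or $\pmedian$ to $\mo$-preserve $\Gamma$, so that in the present case $\mpol(\Gamma,\mo)$ contains only projection-like patterns; mirroring the hardness half of the finite-domain CSP dichotomy, adapted to $\mo$-fgpp-constructions, then produces an NP-hard target. Concretely, from $R_1,R_2,R_3$ one $\mo$-fgpp-constructs a not-all-equal or $1$-in-$3$ type relation over a suitable ordered domain, whose $\udcsp(\cdot,\mo)$ is already NP-hard via its Boolean specialization.

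The main obstacle is the case where $\Gamma$'s only near-tractable behaviour is ``affine''/minority-like --- the prototype being a Boolean affine language such as $\{x\oplus y\oplus z=1\}$. Such a language is $\mo$-preserved by none of $\pmin$, $\pmax$, $\pmedian$ (none of these is a polymorphism of it), yet the minority operation that does preserve it does \emph{not} commute with non-injective monotone maps, so it is not an $\mo$-preserving pattern; and over the base domain $\mathrm{CSP}(\Gamma)$ is merely linear algebra and lies in $\Poly$. Hardness must therefore be manufactured over a strictly larger domain, using threshold maps $x\mapsto[x\ge a]$ to turn the parity structure into a genuinely NP-hard constraint (for instance by a reduction from NAE-$3$-SAT). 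Engineering this gadget, and more generally proving that $\pmin$, $\pmax$ and $\pmedian$ are the \emph{only} ordered patterns that can be responsible for tractability, is where the real effort lies.
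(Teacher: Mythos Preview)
Your tractability argument is essentially the paper's: $\mo$-preservation by $\pmin$/$\pmax$/$\pmedian$ means every unfolded constraint over $\numdom{n}$ is closed under the corresponding operation, and then arc consistency (for $\pmin$/$\pmax$) respectively singleton arc consistency via the majority property (for $\pmedian$) solves the instance in polynomial time. This part is fine.

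The NP-hardness direction, however, has a genuine gap. You propose to show that $\mpol(\Gamma,\mo)$ contains only ``projection-like'' patterns and then invoke a CSP-dichotomy-style argument, but you neither prove that claim nor give the promised gadget; you explicitly concede that the affine case is unresolved. The paper does \emph{not} go through any such global structural statement about $\mpol(\Gamma,\mo)$. Instead it works entirely with the three concrete violated patterns via Proposition~\ref{prop:no_preserve_f}:
\begin{itemize}
  \item From the $\pmin$- and $\pmax$-violations one $(\mo\cup\mo')$-fgpp-defines two binary relations on $\numdom{2}$ whose intersection is exactly $\sigma_{10}=\{01,10\}$; this already yields all order-reversal permutations and hence access to anti-monotone maps (Lemma~\ref{lemma:can_define_reverse}).
  \item From the $\pmedian$-violation one obtains a $6$-ary relation $R$ on $\numdom{3}$ containing the three columns $001122,\,120201,\,212010$ but not $111111$. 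Combining $R$ with copies of $\sigma_{10}$ under suitable monotone maps, and one existentially quantified variable, produces the $3$-cycle relation $R_{\mathrm{cycle}}=\{012,120,201\}$, hence the permutation $\sigma_{120}$ (Lemma~\ref{lem:no_median_perm_3}).
  \item Together $\sigma_{120}$ and $\sigma_{210}$ generate all permutations of $\numdom{3}$; one then reduces \textsc{Exact 3-Hitting Set} by introducing one variable per triple with domain $\numdom{3}$ and using $\sigma_{10}$ with arbitrary $\numdom{3}\to\numdom{2}$ maps to coordinate shared elements (Lemma~\ref{lem:perm_3_hardness}).
\end{itemize}
The affine case you single out is not a separate obstacle in this scheme: a Boolean affine relation like $x\oplus y\oplus z=1$ violates $\pmedian$, so the $6$-ary witness on $\numdom{3}$ exists and the construction proceeds uniformly. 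Your suggestion to ``build a NAE or $1$-in-$3$ relation via threshold maps'' is in the right spirit, but the actual target is the permutation group of $\numdom{3}$, reached through the specific $R_{\mathrm{cycle}}$ gadget rather than through any statement about Taylor-like ordered patterns.
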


In addition, up to FPT-reductions, we show that for every language $\Gamma$ such that $\udcsp(\Gamma,\mo)$ is \NP-hard, $\udcsp(\Gamma,\mo)$ and $\udcsp(\Gamma,\mo \cup \mo')$ have the same complexity. Thus we may assume that we are working with both monotone and anti-monotone maps.

For the parameterized tractability, we first note that if $\Gamma$ can fgpp-define all permutations\footnote{To be precise we require for every permutation $\sigma \colon \numdom{n} \to \numdom{n}$ that $\Gamma$ can $(\mo \cup \mo')$-fgpp-define the {\em graph} $\sigma^\bullet = \{(x, \sigma(x)) \mid x \in \numdom{n}\}$.}, then $\udcsp(\Gamma,\mo \cup \mo')$ is W[1]-hard through standard reductions.
We therefore seek an algebraic criterion that serves as an obstruction towards this.  
We define an ordered polymorphism pattern, the \emph{connector polymorphism}, that serves this role. 
This operation can be abstractly defined over $\numdom{3}$ as $\pcon(0, 0, 1, 2, 2) = 0, \pcon(0, 2, 1, 0, 2) = 1$, $\pcon(2, 2, 1, 0, 0) = 2$, and $\pcon(2, 0, 1, 2, 0) = 1$, and via the algebraic machinery described in Section~\ref{sec:intro_algebra} this pattern can be interpreted over any domain $\numdom{d}$ to produce a concrete operation $\fcon_d$. 
We also define a canonical W[1]-hardness relation $R_3=\{00, 02, 11, 20, 22\}$, and note the following (somewhat informally).
\begin{enumerate}
    \item $R_3$ can $\mo$-fgpp-define the graph of every permutation.
    \item Every language $\Gamma$ not preserved by $\pcon$ can $(\mo \cup \mo')$-fgpp-define $R_3$.
\end{enumerate}
We get the following theorem. 
Recall by Theorem~\ref{thm:intro3} that if $\Gamma$ is invariant under $\min$ or $\max$ then $\udcsp(\Gamma, \mo)$ is in P, so the assumption in this theorem does not diminish the hardness claim.

\begin{theorem} \label{thm:intro4}
  Let $\Gamma$ be a constraint language not $\mo$-preserved by $\pmin$ or $\pmax$. If $\pcon \notin \mpol(\Gamma,\mo)$ then $\udcsp(\Gamma, \mo)$ is W[1]-hard.
\end{theorem}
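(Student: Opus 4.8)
The plan is to chain together the Galois connection (Theorem~\ref{thm:intro1}), the two facts noted just above about $R_3=\{00,02,11,20,22\}$, the \NP/P dichotomy (Theorem~\ref{thm:intro3}), and the stated reductions; the one nontrivial preliminary is to argue that we may work with $\mo\cup\mo'$ rather than $\mo$. For that I would first observe that the connector pattern is generated by $\pmedian$: the third argument equals $1$ in all four defining rows of $\pcon$, and $\median(0,0,1)=0$, $\median(0,2,1)=\median(2,0,1)=1$, $\median(2,2,1)=2$, so the $5$-ary term $(a_1,\dots,a_5)\mapsto\median(a_1,a_2,a_3)$ realises $\pcon$ (padding with two fictitious coordinates, which preserves membership in $\mpol(\Gamma,\mo)$). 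Hence $\pmedian\in\mpol(\Gamma,\mo)$ would imply $\pcon\in\mpol(\Gamma,\mo)$, so from the hypothesis $\pcon\notin\mpol(\Gamma,\mo)$ we get $\pmedian\notin\mpol(\Gamma,\mo)$; together with $\Gamma$ being $\mo$-preserved by neither $\pmin$ nor $\pmax$, Theorem~\ref{thm:intro3} gives that $\udcsp(\Gamma,\mo)$ is \NP-hard. By the stated equivalence (for $\mo$-\NP-hard $\Gamma$, the problems $\udcsp(\Gamma,\mo)$ and $\udcsp(\Gamma,\mo\cup\mo')$ are FPT-equivalent) it now suffices to prove that $\udcsp(\Gamma,\mo\cup\mo')$ is W[1]-hard.

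Next I would transfer the non-preservation hypothesis to the larger family: enlarging the map family only adds closure constraints on preserving multifunctions, so $\mpol(\Gamma,\mo\cup\mo')\subseteq\mpol(\Gamma,\mo)$, and therefore $\pcon\notin\mpol(\Gamma,\mo\cup\mo')$. The second noted fact then yields that $\Gamma$ $(\mo\cup\mo')$-fgpp-defines $R_3$. I would then apply the first noted fact — $R_3$ $\mo$-fgpp-defines the graph $\sigma^\bullet$ of every permutation $\sigma$ — and compose the two definitions. The composition is valid because $\mo\cup\mo'$ is closed under composition (a composite of monotone and/or anti-monotone maps is again monotone or anti-monotone) and because fgpp-definitions are quantifier-free: substituting a $(\mo\cup\mo')$-fgpp-definition of $R_3$ over $\Gamma$ into an $\mo$-fgpp-definition of $\sigma^\bullet$ over $\{R_3\}$ produces a $(\mo\cup\mo')$-fgpp-definition of $\sigma^\bullet$ over $\Gamma$ with the same number of variables. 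Thus $\Gamma$ $(\mo\cup\mo')$-fgpp-defines the graph of every permutation.

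Finally I would invoke the standard W[1]-hardness reduction already recorded above: any language that $(\mo\cup\mo')$-fgpp-defines the graph of every permutation gives a W[1]-hard $\udcsp$ — concretely, a reduction from a grid-type W[1]-hard source (\textsc{Paired Min Cut}, or \textsc{Multicoloured Clique} read as a $k$-variable large-domain $2$-CSP) that uses permutation-graph constraints to wire together the $k$ coordinates while keeping the number of variables bounded in terms of $k$. This establishes W[1]-hardness of $\udcsp(\Gamma,\mo\cup\mo')$, and composing with the FPT-reduction from the first paragraph's equivalence yields W[1]-hardness of $\udcsp(\Gamma,\mo)$.

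The hard part is not this assembly — given the cited results it is a short sequence of reductions — but the ingredients it rests on. Among the preliminary steps the only subtle ones are (i) noticing that $\pmedian$ generates $\pcon$, which is what lands us in the \NP-hard regime and keeps the statement consistent with the tractable cases of Theorem~\ref{thm:intro3}, and (ii) checking that fgpp-definability composes across the $\mo$ and $\mo\cup\mo'$ settings. The genuine difficulty lies in the results we are permitted to assume: that $\pcon$ is the \emph{exact} obstruction, i.e.\ that non-preservation by $\pcon$ forces $(\mo\cup\mo')$-fgpp-definability of $R_3$ (a dichotomy-style argument); that $R_3$ $\mo$-fgpp-defines every permutation graph; and that permutation-defining languages yield W[1]-hard $\udcsp$s. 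The first of these is the true crux.
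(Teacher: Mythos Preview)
Your proof is correct and follows essentially the same route as the paper. The paper's argument (Theorem~\ref{thm:monotone-w1}) is: from the failure of the connector property, Lemma~\ref{lemma:not_line_gives_r3} gives $R_3 \in \fgpp{\Gamma}{\mo\cup\mo'}$; Lemma~\ref{lemma:r3_can_define} then yields all permutation graphs; the flip construction gives W[1]-hardness of $\udcsp(\Gamma,\mo\cup\mo')$; and Proposition~\ref{prop:anti_monotone_maps} transfers this to $\mcsp(\Gamma)$.

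The one difference is how you justify passing from $\mo$ to $\mo\cup\mo'$. You detour through \NP-hardness: you observe that $\pmedian$ generates $\pcon$ (so $\pcon\notin\mpol(\Gamma,\mo)$ forces $\pmedian\notin\mpol(\Gamma,\mo)$), invoke Theorem~\ref{thm:intro3} to get \NP-hardness, and then cite the introduction's statement that \NP-hard languages have FPT-equivalent $\mo$ and $\mo\cup\mo'$ problems. The paper skips this: Proposition~\ref{prop:anti_monotone_maps} gives the FPT-reduction $\udcsp(\Gamma,\mo\cup\mo')\leq_{\rm FPT}\mcsp(\Gamma)$ directly from the hypothesis that $\Gamma$ is not $\mo$-preserved by $\pmin$ or $\pmax$ (via Lemma~\ref{lemma:can_define_reverse}), with no appeal to \NP-hardness or $\pmedian$. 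Your $\pmedian\Rightarrow\pcon$ observation is correct and a nice consistency check against Theorem~\ref{thm:intro3}, but it is not needed for the argument.
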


In the other direction, we show that for every finite language $\Gamma$ preserved by $\pcon$, there is a permutation $\sigma$ not fgpp-definable in $\Gamma$. 
In fact, we show something that on the surface may seem stronger. 
As mentioned, \emph{twin-width} is a notion of structural complexity that is very well suited for (binary) structures with an order, and the canonical example of a binary structure with unbounded twin-width is precisely the ability to define all permutations~\cite{Bonnet:etal:jacm2022}.
We use the notion of twin-width over ordered domains, which is particularly well-behaved~\cite{bonnet2024twin}.
There is then a notion of \emph{grid-rank}, which serves as a canonical obstruction towards a binary structure having bounded twin-width.
Then we show, by a surprisingly simple argument, that for every binary language $\Gamma$ preserved by $\pcon$, the set of binary relations $(\mo \cup \mo')$-fgpp-definable over $\Gamma$ has constant grid-rank, and thus bounded twin-width. 
By results from~\cite{bonnet2024twin}, and a simple dynamic programming algorithm,
this is sufficient for a tractability result for binary languages. 


\begin{theorem} \label{thm:intro5}
  Let $\Gamma$ be a binary constraint language. If $\pcon \in \mpol(\Gamma,\mo)$ then $\udcsp(\Gamma, \mo)$ is in FPT.
\end{theorem}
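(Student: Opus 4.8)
The plan is to use the algebraic machinery (Theorem~\ref{thm:intro1}) to pass from the assumption $\pcon \in \mpol(\Gamma,\mo)$ to a bound on the structural complexity of the relations that a $\udcsp(\Gamma,\mo)$ instance can express, then feed that bound into the twin-width algorithm of Bonnet et al. First I would recall that, by the reduction remarked after Theorem~\ref{thm:intro3}, solving $\udcsp(\Gamma,\mo)$ is FPT-equivalent to solving $\udcsp(\Gamma,\mo\cup\mo')$, so it suffices to work with both monotone and anti-monotone maps. Given an instance on $k$ variables $x_1,\dots,x_k$ over domain $\numdom{n}$, each constraint $R(m_1(x_{i_1}),\dots,m_r(x_{i_r}))$ with $m_j$ monotone or anti-monotone and $R \in \Gamma$ (so $r \le 2$, as $\Gamma$ is binary) defines, after projecting onto the variable pair it touches, a binary relation on $\numdom{n}^2$ that is $(\mo\cup\mo')$-fgpp-definable over $\Gamma$. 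The key structural claim is: \emph{there is a constant $c = c(\Gamma)$ such that every binary relation $(\mo\cup\mo')$-fgpp-definable over $\Gamma$ has grid-rank at most $c$.} I would prove this by contradiction: if some $(\mo\cup\mo')$-fgpp-definable binary relation $S$ on $\numdom{n}^2$ had large grid-rank, then (by the canonical-obstruction character of grid-rank, cf.~\cite{Bonnet:etal:jacm2022,bonnet2024twin}) $S$ would contain, after suitable monotone reindexing of the two coordinates, a large ``grid'' / permutation pattern — precisely the configuration needed to $(\mo\cup\mo')$-fgpp-define the graph of an arbitrary permutation $\sigma^\bullet$. But $\sigma^\bullet$ is not preserved by $\pcon$ (one checks that for $\sigma$ the non-identity transposition, applying the defining identities of $\pcon$ to the five rows of $\sigma^\bullet$ escapes the relation, and more generally $R_3$-like obstructions embed in any large permutation graph), contradicting $\pcon \in \mpol(\Gamma,\mo) \subseteq \mpol(\Gamma,\mo\cup\mo')$ together with Theorem~\ref{thm:intro1} (which says $\Gamma$ cannot $(\mo\cup\mo')$-fgpp-define $\sigma^\bullet$ once every multifunction in $\mpol(\Gamma,\mo\cup\mo')$ preserves it). So grid-rank, hence twin-width, of each constraint relation is bounded by $c(\Gamma)$.

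Next I would assemble the instance into a single ordered binary structure whose twin-width is bounded. Put all $k\cdot n$ ``value slots'' $(x_i,a)$ on a common linear order — e.g.\ order primarily by variable index and secondarily by $a \in \numdom{n}$ — and for each constraint add the binary relation it induces. Since $\Gamma$ is finite there are finitely many relation symbols; a structure that is a bounded-width ``sum'' of $k\le O(1)$-many relations each of bounded grid-rank still has bounded grid-rank (the grid-rank of a union of boundedly many relations is at most the sum, up to constants, because a $d$-division that is homogeneous for each summand is homogeneous for the union), hence by~\cite{bonnet2024twin} bounded twin-width; and an ordered structure of bounded twin-width comes with a contraction sequence / $O(1)$-width ``elimination'' that can be computed in FPT time (indeed in fpt time in $k$ and polynomial in $n$) using the fixed-parameter tractable twin-width approximation for ordered structures. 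With such a contraction sequence in hand, solving the CSP — i.e.\ deciding whether a joint assignment exists — is a routine bottom-up dynamic program along the sequence: at each merged part one maintains the set of partial-assignment ``types'' consistent with the constraints restricted to the merged block, and the bounded-width guarantee keeps the number of types bounded by a function of $\Gamma$ and $k$ alone, so the whole thing runs in FPT time.

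I expect the main obstacle to be the structural claim, i.e.\ turning ``$\pcon$ is a polymorphism pattern of $\Gamma$'' into ``every fgpp-definable binary relation has bounded grid-rank.'' The delicate point is that grid-rank is about the existence of a large \emph{division} of the matrix of $S$ into blocks all of which are distinct in a rank sense, and one must argue that such a division, read through the monotone maps used in the fgpp-definition, produces an honest permutation sub-pattern preserved-failure witness for $\pcon$; handling the anti-monotone maps and the fact that the fgpp-definition may conjoin several atoms (so $S$ is an intersection, whose grid-rank is not simply controlled by that of its conjuncts in general) requires care — the right move is to bound grid-rank of each atom $R(m_1(x),m_2(y))$ directly by the (constant) size of $R$ composed with the fact that $m_1,m_2$ are monotone/anti-monotone (so each atom already has grid-rank $\le |D|$), and only then take the conjunction and use additivity of grid-rank over boundedly many conjuncts. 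Once the grid-rank bound is in place, the remaining steps — invoking the ordered-twin-width algorithm and running dynamic programming over the contraction sequence — are standard, so the proof reduces to that one Ramsey-flavoured structural lemma.
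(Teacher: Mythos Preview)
Your overall architecture—bound the grid-rank of the binary relation between each variable pair, invoke the ordered twin-width results of~\cite{bonnet2024twin}, then dynamic-program over the contraction sequence—matches the paper. The gap is in the grid-rank bound. Your ``additivity'' route fails on both counts: the number of conjuncts between a given pair of variables is \emph{not} bounded (an instance may impose arbitrarily many constraints, and there are unboundedly many monotone maps $\numdom{n}\to\numdom{d}$ as $n$ grows), and grid-rank is not subadditive under intersection anyway. Concretely, take $\Gamma=\{R_3\}$: every single atom $R_3(f_a(x),f_b(y))$ has grid-rank at most~$3$, yet Lemma~\ref{lemma:r3_can_define} shows their conjunctions realise arbitrary permutation graphs, which have unbounded grid-rank. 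So ``each atom is simple'' is not enough on its own; the connector property must be applied to the \emph{conjunction}. Your ``permutation'' route also does not close: a single permutation graph $\sigma^\bullet$ is in fact \emph{preserved} by $\pcon$ (the paper remarks this explicitly; your transposition check is mistaken), so extracting one large permutation from a high-grid-rank $S$ yields no contradiction with $\pcon\in\mpol(\Gamma,\mo)$.

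The paper's argument (Lemma~\ref{lem:line_gridrank}) runs as follows. In a putative $(2d{+}1)$-division of the pairwise matrix $A=\bigcap_R A_R$ with each cell of rank $\ge 2d{+}1$, each individual constraint matrix $A_R$ has at most $d{-}1$ breakpoints per axis (from its monotone guarding functions), so among the first $d$ and last $d$ grid rows and columns one finds positions where $A_R$ is constant. High rank in the corresponding corner cells of $A$ forces a $1$ there, hence each $A_R$ is constant~$1$ on those corners, giving a $K_{2,2}$ in every $A_R$. Now the connector property enters via Lemma~\ref{lemma:lineprop-biclique}: each $A_R$ restricted to the spanned rectangle is a biclique, hence rank~$1$ there. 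An intersection of bicliques is again a biclique, so the middle cell of $A$ has rank~$1$, contradicting rank $\ge 2d{+}1$. This is the missing idea. (Minor: the inclusion $\mpol(\Gamma,\mo)\subseteq\mpol(\Gamma,\mo\cup\mo')$ you wrote runs the wrong way; enlarging the map family makes preservation harder, not easier.)
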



Extending Theorem~\ref{thm:intro5} to relations of arbitrary arity meets unexpected difficulties. Twin-width is intrinsically a property of binary structures\footnote{and non-binary structures with \emph{first-order transductions} from binary structures of bounded twin-width~\cite{Bonnet:etal:jacm2022}.}. Furthermore, based on conversations with experts in the area, establishing a useful width notion for non-binary structures that generalizes twin-width is considered extremely challenging, or possibly impossible, given the tendency of non-binary structures to very rapidly become uncontrollably expressive. 
Thus we have three options -- hoping that no tractable non-binary languages exist; moving on from twin-width to other width notions more suitable for non-binary structures; or developing such a width-notion ourselves to suit our needs. 

The first is a non-starter. Consider the basic 1-in-3 relation $R_{1/3}=\{100, 010, 001\}$.
It is easily verified to be preserved by $\fcon_2$, so the W[1]-hardness proof of Theorem~\ref{thm:intro4} does not apply. There is also no obvious way to reduce $\udcsp(R_{1/3}, \mo)$ to a binary structure while preserving bounded twin-width. In fact, the FPT-status of $\udcsp(R_{1/3}, \mo)$ is left as an open problem.

For the second, the most natural parameter to investigate is {\em clique-width} or one of its functionally equivalent parameters such as {\em boolean width} or {\em rank-width}. Rank-width is possible to extend to non-binary relations so a reasonable first question is whether
the (graphs/matrices corresponding to a) set of binary, $\mo$-fgpp-definable relations over a relation $R$ preserved by the connector polymorphism has bounded rank-width or not.  Here, we give a simple relation $R = \{01, 10, 12, 21\}$ over $\numdom{3}$ which together with Boolean implication $\{00,01,11\}$ can be used to give a construction in terms of {\em diamond graphs} which has unbounded rank-width (Theorem~\ref{thm:unbounded_rankwidth}). 

Thus, we are left with designing a non-binary width parameter that generalizes twin-width. 
We consider such a parameter, inspired by the properties of $\pcon$, based on extending grid-rank to non-binary structures which we call {\em projected grid-rank}.
Towards the definition of this parameter we first note that simply excluding $k$-grids is not enough: the relations $S_n(x,y,z) \equiv (x + y = z)$ over all domains $\numdom{n}$ together with unary relations give a W[1]-hard CSP but, viewed as a 3-dimensional grid, each relation does not even contain a $2 \times 2 \times 2$ subgrid where every cell is non-constant. 
Instead, we define a \emph{projected grid}
  of $R \subseteq \numdom{n}^r$ w.r.t.\ a partition $A \cup B$ of $[r]$ to be a relation $R' \subseteq S \times T$ where $S$ and $T$ are two ordered subsets of $\numdom{n}^A$, and  $\numdom{n}^B$, respectively, such that $R'=\{(s,t) \in S \times T \mid s \cup t \in R\}$
where $s \cup t$ denotes the tuple in $\numdom{n}^r$ whose entries in coordinates $A$
match $s$ and whose entries in coordinates $B$ match $t$. We then say that $R$ has \emph{projected grid-rank $k$} if it has a
  projected grid with grid-rank at least $k$. The definition generalizes to constraint languages in the obvious way and we then relate bounded projected grid-rank to the connector polymorphism as follows.

\begin{theorem}  \label{thm:intro6}
  Let $\Gamma$ be a finite base language. The set of $(\mo \cup \mo')$-fgpp-definable relations over $\Gamma$ has bounded projected grid-rank if and only if $\Gamma$ satisfies the connector property.
\end{theorem}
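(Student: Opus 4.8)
The plan is to prove the two directions separately, both resting on the Galois correspondence of Theorem~\ref{thm:intro1} and on the transitivity of $(\mo \cup \mo')$-fgpp-definability, which is a consequence of Theorem~\ref{thm:intro1} together with the fact that the family $\mo \cup \mo'$ is closed under composition of maps (monotone after monotone and anti-monotone after anti-monotone are monotone, monotone after anti-monotone is anti-monotone). For the direction ``connector property fails $\Rightarrow$ unbounded projected grid-rank'' I would argue contrapositively: if $\pcon \notin \mpol(\Gamma, \mo)$, then by the reductions underlying Theorem~\ref{thm:intro4} the language $\Gamma$ can $(\mo \cup \mo')$-fgpp-define $R_3 = \{00, 02, 11, 20, 22\}$, and $R_3$ can in turn $\mo$-fgpp-define the graph $\sigma^\bullet$ of every permutation $\sigma \colon \numdom{n} \to \numdom{n}$; by transitivity $\Gamma$ $(\mo \cup \mo')$-fgpp-defines every $\sigma^\bullet$. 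Each $\sigma^\bullet$ is binary, so its projected grid with respect to the trivial partition $\{1\} \cup \{2\}$ with $S = T = \numdom{n}$ equals $\sigma^\bullet$ itself, and the class of all permutation matrices has unbounded grid-rank~\cite{bonnet2024twin}; hence the $(\mo \cup \mo')$-fgpp-definable relations over $\Gamma$ have unbounded projected grid-rank.

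For the converse, assume $\pcon \in \mpol(\Gamma, \mo)$. Using the self-duality of the connector pattern under order reversal --- the set of its defining tuples is invariant under transposing the extreme pattern symbols $0 \leftrightarrow 2$, so that $\fcon_n$ commutes with reversal of $\numdom{n}$ --- this upgrades to $\pcon \in \mpol(\Gamma, \mo \cup \mo')$, and hence by Theorem~\ref{thm:intro1} every $(\mo \cup \mo')$-fgpp-definable relation $R \subseteq \numdom{n}^r$ over $\Gamma$ is preserved by the concrete operation $\fcon_n$. Fix such an $R$, a partition $A \cup B$ of $[r]$, and a projected grid $R' \subseteq S \times T$ with $S \subseteq \numdom{n}^A$ and $T \subseteq \numdom{n}^B$ (assume for now that their orderings refine the coordinatewise orders on $\numdom{n}^A$ and $\numdom{n}^B$; the general case is discussed below), viewed as linearly ordered domains. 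The crux is that $R'$, as an ordered binary relation on these two chains, is again $(\mo \cup \mo')$-fgpp-definable over $\Gamma$: expand the fgpp-definition of $R$ into atoms; an atom using only coordinates in $A$ becomes a unary constraint carving out $S$, one using only $B$ a unary constraint on $T$, and a mixed atom, say with relation symbol $Q \in \Gamma$, becomes a binary constraint $Q(\mu(s), \mu'(t))$ where $\mu \colon S \to \numdom{d}$ first projects $s$ onto some coordinate $\ell \in A$ and then applies the original monotone or anti-monotone map --- and since the order on $S$ refines the coordinatewise order, the composite $\mu$ is again monotone or anti-monotone, and likewise $\mu'$. By transitivity of fgpp-definability, $R'$ is binary and $(\mo \cup \mo')$-fgpp-definable over $\Gamma$, so the fact established in proving Theorem~\ref{thm:intro5} --- that every $(\mo \cup \mo')$-fgpp-definable binary relation over a $\pcon$-preserved language has grid-rank bounded by a constant --- yields that the grid-rank of $R'$ is at most some $c_0 = c_0(\Gamma)$. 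As $R$, $A \cup B$ and $R'$ were arbitrary, the projected grid-rank of $\Gamma$ is at most $c_0$.

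The step I expect to be the main obstacle is controlling the interplay between the ordering on $S$ (and on $T$) and coordinatewise application of $\fcon_n$: the reduction above is clean when the order on $S$ is a linear extension of the coordinatewise order on $\numdom{n}^A$, so that coordinate projections compose with monotone maps to monotone maps; if the definition of a projected grid allows an arbitrary linear order on $S$, one must first prepend a Ramsey-style argument (as foreshadowed in the overview), passing from a hypothetical large grid inside $R'$ to a sub-grid on which every relevant coordinate varies monotonically, and then argue that this step does not introduce an arity-dependent blow-up of the bound --- which is delicate precisely because the arity $r$ of fgpp-definable relations is unbounded. Two further points need checking: that $\pcon \in \mpol(\Gamma, \mo)$ descends through the Galois connection of Theorem~\ref{thm:intro1} to preservation by the interpretations $\fcon_n$ over \emph{all} domains $\numdom{n}$, not merely over the base domain; and that ``bounded projected grid-rank'' correctly captures ``bounded twin-width of all binary projections'', which rests on grid-rank being the canonical obstruction to bounded twin-width for ordered binary structures~\cite{bonnet2024twin}.
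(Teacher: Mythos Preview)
Your approach is essentially the paper's. Both directions match: for ``connector fails $\Rightarrow$ unbounded'' you and the paper pass through $R_3$ and the class of all permutation graphs; for ``connector holds $\Rightarrow$ bounded'' you argue that any projected grid $R'$ is itself a binary $(\mo\cup\mo')$-fgpp-definable relation over $\Gamma$ and then invoke the binary grid-rank bound, while the paper simply inlines that same breakpoint argument directly on $R'$.

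Your main worry---that an arbitrary linear order on $S$ (or $T$) might force a Ramsey preprocessing step, potentially introducing an arity-dependent blow-up---does not materialize. In the paper's definition, an \emph{ordered subset} of $\numdom{n}^r$ is order-isomorphic to a combinatorial line: for every coordinate $j$, the sequence $s_1[j],\ldots,s_m[j]$ is already strictly increasing, strictly decreasing, or constant. Hence each coordinate projection $S \to \numdom{n}$ is monotone or anti-monotone, and composing with the original guarding map (itself in $\mo\cup\mo'$) lands in $\mo\cup\mo'$ immediately. So $R' \in \fgpp{\{R\}}{\mo\cup\mo'} \subseteq \fgpp{\Gamma}{\mo\cup\mo'}$ by transitivity, and the resulting grid-rank bound depends only on the maximum arity $r_0$ and domain size $d$ of the \emph{base} language (the paper obtains $O(r_0 d)$), not on the arity $r$ of $R$. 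Two small cleanups: first, the connector property is already defined via $\mo\cup\mo'$, so no ``upgrading'' from $\mpol(\Gamma,\mo)$ is needed (and the two-identity pattern in Definition~\ref{def:line} is not literally self-dual; the other two identities from the introduction arise by applying the anti-monotone reversal map, which is why $\mo\cup\mo'$ is baked into the definition). Second, Theorem~\ref{thm:intro5} as stated concerns binary $\Gamma$; to invoke it when $\Gamma$ is non-binary you should first pass through Lemma~\ref{lemma:to-binary}, which replaces $\Gamma$ by an equivalent binary language $\Gamma'$ over domain $\numdom{r_0 d}$ with the same binary fgpp-definable relations and (via the Galois connection) the same connector property.
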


If we return to the troublesome relation $S_n(x,y,z) \equiv (x + y = z)$ over $\numdom{n}$ 
then we find that (1) it has the connector property, but (2) its projective grid-rank is growing with $n$. Thus, any finite language $\Gamma$ with the connector property fails to fgpp-define $S_n$ for some $n \in \N$. (This is analogous to the situation with permutations -- every individual permutation has the connector property, but any finite language that can fgpp-define all of them does not.)

All evidence thus far suggests that the connector property is the correct boundary for tractability, even for higher-arity relations. We give one further piece of evidence.  
We show that for any binary \emph{projection} of a $k$-ary relation with bounded projected grid rank, the projected relation has bounded grid-rank (and thus bounded twin-width). 
This in particular rules out every W[1]-hardness proof we can think of for a language with the connector property. 

\begin{theorem} \label{thm:intro7}
    There is a function $f \colon \N \times \N \to \N$ 
  such that the following holds:
  For all $d, k \in \N$ and every relation $R \subseteq \numdom{n}^k$
  with projected grid-rank at most $d$, over $\numdom{n}$
  each binary projection of $R$ has grid-rank at most $f(d,k)$.   
\end{theorem}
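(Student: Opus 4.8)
The plan is to prove the contrapositive in a quantitative form: if some binary projection $R'$ of $R$ has large grid-rank, then $R$ itself has large projected grid-rank, and to extract the bound $f(d,k)$ from this. Recall that a binary projection of $R \subseteq \numdom{n}^k$ is obtained by choosing two coordinates $i \neq j \in [k]$ and fixing the remaining $k-2$ coordinates to some constants $c$; write $R'_{i,j,c} \subseteq \numdom{n}^2$ for the resulting binary relation. A grid-rank-$\ell$ witness in $R'_{i,j,c}$ is, essentially, a pair of ordered sequences of values $a_1 < \dots < a_\ell$ (for coordinate $i$) and $b_1 < \dots < b_\ell$ (for coordinate $j$) such that the $\ell \times \ell$ matrix with entries $[(a_p,b_q) \in R'_{i,j,c}]$ has high rank over $\mathrm{GF}(2)$ (or, in the ``grid'' sense of~\cite{bonnet2024twin}, contains a large ``mixed'' pattern). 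I would first fix the convention from the paper precisely — whether grid-rank is defined via rank over $\mathrm{GF}(2)$ of the $0/1$ matrix, or via the combinatorial $k$-grid/mixed-minor notion — since the proof of Theorem~\ref{thm:intro7} only needs that this quantity is monotone under taking submatrices and is controlled by a Ramsey-type argument; both standard variants have this property.

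The key observation is that a grid-rank-$\ell$ witness in $R'_{i,j,c}$ is already a projected grid of $R$ with grid-rank $\ell$, for the partition $A = \{i\}$, $B = \{j\}$ — \emph{except} that the witness lives in the slice of $R$ where the coordinates in $[k]\setminus\{i,j\}$ are frozen to $c$, whereas a projected grid as defined in the excerpt uses partition classes $A, B$ that together cover all of $[r]$. So the real content is to absorb the frozen coordinates. I would do this by enlarging the partition classes: take $A = \{i\}$ and $B = [k] \setminus \{i\}$. Now a projected grid for $(A,B)$ is a relation $R'' \subseteq S \times T$ with $S \subseteq \numdom{n}^{\{i\}}$ and $T \subseteq \numdom{n}^{B}$. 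Choose $S = \{a_1, \dots, a_\ell\}$ as in the binary witness, and for $T$ choose the $\ell$ tuples $t_q \in \numdom{n}^B$ that agree with $c$ on $[k]\setminus\{i,j\}$ and take value $b_q$ on coordinate $j$. Since $T$ needs to be an \emph{ordered} subset of $\numdom{n}^B$, I would order the $t_q$ by their $j$-th coordinate $b_q$ (any total order consistent with the product order works here because they are pairwise comparable in the relevant coordinate and constant elsewhere). Then $R'' = \{(a_p, t_q) \mid a_p \cup t_q \in R\}$ is exactly the matrix $[(a_p,b_q)\in R'_{i,j,c}]$, so $R''$ has grid-rank $\ell$, hence $R$ has projected grid-rank $\ge \ell$. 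Contrapositively, if $R$ has projected grid-rank $\le d$ then every binary projection has grid-rank $\le d$, giving $f(d,k) = d$ in this naive reading.

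The main obstacle — and the reason the statement carries a function $f(d,k)$ rather than the identity — is the \emph{ordering} requirement in the definition of a projected grid together with the precise definition of grid-rank used in~\cite{bonnet2024twin}. In that framework grid-rank is not literally matrix rank but a combinatorial quantity: a relation has grid-rank $\ge \ell$ if after suitably partitioning the two ordered sides into $\ell$ intervals the resulting $\ell \times \ell$ ``division matrix'' has $\ell$ distinct rows and columns, or contains an $\ell$-mixed minor. The subtlety is that when we pass from the binary witness (two ordered sides of size $\ell$ each) to a projected grid of $R$ we must present $S$ and $T$ as ordered subsets of $\numdom{n}^A$ and $\numdom{n}^B$ under the \emph{product} order inherited from $\numdom{n}$, and the tuples realizing distinct rows/columns of the witness need not be comparable in that product order. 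This is where a Ramsey argument enters, exactly as flagged in the excerpt for Theorem~\ref{thm:intro6}: from a binary grid-rank-$\ell$ witness one extracts, by a product-Ramsey / Dilworth-type pruning on the $k-2$ frozen coordinates (which here are in fact constant, so the pruning is only needed if one allows them to vary within a slice) and on the two active coordinates, a sub-witness of size $g(\ell, k)$ that is a genuine projected grid in the ordered sense. Running this in reverse, a projected-grid-rank bound $d$ yields a grid-rank bound $f(d,k)$ on binary projections where $f$ is the Ramsey-type inverse of $g$ — primitive recursive, tower-type in $k$ in the worst case, which is acceptable since $k$ is a parameter. I would therefore structure the proof as: (i) fix definitions and the monotonicity of grid-rank under submatrices; (ii) the easy embedding of a binary grid-rank witness into a slice of $R$; (iii) the Ramsey step turning a large ``raw'' witness inside a slice into an order-respecting projected grid, citing or adapting the Ramsey lemma already used for Theorem~\ref{thm:intro6}; (iv) assemble the contrapositive and read off $f$. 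The one place to be careful is step (iii) in the case where the paper's projected grid allows $S,T$ to be arbitrary ordered subsets (not necessarily chains in the product order) — if so, step (iii) is essentially vacuous and $f(d,k)=d$; the function $f$ in the statement then just hedges against the stricter reading. I would settle this ambiguity against the formal definition in Section~\ref{sec:statement} and state explicitly which reading is used.
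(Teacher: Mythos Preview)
Your proposal rests on a misreading of the term \emph{binary projection}. You interpret $\pi_{1,2} R$ as a \emph{slice} of $R$: fix the remaining $k-2$ coordinates to constants $c$ and look at the resulting binary relation $R'_{i,j,c}$. Under that reading your argument is essentially correct and in fact gives $f(d,k)=d$: the tuples $t_q$ you build are constant on $[k]\setminus\{i,j\}$ and strictly increasing on $j$, so they form an ordered subset of $\numdom{n}^B$ in the precise sense of the paper (a combinatorial line), and no Ramsey step is needed at all. You noticed this yourself and then tried to invent a reason why $f$ should depend on $k$; that discussion is confused because, for slices, it does not.

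The actual statement concerns the \emph{existential} projection $\pi_{1,2} R = \{(a,b)\mid \exists c_3,\ldots,c_k\colon (a,b,c_3,\ldots,c_k)\in R\}$. This is a genuinely different and much harder object: a grid-rank witness in $\pi_{1,2} R$ gives you, for each grid position $(a_p,b_q)$, its \emph{own} witness tuple $c^{p,q}\in\numdom{n}^{k-2}$, and these tuples need have no relation to one another across the grid. To produce a projected grid of $R$ you must find a large sub-grid and assemble the witnesses into sequences $S\subseteq\numdom{n}^A$, $T\subseteq\numdom{n}^B$ that are monotone (increasing, constant, or decreasing) \emph{in every coordinate simultaneously}. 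This is exactly where the dependence on $k$ enters. The paper's proof handles it by an induction on the number of ``hidden'' dimensions, using the two-dimensional Erd\H{o}s--Szekeres theorem of Fishburn and Graham (Theorem~\ref{thm:twodim-Erdos-Szekeres}) together with a case split on whether individual label values $\ell\in D_r$ are ``diverse'' in the cells of a Latin division of the grid; see Lemmas~\ref{lemma:induction-rich-cells}--\ref{lemma:induce-poor-labels}. None of this machinery appears in your outline, and the ``easy embedding'' of step~(ii) does not survive the switch from slicing to existential projection.
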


The proof is based on a multi-layered Ramsey argument which uses the existence of large high-rank grids (analogous to applications of the Marcus-Tardos theorem in twin-width theory~\cite{bonnet2024twin,Bonnet:etal:jacm2022,PilipczukSZ22compact}) with the requirement that higher dimensions exhibit regular lexicographic-like ordering, inspired by the two-dimensional Erdős-Szekeres theorem of Fishburn and Graham~\cite{FishburnG93lexicographic}. 
We defer the details to the main paper. 
See Figure~\ref{fig:property_table} for a summary of our width results. Based on the strong connection between the connector property and bounded width we pose the following explicit conjecture (note that the hardness part of the conjecture already follows from our results and that we have verified the FPT part for binary relations).

\begin{conjecture} \label{conjecture}
$\mcsp(\Gamma)$ is FPT if and only if $\Gamma$ has the connector property.
\end{conjecture}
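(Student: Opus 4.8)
The ``only if'' direction of Conjecture~\ref{conjecture} is already settled by the results of the paper: if $\Gamma$ does not have the connector property, then either $\Gamma$ is $\mo$-preserved by $\pmin$, $\pmax$ or $\pmedian$ — in which case $\udcsp(\Gamma,\mo)$ is in $P$ by Theorem~\ref{thm:intro3}, and such languages are easily checked to have the connector property, so this case is vacuous — or, by Theorem~\ref{thm:intro4}, $\udcsp(\Gamma,\mo)$ is W[1]-hard. Hence all the work is in the ``if'' direction: $\pcon \in \mpol(\Gamma,\mo)$ should imply $\udcsp(\Gamma,\mo) \in \FPT$. This is exactly Theorem~\ref{thm:intro5} for binary $\Gamma$, so the plan is to lift that argument to relations of arbitrary arity.

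The first step is normalization. As observed after Theorem~\ref{thm:intro3}, up to FPT-reductions we may assume the available maps are $\mo \cup \mo'$, so the combinatorial object to control is the closure of $\Gamma$ under $(\mo\cup\mo')$-fgpp-definitions. By Theorem~\ref{thm:intro6}, the connector property is equivalent to this closure having \emph{projected grid-rank} bounded by some constant $d = d(\Gamma)$. Thus the conjecture reduces to the purely structural statement: \emph{if the $(\mo\cup\mo')$-fgpp-closure of the finite base language $\Gamma$ has projected grid-rank at most $d$, then $\udcsp(\Gamma,\mo\cup\mo')$ can be solved in time $f(k)\cdot n^{O(1)}$}, where $k$ is the number of variables and $n$ the domain size.

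The heart of the matter — and the step I expect to be the main obstacle — is converting bounded projected grid-rank into an algorithm. In the binary case this works because bounded grid-rank gives bounded twin-width~\cite{bonnet2024twin}, bounded twin-width on ordered binary structures yields a contraction sequence, and a routine dynamic program runs on that sequence with tables indexed by the residual configurations on the $k$ variables. For higher arity there is no twin-width to invoke, and, as the paper stresses, a genuine generalization of twin-width (or a functionally equivalent width notion) to relations of higher arity is needed. The plan is to define such a measure guided by $\pcon$: a ``merge width'' on the value-ranges of the variables, where one merges a consecutive block of domain values $\{i, i{+}1, \dots, j\} \subseteq \numdom{n}$ of a variable when that block behaves identically across all constraints, the blocks being chosen in a single \emph{global} order that is lexicographic-like across the coordinates of each constraint. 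Theorem~\ref{thm:intro7} is the encouraging input here: every binary projection of a relation with bounded projected grid-rank has bounded grid-rank, hence bounded twin-width, which already rules out the obvious W[1]-hardness gadgets and says each constraint is ``globally binary-like''; and the regularity extracted in the Ramsey argument behind Theorem~\ref{thm:intro7} (the two-dimensional Erdős–Szekeres phenomenon of~\cite{FishburnG93lexicographic}) is precisely what should make the per-projection contractions mutually compatible, so that they assemble into one merge sequence. Running a DP along such a sequence, with states recording the bounded residual interaction on the $k$ variables, would complete the algorithm.

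Concretely, the milestones are: (1) define a higher-arity width measure refining projected grid-rank and prove, by a Marcus–Tardos-style argument analogous to the ones in twin-width theory, that it is finite exactly under the connector property; (2) show that a structure of bounded width of this kind admits a contraction/elimination sequence computable in FPT time; (3) design the DP over that sequence and prove correctness. Step (2) is where a new idea is required — the naive candidates fail, as witnessed by $S_n(x,y,z) \equiv (x+y=z)$, which has the connector property yet unbounded projected grid-rank (so any finite connector-preserved $\Gamma$ simply cannot fgpp-define all of the $S_n$), and by the rank-width blow-up of Theorem~\ref{thm:unbounded_rankwidth}, which kills rank-width as a substitute. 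A sensible intermediate target, fully within reach of the present machinery and a good sanity check on the whole approach, is to resolve the smallest open instance, $\udcsp(R_{1/3},\mo)$ with $R_{1/3}=\{100,010,001\}$: a single ternary relation for which the merge-structure construction should be carried out essentially by hand, and whose FPT membership would be strong evidence for the conjecture.
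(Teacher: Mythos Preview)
The statement is Conjecture~\ref{conjecture}, which the paper explicitly leaves \emph{open}: the hardness direction is established (Theorem~\ref{thm:monotone-w1}), and the FPT direction is proved only for binary base languages (Theorem~\ref{thm:twinwidth}). There is no proof in the paper to compare your proposal against.

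Your proposal is not a proof but a research programme, and you are candid about this: you write that step~(2) ``is where a new idea is required'' and that the merge-width notion you sketch is a plan rather than a construction. What you outline --- normalize to $\mo\cup\mo'$, use bounded projected grid-rank as the structural invariant, seek a higher-arity width measure that (i) coincides with the connector property, (ii) yields a computable contraction/elimination sequence, and (iii) supports a DP --- is exactly the programme the paper itself lays out in Section~\ref{sec:non_binary} and the Discussion, down to the choice of $R_{1/3}=\{100,010,001\}$ as the first non-binary test case and the warnings about $S_n(x,y,z)\equiv(x+y=z)$ and unbounded rank-width. So your proposal is well-aligned with the paper, but it does not advance beyond it: the decisive step, turning bounded projected grid-rank into an algorithmically usable decomposition for $r\ge 3$, remains a genuine gap that neither you nor the paper fills.

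One small point on the ``only if'' direction: you assert that languages $\mo$-preserved by $\pmin$, $\pmax$ or $\pmedian$ ``are easily checked to have the connector property'', making that case vacuous. The paper relies on the same implication implicitly (otherwise the dichotomy in Theorem~\ref{thm:fpt-vs-w1-mono} would fail), but neither you nor the paper actually writes down the verification. It is not difficult, but it is not a one-liner either, and if you are going to invoke it you should prove it.
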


\begin{figure}[tbp]
\centering
\begin{tabular}{@{}lllll@{}}
\toprule
\textbf{Property} & \textbf{fgpp-definability} & \textbf{Projected GR} & \textbf{TW (binary proj.)} & \textbf{RW} \\
\midrule
\textbf{Connector } &   \begin{tabular}[t]{@{}l@{}}
                             Not all permutations \\
                             \textit{Lemma~\ref{lemma:all_permutations}}
                           \end{tabular}  & \begin{tabular}[t]{@{}l@{}} Bounded \\ \textit{Lemma~\ref{lemma:projected-grid-rank}} \end{tabular} & \begin{tabular}[t]{@{}l@{}}
                                                           Bounded \\ \textit{Corollary~\ref{corollary:twin_width_bounded}} 
                                                         \end{tabular} & \begin{tabular}[t]{@{}l@{}} Unbounded \\ \textit{Theorem~\ref{thm:unbounded_rankwidth}} \end{tabular}\\
\midrule
\textbf{Not connector} & \begin{tabular}[t]{@{}l@{}} All permutations \\ Lemma~\ref{lemma:not_line_gives_r3} \end{tabular} & \begin{tabular}[t]{@{}l@{}} Unbounded \\ \textit{Lemma~\ref{lemma:projected-grid-rank}} \end{tabular} & \begin{tabular}[t]{@{}l@{}}
                                                             Unbounded \\ \textit{Corollary~\ref{corollary:twin_width_bounded}}
                                                           \end{tabular} & \begin{tabular}[t]{@{}l@{}} Unbounded \\ \textit{Theorem~\ref{thm:unbounded_rankwidth}} \end{tabular}\\
\bottomrule
\end{tabular}
\caption{Connections between the connector polymorphism and width-properties. In the table, GR stands for grid-rank, TW for twin-width, and RW for rank-width.}
\label{fig:property_table}
\end{figure}

\subsection{Applications} \label{sec:applications}


We finish the introductory part by highlighting some interesting
applications of the $\udcsp$ framework.
We
begin with the parameterized complexity of graph separation problems and its connections to $\mcsp$. We broaden the perspective to other map families and computational complexity in our second example where we consider
many-valued logics.

\paragraph{Graph separation.}
Let us consider one of the most natural applications of $\udcsp$:
min-cuts in digraphs. 
  Hence, let $G$ be a digraph with $s, t \in V(G)$, and 
   let $\cP=\{P_1, \ldots, P_k\}$ be an arc-disjoint $st$-max flow in $G$. 
   For
  each path $P_i \in \cP$, we enumerate the edges of $P_i$ from $s$ to $t$
  as $e_{i,1}, \ldots, e_{i,m_i}$.
  We claim that there is an
  instance of $\udcsp(\Gamma,\mo)$ over the Boolean
  language $\Gamma=\{\textsf{Impl}, 0\}$ with $k$ variables whose solutions are in
  bijection with the $st$-min cuts in $G$ (where $\textsf{Impl} = \{00,01,11\}$ and where we write $0$ for the constant Boolean relation $\{0\}$).

We abuse notation slightly and let $V(\cP),E(\cP)$ denote the vertices and edges appearing in $\cP$, respectively.
We define a graph $G'$ where the paths in $\cP$ have been removed, i.e.
$V(G')=V(G) \setminus V(\cP)$ and $E(G')=E(G) \setminus E(\cP)$.
For every pair of
  vertices $u, v \in V(\cP) \setminus \{t\}$ and $i, j \in [k]$
  such that $u \in V(P_i)$, $v \in V(P_j)$, $(u,i) \neq (v,j)$,
  and there is a path from $u$ to $v$ in $G'$ with internal vertices
  (if any) from $V'$, let $e_{i,a}$ and $e_{j,b}$ be the edges of $P_i$
  following $u$ and $P_j$ following $v$, respectively, and
  add a constraint $\textsf{Impl}([X_i \geq a],[X_j \geq b])$. 
  Note that this includes the case $u=v$ if $i \neq j$. 
  
  The satisfying assignments of the resulting $\udcsp(\Gamma,\mo)$ instance are in bijection with
  $st$-min cuts in the following sense. If $Z \subseteq E(G)$ is an
  $st$-min cut, then for each $i \in [k]$ let $Z \cap E(P_i)=\{e_{i,n_i}\}$
  and let $\varphi$ be the assignment where $\varphi(X_i)=n_i$ for each $i$. 
  Then $\varphi$ satisfies all constraints, since otherwise there is a
  path in $G'$ from the $s$-side of $Z$ to the $t$-side of $Z$.
  Conversely, for every satisfying assignment $\varphi$ of the $\udcsp$ instance,
  let $Z=\{e_{i,n_i} \mid X_i=n_i, i \in [k]\}$. Then $Z$ is an $st$-min cut,
  since any path from the $s$-side of $Z$ to the $t$-side of $Z$
  implies a $\textsf{Impl}$-constraint violated by $\varphi$. 

With this correspondence in mind, we can view other min-cut
problems through the lens of $\udcsp$.
First consider a pair of
edges $e_{i,a}$ and $e_{j,b}$ under the constraint that a min-cut $Z$
should contain $e_{i,a}$ if and only if it contains $e_{j,b}$. This
can be captured by the binary relation $R = \{00, 02, 11, 20, 22\}$
extended with unary maps as follows:
\[
  R(f_a(X_i), f_b(X_j)) \quad \text{ where } \quad
  f_i(x) = 
  \begin{cases}
    0 & x < i \\
    1 & x = i \\
    2 & x>i.
  \end{cases}
\]
Now, $\udcsp(\{R,\mathsf{Impl},0\},\mo)$
contains the \textsc{Paired Min-Cut} problem and
is thus W[1]-hard. 
We consider a couple of variations on this theme.
Let $e_{i,a}=uv$ and $e_{j,b}=u'v'$ be two edges.

\begin{itemize}
\item Replacing $R$ with $R_a=\{00, 02, 11, 22\}$ yields a
  coordination constraint that additionally forbids that $uv$ is on
  the $s$-side and $u'v'$ on the $t$-side of the cut,
  as in the \textsc{3-Chain SAT} constraint $(u \to v \to u' \to v')$~\cite{ChitnisEM17,KimKPW25fa1}.
  
\item Replacing $R$ with $R_b=\{02, 11, 20, 22\}$ yields a
  coordination constraint that additionally forbids that both $uv$ and
  $u'v'$ are on the $t$-side of the cut,
  as in the \textsc{Coupled Min-Cut} problem~\cite{KimKPW23fa3}.
  
  \item Replacing $R$ with $R_c=\{00, 02, 11, 20\}$ yields a
  coordination constraint that additionally forbids that both $uv$ and
  $u'v'$ are on the $s$-side of the cut (as a kind of dual to
  \textsc{Coupled Min-Cut}).
\end{itemize}

It is readily checked that each of these three relations satisfies the connector property, hence $\udcsp(\{R_a,R_b,R_c,\mathsf{Impl},0\}, \mo)$ is FPT.
While this does not represent a completely new algorithm for these problems,
since the above approach is only applicable after the exhaustive application of flow augmentation, it is interesting that we can solve them with a unified method, and that the resolution (unlike the original algorithms) involves the application of bounded twin-width.

In Section~\ref{sec:boolean_mincsp}, we give a more exhaustive application of these methods, showing how a significant portion of the \textsc{Boolean MinCSP} algorithm of Kim et al.~\cite{KimKPW23fa3} can be captured and simplified using a $\udcsp(\Gamma,\mo)$ formulation, where $\Gamma$ is a binary language with the connector polymorphism.

\paragraph{Many-valued logic.}
Consider clauses which are disjunctions of atomic propositions of the form $x \in S$,
where $x$ is a variable and $S$ is a set from a family of subsets ${\cal S}$ of a finite domain $D$. This is the basis for a variant of many-valued logics known as {\em signed logic};
in this context, the sets of $S$ are called signs. 
Clearly, Boolean propositional logic and the SAT problem can be viewed as having domain $\{0,1\}$
and where a positive literal $x$ corresponds to $x \in \{1\}$ while 
a negated literal $\neg x$ corresponds to $x \in \{0\}$.
A major motivation behind signed logics is that they offer a classical logic framework for working with many-valued logics~\cite{Hahnle:book94,Lu:etal:jar98,Murray:Rosenthal:fi94}.
Computational aspects of signed logics have been studied extensively:
the survey by Beckert, Hähnle, and Manyá~\cite{Beckert:etal:survey2000}
covers early results
whereas more recent complexity-oriented results can be found in, for instance,
\cite{Ansotegui:Manya:ismvl2003,Charatonik:Wrona:ismvl2007,Chepoi:etal:ejc2010,Gil:etal:sicomp2008,Jonsson:Nordh:tocs2010}.

Most complexity results in the literature focus on CNF formulas.
Naturally, more complex formulas can be constructed if by using other
relations than clauses. If the domain $D$ is fixed --- the standard setting for signed logics --- then
the complexity of the satisfiability problem for all choices of allowed relations and signs follows immediately from the finite-domain CSP classification by Bulatov~\cite{Bulatov:focs2017}
and Zhuk~\cite{Zhuk:jacm2020}. While a fixed domain may be sufficient in some cases, there are indeed contexts
where domain flexibility is desirable; database applications and knowledge representation in AI are
conceivable examples. Another motivation is that certain families of many-valued logics
can be treated in a uniform manner. It is common that such a family is parameterized
by $k \geq 2$ and each logic has a value domain of size $k$. One example is the family  $P_2,P_3,\dots$
introduced by Post~\cite[Section~11]{Post:ajm21}
and other examples have been provided by, for instance, Gödel~\cite{Godel:aaww32} and
\L ukasiewicz (cf.~\cite[Section~III]{Karpenko:book}).
The idea is supported by Hähnle~\cite{Hahnle:jlc94}: signed logics are, in a computationally meaningful way, a generic representation for many-valued logics

By viewing signs as unary maps,
we can drop the restrictions above: the domain is
part of the input and the relations are not restricted to clauses.
It is common that
the signs have the form $\uparrow a = \{d \in D \; | \; d \succeq a\}$ and 
$\downarrow a = \{d \in D \; | \; d \preceq a\}$, where
$\prec$ is some partial order on $D$; formulas defined on such signs are called {\em regular} signed
formulas. 
If we go back to Example~\ref{ex:firstexample} and {\sc NAE-Integer-3-SAT}, we see that it is
a problem of this kind where the underlying partial order is a total order. 
Formulas with signs of the form $\{d\}$ for some $d \in D$
are referred to as {\em monosigned} formulas.
Note that monosigned
formulas are regular with respect to the empty ordering.
Consequently, regular formulas over total orders and monosigned formulas
are two extreme cases: the maximum number of element pairs
are related in the first case and the minimum number in the second case.
To analyze the complexity of the monosigned case, then we have to understand
the complexity of $\ohcsp(\Gamma)$ for Boolean $\Gamma$ since
every monosign can be expressed by a
one-hot map. A P/NP dichotomy follows directly from Theorem~\ref{thm:p-vs-np-oh}.
We can obtain a P/NP dichotomy for regular signed formulas over the totally ordered domain ${\mathbb N}$, too. Note first that $\uparrow a(x)$ corresponds to the monotone map $[x \geq a]$
while $\downarrow a(x)$ corresponds to the anti-monotone map $[x \leq a]$.
The connection with $\udcsp(\Gamma,\mo \cup \mo')$ is obvious and we get
a P/NP-dichotomy from Theorem~\ref{thm:p-vs-np-antimono}.

There are other results in the literature that, in one way or another, are connected to
signed logics.
One example is due to Bodirsky and
Mottet~\cite{Bodirsky:Mottet:lmcs2018}. They consider an infinite
domain $D$ (such as the natural numbers ${\mathbb N}$) and a {\em finite} set $U$
of unary relations $U_1,\dots,U_n \subseteq D$.
They prove a P/NP dichotomy for CSP$(\Gamma)$ where the finite constraint language $\Gamma$
only contains relations that are first-order definable in $\{U_1,\dots,U_n\}$.
This may be viewed as a variant of signed logic where
the allowed signs
are those that are first-order definable in $U$. Since $U$ is a finite set, this
naturally restricts the possible signs. For instance, $U$ can only contain
a finite number of monosigned $x \in \{d\}$ literals, so Bodirsky and
Mottet's result is incomparable to ours.

\bigskip

\paragraph{Roadmap.}
The rest of the paper is structured as follows. In Section \ref{sec:prelims}, we give an overview of  constraint satisfaction and parameterized complexity. In Section \ref{sec:algebra}, we 
formally introduce the $\udcsp$ problem together with the algebraic framework.
The complexity of $\udcsp$ is studied in Sections \ref{sec:always_hard}--\ref{sec:monotone} where we consider arbitrary maps, one-hot maps, and 
monotone maps, respectively. Figure~\ref{fig:summary} contains a summary of our complexity results.
We investigate $\mcsp(\Gamma)$ when $\Gamma$ is a
non-binary base language with the connector polymorphism in Section~\ref{sec:non_binary}.
We conclude the paper in Section~\ref{sec:discussion} with a discussion of our results and future research directions.

\begin{figure}[tbp]
\centering
\begin{tabular}{@{}lll@{}}
\toprule
\textbf{Map Type} & \textbf{Computational Complexity} & \textbf{Parameterized Complexity} \\
\midrule
\textbf{Unrestricted maps} & \begin{tabular}[t]{@{}l@{}}
                             in P and FPT if $\Gamma$ is essentially unary\\
                             \textit{Theorem~\ref{thm:arbitrary-dichotomy}}
                           \end{tabular} & (the same) \\
\midrule
\textbf{One-hot maps} & \begin{tabular}[t]{@{}l@{}}
                        in P iff $\Gamma \subseteq \langle \{\mathsf{OR}_2,0,1\} \rangle_{\neq}$ \\
                        \textit{Theorem~\ref{thm:p-vs-np-oh}}
                      \end{tabular} & 
                      \begin{tabular}[t]{@{}l@{}}
                        FPT iff $\Gamma$ weakly 0-separable \\
                        \textit{Theorem~\ref{thm:one-hot-fpt-w1}}
                      \end{tabular} \\
\midrule
\textbf{Monotone maps} & \begin{tabular}[t]{@{}l@{}}
                         in P iff $\Pol(\Gamma) \cap \{\min,\max,\median\} \neq \emptyset$ \\
                         \textit{Theorem~\ref{thm:p-vs-np-mono}}
                       \end{tabular} & 
                       \begin{tabular}[t]{@{}l@{}}
                         \textit{restriction: $\Gamma$ binary} \\
                         FPT iff $\Gamma$ has connector property \\
                         \textit{Theorem~\ref{thm:fpt-vs-w1-mono}}
                       \end{tabular} \\
\bottomrule
\end{tabular}

\caption{Complexity of $\udcsp(\Gamma,\MM)$ for various map families. Recall that $\Gamma$ is Boolean in the one-hot case. The parameterized
complexity of $\udcsp(\Gamma,\mo)$ for non-binary $\Gamma$ is discussed in Section~\ref{sec:non_binary}.}
\label{fig:summary}
\end{figure}


\section{Preliminaries}\label{sec:prelims}

We let $\Nat=\{0,1,2,\dots\}$ denote the natural numbers
and 
for every non-zero $c \in $ we let $[c]=\{1,2,\dots,c\}$. This notation can be extended to ranges: for $a, b \in \Nat$, $[a,b]=\{a,a+1,\ldots,b\}$. Let $D$ and $E$ denote two sets.
For a map $g \colon E \to D$, we let $g^\bullet$ be the {\em graph} $\{(x, g(x)) \mid x \in E\}$ of $g$.
Let $\sigma:D \rightarrow D$ be a permutation. When convenient, we implicitly view it as its graph $\sigma^\bullet$,
and allow ourselves to write things like "permutation 021" when referring to the relation $\{00,12,21\}$.
We let $E^D$ denote the set of total functions from $D$ to $E$.

We use the following graph-theoretic terminology.
Let $G$ be an undirected graph.
We write $V(G)$ and $E(G)$ to denote the vertices and edges of $G$, respectively.
If $U \subseteq V(G)$, then the {\em subgraph of $G$ induced
by $U$} is the graph $G'$ with
$V(G')=U$ and $E(G')=\{ \{v, w\} \mid v,w \in U \; {\rm and} \; \{v, w\} \in E(G)\}$. 
We denote this graph by $G[U]$.
If $Z$ is a subset of edges in $G$, we write $G-Z$ to denote the graph~$G'$ with $V(G') = V(G)$ and $E(G') = E(G) \setminus Z$.
For $X,Y \subseteq V(G)$, 
an \emph{$(X,Y)$-cut} is a subset of edges $Z$
such that $G - Z$ does not contain a path with 
one endpoint in $X$ and another in $Y$.
When $X,Y$ are singleton sets $X=\{x\}$ and
$Y=\{y\}$, we simplify the notation and write $xy$-cut instead
of $(X,Y)$-cut. We say that a path $P$ is an $xy$-{\em path}  if its endpoints are the vertices $x$ and $y$. We sometimes consider directed graphs $G$ with vertex set $V(G)$ and arc set $A(G)$; the terminology introduced above generalizes to this case.

The rest of this section begins 
with a short primer on parameterized
complexity (Section~\ref{sec:param-comp}).
We continue by introducing some basics of constraint satisfaction problems (CSP) 
(Section~\ref{sec:csp}).
Section~\ref{sec:prelim_algebra} give a short overview of terminology and some results
concerning the algebraic approach for CSPs.

%
%

\subsection{Parameterized Complexity}
\label{sec:param-comp}

We assume throughout the paper that the complexity classes
\Poly\ and \NP are distinct. Given two computational problems $L_1 \subseteq \Sigma_1^*$ and $L_2 \subseteq \Sigma_2^*$ with input alphabets $\Sigma_1,\Sigma_2$,
we write $L_1 \leq_{\rm poly} L_2$ if there is a polynomial-time many-one reduction
from $L_1$ to $L_2$. If $I$ is an instance of a problem $L$, then we let $||I||$
denote its size.

We will use the framework of 
{\em parameterized complexity}~\cite{book/DowneyF99}.
A parameterized problem is a subset of $\Sigma^* \times \Nat$ where $\Sigma$ is the input alphabet. 
The running time of an algorithm is studied with respect to the parameter
$p\in\Nat$ and the input size~$n$---the parameterized complexity of a problem is viewed as a function of both.
The underlying idea is that the parameter captures the structure of
the instance in a computationally meaningful way.
Reductions between parameterized problems thus need to take
the parameter into account. To this end, we will use {\em FPT-reductions}.
Let $L_1$ and $L_2$ denote parameterized problems with $L_1 \subseteq \Sigma_1^* \times {\mathbb N}$
and $L_2 \subseteq \Sigma_2^* \times {\mathbb N}$. 
An FPT-reduction from $L_1$ to $L_2$ is a
mapping $P: \Sigma_1^* \times {\mathbb N} \rightarrow \Sigma_2^* \times {\mathbb N}$
such that
\begin{enumerate}[(1)]
  \item
      $(x, k) \in  L_1$ if and only if $P((x, k)) \in L_2$, 
 \item
 the mapping can be computed in time $f(k) \cdot |x|^{O(1)}$ for some computable function $f : {\mathbb N} \rightarrow {\mathbb N}$, and
  \item there is a computable function $g : {\mathbb N} \rightarrow {\mathbb N}$ 
such that for all $(x,k) \in L_1$ if $(x', k') = P((x, k))$, then $k' \leq g(k)$.
\end{enumerate}

We write $L_1 \leq_{\rm FPT} L_2$ when
there is an FPT-reduction from $L_1$ to $L_2$.

The most favourable complexity class is \FPT\
(\emph{fixed-parameter tractable}),
which contains all problems that can be decided 
in $f(p)\cdot n^{O(1)}$ time with $f$ being some computable
function. It is easy to verify that if $L_1$ and $L_2$ are parameterized
problems such that $L_1$ FPT-reduces to $L_2$ and $L_2$ is in \FPT, then
it follows that $L_1$ is in \FPT, too.
The class \W{1} contains all problems that are FPT-reducible to \textsc{Independent Set} when parameterized
by the size of the solution, i.e. the number of vertices in the independent set.
Showing \W{1}-hardness (by an FPT-reduction) for a problem rules out the existence of a fixed-parameter
algorithm under the standard assumption $\FPT \neq \W{1}$. We assume this
assumption to be true from now on.
We will use the following W[1]-hard problem~\cite{Pietrzak03} several times
during the course of the paper.
  
   \problemDefP{Multicoloured Clique}
{ An undirected graph $G$, an integer $k \geq 1$, and a proper vertex colouring $\chi:V(G) \to [k]$}
{$k$}
{Does $G$ contain a clique of size $k$?}

\noindent
We will also use the following two close relatives (which both are 
W[1]-hard).

\problemDefP{Clique}
{ An undirected graph $G$ and an integer $k \geq 1$}
{$k$}
{Does $G$ contain a clique of size $k$?}

\problemDefP{Independent Set}
{ An undirected graph $G$ and an integer $k \geq 1$}
{$k$}
{Does $G$ contain an independent set of size $k$?}

\subsection{Constraint Satisfaction Problems}
\label{sec:csp}

A {\em (relational) structure} $\Gamma$ is a tuple $(D; \tau, I)$ where
$D$ is the {\em domain}, or {\em universe},
$\tau$ is a relational signature, and $I$ is a function from $\sigma$ to the
set of all relations over $D$ which assigns each relation symbol a
corresponding relation over $D$. If $R \subseteq D^k$ for some $k \geq 0$, then we write $\ar(R) = k$ for its arity. All structures in this paper are relational
and we assume that they have a finite signature unless otherwise stated.
Typically, we do not need to make a sharp distinction between relations and the corresponding relation symbols, so we usually simply write $(D; R_1, \ldots, R_m)$, where 
each $R_i$ is a relation over $D$, to denote a structure. 
We sometimes do not make a sharp distinction between structures and sets of relations when the signature is not important. We often use a compressed notation
for tuples in relations and write $a_1a_2\dots a_k$ instead of $(a_1,a_2,\dots,a_k)$.
We let $a_1a_2\dots a_k[i] = a_i$ for $1 \leq i \leq k$.

Let $\Gamma$ be an arbitrary $\tau$-structure $(D;R_1,R_2,\dots)$. The {\em constraint satisfaction problem} over $\Gamma$ (CSP$(\Gamma)$) is defined as follows.

\problemDef{CSP($\Gamma$)}
{$I = (V,C)$ where $V$ is a set of variables and $C$ a set of constraints of the form $R(x_1, \ldots, x_k)$ where $x_1, \ldots, x_k \in V$ and $R \in \Gamma$}
{Does there exist a function $f : V \rightarrow D$ that satisfies all constraints, i.e., $(f(x_1), \ldots, f(x_k)) \in R$ for all $R(x_1, \ldots, x_k) \in C$?}

We usually refer to the structure $\Gamma$ as the {\em constraint language}.
An important result was obtained independently by Bulatov~\cite{Bulatov:focs2017} and by Zhuk~\cite{Zhuk:jacm2020}: every finite-domain CSP$(\Gamma)$ is in P or it is NP-complete.

Without loss of generality, we work almost exclusively with numeric domains so we let $\numdom{d}$, $d \geq 1$,
denote the $d$-element set $\{0,\dots,d-1\}$.
We always assume that $\numdom{d}$ is implicitly ordered by the ordinary less-than
relation $<$.
We say that a relation is {\em Boolean} if it has
domain $\numdom{2}=\{0,1\}$. We will use the following Boolean relations in the sequel. Let $m \geq 2$.

\begin{itemize}
\item
$\textsf{Eq}=\{00,11\}$,

\item
$\textsf{Impl}=\{00,01,11\}$,

\item
$\textsf{Or}_m=\{x_1\dots x_m \in \numdom{2}^m \; | \; 1 \in \{x_1,\dots,x_m\}\}$,

\item
$\textsf{Nand}_m=\{x_1\dots x_m \in \numdom{2}^m \; | \; 0 \in \{x_1,\dots,x_m\}\}$, and

\item
$\mathsf{Even}_m=\{x_1\dots x_m \in \numdom{2}^m \; | \; \sum_{i=1}^m x_i=0 \pmod 2\}$.
\end{itemize}

Standard tools for solving or simplifying instances of a CSP are local consistency-enforcing algorithms. The basic idea is to reduce variable domains or tighten constraints in a solution-preserving way.
Let $I=(V,C)$ be an instance of CSP$(\Gamma)$ where $\Gamma$ has
domain $D$ and $V=\{x_1,\dots,x_k\}$. We associate a variable domain $D_i \subseteq D$ 
with every variable $x_i \in V$.
We say that $I$ is {\em arc-consistent} if
 for every variable $x_i$ and every value $d \in D_i$,
  every constraint involving $x_i$ has a tuple that is consistent with $x_i=d$.
  We can easily (and in polynomial time) extract the reduced variable domains
  from a CSP instance $I=(V,C)$. 
  Begin by setting $D_i=D$ for $1 \leq i \leq k$. If a variable $x_i \in V$ is not consistent
  with respect to value $d \in D_i$ and the current variable domains $(D_1,\dots,D_k)$, then let $D_i := D_i \setminus \{d\}$.
  Repeat this until a fixed point is reached, and then output the vector $(D_1,\dots,D_k)$.
  Clearly, if $f:V \rightarrow D$ is a solution to $I$, then $f(x_i) \in D_i$ for every $x_i \in V$. In particular, if $D_i = \emptyset$ for some $x_i \in V$, then $I$
  has no solution.

A stronger notion of consistency is \emph{singleton arc consistency (SAC)},
which ensures that for every variable $x_i$ and every value $d \in D_i$,
the unary constraint $(x_i=d)$ is not rejected by arc consistency;
i.e., adding $(x_i=d)$ and enforcing arc consistency does not lead to
an instance with an empty variable domain. See~\cite[Chapter~5.6]{DBLP:conf/dagstuhl/BartoKW17}.
SAC can be enforced in polynomial time, if the maximum arity $r$ of a
constraint is bounded; in particular, for binary languages with $k$
variables, $m$ constraints and domains of size $n$, SAC can be
enforced in $O(mkn^3)$ time~\cite{bessiere2011efficient}.

%
%
%

\subsection{Primitive Positive Definitions and Polymorphisms}
\label{sec:prelim_algebra}

We give a succinct summary of terminology and some results from the borderland
between logic and algebra. A more detailed treatment can be found in~\cite{DBLP:conf/dagstuhl/BartoKW17}.
Let $\Gamma$ be a structure with signature $\tau$ and domain $D$. 
First-order formulas over $\Gamma$ can be used to define relations: 
for a formula $\phi(x_1,\ldots,x_k)$ 
with free variables $x_1,\ldots,x_k$, the corresponding relation $R$
is the set of all $k$-tuples $(t_1,\ldots,t_k) \in D^k$
such that $\phi(t_1,\ldots,t_k)$ is true in $\Gamma$. 
In this case we say that $R$ is {\em first-order definable} in $\Gamma$.
Our definitions of relations are always parameter-free, i.e. we do not allow
the use of domain elements within them.

A certain restricted type of definition has gained a lot of attention. A $k$-ary relation $R$ is said
to have a {\em primitive positive definition} (pp-definition) over
$\Gamma$ if \[R(x_1, \ldots, x_{k}) \equiv \exists y_1, \ldots,
y_{k'}\, \colon \, R_1(\mathbf{x}_1) \wedge \ldots \wedge
R_m({\mathbf{x}_m})\] where each $R_i \in \Gamma \cup \{\eq_D\}$ and
each $\mathbf{x}_i$ is a tuple of variables over
$x_1,\ldots, x_{k}$, $y_1, \ldots, y_{k'}$ matching the arity of
$R_i$. Here, $\eq_D \: = \{(a,a) \mid a \in D\}$ is the equality relation over $D$.  Thus, $R$ is definable by a first-order formula consisting only
of existential quantification and conjunction over positive atoms from
$\Gamma$ and equality constraints. If $\Gamma$ is a set of relations
then we let $\cclone{\Gamma}$ be the inclusion-wise
smallest set of relations containing $\Gamma$ closed under
pp-definitions. The importance of pp-definitions in the context of constraint satisfaction problems stems from the following general reducibility condition.

\begin{theorem}[\cite{Jeavons:tcs98}] \label{thm:pp-red}
Let $\Gamma$ and $\Delta$ be finite structures with the same domain. 
If every relation of $\Gamma$ has a primitive positive definition in $\Delta$, then there is a polynomial-time reduction from CSP$(\Gamma)$ to CSP$(\Delta)$.
\end{theorem}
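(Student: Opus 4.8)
The plan is to give the standard ``gadget replacement'' argument: transform an instance of CSP$(\Gamma)$ into an equivalent instance of CSP$(\Delta)$ by locally expanding each constraint according to a fixed pp-definition, then compile away the equality atoms that such definitions introduce. First I would fix, for each of the finitely many relation symbols $R$ of $\Gamma$, one primitive positive definition $R(x_1,\dots,x_k) \equiv \exists y_1,\dots,y_{k'} \colon R_1(\mathbf x_1)\wedge\dots\wedge R_m(\mathbf x_m)$ with each $R_i \in \Delta \cup \{\eq_D\}$, and let $M$ be the maximum of $m$ and $k'$ over these finitely many definitions; since $\Gamma$ and $\Delta$ are fixed, $M$ is a constant.

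Given an instance $I=(V,C)$ of CSP$(\Gamma)$, I build $I'$ as follows: start from the variable set $V$; for each constraint $c = R(z_1,\dots,z_k)$ in $C$, introduce $k'$ fresh variables $y^c_1,\dots,y^c_{k'}$ private to $c$, and add the atoms $R_1,\dots,R_m$ of the chosen pp-definition of $R$, with $x_i$ replaced by $z_i$ and $y_j$ by $y^c_j$. This yields a conjunction of atoms over $\Delta \cup \{\eq_D\}$. Next I would remove the equality atoms: whenever an atom $\eq_D(u,v)$ occurs, identify $u$ and $v$ (substitute one for the other everywhere and delete the atom) and iterate to a fixed point. The outcome is an instance $I'=(V',C')$ of CSP$(\Delta)$ proper, with $|V'|\le |V|+M|C|$ and $|C'|\le M|C|$, computable in polynomial (in fact linear) time because $M$ is constant.

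Finally I would verify $I\in{}$CSP$(\Gamma) \iff I'\in{}$CSP$(\Delta)$. For the forward direction, given a solution $f\colon V\to D$ of $I$, each constraint $c=R(z_1,\dots,z_k)$ gives $(f(z_1),\dots,f(z_k))\in R$, so the pp-definition supplies witnesses for $y^c_1,\dots,y^c_{k'}$; assigning these extends $f$ to a solution of $I'$, and this assignment is automatically consistent with the variable identifications since each such identification came from an $\eq_D$ atom that the witness assignment satisfies. For the backward direction, any solution of $I'$ restricted to $V$ satisfies every original constraint, because for each $c$ the satisfied atoms $R_1,\dots,R_m$ witness membership in $R$ via the pp-definition. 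Hence the reduction is correct.

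The only genuinely delicate point is the treatment of $\eq_D$: pp-definitions may use equality, but $\Delta$ need not contain $\eq_D$, so it must be eliminated by variable identification rather than added as a constraint; one should check that this identification is confluent and interacts correctly with the per-constraint fresh variables, which is routine. Everything else is bookkeeping — the essential structural fact is simply that a pp-definition \emph{is} a prescription for such a gadget, and that the gadget has bounded size because $\Gamma$ and $\Delta$ are fixed finite structures.
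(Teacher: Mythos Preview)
Your argument is correct and is the standard gadget-replacement proof of this classical result. Note, however, that the paper does not give its own proof of this theorem: it is stated as a cited result from Jeavons~\cite{Jeavons:tcs98} and used as background, so there is nothing in the paper to compare against beyond observing that your write-up matches the well-known proof.
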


This result is the cornerstone of the algebraic approach to CSPs~\cite{barto2018} but is less useful in our setting since we are interested in studying CSPs where the domain is large but the number of variables are few. We thus also introduce the quantifier-free fragment of pp-definitions as follows.
A $k$-ary relation $R$ is said
to have a {\em quantifier-free primitive positive definition} (qfpp-definition) over
$\Gamma$ if \[R(x_1, \ldots, x_{k}) \equiv  R_1(\mathbf{x}_1) \wedge \ldots \wedge
R_m({\mathbf{x}_m})\] where each $R_i \in \Gamma \cup \{\eq_D\}$ and
each $\mathbf{x}_i$ is a tuple of variables over
$x_1,\ldots, x_{k}$, $y_1, \ldots, y_{k'}$ matching the arity of
$R_i$. We let $\cclone{\Gamma}_{\not \exists}$ be the
smallest set of relations containing $\Gamma$ closed under
qfpp-definitions.

In the context of reductions for CSPs the presence of equality constraints is not an issue since CSP$(\Gamma)$ and CSP$(\Gamma \cup \{\eq_D\})$ have the same classical complexity. However, this is not always true in the unbounded-domain, few variables setting, and to circumvent this we make the following definitions, where we say that pp-definition is {\em equality-free} if only atoms from $\Gamma$ are used.

\begin{definition}
  Let $\Gamma$ be a constraint language over a domain $D$.
  \begin{enumerate}
    \item 
      We let $\cclone{\Gamma}_{\neq} \subseteq \cclone{\Gamma}$ be the set of equality-free pp-definable relations over $\Gamma$.
    \item
      We let $\cclone{\Gamma}_{\nexists, \neq} \subseteq \pcclone{\Gamma}$ be the set of equality-free qfpp-definable relations over $\Gamma$.
  \end{enumerate}
\end{definition}


We now describe the corresponding algebraic objects.
An operation 
$f \colon D^{m}\to D$ is a {\em polymorphism} of a relation 
$R\subseteq D^{k}$ if, for any choice of $m$ tuples 
$(t_{{11}},\dotsc ,t_{{1k}}),\dotsc ,(t_{{m1}},\dotsc ,t_{{mk}})$ from $R$, it holds that the tuple obtained from these
$m$ tuples by applying $f$ coordinate-wise, i.e. 
$(f(t_{{11}},\dotsc ,t_{{m1}}),\dotsc ,f(t_{{1k}},\dotsc ,t_{{mk}}))$ 
is in $R$. 
Equivalently, $f \colon D^{m}\to D$ is a polymorphism of $R$ if it is a
homomorphism from the $m$-fold Cartesian
product of $R$ to $R$ itself.
If $f$ is a polymorphism of $R$, then we sometimes say that $R$ is {\em invariant} under $f$.
A constraint language $\Gamma$ has the polymorphism $f$ if every relation in $\Gamma$ has $f$ as a polymorphism.
We let $\Pol(\Gamma)$ denote the set of polymorphisms of $\Gamma$,
respectively.

If $F$ is a set of functions
from $D^m$ to $D$, then $\Inv(F)$ denotes the set of relations over $D$ that are invariant under the
functions in $F$.
There are close algebraic connection between the operators $\cclone{\cdot}$, $\Pol(\cdot)$, and
$\Inv(\cdot)$. For instance, if $\Gamma$ has a finite domain, then we have a Galois connection and an induced closure operator which gives us that $\cclone{\Gamma} = \Inv(\Pol(\Gamma))$~\cite{Geiger:pjm68}.

The relations in $\cclone{\Gamma}_{\nexists}$ can also be captured algebraically.
By an $n$-ary partial function,
we mean a map $f$ from $X \subseteq D^n$ to $D$, and we say that $f(d_1, \ldots, d_n)$ is {\em defined} if $(d_1, \ldots, d_n) \in X$ and {\em undefined}
otherwise. We sometimes call the set of values where the partial function is defined for the {\em domain} of the partial function.
We say that a relation $R \subseteq D^n$ is closed under a partial function $f \colon D^n \rightarrow D$ if $f$ applied componentwise to the tuples of $R$ always
results in a tuple from $R$ or an undefined result. More formally,
for each sequence of tuples $t_1, \ldots, t_m \in R$, either $f(t_1,\dots,t_m) \in R$ or there exists an $1 \leq i \leq n$ such that $(t_1[i], \ldots,t_n[i])$ is not included in the domain of $f$.

Let $\Gamma$ be a constraint language with domain $D$.
The set of all partial functions
preserving the relations in $\Gamma$, i.e., the {\em partial polymorphisms} of $\Gamma$, 
is denoted by $\ppol(\Gamma)$ and is called a {\em strong partial clone}.
We have a Galois connection and an induced closure operator which implies that  $\cclone{\Gamma}_{\nexists} = \Inv(\ppol(\Gamma))$~\cite{Geiger:pjm68}.

To additionally handle the equality-free definitions we need a polymorphism notion which does not necessarily preserve the equality relation $\eq_D$ (which is trivially preserved by every total and partial function). Here, the answer is to consider (partial) {\em multifunctions} instead of functions, i.e., functions of the form $f \colon D^n \to 2^{D}$, where $f(d) = \emptyset$ means that $f$ is undefined for $d \in D^n$. A multifunction $f$ where $|f(d)| \leq 1$ for each $d \in D^n$ is said to be {\em elementary} since it is equivalent to a (partial) function and we often do not make a sharp distinction between elementary multifunctions and (partial) functions.

For a domain $D$ we then define a \emph{(partial) multivalued polymorphism} over $D$ of arity $r$
as a function $f \colon D^r \to 2^D$. With slight abuse of notation, we also define $f(t_1,\ldots,t_r)$ for tuples $t_1, \ldots, t_r \in D^s$ (for some $s$) as
\[
  f(t_1,\ldots,t_r) := f(t_1,\ldots,t_r)[1] \times \ldots \times f(t_1,\ldots,t_r)[s].
\]
For a relation $R$ with domain $D$, we say that $f$ \emph{preserves} $R$ if
for all $t_1, \ldots, t_r \in R$ we have $f(t_1,\ldots,t_r) \subseteq R$, and $f$ preserves a set of relations $\Gamma$ if $f$ preserves each relation in $\Gamma$.

\begin{definition}
  For a set of relations $\Gamma$ we write $\mpol(\Gamma)$ for the set of all partial multivalued polymorphisms that preserve $\Gamma$ and $\tmpol(\Gamma)$ for the set of total multivalued polymorphisms of $\Gamma$.
\end{definition}

If $\Gamma = \{R\}$ is singleton we write $\mpol(R)$ rather than $\mpol(\{R\})$.
Similarly, for a set of (total or partial) multifunctions $F$ we let $\inv(F)$ be the set of relations invariant under each function in $F$, and we again get a useful closure operator which implies that $\cclone{\Gamma}_{\neq} = \Inv(\tmpol(\Gamma))$ and $\cclone{\Gamma}_{\nexists, \neq} = \Inv(\mpol(\Gamma))$, as well as the following inverse relationships~\cite{Geiger:pjm68}.

\begin{theorem}
  Let $\Gamma$ and $\Delta$ be two sets of relations over the same domain. Then $\Gamma \subseteq \cclone{\Delta}_{\neq}$ if and only if $\tmpol(\Delta) \subseteq \tmpol(\Gamma)$.
\end{theorem}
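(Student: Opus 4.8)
The claim is the standard polymorphism--invariant Galois correspondence specialised to the equality-free, multivalued setting, so the plan is to treat the two directions separately: one is essentially formal, and the other carries the combinatorial content.

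For the direction $\Gamma \subseteq \cclone{\Delta}_{\neq} \Rightarrow \tmpol(\Delta) \subseteq \tmpol(\Gamma)$ I would simply invoke the identity $\cclone{\Delta}_{\neq} = \Inv(\tmpol(\Delta))$ recorded above: then $\Gamma \subseteq \Inv(\tmpol(\Delta))$ says precisely that every $f \in \tmpol(\Delta)$ preserves every $R \in \Gamma$, i.e.\ $\tmpol(\Delta) \subseteq \tmpol(\Gamma)$. For a self-contained argument one checks directly that membership in $\tmpol$ is closed under equality-free pp-definitions by induction on the defining formula: atoms hold by hypothesis, conjunction corresponds to intersecting preserved relations, identification of variables is immediate, and existential quantification is the only delicate step, where totality of $f$ is used --- lifting tuples $t_1,\dots,t_r$ of the projected relation to witnesses $t_1',\dots,t_r'$ of the relation $S$ before projection, the set $f(t_1',\dots,t_r')$ factors as $f(t_1,\dots,t_r)$ times the (nonempty, since $f$ is total) block $f$ produces on the quantified coordinate, so every element of $f(t_1,\dots,t_r)$ extends to an element of $S$ and hence lies in the projection.

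The reverse direction is where the work is, and I would prove the contrapositive via an indicator/``free structure'' construction adapted to the equality-free, multivalued case. Fix $R \in \Gamma$ with $\ar(R)=k$ and $R = \{r_1,\dots,r_m\}$, and for $j \in [k]$ let $a_j = (r_1[j],\dots,r_m[j]) \in D^m$ be the $j$-th column of $R$. I build a CSP$(\Delta)$ instance $I$ with variable set $[k] \sqcup \{e_p : p \in D^m\}$, where variable $j$ carries ``column'' $a_j$ and variable $e_p$ carries column $p$, and where for every $S \in \Delta$ of arity $\ell$ and every $\ell$-tuple $(v_1,\dots,v_\ell)$ of these variables whose columns $(c_1,\dots,c_\ell)$ satisfy $(c_1[i],\dots,c_\ell[i]) \in S$ for all $i \in [m]$, I include the constraint $S(v_1,\dots,v_\ell)$. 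Let $R^\ast \subseteq D^k$ be the set of restrictions to $\{1,\dots,k\}$ of the satisfying assignments of $I$. Then $R^\ast \in \cclone{\Delta}_{\neq}$ (existentially quantify the $e_p$'s; only atoms of $\Delta$ appear), and $R \subseteq R^\ast$ because for each $i$ the assignment $v \mapsto (\text{column of }v)[i]$ satisfies $I$ and restricts to $r_i$. Two points are essential and distinguish this from the classical case: distinct free variables are used for distinct coordinates even when columns coincide, so no equality atom is needed; and the variables $e_p$ ensure that \emph{every} $p \in D^m$ occurs as a column, which is what will make the multifunction extracted below total.

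Now suppose $R \notin \cclone{\Delta}_{\neq}$. Since $R^\ast \in \cclone{\Delta}_{\neq}$ and $R \subseteq R^\ast$, we have $R \subsetneq R^\ast$, so there is $u=(u_1,\dots,u_k) \in R^\ast \setminus R$ realised by a satisfying assignment $h$ of $I$ with $h(j)=u_j$ for $j \in [k]$. Define $f \colon D^m \to 2^D$ by $f(p) = \{\, h(v) : v \text{ a variable of } I \text{ of column } p \,\}$. This is total since $h(e_p) \in f(p)$, and it preserves every $S \in \Delta$: whenever the columns $(p_1,\dots,p_\ell)$ witnessed by variables $v_1,\dots,v_\ell$ satisfy the componentwise-$S$ condition, the constraint $S(v_1,\dots,v_\ell)$ lies in $I$, so $(h(v_1),\dots,h(v_\ell)) \in S$. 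Hence $f \in \tmpol(\Delta)$, but $f(r_1,\dots,r_m) = f(a_1) \times \cdots \times f(a_k) \ni u \notin R$, so $f \notin \tmpol(R)$ and therefore $f \notin \tmpol(\Gamma)$, giving $\tmpol(\Delta) \not\subseteq \tmpol(\Gamma)$ as required. The main obstacle is precisely this reverse construction and, within it, getting both the equality-freeness (decoupling coordinates with equal columns) and the totality of the extracted polymorphism (the witness variables $e_p$) right; once those are in place everything else is a routine adaptation of Geiger's theorem.
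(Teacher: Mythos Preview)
The paper does not actually prove this theorem: it is stated in Section~\ref{sec:prelim_algebra} as a known result, with a citation to Geiger, and no argument is given. So there is no ``paper's own proof'' to compare against.

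Your proof is correct and is precisely the standard Geiger-style argument, adapted to the equality-free, total multivalued setting. The two features you single out --- using distinct free variables for distinct coordinates even when their columns coincide (to avoid needing equality atoms), and adding the witness variables $e_p$ for every $p \in D^m$ (to force the extracted multifunction to be total) --- are exactly the adaptations needed over the classical case, and you handle them cleanly. One very minor point: when you assert $R^\ast \in \cclone{\Delta}_{\neq}$, the instance $I$ could in principle have infinitely many constraints if $\Delta$ is infinite, so the defining formula is not literally a finite pp-formula; but since the domain is finite, only finitely many constraints are needed to cut $R^\ast$ out, so this is harmless.
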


\begin{theorem}
  Let $\Gamma$ and $\Delta$ be two sets of relations over the same domain. Then $\Gamma \subseteq \cclone{\Delta}_{\nexists, \neq}$ if and only if $\mpol(\Delta) \subseteq \mpol(\Gamma)$.
\end{theorem}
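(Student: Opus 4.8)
The plan is to prove the two implications separately; the forward one is a routine ``preservation'' check and the backward one carries the content. For the direction ``$\Gamma\subseteq\cclone{\Delta}_{\nexists,\neq}$ implies $\mpol(\Delta)\subseteq\mpol(\Gamma)$'', I would show directly that partial multivalued polymorphisms are preserved by equality-free qfpp-definitions. Fix $f\in\mpol(\Delta)$ of arity $r$, fix $R\in\Gamma$, and fix an equality-free qfpp-definition $R(x_1,\dots,x_k)\equiv S_1(\mathbf{x}^1)\wedge\cdots\wedge S_n(\mathbf{x}^n)$ with each $S_i\in\Delta$ and each $\mathbf{x}^i$ a tuple over $x_1,\dots,x_k$. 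Given $t_1,\dots,t_r\in R$ and $s\in f(t_1,\dots,t_r)$, I would argue atom by atom: for $S_i(\mathbf{x}^i)$ the projections of $t_1,\dots,t_r$ onto the coordinates named by $\mathbf{x}^i$ lie in $S_i$, applying $f$ to these coordinatewise stays in $S_i$ since $f\in\mpol(\Delta)$, and the matching projection of $s$ lies in that product set because $f(t_1,\dots,t_r)$ is by definition the product of the coordinatewise images $f(t_1[j],\dots,t_r[j])$. Hence $s$ satisfies every atom, so $s\in R$ and $f$ preserves $R$. This step is the only place the hypotheses are used: quantifier-freeness guarantees that the free variables of the formula are exactly the coordinates of $R$, and equality-freeness guarantees that $f$ is never asked to preserve an $\eq_D$-atom.

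For the converse I would run the classical canonical-relation argument, adapted to partial multifunctions. Fix $R\in\Gamma$ of arity $k$ and, assuming $R\neq\emptyset$, write $R=\{r_1,\dots,r_m\}$. Let $\varphi_R(x_1,\dots,x_k)$ be the conjunction of all atoms $S(x_{i_1},\dots,x_{i_l})$ with $S\in\Delta$, $l=\ar(S)$, $(i_1,\dots,i_l)\in[k]^l$, and $\{(r_j[i_1],\dots,r_j[i_l]):j\in[m]\}\subseteq S$. Since each atom constrains $(x_1,\dots,x_k)$ by one of only finitely many $k$-ary relations over the finite domain $D$, $\varphi_R$ is equivalent to a finite conjunction of such atoms, i.e.\ to a genuine equality-free qfpp-definition over $\Delta$. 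By construction every $r_j$ satisfies $\varphi_R$, so $R\subseteq\Sol(\varphi_R)$, and I must rule out strict inclusion. Suppose $a=(a_1,\dots,a_k)\in\Sol(\varphi_R)\setminus R$. Write $c_i=(r_1[i],\dots,r_m[i])\in D^m$ for the $i$-th column of $R$ (columns need not be distinct) and define a partial multifunction $f\colon D^m\to 2^D$ of arity $m$ by $f(\mathbf{b})=\{a_i:i\in[k],\ \mathbf{b}=c_i\}$. Then $f\in\mpol(\Delta)$: for every $S\in\Delta$ and all $b_1,\dots,b_m\in S$, any tuple in the coordinatewise image $f(b_1,\dots,b_m)$ has the form $(a_{i_1},\dots,a_{i_l})$ for indices $i_1,\dots,i_l\in[k]$ with $b_j=(r_j[i_1],\dots,r_j[i_l])$ for all $j$, and then $b_1,\dots,b_m\in S$ says exactly that $S(x_{i_1},\dots,x_{i_l})$ is a conjunct of $\varphi_R$, whence $a\models\varphi_R$ gives $(a_{i_1},\dots,a_{i_l})\in S$. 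On the other hand $f\notin\mpol(R)$: applying $f$ coordinatewise to $r_1,\dots,r_m\in R$ yields in coordinate $i$ the set $f(c_i)\ni a_i$, so $a\in f(r_1,\dots,r_m)\setminus R$. Since $R\in\Gamma$ we have $\mpol(\Gamma)\subseteq\mpol(R)$, so $f\in\mpol(\Delta)\setminus\mpol(\Gamma)$, contradicting the hypothesis. Hence $\Sol(\varphi_R)=R$ and $R\in\cclone{\Delta}_{\nexists,\neq}$.

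The main obstacle I anticipate is the backward direction, and within it the one delicate point is the choice of $f$: precisely when $R$ has two equal columns $c_i=c_j$ with $a_i\neq a_j$, the map $f$ is genuinely multivalued and fails to preserve $\eq_D$ --- which is exactly why $\eq_D$ itself need not lie in $\cclone{\Delta}_{\nexists,\neq}$, so a single-valued partial polymorphism would not suffice here --- yet the preservation check against each $S\in\Delta$ still goes through, since it only ever invokes atoms that by construction appear in $\varphi_R$. The remaining points are routine and I would dispatch them briefly: the reduction of $\varphi_R$ to a finite, honest qfpp-definition; the bookkeeping for repeated variables inside the $\mathbf{x}^i$, which the construction already allows for; and empty relations in $\Gamma$, handled as a degenerate case. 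Finally, once both inclusions are established, one observes that the forward direction yields $\cclone{\Delta}_{\nexists,\neq}\subseteq\Inv(\mpol(\Delta))$ and the backward direction the reverse inclusion, recovering the closure identity $\cclone{\Delta}_{\nexists,\neq}=\Inv(\mpol(\Delta))$ in the form used elsewhere in the paper.
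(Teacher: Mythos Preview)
The paper does not prove this theorem; it is stated as a classical result and cited to Geiger. Your proof is correct and follows the standard approach: the forward direction is the routine preservation check, and the backward direction is the canonical-relation construction, building from a hypothetical $a\in\Sol(\varphi_R)\setminus R$ a partial multifunction that preserves $\Delta$ but not $R$. This is precisely the template the paper later uses for its map-family generalisations (Lemmas~\ref{lem:fgpp_to_mpol} and~\ref{lem:mpol_to_fgpp}), so your argument is entirely in the spirit of the surrounding material. Your observation that repeated columns force $f$ to be genuinely multivalued, and that this is exactly what allows $\eq_D$ to escape the closure, is the right diagnosis of why multifunctions rather than partial functions are needed here.

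One small remark on your ``degenerate case'': if $R=\emptyset$ then $\mpol(R)$ is the set of all multifunctions, so $\mpol(\Delta)\subseteq\mpol(R)$ holds vacuously for every $\Delta$, yet $\emptyset$ need not be equality-free qfpp-definable over an arbitrary $\Delta$ (take $\Delta=\{D^2\}$). This is a well-known boundary issue in these Galois connections, typically resolved by convention; the paper does not address it either, so your treatment is adequate, but ``handled as a degenerate case'' slightly understates that a convention is being invoked rather than an argument given.
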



\section{The $\udcsp$ problem and its Algebraic Underpinnings} \label{sec:algebra}


In this section we formally introduce the unbounded domain CSP problem, and present an algebraic framework suitable for studying (both parameterized and classical) complexity of this problem. Our theory is based on three considerations. 
 
First, recall from Section~\ref{sec:prelims} that the universal algebraic approach to CSPs on the relational side is built on primitive positive definitions (pp-definitions) where equality is always allowed. This is generally unproblematic since if a pp-introduces equality constraints one can always simplify the resulting CSP instance by identifying one of the variables with the other throughout the instance. In the unbounded domain CSP problem, however, equality is a relatively powerful relation which sometimes (depending on the allowed maps) is enough to give hardness by itself. Hence, we do not allow equality (unless it is already included in the constraint language) and on the functional side this means that we go from functions to {\em multifunctions}, i.e. functions from $D^r$ to $2^D$.

Second, our definability notion need to be compatible with the fixed set of allowed maps and allow us to define unbounded-domain relations from the base  language $\Gamma$. This significantly increases the expressive strength of the definitions and we obtain a generalization of {\em functionally guarded primitive positive definitions} (fgpp-definitions) from Carbonnel~\cite{carbonnel2022Redundancy}.

Third, the underlying map family greatly affects the algebraic theory. If the set of is completely arbitrary then we (as expected) cannot make any significant algebraic simplifications, but, as we will see, if the set of maps e.g.\ is monotone then the algebraic theory is greatly simplified by using the map family together with homomorphic images. The effect is that we can concentrate exclusively on operations satisfying {\em order patterns}, and we call such functions {\em order polymorphisms}.

\subsection{Problem Statement}
\label{sec:statement}

Our setup is different from conventional CSPs as we do not have one global domain $D$ that is shared by all relations and instances. Instead, each instance has an input-defined domain size $n$, and we use unary maps to translate from this domain to the domains of the relations. Furthermore, in order to properly facilitate the creation of new relations with arbitrary domain sizes, we do not want to be restricted by a global domain shared by all relations. This motivates the following definition, where we redefine the notions of \emph{relation}, \emph{constraint language}, and \emph{formula} to allow different domains for each instance and relation, together with the generalization that an atom now may contain literals of the form $f(x)$ where $f$ is a map and $x$ a variable.

\begin{definition}\label{def:functionally-guarded} We redefine the following notions from conventional CSPs.
    \begin{itemize}
        \item A \emph{relation} of arity $r$ is a tuple $(D,R)$ consisting of a finite domain $D$ and a set $R \subseteq D^r$. Contrary to the conventional definition, the domain is explicitly specified for each relation since constraints on the relation needs to be aware of the domain in question. If the domain is clear from the context then we do not make a sharp distinction between $R$ and $(D, R)$.
        \item A \emph{constraint language} $\Gamma$ is a finite set of relations. Contrary to the conventional definition, there is no fixed domain and $\Gamma$ is assumed to be finite by default.
        \item A \emph{formula} over some constraint language $\Gamma$ is a tuple $(D,\phi)$ consisting of a finite domain $D$ and a logic formula $\phi(x_1, \ldots, x_m)$ where each variable has domain $D$ and where each atom is of the form $R(f_1(y_1), \ldots, f_r(y_r))$ with $(E,R) \in \Gamma$ some relation and $f_1, \ldots, f_r \colon D \to E$ maps. Contrary to the conventional definition, we include a domain that is not dependent on $\Gamma$ and we add maps to translate from the domain of the formula to the domain of the relations. We refer to these maps as \emph{guarding functions}.
    \end{itemize}
\end{definition}

As stated in the introduction, it is useful to add restrictions to these guarding functions. Our main tool to define such restrictions are map families.

\begin{definition}
    A \emph{map family} $\MM$ is a collection of sets of maps which, for each pair $(d,e) \in \N^2$, contains exactly one set of maps $\MM_d^e \subseteq \numdom{d}^{\numdom{e}}$. We define standard set operations on map families by applying them set-wise.
\end{definition}

We are now able to state our problem definition. Let $\Gamma$ be a constraint language and $\MM$ a map family.
\problemDef{$\udcsp(\Gamma,\MM)$}
{A tuple $(n,V,C)$ where \begin{itemize}
    \item $n \in \N$ is a domain size, given in unary.
    \item $V$ is a set of variables on the domain $\numdom{n}$.
    \item $C$ is a set of functionally guarded constraints, each consisting of a relation $R \in \Gamma$ with arity $r$ and domain size $d$, a tuple of variables $v_1, \ldots, v_r \in V$ and a tuple of maps $m_1, \ldots, m_r$ satisfying $m_i \in \MM_d^n$.
  \end{itemize}}
{Does there exist a function $f : V \to \numdom{n}$ mapping each variable to an element from its domain such that for each constraint $(R, (v_1, \ldots, v_r), (m_1, \ldots, m_r)) \in C$ we have that $(m_1(f(v_1)),\ldots, m_r(f(v_r))) \in R$.}



The choice of unary representation for the domain size is made
primarily to avoid complications with algorithm efficiency,
since these issues are not the focus of the paper.
However, for arbitrary unary maps we anyway need $\Theta(n)$
space to define each map. On the other hand,
for \mcsp and \ohcsp, an instance with $m$ constraints
will have only $O(m)$ equivalence classes of domain elements,
where two domain elements are equivalent if they are indistinguishable
in every unary map. Hence the unary representation is not a
significant restriction. 

We typically parameterize the problem with respect to the parameter $|V|$.
Note that this parameterization is not very useful for CSP$(\Gamma)$:
when $\Gamma$ has a finite domain $D$, then every instance can be solved in time
$|V|^{|D|} \cdot poly(||I||)$ and CSP$(\Gamma)$ is {\em always} in FPT.
Certainly, this is not true for the $\udcsp(\Gamma,\MM)$ problem.

We note that the arc consistency procedure works in the $\udcsp$ setting and it still runs in polynomial time.
Let $\Gamma$ be a finite constraint language with relations of maximum arity $r$.
Let $(n,V,C)$ be an instance of $\udcsp(\Gamma)$ where $|C| = m$ and $V = \{x_1,\dots,x_k\}$. 
  Since the sum of all domain sizes is initially $nk$ and every iteration reduces this sum by one, we need at most $nk$ iterations. Each iteration checks $m\cdot k \cdot n$ tuples of relation/variable/value and each check takes $O(n^{r})$ time. The runtime of this step is polynomial since $r$ is a fixed constant. The whole process takes
  polynomial time since $n$ is given in unary.

We now introduce the main map families for which we will study the (parameterized) complexity of the resulting $\udcsp$ problems. We begin with the family containing all maps.

\begin{definition}
    Let $\all$ be the map family where $\all_d^n$ consists of all maps from $\numdom{n}$ to $\numdom{d}$.
    For any constraint language $\Gamma$, we now define $\ucsp(\Gamma)$ as $\udcsp(\Gamma,\all)$.
\end{definition}

Dually, we define one of the simplest possible map families. 

\begin{definition}
    Let $\id$ be the map family where $\id^d_d$ contains only the identity function $i(x) = x$ for each $x \in \numdom{d}$ and $\id^d_e = \emptyset$ for all distinct $e,d \in \mathbb{N}$.
\end{definition}

Note that if each relation in $\Gamma$ has the same domain then $\udcsp(\Gamma, \id)$ is simply an alternative formulation of the standard CSP problem since the identity maps add no expressive power at all. We continue by defining the two map families of main interest in this paper.

\begin{definition}
    A map $m : \numdom{n} \to \numdom{2}$ is called \emph{one-hot} if $m(x) = 1$ for exactly one $x \in \numdom{n}$. Let $\oh_n$ be the set of all one-hot maps from $\numdom{n} \to \numdom{2}$ and let $\oh$ be the map family where $\oh_d^n$ equals $\oh_n$ if $d=2$ and $\emptyset$ for all other $d$.
    For any constraint language $\Gamma$, we now define $\ohcsp(\Gamma)$ as $\udcsp(\Gamma,\oh)$.
\end{definition}

\begin{definition}
  We make the following definitions.
  \begin{enumerate}
  \item
    A map $m : \numdom{n} \to \numdom{d}$ is called \emph{(anti-)monotone} if $x \leq y$ implies that $m(x) \leq m(y)$ ($m(y) \leq m(x)$) according to the natural ordering on $\numdom{n}$ and $\numdom{d}$.
  \item
    Let $\mo$ ($\mo'$) be the map family where $\mo_d^n$ (${\mo'}_d^n$) consists of all (anti-)monotone maps from $\numdom{n}$ to $\numdom{d}$\footnote{The anti-monotone map family is not so interesting in its own right but becomes useful for constraint languages which, in a certain technical sense, can simulate anti-monotone maps and obtain more powerful definitions.}.
    \item
      For any constraint language $\Gamma$, we now define $\mcsp(\Gamma)$ as $\udcsp(\Gamma,\mo)$.
    \end{enumerate}
\end{definition}


When $S$ is a (mathematical) statement, then the {\em Iverson bracket} is defined by
$[S]=0$ if $S$ is false and $[S]=1$ if $S$ is true. 
We often use Iverson brackets for describing unary maps when the base language is Boolean. 
Let $x$ be a variable with domain $\numdom{d}$ and $a$ an element in $\numdom{d}$. 
Then,
$[x=a]$ is a one-hot map on $\numdom{d}$ while $[x \leq a]$ is a
monotone map from $\numdom{d}$ to $\numdom{2}$.

\subsection{The Relational Side: Functionally Guarded Primitive Positive Definitions}

Let $\Gamma$ be a constraint language, $\MM$ a map family, and $R$ an $r$-ary relation over \numdom{d}. We say that $R$ has a {\em functionally guarded equality-free quantifier-free primitive positive} definition over $\Gamma$ if
\[
    R(x_1, \ldots, x_r) \equiv \varphi(x_{i_1}, \ldots, x_{i_n})
\]
where $i_1, \ldots, i_n \in [r]$ and $\varphi(\cdot)$ is an equality-free, quantifier-free primitive positive formula over $\Gamma$ with domain $\numdom{d}$. Recall from Definition \ref{def:functionally-guarded} that each usage of a variable in an atom $R_0 \in \Gamma$ of domain $\numdom{n}$ is now associated with a map from $\numdom{n}$ to $\numdom{d}$. In order to be consistent with Carbonnel~\cite{carbonnel2022Redundancy} and to avoid an unnecessarily long abbreviation we simply say that $R$ is {\em fgpp-definable} rather than the technically more correct but arguably too cumbersome {\em fgefqfpp-definition}.

For a map family $\MM$ we then say that $R$ is {\em $\MM$-fgpp-definable} if each involved map is included in $\MM$. We let $\fgpp{\Gamma}{\MM}$ be the set of all $\MM$-fgpp-definable relations over $\Gamma$.

We sometimes relax the notion of fgpp-definability and allow existential quantification and thus consider definitions of the form
\[
    R(x_1, \ldots, x_r) \equiv \exists y_1, \ldots, y_s \colon \varphi(x_{i_1}, \ldots, x_{i_n}, y_{j_1}, \ldots, y_{j_m})
\]
where $i_1, \ldots, i_n \in [r]$, $j_1, \ldots, j_m \in [s]$, and $\varphi(\cdot)$ is an equality-free, quantifier-free primitive positive formula over $\Gamma$ with domain $D$.
Similarly to the above case, we name such relations \emph{$\MM$-efpp-definable} and let $\efpp{\Gamma}{\MM}$ be the set of all $\MM$-efpp-definable relations. 

\begin{remark}
  Recall from Section~\ref{sec:prelim_algebra} that we introduced the classical closure operators $\cclone{\Gamma}_{\neq}$ (closure under equality-free pp-definitions) and $\cclone{\Gamma}_{\nexists, \neq}$ (closure under equality-free, quantifier-free pp-definitions). We now express these as $\cclone{\Gamma}_{\neq} = \efpp{\Gamma}{}$ and $\cclone{\Gamma}_{\nexists, \neq} = \fgpp{\Gamma}{}$ with the motivation that they can be seen as special cases of the latter two operators with only identity maps.
\end{remark}  

Our notion of fgpp-definability generalizes and (arguably) simplifies Carbonnel~\cite{carbonnel2022Redundancy} in a few different directions. Concretely, we (1) do not allow equality, (2) consider (subsets of) maps from $D$ to $E$ rather than over a uniform domain, and (3) allow our definitions to use functional expressions of the form $g(x)$ rather than phrasing it via existential quantification $\exists y \colon g^\bullet(x,y)$.

We now look at the effect of fgpp-definitions on reductions between $\udcsp$ instances. We begin with reductions from richer constraint languages.

\begin{definition}
    Let $\MM$ and $\NN$ be map families. We define their composition $\NN \circ \MM$ as the map family with $(\NN \circ \MM)_d^e = \bigcup_{i\in\N} \{n \circ m \mid n \in \NN_d^i, m \in \MM_i^e\}$, where $(n \circ m)(x) = n(m(x))$.
    We say that an $\MM$ is \emph{closed under composition with $\NN$} if $(\NN \circ \MM) \subset \MM$.
\end{definition}

\begin{proposition}\label{prop:language_fpt_reduction}
    Let $\MM$ and $\NN$ be map families and $\Gamma$ a constraint language. Consider a relation $R \in \fgpp{\Gamma}{\NN}$. Then $\udcsp(\Gamma \cup \{R\}, \MM) \leq_{\rm FPT} \udcsp(\Gamma, \MM \cup (\NN \circ \MM))$.
\end{proposition}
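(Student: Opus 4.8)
The plan is to show that a $\udcsp(\Gamma \cup \{R\}, \MM)$ instance can be rewritten as an equivalent $\udcsp(\Gamma, \MM \cup (\NN \circ \MM))$ instance by expanding each constraint that uses $R$ according to the fgpp-definition of $R$ over $\Gamma$, and to check that this rewriting is an FPT-reduction (in fact it will even be polynomial-time with the parameter exactly preserved).

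First I would fix the fgpp-definition: by hypothesis $R \in \fgpp{\Gamma}{\NN}$, so there is an equality-free, quantifier-free primitive positive formula $\varphi(x_{i_1}, \ldots, x_{i_n})$ over $\Gamma$ with the same domain $\numdom{d}$ as $R$, where each atom of $\varphi$ has the form $R_0(g_1(x_{j_1}), \ldots, g_s(x_{j_s}))$ with $R_0 \in \Gamma$ over some domain $\numdom{p}$ and each $g_\ell \in \NN_p^d$. This definition is a fixed finite object depending only on $R$ and $\Gamma$, not on the input instance. Given an input instance $I = (n, V, C)$ of $\udcsp(\Gamma \cup \{R\}, \MM)$, I would build an instance $I' = (n, V, C')$ with the same domain size $n$ and the same variable set $V$: keep every constraint of $C$ that uses a relation of $\Gamma$ unchanged, and for every constraint $(R, (v_1, \ldots, v_r), (m_1, \ldots, m_r)) \in C$ using $R$, replace it by the set of constraints obtained from the atoms of $\varphi$ after substituting $x_{i_t} \mapsto v_t$ and composing maps: an atom $R_0(g_1(x_{j_1}), \ldots, g_s(x_{j_s}))$ of $\varphi$ becomes the constraint $R_0(g_1 \circ m_{j_1}(v_{j_1}), \ldots, g_s \circ m_{j_s}(v_{j_s}))$ (here each $x_{j_\ell}$ is some $x_{i_t}$, so it carries the map $m_t \in \MM_d^n$, and $g_\ell \in \NN_p^d$, whence $g_\ell \circ m_t \in (\NN \circ \MM)_p^n \subseteq (\MM \cup (\NN \circ \MM))_p^n$). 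Atoms of $\varphi$ that already use a map from $\MM$ (i.e.\ when we include the ``plain'' case) land in $\MM$, which is why the target map family is the union $\MM \cup (\NN \circ \MM)$. No new variables are introduced, since fgpp-definitions are quantifier-free; this is exactly the point of working with $\fgpp{\cdot}{\cdot}$ rather than $\efpp{\cdot}{\cdot}$, and it is what keeps the parameter $|V|$ untouched.

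Next I would verify correctness: an assignment $f : V \to \numdom{n}$ satisfies the original $R$-constraint $(m_1(f(v_1)), \ldots, m_r(f(v_r))) \in R$ if and only if $\varphi(m_1(f(v_1)), \ldots)$ holds (by the defining equivalence of the fgpp-definition, noting $\varphi$ has domain $\numdom{d}$ and the tuple $(m_1(f(v_1)), \ldots, m_r(f(v_r)))$ lives in $\numdom{d}^r$), if and only if every substituted atom holds, i.e.\ $f$ satisfies every constraint we produced for it. Conjoining over all constraints of $C$, $f$ satisfies $I$ iff $f$ satisfies $I'$. Finally the reduction is computable in polynomial time: each $R$-constraint blows up into at most $|\varphi|$ many constraints, a constant, and composing two unary maps on $\numdom{n}$ (both represented explicitly in $\Theta(n)$ space) takes $O(n)$ time, so $I'$ has size polynomial in $\|I\|$; the parameter is literally unchanged, so conditions (1)--(3) of an FPT-reduction all hold trivially.

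I do not expect a genuine obstacle here — the statement is essentially a bookkeeping lemma, the analogue for ordinary qfpp-definitions of the classical ``pp-definitions give polynomial reductions'' fact (Theorem~\ref{thm:pp-red}). The one point that requires a little care, and the only place something could go subtly wrong, is the map-family arithmetic: one must make sure that every guarding map appearing in $I'$ genuinely lies in $\MM \cup (\NN \circ \MM)$ — monotone/one-hot/etc.\ maps composed in the right order stay in the stated family — and, since a map family is required to contain \emph{exactly one} set $\MM_d^n$ for each $(d,n)$, that the union $\MM \cup (\NN \circ \MM)$ is again a legitimate map family (it is, because the set operations on map families are defined set-wise). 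It is also worth remarking that if $\MM$ happens to be closed under composition with $\NN$, i.e.\ $\NN \circ \MM \subseteq \MM$, then the target family collapses to $\MM$ itself, recovering the cleaner statement $\udcsp(\Gamma \cup \{R\}, \MM) \leq_{\rm FPT} \udcsp(\Gamma, \MM)$ which is how the proposition will typically be applied (e.g.\ for $\MM = \mo$ and $\NN = \mo$, since a composition of monotone maps is monotone).
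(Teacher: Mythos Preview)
Your proposal is correct and follows essentially the same approach as the paper's proof: replace each $R$-constraint by the atoms of its $\NN$-fgpp-definition with maps composed, leaving the variable set untouched so the parameter is preserved. Your write-up is in fact more detailed than the paper's (which omits the explicit correctness verification and the polynomial-time analysis), and your closing remark about the case $\NN \circ \MM \subseteq \MM$ anticipates exactly the corollary the paper states next.
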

\begin{proof}
    Let $r$ and $D_R$ be the arity and domain of $R$. Let an instance of $\udcsp(\Gamma \cup \{R\}, \MM)$ be given. We construct an instance of $\udcsp(\Gamma, \MM \cup (\NN \circ \MM))$ as follows. We first copy each variable and constraint not involving $R$. Now consider a constraint $(R, (v_1,\ldots,v_r),(m_1,\ldots,m_r))$. Since $R$ is $\MM$-fgpp-definable in $\Gamma$, it corresponds to conjunction of atoms of the form $S(f_1(y_1), \ldots, f_s(y_s))$ for some $S \in \Gamma$ of (say) arity $s$ and domain $D_S$, with function $f_1,\ldots,f_s \in \MM^{D_R}_{D_S}$. Each variable $y_i$ corresponds to some variable $v_{i'}$.
    For each atom, we now add the constraint $(S,(v_{1'},\ldots,v_{s'}),(f_1 \circ m_{1'},\ldots,f_s \circ m_{s'}))$. Since $S \in \Gamma$ and $f_i \circ m_{i'} \in \NN \circ \MM$, this constraint is valid.
    An assignment satisfies the new instance if and only if it satisfies the constraints from these atoms and the constraints that did not use $R$. This happens if and only if it satisfies the constraints involving $R$ and those that did not use $R$, which in turn happens if and only if it satisfies the old instance. This completes the reduction. Since the number of variables is not affected at all the reduction is FPT.
\end{proof}

We continue with reductions from richer map families and show what happens when the right-hand map family $\NN$ is definable over the base language $\Gamma$.

\begin{proposition}\label{prop:maps_fpt_reduction}
    Let $\MM$ and $\NN$ be map families and $\Gamma$ a constraint language. Suppose that the following conditions hold.
    \begin{itemize}
        \item The only nonempty map sets in $\NN$ are of the form $\N^n_n$ and satisfy $\left|\N^n_n\right| \leq 1$. That is, each map has the same domain and co-domain, and each domain occurs at most once.
        \item There is a polynomial $P$ such that for each function $g$ from $\NN$ we have $g^\bullet \in \fgpp{\Gamma}{\MM}$ using an $\MM$-fgpp-definition that can be constructed in $P(|g^{\bullet}|)$ time.
    \end{itemize}
    Then $\udcsp(\Gamma, \MM \cup (\MM \circ \NN)) \leq_{\rm FPT} \udcsp(\Gamma, \MM)$. This is an FPT-reduction that doubles the number of variables.
\end{proposition}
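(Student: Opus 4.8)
The plan is to follow the template of Proposition~\ref{prop:language_fpt_reduction}, except that instead of unfolding a richer \emph{relation} into a $\Gamma$-gadget we unfold the richer \emph{maps} of $\MM\circ\NN$, paying one extra ``shadow'' variable per original variable. Fix an input $(n,V,C)$ of $\udcsp(\Gamma,\MM\cup(\MM\circ\NN))$. By the first hypothesis the slice $\NN_n^n$ has at most one element; if it is empty then $(\MM\circ\NN)_d^n=\emptyset$ for every $d$, so every map of the instance already lies in $\MM$ and the identity reduction works, and we may assume $\NN_n^n=\{g\}$. Then $(\MM\circ\NN)_d^n=\{m\circ g\mid m\in\MM_d^n\}$, i.e.\ every composed map first applies $g\colon\numdom{n}\to\numdom{n}$ and then a genuine $\MM$-map. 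We output a $\udcsp(\Gamma,\MM)$ instance on the variable set $V'=V\cup\{v^{\ast}\mid v\in V\}$, so $|V'|=2|V|$, where $v^{\ast}$ is intended to carry the value $g(v)$.

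The constraint set $C'$ has two parts. First, for each $v\in V$ we install a \emph{shadow gadget} forcing $v^{\ast}=g(v)$: by the second hypothesis $g^{\bullet}\in\fgpp{\Gamma}{\MM}$, and an $\MM$-fgpp-definition $g^{\bullet}(x,y)\equiv\varphi(x,y)$ can be built in time $P(|g^{\bullet}|)=P(O(n))$; since fgpp-definitions are quantifier-free, $\varphi$ mentions only $x$ and $y$, so substituting $x\mapsto v$, $y\mapsto v^{\ast}$ adds no new variables and only $\Gamma$-constraints whose guarding maps, being those occurring in $\varphi$, lie in $\MM$. Any assignment $f$ satisfying these gadgets has $f(v^{\ast})=g(f(v))$ for all $v$. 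Second, we rewrite the original constraints. A constraint all of whose maps lie in $\MM_d^n$ is copied verbatim. For a constraint $(R,(v_1,\dots,v_r),(m_1,\dots,m_r))$ with some $m_j\in(\MM\circ\NN)_d^n$, by definition $m_j=m_j'\circ g$ with $m_j'\in\MM_d^n$; in particular $m_j$ is constant on the fibres of $g$, so the rule $g(a)\mapsto m_j(a)$ unambiguously determines $m_j'$ on $\mathrm{im}(g)$. Pick any $m_j''\in\MM_d^n$ agreeing with this rule on $\mathrm{im}(g)$ --- one exists since $m_j'$ is such a map --- and replace the $j$-th argument of the constraint by the pair $(m_j'',v_j^{\ast})$. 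Under the shadow relation, $m_j''(f(v_j^{\ast}))=m_j''(g(f(v_j)))=m_j(f(v_j))$ no matter how $m_j''$ behaves off $\mathrm{im}(g)$, so the rewritten constraint holds under $f$ exactly when the original one does.

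Correctness is now immediate. If $f\colon V\to\numdom{n}$ solves $(n,V,C)$, extend it by $f(v^{\ast}):=g(f(v))$; this satisfies every shadow gadget by construction and every rewritten constraint by the identity above, hence solves $(n,V',C')$. Conversely, restricting any solution of $(n,V',C')$ to $V$ gives a solution of $(n,V,C)$. For the running time: there are $|V|$ shadow gadgets, each of size $P(O(n))$, and rewriting a constraint only relabels its maps and variables, so the construction takes time polynomial in $||(n,V,C)||$ --- in particular it is an FPT-reduction --- and it exactly doubles the number of variables.

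The delicate step is the choice of $m_j''$: while the simulation is insensitive to its values outside $\mathrm{im}(g)$, \emph{validity} of the output instance requires $m_j''\in\MM_d^n$, i.e.\ that the partial table forced on $\mathrm{im}(g)$ admits an $\MM$-completion, and this completion has to be found within the time budget. For an arbitrary map family this ``$\MM$-completion'' task need not be easy; for the families relevant here it is (for $\mo$, one simply propagates the values of the partial monotone table to the remaining positions), and one could alternatively state its efficiency as an extra hypothesis. Everything else is routine bookkeeping on top of the proof of Proposition~\ref{prop:language_fpt_reduction}.
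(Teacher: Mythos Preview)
Your proof is correct and follows essentially the same strategy as the paper: introduce a shadow variable $v^{\ast}$ for each $v$, force $v^{\ast}=g(v)$ via the fgpp-definition of $g^{\bullet}$, and reroute every $\MM\circ\NN$-guarded argument through the shadow. The paper's version is terser---it simply writes each composed map as $m\circ g$ and replaces $m(g(x))$ by $m(x_g)$, tacitly treating the factor $m\in\MM$ as part of the data---so your discussion of recovering an $\MM$-completion $m_j''$ from the composite is more careful than the paper on this point, though it is not needed once one adopts that convention.
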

\begin{proof}
    Let an instance $(n,V,C)$ be given. Due to the first requirement, $\NN$ contains at most one function $g$ whose domain is $\numdom{n}$. If there is none, we are done. Otherwise the second requirement shows we can construct a formula that $\MM$-efpp-defines $g^\bullet$ in polynomial time. We now extend each variable $x \in V$ with a fresh variable $x_g$ corresponding to the image of $g$. We then implement the formula on $x$ and $x_g$ by replacing each atom with a constraint. Overall, the number of variables doubles, which is independent of the domain size, and thus FPT.
    Now consider some constraint containing a guarding function $m \circ g$ from $\MM \circ \NN$. We  can simply replace $m(g(x))$ with $m(x_g)$ to obtain a guarding function from $\MM$.
\end{proof}

The previous propositions considered FPT-reductions. For polynomial-time reductions, we instead obtain the following results.
\begin{proposition}\label{prop:language_poly_reduction}
    Let $\MM$ and $\NN$ be map families and $\Gamma$ a constraint language. Consider a relation $R \in \efpp{\Gamma}{\NN}$. Then $\udcsp(\Gamma \cup \{R\}, \MM) \leq_{\rm poly} \udcsp(\Gamma, \MM \cup (\NN \circ \MM))$. 
\end{proposition}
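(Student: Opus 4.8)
The proof will follow that of Proposition~\ref{prop:language_fpt_reduction} very closely; the only genuinely new ingredient is the treatment of the existential quantifiers, which $\NN$-efpp-definitions may contain but $\NN$-fgpp-definitions may not. Since we are now after a \emph{polynomial-time} reduction, and hence need not control the number of variables, we can simply realise each existentially quantified variable in the chosen definition of $R$ by a fresh instance variable. This is exactly what forces the reduction down from an FPT-reduction (Proposition~\ref{prop:language_fpt_reduction}) to a polynomial-time one: each use of $R$ contributes its own block of fresh variables, so the variable count grows by $\Theta(|C|)$, which is fine polynomially but not parametrically.

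Concretely, fix once and for all an $\NN$-efpp-definition
\[
  R(x_1,\dots,x_r)\equiv\exists y_1,\dots,y_s\colon\varphi(x_{i_1},\dots,x_{i_n},y_{j_1},\dots,y_{j_m}),
\]
where $\varphi$ is a conjunction of atoms $S(g_1(z_1),\dots,g_t(z_t))$ with $S\in\Gamma$, every $g_\ell\in\NN$, every $z_\ell$ among the $x$'s and $y$'s, all over the domain $\numdom{d}$ of $R$. Given an instance of $\udcsp(\Gamma\cup\{R\},\MM)$, copy verbatim all variables and all constraints that do not mention $R$. For a constraint $(R,(v_1,\dots,v_r),(m_1,\dots,m_r))$, add fresh variables $u_1,\dots,u_s$ (a new block per such constraint) and replace the constraint by the translation of $\varphi$: in each atom $S(g_1(z_1),\dots,g_t(z_t))$, a slot occupied by $z_\ell=x_i$ becomes the variable $v_i$ guarded by $g_\ell\circ m_i\in\NN\circ\MM$, and a slot occupied by $z_\ell=y_j$ becomes the fresh variable $u_j$ guarded by $g_\ell\circ h$, where $h\colon\numdom{n}\to\numdom{d}$ is a fixed surjection in $\MM_d^n$. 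All guarding maps thus lie in $\MM\cup(\NN\circ\MM)$, so we obtain a legitimate instance of $\udcsp(\Gamma,\MM\cup(\NN\circ\MM))$ of polynomial size and with the same domain. Correctness mirrors Proposition~\ref{prop:language_fpt_reduction}: a constraint $(R,(v_i),(m_i))$ holds under $f$ iff $(m_1(f(v_1)),\dots,m_r(f(v_r)))\in R$ iff there are witnesses $b_1,\dots,b_s\in\numdom{d}$ satisfying $\varphi$; as $h$ is onto we can pick $c_j\in\numdom{n}$ with $h(c_j)=b_j$, and extending $f$ by $f(u_j)=c_j$ makes precisely the translated atoms true, while conversely reading the $u_j$-values back through $h$ recovers witnesses for $R$. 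Atoms outside the $R$-part are untouched, so the two instances are equisatisfiable.

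The one delicate point --- and what I expect to be the main obstacle --- is guaranteeing the surjection $h\in\MM_d^n$, which needs $n\ge d$ (and, when $\MM$ is very restrictive, that $\numdom{d}$ be hit by some map of $\MM$ at all). Since $d=|D_R|$ is a fixed constant, this is handled by first \emph{padding}: replace $n$ by $\max(n,d)$ and extend every input guarding map to the larger domain inside the same family $\MM$ while preserving satisfiability (for $\mo$ and $\mo'$ one extends by the last value, for $\all$ arbitrarily; the residual finite cases, e.g.\ $\MM=\oh$ with $n<d$ forcing $n\le1$, are solved outright). After padding, $h$ can be taken to be $x\mapsto\min(x,d-1)$ or its anti-monotone or one-hot analogue. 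Verifying that this padding is always available while simultaneously keeping all maps in $\MM$ and not altering the answer is the only slightly technical step; the remainder is a routine transcription of the proof of Proposition~\ref{prop:language_fpt_reduction}.
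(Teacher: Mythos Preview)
Your approach is essentially the same as the paper's, just unrolled. The paper factors the argument through Proposition~\ref{prop:language_fpt_reduction}: it first converts the $\NN$-efpp-definition of $R$ into an $\NN$-fgpp-definition of a higher-arity relation $R'$ (by promoting the existentially quantified variables to free variables), then reduces $\udcsp(\Gamma\cup\{R\},\MM)$ to $\udcsp(\Gamma\cup\{R'\},\MM)$ by introducing fresh instance variables for each quantifier in each occurrence of $R$, and finally invokes Proposition~\ref{prop:language_fpt_reduction} on $R'$. You perform the same substitution in one shot. Either way, the number of new variables is linear in $|C|$, which is why the reduction is polynomial but not FPT.

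On your ``delicate point'': you are right to flag it, and in fact the paper's own proof has the same implicit assumption. When the paper writes ``replace each occurrence of $R$ with $R'$ by introducing a fresh variable'', that fresh variable still lives in $\numdom{n}$ and needs a guarding map in $\MM_d^n$; for the existential witness to be recoverable, that map must be surjective. The paper simply does not comment on this. Your padding argument handles the families $\all$, $\mo$, $\mo'$, $\oh$ that are actually used, which is all the paper ever needs; but you should be aware that, as stated in full generality for arbitrary $\MM$, neither proof is complete without an additional hypothesis ensuring $\MM_d^n$ contains a surjection (at least for $n$ large enough). Flagging this is a strength of your write-up, not a defect.
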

\begin{proof}
    Let $\phi$ be the formula that $\NN$-efpp-defines $R$. We modify $\phi$ into a (higher arity) fgpp formula $\phi'$ by replacing each quantifier with an input variable, and let $R'$ be the corresponding (higher arity) relation.
    
    Now let $(n,V,C)$ be an instance of $\udcsp(\Gamma \cup \{R\}, \MM)$. We replace each occurrence of $R$ with $R'$ by introducing a fresh variable for each existential quantifier. The number of new variables increases linearly in the number of constraints, hence this is a polynomial-time reduction to $\udcsp(\Gamma \cup \{R'\}, \MM)$. Since $R' \in \fgpp{\Gamma}{\NN}$, we can apply Proposition \ref{prop:language_fpt_reduction} and obtain a reduction
    \[
        \udcsp(\Gamma \cup \{R\}, \MM) \leq_{\rm poly} \udcsp(\Gamma \cup \{R'\}, \MM) \leq_{\rm FPT} \udcsp(\Gamma, \MM \cup (\NN \circ \MM))
    \]
    which completes the proof.
\end{proof}

\begin{proposition}\label{prop:maps_poly_reduction}
Let $\MM$ and $\NN$ be map families and $\Gamma$ a constraint language. Suppose that the following conditions hold.
    \begin{itemize}
        \item The only nonempty map sets in $\NN$ are of the form $\NN^n_n$. That is, each function has the same domain and co-domain.
         \item There is a polynomial $P$ such that for each function $g$ from $\NN$ we have $g^\bullet \in \efpp{\Gamma}{\MM}$ using an $\MM$-efpp-definition that can be constructed in $P(|g^{\bullet}|)$ time.
    \end{itemize}
    Then $\udcsp(\Gamma, \MM \cup (\MM \circ \NN)) \leq_{\rm poly} \udcsp(\Gamma, \MM)$.
\end{proposition}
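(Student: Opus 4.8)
The plan is to mirror the proof of Proposition~\ref{prop:maps_fpt_reduction} but replace every appeal to fgpp-definability with the corresponding efpp-definability statement, and to track that the resulting reduction remains polynomial-time rather than merely FPT (the number of variables is now allowed to grow). First I would take an instance $(n,V,C)$ of $\udcsp(\Gamma, \MM \cup (\MM \circ \NN))$. By the first hypothesis, $\NN$ has at most one nonempty map set whose domain and co-domain is $\numdom{n}$, say $\NN^n_n = \{g\}$ (if it is empty there is nothing to do and the identity reduction works). By the second hypothesis we can, in time polynomial in $|g^\bullet| = O(n)$ and hence polynomial in the input size (recall $n$ is given in unary), construct an $\MM$-efpp-definition $\exists z_1,\dots,z_s\colon \psi(x,y,z_1,\dots,z_s)$ of the graph $g^\bullet$, where $\psi$ is an equality-free quantifier-free pp-formula over $\Gamma$ whose atoms use only maps from $\MM$.

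The construction then proceeds exactly as before: for each variable $x \in V$, introduce a fresh variable $x_g$ meant to hold $g(x)$, together with fresh auxiliary variables $z_1^x,\dots,z_s^x$ for the existential quantifiers, and add to the instance the constraints obtained from the atoms of $\psi(x, x_g, z_1^x,\dots,z_s^x)$; each such atom $S(f_1(\cdot),\dots)$ with $S \in \Gamma$ and $f_i \in \MM$ translates directly into a valid constraint of $\udcsp(\Gamma,\MM)$. Then every constraint of the original instance that uses a guarding function of the form $m \circ g \in \MM \circ \NN$ is rewritten by replacing $m(g(x))$ with $m(x_g)$, which is a legitimate $\MM$-map applied to the variable $x_g$; constraints whose guarding functions already lie in $\MM$ are copied verbatim. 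Since we add $(s+1)|V|$ new variables and a bounded number of new constraints per variable, the blow-up is linear, so this is a polynomial-time many-one reduction.

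For correctness one argues the standard equivalence in both directions. Given a satisfying assignment $f$ of the original instance, extend it by $f(x_g) := g(f(x))$ and choose the $z_i^x$ to be witnesses for the existential quantifiers in the efpp-definition of $g^\bullet$ evaluated at $(f(x), g(f(x)))$; then all new $\psi$-constraints are satisfied, every rewritten constraint $S(\dots,m(x_g),\dots)$ holds because $m(f(x_g)) = m(g(f(x)))$ equals the original value $m(g(f(x)))$, and the copied constraints are unaffected. Conversely, from a satisfying assignment of the new instance, the $\psi$-constraints force $f(x_g) = g(f(x))$ for every $x$ (since $g^\bullet$ is a graph of a function, the efpp-definition pins $x_g$ uniquely given $x$), so projecting back to $V$ recovers a satisfying assignment of the original instance by the same substitution identity. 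I do not anticipate a genuine obstacle here: the only point requiring mild care is that the efpp-definition of $g^\bullet$ must be constructible in time polynomial in the instance size — which is exactly what the second hypothesis grants, given the unary encoding of $n$ — and that allowing the variable count to grow is precisely why we only claim a polynomial-time reduction rather than an FPT-reduction, so no parameter bound needs to be verified.
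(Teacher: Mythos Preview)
Your approach is essentially the same as the paper's, but you have misread the first hypothesis. You write ``$\NN$ has at most one nonempty map set whose domain and co-domain is $\numdom{n}$, say $\NN^n_n = \{g\}$'', but the proposition only requires that the nonempty map sets are of the form $\NN^n_n$; it does \emph{not} require $|\NN^n_n| \leq 1$. You have carried over the extra cardinality condition from Proposition~\ref{prop:maps_fpt_reduction}, which is precisely the condition that is dropped here (and this is the point of the polynomial version: allowing many $g$'s per domain). Indeed, the paper applies this proposition with $\NN^3_3$ equal to all six permutations of $\numdom{3}$.

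The fix is immediate and the paper's proof does exactly this: for \emph{each} $g \in \NN^n_n$ (or, more carefully, each $g$ that actually occurs in some guarding function $m \circ g$ of the instance, which is polynomially many), introduce a fresh variable $x_g$ together with its auxiliary existential witnesses, implement the efpp-definition of $g^\bullet$ on $(x,x_g)$, and then replace each occurrence of $m(g(x))$ by $m(x_g)$. Everything else in your argument --- the polynomial-time bound via the unary encoding of $n$, the correctness in both directions, and the observation that the variable blow-up is why this is only $\leq_{\rm poly}$ rather than $\leq_{\rm FPT}$ --- is correct and matches the paper.
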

\begin{proof}
    Let an instance $(n,V,C)$ be given. For each function $g \in \NN^n_n$, the second requirement shows that we can construct a formula that $\MM$-efpp-defines $g^\bullet$ in polynomial time. We now extend each variable $x \in V$ with a fresh variable $x_g$ corresponding to the image of $g$ and more fresh variables corresponding to the quantifiers from the formula. We then implement the formula on $x$ and these new variables by replacing each atom with a constraint. Overall, this reduction takes polynomial time. Since the number of variables increases greatly, it is no longer FPT.
    Now consider some constraint containing a guarding function $m \circ g$ from $\MM \circ \NN$. We  can simply replace $m(g(x))$ with $m(x_g)$ to obtain a guarding function from $\MM$.
\end{proof}

In most applications, we only work with map families that are closed under composition with themselves. This is the case in monotone maps, anti-monotone maps, and arbitrary maps. We conclude this section by looking at reducibility in this setting.
\begin{corollary}[of Proposition \ref{prop:language_fpt_reduction}] \label{corr:language_fpt_reduction}
    Let $\MM$ be a map family closed under composition with itself. Let $\Gamma$ be a constraint language. Consider a relation $R \in \fgpp{\Gamma}{\MM}$. Then $\udcsp(\Gamma \cup \{R\}, \MM) \leq_{\rm FPT} \udcsp(\Gamma, \MM)$.
\end{corollary}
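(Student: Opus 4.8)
The plan is to apply Proposition~\ref{prop:language_fpt_reduction} directly with the choice $\NN = \MM$. Since $R \in \fgpp{\Gamma}{\MM}$ by hypothesis, the premise of the proposition is satisfied, and it yields an FPT-reduction
\[
    \udcsp(\Gamma \cup \{R\}, \MM) \leq_{\rm FPT} \udcsp(\Gamma, \MM \cup (\MM \circ \MM)).
\]
It then remains only to observe that the target problem is literally the same problem as $\udcsp(\Gamma, \MM)$. Indeed, because $\MM$ is closed under composition with itself, we have $(\MM \circ \MM)_d^e \subseteq \MM_d^e$ for every pair $(d,e) \in \N^2$; since union of map families is taken set-wise (as in the definition of map families), this gives $\MM \cup (\MM \circ \MM) = \MM$ as map families. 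Substituting this identity into the reduction above produces $\udcsp(\Gamma \cup \{R\}, \MM) \leq_{\rm FPT} \udcsp(\Gamma, \MM)$, which is the claim. Transitivity of FPT-reductions is not even needed here, since the second problem coincides with the first's target after the substitution.

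There is essentially no obstacle. The only point warranting a sentence of care is to confirm that the phrase ``closed under composition with itself'' from the preceding definition indeed gives the inclusion $\MM \circ \MM \subseteq \MM$ at the level of each individual map set $\MM_d^e$, and hence that $\MM \cup (\MM \circ \MM)$ and $\MM$ name the same map family rather than merely interchangeable ones for the purposes of defining $\udcsp$. Once this bookkeeping is made explicit, the corollary follows immediately from Proposition~\ref{prop:language_fpt_reduction}; one may additionally remark that the three main families of interest in the paper --- arbitrary maps $\all$, monotone maps $\mo$, and anti-monotone maps $\mo'$ --- are all closed under self-composition, so the corollary applies to each of them.
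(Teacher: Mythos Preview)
Your proof is correct and is exactly the intended derivation: the paper states the corollary without proof precisely because setting $\NN = \MM$ in Proposition~\ref{prop:language_fpt_reduction} and using $\MM \circ \MM \subseteq \MM$ immediately gives the result. Your additional remarks about the bookkeeping and the applicability to $\all$, $\mo$, $\mo'$ are accurate but unnecessary for the argument.
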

\begin{corollary}[of Proposition \ref{prop:language_poly_reduction}]\label{corr:language_poly_reduction}
    Let $\MM$ be a map family closed under composition with itself. Let $\Gamma$ be a constraint language. Consider a relation $R \in \efpp{\Gamma}{\MM}$. Then $\udcsp(\Gamma \cup \{R\}, \MM) \leq_{\rm poly} \udcsp(\Gamma, \MM)$.
\end{corollary}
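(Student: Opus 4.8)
The plan is to obtain the corollary as an immediate instantiation of Proposition~\ref{prop:language_poly_reduction}. First I would apply that proposition with the choice $\NN := \MM$. Since by hypothesis $R \in \efpp{\Gamma}{\MM}$, the premise of the proposition (that $R \in \efpp{\Gamma}{\NN}$) holds verbatim, so it delivers a polynomial-time reduction
\[
    \udcsp(\Gamma \cup \{R\}, \MM) \leq_{\rm poly} \udcsp(\Gamma, \MM \cup (\MM \circ \MM)).
\]
It then only remains to recognise the problem on the right-hand side as $\udcsp(\Gamma, \MM)$ itself.

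For that step I would unfold the definition of ``closed under composition with itself'': it says precisely that $\MM \circ \MM \subseteq \MM$, i.e.\ $(\MM \circ \MM)_d^e \subseteq \MM_d^e$ for every $(d,e) \in \N^2$. Since set operations on map families are applied component-wise, this gives $(\MM \cup (\MM \circ \MM))_d^e = \MM_d^e$ for all $(d,e)$, so $\MM \cup (\MM \circ \MM) = \MM$ as map families. Consequently $\udcsp(\Gamma, \MM \cup (\MM \circ \MM))$ is literally the same problem as $\udcsp(\Gamma, \MM)$ --- same instances, same admissible guarding functions, same yes/no answers --- and chaining this identification with the reduction above yields the claim.

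The corollary therefore carries no content beyond Proposition~\ref{prop:language_poly_reduction} together with the definition of self-composition closure, and the only point that needs a moment's attention is the bookkeeping that $\MM \cup (\MM \circ \MM)$ really collapses to $\MM$ under the map-family conventions; there is no genuine obstacle here. The same one-line instantiation, with $\fgpp{\cdot}{\cdot}$ in place of $\efpp{\cdot}{\cdot}$ and $\leq_{\rm FPT}$ in place of $\leq_{\rm poly}$, gives Corollary~\ref{corr:language_fpt_reduction} from Proposition~\ref{prop:language_fpt_reduction}.
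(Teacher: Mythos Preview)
Your proposal is correct and is exactly the intended one-line instantiation the paper has in mind; the paper states the corollary without proof precisely because setting $\NN := \MM$ in Proposition~\ref{prop:language_poly_reduction} and using $\MM \cup (\MM \circ \MM) = \MM$ from closure under self-composition is immediate.
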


\begin{proposition} \label{prop:can_compose}
  Let $\MM$ be a map family closed under composition with itself. Let $\Gamma$ be a constraint language. If $R \in \fgpp{\Gamma}{\MM}$ and $S \in \fgpp{\{R\}}{\MM}$ then $S \in \fgpp{\Gamma}{\MM}$
\end{proposition}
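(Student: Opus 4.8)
The plan is to show that transitivity of $\MM$-fgpp-definability follows by substituting the definition of $R$ over $\Gamma$ into the definition of $S$ over $\{R\}$, and checking that the guarding maps of the resulting composite definition still lie in $\MM$ thanks to the closure hypothesis. Concretely, suppose $S$ is an $s$-ary relation over $\numdom{d_S}$ and $R$ is $r$-ary over $\numdom{d_R}$. Since $S \in \fgpp{\{R\}}{\MM}$, we may write $S(x_1,\dots,x_s) \equiv \bigwedge_j R(g_{j,1}(x_{a_{j,1}}), \dots, g_{j,r}(x_{a_{j,r}}))$, where each $g_{j,i} \in \MM_{d_R}^{d_S}$ and each $a_{j,i}\in[s]$. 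Since $R \in \fgpp{\Gamma}{\MM}$, we may write $R(y_1,\dots,y_r) \equiv \bigwedge_\ell T_\ell(h_{\ell,1}(y_{b_{\ell,1}}), \dots, h_{\ell,t_\ell}(y_{b_{\ell,t_\ell}}))$ with each $T_\ell \in \Gamma$ of arity $t_\ell$ and domain $\numdom{d_{T_\ell}}$, each $h_{\ell,m}\in \MM_{d_{T_\ell}}^{d_R}$, and each $b_{\ell,m}\in[r]$.

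First I would perform the substitution: for each conjunct $R(g_{j,1}(x_{a_{j,1}}),\dots,g_{j,r}(x_{a_{j,r}}))$ of the $S$-definition, replace it by the conjunction over $\ell$ of the atoms obtained from the $R$-definition after substituting $y_m \mapsto g_{j,m}(x_{a_{j,m}})$. The atom $T_\ell(\dots, h_{\ell,m}(y_{b_{\ell,m}}), \dots)$ then becomes $T_\ell(\dots, h_{\ell,m}(g_{j,b_{\ell,m}}(x_{a_{j,b_{\ell,m}}})), \dots)$, i.e.\ $T_\ell(\dots, (h_{\ell,m}\circ g_{j,b_{\ell,m}})(x_{a_{j,b_{\ell,m}}}), \dots)$. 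Collecting all such atoms over all $j$ and $\ell$ yields a quantifier-free, equality-free primitive positive formula $\psi(x_1,\dots,x_s)$ over $\Gamma$ with domain $\numdom{d_S}$ whose guarding functions are exactly the composites $h_{\ell,m}\circ g_{j,b_{\ell,m}}$.

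Next I would verify correctness and the map-family condition. For correctness: an assignment $x_1 \mapsto c_1,\dots,x_s\mapsto c_s$ satisfies $\psi$ iff for every $j$ the tuple $(g_{j,1}(c_{a_{j,1}}),\dots,g_{j,r}(c_{a_{j,r}}))$ satisfies every conjunct of the $R$-definition, i.e.\ lies in $R$, iff $(c_1,\dots,c_s)\in S$; this is just the soundness of textual substitution into quantifier-free positive formulas, already implicitly used in Proposition~\ref{prop:language_fpt_reduction}. For the maps: $g_{j,b_{\ell,m}}\in\MM_{d_R}^{d_S}$ and $h_{\ell,m}\in\MM_{d_{T_\ell}}^{d_R}$, so their composite is a map $\numdom{d_S}\to\numdom{d_{T_\ell}}$ lying in $(\MM\circ\MM)_{d_{T_\ell}}^{d_S}$; since $\MM$ is closed under composition with itself, $\MM\circ\MM\subseteq\MM$, hence every guarding function of $\psi$ lies in the appropriate $\MM_{d_{T_\ell}}^{d_S}$. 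Therefore $\psi$ is an $\MM$-fgpp-definition of $S$ over $\Gamma$, so $S\in\fgpp{\Gamma}{\MM}$.

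There is essentially no serious obstacle here: the statement is a bookkeeping lemma, and the only point requiring care is that composing guarding maps keeps us inside $\MM$, which is exactly what the hypothesis supplies. One minor subtlety worth spelling out is that $\fgpp{}{}$-definitions as stated allow a ``projection'' step ($R(x_1,\dots,x_r)\equiv\varphi(x_{i_1},\dots,x_{i_n})$), so one should note that composing two such projections is again a projection and folds harmlessly into the variable-index reindexing above; this is routine. (In fact the result could also be derived from Proposition~\ref{prop:language_fpt_reduction} at the level of reductions, but the direct syntactic argument gives the cleaner relational statement needed downstream.)
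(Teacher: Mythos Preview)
Your proof is correct and takes exactly the same approach as the paper, which gives only a one-line sketch: ``The maps used to fgpp-define $S$ over $\{R\}$ and those used to fgpp-define $R$ over $\Gamma$ compose into maps that fgpp-define $S$ over $\Gamma$.'' You have simply spelled out the substitution and the map-composition check in full detail.
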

\begin{proof} 
    The maps used to fgpp-define $S$ over $\{R\}$ and those used to fgpp-define $R$ over $\Gamma$ compose into maps that fgpp-define $S$ over $\Gamma$.
\end{proof}

\subsection{The Functional Side: Partial Multivalued Operations}
\label{sec:img_and_retraction}


Let $D_1$ and $D_2$ be domains. Recall from Section~\ref{sec:algebra} that a partial multivalued operation from $D_1$ to $D_2$
is a function $f \colon D_1 \to 2^{D_2}$, where $f(d)=\emptyset$ means that $f(d)$ is undefined.
With slight abuse of notation, we also define $f(S)$ for sets $S \subseteq D_1^r$ as $f(S) := \bigcup_{s \in S} f(s)$. Furthermore, we define the inverse operation $f^{-1} : D_2 \to 2^{D_1}$ as $f^{-1}(x) := \{y \mid x \in m(y)\}$.
We often write the shorthand \emph{multifunction} for a partial multivalued operation or polymorphism when it is clear from context which one is meant.
Recall that for a constraint language $\Gamma$ we write $\mpol(\Gamma)$ be the set of all partial multivalued polymorphisms that preserve $\Gamma$.

In order to make it possible to relate multifunctions between different domains we generalize the well-known concept of \emph{homomorphic images} from universal algebra to the multifunction setting.

\begin{definition}
    Let $D_1$ and $D_2$ be domains. Let $m \colon D_1 \to 2^{D_2}$ be a partial multivalued operation between them. For an $r$-ary partial multivalued polymorphism $f$ on $D_1$ we define $f_m$ over $D_2$ as 
    $f_m(m(x_1), \ldots, m(x_r)) = m(f(x_1, \ldots, x_r))$ for every $(x_1, \ldots, x_r) \in D_1^r$.
    Since the result is a multivalued polymorphism, it is no problem to consider a set of non-surjective maps rather than a single surjective map. For a set of maps $M \subseteq D_1^{D_2}$, we define $f_M$ over $D_2$ as
    \[f_M(x) = \bigcup_{m \in M} f_m(x)\] and if $M = \emptyset$ we let 
    \[f_{\emptyset}(d) = \emptyset\] for each $d \in D^r_2$.
    Additionally, for a set of partial multivalued polymorphisms $F$ we define the \emph{concrete homomorphic image} $\img{F}{M}$ as $\{f_M \mid f \in F\}$.
\end{definition}

We now restrict this definition to two settings.
Firstly, let $M$ be a set of (not multivalued) maps from $D_1$ to $D_2$. We then define the \emph{homomorphic image} $\img{F}{M}$ by interpreting each map as an elementary multivalued operation. 
Secondly, let $M$ again be a set of (not multivalued) maps from $D_1$ to $D_2$. We define the \emph{inverse homomorphic image} $\img{F}{M}^{-1}$ as $\img{F}{\{m^{-1} \mid m \in M\}}$, where we again interpret $m$ as a multivalued operation to make $m^{-1}$ well-defined. We remark that inverse homomorphic images are a generalization of {\em reflections}~\cite{barto2018} to the partial multifunction setting.

With these operators we can, given maps from $D_1 \to D_2$, relate multifunctions over $D_1$ with multifunctions from $D_2$. We now use them to extend polymorphisms to arbitrary domains.

\begin{definition} Let $f$ be a (partial) multivalued polymorphism on some domain $D$ and let $\MM$ be a map family. We say that $f$ \emph{$\MM$-preserves} some relation $R$ on some domain $E$ if the concrete homomorphic image $f_{\MM^D_E}$ is a multivalued polymorphism of $R$. For a constraint language $\Gamma$, we define $\mpol(\Gamma,\MM)$ as the set of all partial multivalued polymorphisms that $\MM$-preserve all relations in $\Gamma$.
\end{definition}

It is important to note that $f$ is interpreted differently (via the homomorphic image $f_{\MM^D_E}$) depending on the domain of $R$, and can thus be related to operations definable from strong Maltsev conditions in the algebraic approach for CSPs~\cite{barto2018}.
In particular, we remark that $f$ is different from the interpretation over its own domain unless $\MM^D_D$ consists of all identity functions. This is why we usually refer to $f$ as a \emph{pattern polymorphism}, or just a {\em pattern}, to distinguish it from its interpretations over other domains. In the context of monotone maps $\mo$ we sometimes say that $f$ is an {\em order polymorphism}.

\begin{example}
    Consider a Boolean relation $R \subseteq \numdom{2}^r$. Then (recall Section~\ref{sec:prelim_algebra}) $\mpol(R)$ is the set of all Boolean partial multipolymorphisms of $R$. If we instead use the simple map family $\id$ of all identity functions then $\mpol(R, \id)$ is the slightly richer family of all multipolymorphisms that in addition to $\mpol(R)$ trivially contains all multifunctions defined on the ``wrong'' domain. To see this, pick e.g. a multifunction $f$ over $\numdom{3}$. Then $\id^3_2 = \emptyset$ and the interpretation $f_{\id^3_2} = f_{\emptyset}$ is the empty partial multifunction which trivially preserves $R$. This notion comes in handy when we relate relations and constraint languages defined over different domains. 
\end{example}

\begin{example}\label{example:min}
  Consider the case where $\MM = \mo$, the family of monotone maps. We define the \emph{minimum pattern} $\pmin$ as the partial multivalued polymorphism defined by $\pmin(0,1) = \pmin(1,0) = 0$. To see its effect on other domains, consider for example the domain $\numdom{5}$. The pattern $\pmin$ lifts to some polymorphism $f$ on this domain. This polymorphism satisfies, amongst others, $f(3,4) = 3$ (from $\pmin(0,1) = 0$ using the monotone map $01 \mapsto 34$), $f(2,0) = 0$ (from $\pmin(1,0) = 0$ using the map $01 \mapsto 02$), or $f(1,1) = 1$ (from $\pmin(0,1) = 0$ using the map $01 \mapsto 11$). Clearly, $f$ is exactly the conventional $\min$ operation on $\numdom{5}$ and is totally defined.
  
  We stress that $\pmin$ itself is partially defined and different from the interpretation of $\pmin$ on $\numdom{2}$. For example, $\min$ satisfies $\min(0,0) = 0$ while $\pmin(0,0)$ is undefined.
\end{example}

\subsection{A Galois Connection}
We now present the link between $\MM$-fgpp-definitions and $\NN$-preserving polymorphisms (where $\MM$ and $\NN$ are map families but not necessarily the same). Note that since we do not make any assumptions on $\MM$ and $\NN$ (e.g., that it is closed under composition, or contains any kind of non-trivial map) we should not expect the algebraic landscape to be simpler than partial multifunctions. We begin with some lemmas.

\begin{lemma}\label{lem:fgpp_to_mpol}
    Let $\MM$ and $\NN$ be map families and $\Gamma$ a constraint language. Let $R$ be a relation that is $\MM$-fgpp-definable over $\Gamma$. Let $f$ be a multivalued polymorphism that $(\MM \circ \NN)$-preserves all relations in $\Gamma$. Then $f$ $\NN$-preserves $R$.
\end{lemma}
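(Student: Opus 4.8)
The plan is to unwind the definitions of "$\MM$-fgpp-definable" and "$\NN$-preserves" and verify directly that the concrete homomorphic image $f_{\NN^{\numdom{d}}_E}$ (where $\numdom{d}$ is the domain of $R$ and $E$ the domain of $f$) is a multivalued polymorphism of $R$. So fix a fgpp-definition $R(x_1,\ldots,x_r)\equiv\varphi(x_{i_1},\ldots,x_{i_n})$ where $\varphi$ is a conjunction of atoms $S_\ell(g_{\ell,1}(y_{\ell,1}),\ldots,g_{\ell,s_\ell}(y_{\ell,s_\ell}))$ with $S_\ell\in\Gamma$ of domain $\numdom{d_\ell}$ and $g_{\ell,p}\in\MM^{\numdom{d}}_{\numdom{d_\ell}}$, each $y_{\ell,p}$ being some $x_{i_q}$. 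I would take tuples $t_1,\ldots,t_m\in R$ and an arbitrary element $u$ of $f_{\NN^{\numdom{d}}_E}(t_1,\ldots,t_m)$ computed coordinatewise, and show $u\in R$, i.e.\ $u$ satisfies every conjunct of $\varphi$.

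The key step is a "composition of maps" observation: by definition of the concrete homomorphic image via a set of maps, $u$ arises as follows — for each coordinate $c$ there is a map $h_c\in\NN^{\numdom{d}}_E$ and an element $v^{(c)}\in f(a^{(c)}_1,\ldots,a^{(c)}_m)$ (for some preimages $a^{(c)}_j$ of $t_j[c]$ under $h_c$) with $u[c]=h_c(v^{(c)})$; here one must be slightly careful that the same witnessing data can be chosen across coordinates, which is exactly how $f_M$ is defined (a union over $m\in M$, applied to the whole tuple). Now push this through a single atom $S_\ell(g_{\ell,1}(y_{\ell,1}),\ldots)$: the coordinate of $u$ feeding into argument $p$ of $S_\ell$ is $g_{\ell,p}$ applied to some coordinate $c(\ell,p)$ of $u$, and $g_{\ell,p}\circ h_{c(\ell,p)}\in\MM\circ\NN$ since $g_{\ell,p}\in\MM$ and $h_{c(\ell,p)}\in\NN$ (matching co-domains). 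Because each $t_j$ satisfies the atom, the tuples $S_\ell$ sees lie in $S_\ell$, and since $f$ $(\MM\circ\NN)$-preserves $S_\ell$, the image tuple — which is exactly what $u$ produces for this atom — lies in $S_\ell$ as well. Doing this for every conjunct shows $u\models\varphi$, hence $u\in R$.

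The main obstacle, and the only place needing care, is the bookkeeping around the definition of $f_M$ for a set $M$ of maps: one has to make sure that when expressing a coordinate of the output of $f_{\NN^{\numdom{d}}_E}$, the witnessing map $h\in\NN^{\numdom{d}}_E$ and the chosen preimages are consistent enough that composing with the guarding map $g_{\ell,p}\in\MM$ genuinely lands inside $\MM\circ\NN$ as a single map, so that the hypothesis that $f$ $(\MM\circ\NN)$-preserves $S_\ell$ applies verbatim. Once the definition of $f_M$ is spelled out (it is a union of the single-map images $f_m$, each of which is elementary to compose with), this reduces to the identity $(f_m)_{m'} = f_{m'\circ m}$ for composable maps, which follows immediately from the definition of the concrete homomorphic image, so no real difficulty remains — everything else is routine substitution into $\varphi$.
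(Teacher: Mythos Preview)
Your proposal is correct and follows essentially the same argument as the paper's proof: unwind the definition of $\NN$-preservation coordinatewise, pick witnessing maps $h_c \in \NN$ at each coordinate of $R$, and for each atom $S_\ell$ of the fgpp-definition compose the guarding function $g_{\ell,p} \in \MM$ with the appropriate $h_{c(\ell,p)}$ to land in $\MM \circ \NN$, then invoke $(\MM \circ \NN)$-preservation of $S_\ell$. The paper phrases this by contradiction (pick $t_0 \notin R$ and a single failing atom) rather than directly, but the content is identical; note only that your indices on $\NN^{\numdom{d}}_E$ are swapped relative to the paper's convention (source is the superscript, so it should be $\NN^{E}_{\numdom{d}}$), and that the ``same witnessing data across coordinates'' remark is slightly misleading since $f_M$ is a union over $M$ \emph{independently at each coordinate}, which is precisely what lets you choose a separate $h_c$ per coordinate---your actual setup already does this correctly.
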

\begin{proof}
    Let $D_R$ and $D_f$ be the domains and $r$ and $k$ the arities of $R$ and $f$, respectively. Assume to the contrary that $f$ does not $\NN$-preserve $R$. Then there exist tuples $t_1, \ldots, t_k \in R$, tuple $t_0 \notin R$ and maps $m_1,\ldots,m_k \in \NN^{D_f}_{D_R}$ such that, in each coordinate $i \in [r]$, $t_0[i] \in f_{m_i}(t_1[i], \ldots, t_k[i])$. Consequently, there exist $x_{i0}, \ldots, x_{ik} \in D_f$ satisfying $m_i(x_{ij}) = t_j[i]$ for each $j \in [k]$ and $x_{i0} \in f(x_{i1}, \ldots, x_{ik})$.

    Since $t_0 \notin R$, it fails some atom of the $\MM$-fgpp-formula defining $R$. This atom is of the form
    \[
        S(h_1(x_{1'}), \ldots, h_s(x_{s'}))
    \]
    for some $s$-ary relation $S \in \Gamma$ with domain $D_S$, functions $h_1, \ldots, h_s \in \MM^{D_R}_{D_S}$ and indices $1', \ldots, s' \in [r]$. We now define $u_j$ as the image of $t_j$ under these maps for $j \in [k]$. That is,
    \[
        u_j := (h_1(t_j[1']),\ldots,h_s(t_j[s']))
    \]
    Since $t_0$ does not satisfy $S$ while $t_1, \ldots, t_k$ all satisfy $R$, and hence also $S$, we find $u_j \in S$ if and only if $j \neq 0$.

    Now consider the tuple $(u_1, \ldots, u_k) \in S^k$. Since $\Gamma$ is $(\MM \circ \NN)$-preserved by $f$, we find that $f_M(u_1, \ldots, u_k) \in S$ where $M := (\MM \circ \NN)^{D_f}_{D_S}$. In the $i$-th coordinate, we find
    \[
        f_M(u_1[i], \ldots, u_k[i]) = f_M(h_i(t_1[i']),\ldots,h_i(t_k[i'])) = f_M(h_i(m_{i'}(x_{i'1})), \ldots, h_i(m_{i'}(x_{i'k})))
    \]
    Since $h_i \circ m_{i'} \in \MM \circ \NN$, we find that $f_M(u_1[i], \ldots, u_k[i])$ at least contains $h_i(m_{i'}(f(x_{i'1}, \ldots, x_{i'k}))) = h_i(m_{i'}(x_{i'0})) = h_i(t_0[i']) = u_0[i]$. Hence, $f_M(u_1, \ldots, u_k)$ contains $u_0$, which means that $u_0 \in S$; a contradiction.
\end{proof}

\begin{lemma}\label{lem:mpol_to_fgpp}
    Let $\MM$ be a map family and $\Gamma$ a constraint language. Let $R$ be a relation that is preserved by all multivalued polymorphisms that $\MM$-preserve $\Gamma$. Then $R$ is $\MM$-fgpp-definable over $\Gamma$.
\end{lemma}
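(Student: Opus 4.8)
The plan is to run the usual $\mathrm{Inv}$--$\Pol$ Galois argument, in its equality-free, quantifier-free, multivalued incarnation, with the guarding maps threaded through. Write $D_R$ and $r$ for the domain and arity of $R$, and enumerate $R=\{t_1,\ldots,t_m\}$. First I would write down the \emph{canonical} $\MM$-fgpp-formula $\varphi(x_1,\ldots,x_r)$ over $\Gamma$: it is the conjunction of all atoms $S(h_1(x_{i_1}),\ldots,h_s(x_{i_s}))$ where $S\in\Gamma$ is $s$-ary over a domain $D_S$, $i_1,\ldots,i_s\in[r]$, the maps satisfy $h_1,\ldots,h_s\in\MM^{D_R}_{D_S}$, and $(h_1(t[i_1]),\ldots,h_s(t[i_s]))\in S$ for every $t\in R$. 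Since $\Gamma$ is finite, each $\MM^{D_R}_{D_S}$ is a finite set of maps between finite sets, and there are only $r$ variables, so this conjunction is finite and $\varphi$ is a bona fide $\MM$-fgpp-definition over $\Gamma$. By construction $R$ is contained in the relation $R_\varphi$ defined by $\varphi$, and the whole lemma reduces to proving the reverse inclusion $R_\varphi\subseteq R$.

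Suppose for contradiction that there is $t_0\in R_\varphi\setminus R$. For $i\in[r]$ let $s_i:=(t_1[i],\ldots,t_m[i])\in D_R^m$ be the $i$-th ``column'' of $R$. I would then build an $m$-ary partial multivalued polymorphism $f$ on the domain $D_R$ whose domain of definition is exactly $\{s_1,\ldots,s_r\}$, setting $f(s_i):=\{\,t_0[j]\mid j\in[r],\ s_j=s_i\,\}$ (taking a set keeps $f$ well defined when columns coincide). There are two things to verify. One is that $f$ does \emph{not} preserve $R$: applying $f$ coordinatewise to $t_1,\ldots,t_m\in R$ gives, in coordinate $i$, the set $f(s_i)\ni t_0[i]$, so $t_0\in f(t_1,\ldots,t_m)$ while $t_0\notin R$. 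Given the hypothesis, it therefore suffices to prove the second thing, namely that $f$ $\MM$-preserves every relation of $\Gamma$; this contradiction forces $R_\varphi=R$ and finishes the proof.

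So fix $S\in\Gamma$, $s$-ary over $D_S$, put $M:=\MM^{D_R}_{D_S}$, and take arbitrary $a^{(1)},\ldots,a^{(m)}\in S$. I would unwind the concrete homomorphic image $f_M$: a tuple $(c_1,\ldots,c_s)$ can lie in $f_M(a^{(1)},\ldots,a^{(m)})$ only if for every $l\in[s]$ there are $\mu_l\in M$ and an index $i_l\in[r]$ with $\mu_l(t_j[i_l])=a^{(j)}[l]$ for all $j\in[m]$ and $c_l=\mu_l(t_0[i_l])$ --- here partiality of $f$ forces its arguments in coordinate $l$ to be one of the columns $s_{i_l}$, and the repeated-column case is absorbed into the choice of $i_l$. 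Now consider the atom $S(\mu_1(x_{i_1}),\ldots,\mu_s(x_{i_s}))$. Substituting $x=t_j$ yields exactly $a^{(j)}\in S$ for each $j\in[m]$, i.e.\ the atom holds on all of $R$, so it is one of the conjuncts of $\varphi$; since $t_0\in R_\varphi$, substituting $x=t_0$ gives $(c_1,\ldots,c_s)=(\mu_1(t_0[i_1]),\ldots,\mu_s(t_0[i_s]))\in S$. Hence $f_M(a^{(1)},\ldots,a^{(m)})\subseteq S$, so $f_M$ is a multivalued polymorphism of $S$; as $S\in\Gamma$ was arbitrary, $f\in\mpol(\Gamma,\MM)$, the desired contradiction.

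The routine parts are the finiteness of $\varphi$ and the non-preservation of $R$. The one place needing care is the last paragraph: correctly unwinding $f_{\MM^{D_R}_{D_S}}$ on tuples (it is the box of its coordinatewise values), using the partiality of $f$ to pin the arguments to columns of $R$, and then reading off the matching canonical conjunct of $\varphi$ --- the map $\mu_l$ must occur in that atom in exactly the position in which it was used when evaluating $f_M$. (If $R=\varnothing$ the statement is degenerate and is handled separately.) Together with Lemma~\ref{lem:fgpp_to_mpol}, this gives the intended Galois connection between $\MM$-fgpp-definability over $\Gamma$ and $\mpol(\Gamma,\MM)$.
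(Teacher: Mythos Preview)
Your proof is correct and follows essentially the same approach as the paper: build the canonical $\MM$-fgpp formula consisting of all atoms satisfied by every tuple of $R$, then for a hypothetical $t_0\in R_\varphi\setminus R$ construct the column-based partial multivalued polymorphism and verify that it $\MM$-preserves $\Gamma$ yet fails on $R$. You are in fact a bit more careful than the paper in making the multivalued output explicit when columns coincide, in arguing finiteness of $\varphi$, and in flagging the $R=\varnothing$ edge case.
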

\begin{proof}
    Let $D_R$ and $r$ be the domain and arity of $R$. Let $r_1,\ldots, r_n$ be some enumeration of the tuples in $R$. For each $i \in [k]$, we let $c_i := (r_1[i],\ldots,r_n[i])$ correspond to a column of the tuples in $R$.

    We now construct an $\MM$-fgpp formula over $\Gamma$ using $r$ variables with domain $D_R$. For each constraint $S \in \Gamma$ of (say) arity $s$ and domain $D_S$, maps $h_1,\ldots, h_s \in \MM^{D_R}_{D_S}$, and indices $1',\ldots,s' \in [r]$, we add the atom
    \[
        S(h_1(x_{1'}),\ldots, h_s(x_{s'}))
    \]
    to the fgpp-definition if and only if $R$ does not fail this atom. That is, when $(h_1(r_j[1']), \ldots,  h_s(r_j[s'])) \in S$ for each $j \in [n]$. Let $R'$ be the relation that is $\MM$-fgpp-defined using this formula. Note that, by construction, $R \subseteq R'$. We now prove that $R = R'$. Suppose to the contrary that there exists some tuple $t \in R' \setminus R$.

    We now define a multivalued $n$-ary polymorphism $f$ with domain $D_R$. For each $i \in [r]$, we add $f(c_i[1],\ldots,c_i[n]) = t[i]$. Note that $R$ does not preserve $f$.

    Suppose that $f$ does not $\MM$-preserve some relation $S \in \Gamma$ of (say) arity $s$ and domain $D_S$. Then there exist tuples $u_1, \ldots, u_n \in S$, tuple $u_0 \notin S$ and maps $h_1,\ldots,h_s \in \MM^{D_R}_{D_S}$ such that, in each coordinate $i \in [s]$, $u_0[i] \in f_{h_i}(u_1[i], \ldots, u_n[i])$. Consequently, there exist $x_{i0}, \ldots, x_{in} \in D_f$ satisfying $h_i(x_{ij}) = u_j[i]$ for each $j \in [r]$ and $x_{i0} \in f(x_{i1}, \ldots, x_{in})$. By our restrictive definition of $f$, this means that there is some index $i'$ such that $x_{ij} = c_{i'}[j]$ and $x_{i0} = t[i']$.

    Now consider some tuple $u_i$ with $i>0$. We find that
    \[
        u_i = (u_i[1],\ldots,u_i[s]) = (h_1(x_{1i}),\ldots,h_s(x_{si})) = (h_1(c_{1'}[i]),\ldots,h_s(c_{s'}[i])) = (h_1(r_i[1']),\ldots,h_s(r_i[s']))
    \]
    Compare this with the atom $S(h_1(x_{1'}), \ldots, h_s(x_{s'}))$. For each $i \in [n]$, $r_i$ maps to $u_i$, which is contained in $S$. Hence, this atom is part of the $\MM$-fgpp-definition used to define $R'$. For the tuple $u_0$ we instead find
    \[
        u_0 = (u_0[1],\ldots,u_0[s]) = (h_1(x_{10}),\ldots,h_s(x_{s0})) = (h_1(t[1']),\ldots,h_s(t[s']))
    \]
    Hence, $t$ maps to $u_0$ in this atom. But since $u_0 \notin S$, $t$ does not satisfy this atom which contradicts $t \in R'$. It follows that $f$ $\MM$-preserves all relations from $\Gamma$. Furthermore, by our assumptions on $R$, this means that $f$ now also preserves $R$, which is again a contradiction. We conclude that $R' = R$, and that $R$ is $\MM$-fgpp-definable.
\end{proof}

We now combine the lemmas into a theorem relating $\MM$-fgpp definitions with $\MM$-preserving polymorphisms.

\begin{theorem} \label{thm:galois-wrapup}
    Let $\MM$ be a map family, $\Gamma$ a constraint language, and $R$ a relation. Then,
    \[
        R \in \fgpp{\Gamma}{\MM}
    \]
    if and only if
    \[
        \mpol(\Gamma,\MM) \subseteq \mpol(R, \id)
    \]
\end{theorem}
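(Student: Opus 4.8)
The plan is to read off the theorem directly from Lemmas~\ref{lem:fgpp_to_mpol} and~\ref{lem:mpol_to_fgpp} by specialising the ``inner'' map family to $\id$. The enabling observation is that $\id$ behaves as an identity for composition, up to bookkeeping of domains: since $\id^i_e$ is nonempty only when $i=e$, and then consists of a single identity map, we have $\MM \circ \id = \MM$ for every map family $\MM$; dually, for a multifunction $f$ whose domain is the domain $D_R$ of $R$, the interpretation $f_{\id^{D_R}_{D_R}}$ is just $f$ itself, so ``$f$ $\id$-preserves $R$'' is literally ``$f$ preserves $R$'', which is exactly membership in $\mpol(R,\id)$. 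I would record these two equalities first, noting also that any multifunction $f$ on a domain $D_f \neq D_R$ lies in $\mpol(R,\id)$ vacuously, since $\id^{D_f}_{D_R}=\emptyset$ makes $f_\emptyset$ the empty -- hence trivially preserving -- multifunction.

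For the forward implication, assume $R \in \fgpp{\Gamma}{\MM}$ and take an arbitrary $f \in \mpol(\Gamma,\MM)$. Then $f$ $\MM$-preserves every relation of $\Gamma$, which by $\MM \circ \id = \MM$ is the same as $(\MM \circ \id)$-preserving $\Gamma$; applying Lemma~\ref{lem:fgpp_to_mpol} with $\NN := \id$ gives that $f$ $\id$-preserves $R$, i.e.\ $f \in \mpol(R,\id)$. As $f$ was arbitrary, $\mpol(\Gamma,\MM) \subseteq \mpol(R,\id)$.

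For the converse, assume $\mpol(\Gamma,\MM) \subseteq \mpol(R,\id)$. By Lemma~\ref{lem:mpol_to_fgpp} it suffices to verify that every multivalued polymorphism on $D_R$ that $\MM$-preserves $\Gamma$ already preserves $R$. Any such $f$ is by definition an element of $\mpol(\Gamma,\MM)$, hence by hypothesis an element of $\mpol(R,\id)$; and as noted above, for $f$ on $D_R$ this says precisely that $f$ preserves $R$. Lemma~\ref{lem:mpol_to_fgpp} then yields $R \in \fgpp{\Gamma}{\MM}$, closing the equivalence.

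The hard part here is not conceptual but a matter of care: the only place where something could slip is the bookkeeping that translates between the multifunction-theoretic notion ``$\MM$-preserves'' (which always routes through a homomorphic image $f_{\MM^D_E}$) and ordinary coordinatewise preservation. I would therefore be explicit about the identities $\MM \circ \id = \MM$ and $f_{\id^{D_R}_{D_R}} = f$, and about the convention that off-domain interpretations collapse to the empty multifunction, so that both lemmas can be invoked with $\NN = \id$ with no hidden side conditions. Everything beyond that is a direct citation of the two lemmas.
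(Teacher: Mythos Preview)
Your proposal is correct and follows exactly the paper's approach: both directions are obtained from Lemmas~\ref{lem:fgpp_to_mpol} and~\ref{lem:mpol_to_fgpp} by specialising the inner map family to $\id$. Your additional bookkeeping about $\MM \circ \id = \MM$, $f_{\id^{D_R}_{D_R}} = f$, and the vacuous off-domain case is all accurate and simply makes explicit what the paper leaves implicit in its two-line proof.
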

\begin{proof}
    The first direction follows from Lemma \ref{lem:fgpp_to_mpol} by letting $\NN = \id$. The second direction follows from Lemma \ref{lem:mpol_to_fgpp}.
\end{proof}

We conclude this section with a proposition regarding the absence of $\MM$-preserving polymorphisms, which will be very useful in hardness proofs.

\begin{proposition}\label{prop:no_preserve_f}
    Let $\MM$ be a map family and $\Gamma$ a constraint language. Let $f$ be a multivalued polymorphism that does not $\MM$-preserve $\Gamma$. Let $D_f$ and $k$ be the domain and arity of $f$, and $((x_{11},\ldots,x_{1k}),y_1),\ldots,((x_{ri},\ldots,x_{rk},y_r))$ an enumeration of all input/output pairs of $f$. That is, $y_i \in f(x_{i1},\ldots,x_{ik})$ for all $i$. Then, there exists a relation $R \in \fgpp{\Gamma}{\MM}$ with domain $D_f$ and arity $r$ satisfying $(x_{1j},\ldots,x_{rj}) \in R$ for each $j$ while $(y_1,\ldots,y_r) \notin R$.
\end{proposition}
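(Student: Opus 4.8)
The plan is to mirror the ``hard'' direction of the Galois connection, i.e.\ the construction used to prove Lemma~\ref{lem:mpol_to_fgpp}. Write $D_f$ for the domain of $f$ and, for $j\in[k]$, let $C_j := (x_{1j},\ldots,x_{rj})\in D_f^{\,r}$ be the $j$-th ``column'' of the input table of $f$; the target is a relation $R\in\fgpp{\Gamma}{\MM}$ of arity $r$ over $D_f$ that contains every $C_j$ but not $(y_1,\ldots,y_r)$. First I would define $R$ canonically: $R$ is the relation on variables $z_1,\ldots,z_r$ (each with domain $D_f$) defined by the conjunction of \emph{all} atoms $S(h_1(z_{j_1}),\ldots,h_s(z_{j_s}))$ -- over all $S\in\Gamma$ of arity $s$ and domain $D_S$, all map tuples with $h_a\in\MM^{D_f}_{D_S}$, and all index tuples $(j_1,\ldots,j_s)\in[r]^s$ -- that are \emph{satisfied by every column}, i.e.\ $(h_1(C_j[j_1]),\ldots,h_s(C_j[j_s]))\in S$ for every $j\in[k]$. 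Finiteness of $\Gamma$ and of the domains makes this a finite conjunction, so $R$ is a genuine $\MM$-fgpp-definition over $\Gamma$ of arity $r$ over $D_f$, and $C_j\in R$ for each $j$ by construction.

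It then remains to show $(y_1,\ldots,y_r)\notin R$, and for this it suffices to exhibit one atom of the defining conjunction of $R$ that $(y_1,\ldots,y_r)$ violates. This is where the hypothesis that $f$ does not $\MM$-preserve $\Gamma$ is spent: it yields some $S\in\Gamma$ (arity $s$, domain $D_S$) together with tuples $u_1,\ldots,u_k\in S$ and $u_0\notin S$ with $u_0\in f_M(u_1,\ldots,u_k)$ for $M=\MM^{D_f}_{D_S}$. Unfolding this coordinate by coordinate, for each $a\in[s]$ I would extract a map $h_a\in M$, a tuple $p_a\in D_f^{\,k}$, and an element $q_a\in f(p_a)$ such that $h_a(p_a[b])=u_b[a]$ for all $b\in[k]$ and $h_a(q_a)=u_0[a]$. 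Since $(p_a,q_a)$ is an input/output pair of $f$, the enumeration contains an index $j_a\in[r]$ with $(x_{j_a 1},\ldots,x_{j_a k})=p_a$ and $y_{j_a}=q_a$; this index records which variable coordinate $a$ of $S$ should be bound to.

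The verification is then a direct substitution into the atom $A:=S(h_1(z_{j_1}),\ldots,h_s(z_{j_s}))$. Evaluating $A$ on $C_j$ sends $z_{j_a}\mapsto x_{j_a j}=p_a[j]$, so coordinate $a$ becomes $h_a(p_a[j])=u_j[a]$ and the resulting tuple is $u_j\in S$; hence $A$ belongs to the conjunction defining $R$. Evaluating $A$ on $(y_1,\ldots,y_r)$ sends $z_{j_a}\mapsto y_{j_a}=q_a$, so coordinate $a$ becomes $h_a(q_a)=u_0[a]$ and the resulting tuple is $u_0\notin S$; hence $(y_1,\ldots,y_r)$ violates $A$ and therefore lies outside $R$, as required. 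I expect the only delicate point to be the index bookkeeping: one must note that fgpp-atoms allow the $s$ argument slots of $S$ to be bound to arbitrary, possibly repeated, variables $z_{j_1},\ldots,z_{j_s}$ and to arbitrary maps from $\MM$ -- exactly the freedom needed to route each coordinate of the witness $S$ onto a variable realizing the correct input/output pair of $f$ -- and to check that when the same pair $(p_a,q_a)$ is forced for two distinct coordinates, binding both to the single variable $z_{j_a}$ still keeps the ``positive'' evaluation ($u_j$) and the ``negative'' evaluation ($u_0$) intact. Everything else is just unwinding the definitions of $f_M$ and of $\MM$-preservation.
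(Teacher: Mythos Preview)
Your proof is correct and follows essentially the same approach as the paper: extract the witness $S$, tuples $u_0,u_1,\ldots,u_k$, and maps $h_a$ from the failure of $\MM$-preservation, then for each coordinate $a\in[s]$ identify an index $j_a\in[r]$ whose input/output pair matches, and use the atom $S(h_1(z_{j_1}),\ldots,h_s(z_{j_s}))$.

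The only difference is cosmetic: you wrap this atom inside the ``canonical'' conjunction of all column-preserving atoms (mirroring Lemma~\ref{lem:mpol_to_fgpp}), whereas the paper simply takes $R(x_1,\ldots,x_r)\equiv S(h_1(x_{1'}),\ldots,h_s(x_{s'}))$ directly as the single-atom fgpp-definition. Since your own argument shows that this one atom already contains every $C_j$ and excludes $(y_1,\ldots,y_r)$, the surrounding conjunction is not needed; you could streamline by dropping it.
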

\begin{proof}
    Since $f$ does not $\MM$-preserve $\Gamma$, it does not preserve some relation $S \in \Gamma$ of (say) arity $s$ and domain $D_S$. Then there exist tuples $u_1, \ldots, u_k \in S$, tuple $u_0 \notin S$ and maps $h_1,\ldots,h_s \in \MM^{D_f}_{D_S}$ such that, in each coordinate $i \in [s]$, $u_0[i] \in f_{h_i}(u_1[i], \ldots, u_k[i])$. Consequently, there exists some index $i'$ such that the input/output pair $((x_{i'1}, \ldots, x_{i'k}),y_{i'})$ of $f$ satisfies $h_i(x_{i'j}) = u_j[i]$ for each $j \in [k]$ and $h_i(y_{i'}) = u_0[i]$.

    In particular, each coordinate $i$ of $S$ corresponds with the $i'$-th input/output pair. We now construct an $r$-ary relation $R$ with domain $D_f$ as follows:
    \[
        R(x_1,\ldots,x_r) \equiv S(h_1(x_{1'}),\ldots,h_s(x_{s'}))
    \]
    We note that a variable does not occur on the right hand side if its input/output pair is never reached. Now consider the tuple $(x_{1j},\ldots,x_{rj})$ for some $j$. The right hand side of the relation becomes
    \[
        S(h_1(x_{1'j}),\ldots,h_s(x_{s'j})) = S(u_j[1],\ldots,u_j[s])
    \]
    which satisfies the formula as $u_j \in S$. Similarly, we find that $(y_1,\ldots,y_r) \notin R$ as $u_0 \notin S$. This completes the proof.
\end{proof}

\subsection{Application: Monotone Maps}
As an example, we now look at some results specific to the case of monotone maps. Here, $\mo$-preserving polymorphisms can be interpreted as \emph{order operations}, where the output of a polymorphism is only dependent on how its argument are ordered. Recall for example the polymorphism pattern $\pmin$ from Example~\ref{example:min}. We can define patterns for the \emph{maximum} and \emph{median} in a similar way.

\begin{definition} \label{def:minmaxmedian}
    We define the polymorphism patterns $\pmin$, $\pmax$ and $\pmedian$ as follows.
    \begin{itemize}
        \item $\pmin$ is the binary partial polymorphism pattern on $\numdom{2}$ satisfying $\pmin(0,1) = \pmin(1,0) = 0$.
        \item $\pmax$ is the binary partial polymorphism pattern on $\numdom{2}$ satisfying $\pmax(0,1) = \pmax(1,0) = 1$.
        \item $\pmedian$ is the ternary partial polymorphism pattern on $\numdom{3}$ satisfying $\pmedian(0,1,2) = \pmedian(0,2,1) = \ldots = \pmedian(2,1,0) = 1$.
    \end{itemize}
\end{definition}

    We stress that this is an abstract definition in the sense that these polymorphisms can be interpreted over any ordered domain. In contrast, neither min nor max can abstractly be defined via so-called {\em strong Maltsev conditions}. The typical workaround is to instead view them as special cases of {\em semilattice operations}, i.e., idempotent, associative, and commutative, operations.

To make the above point more precise, consider $\pmin$ for example. If we view $0$ and $1$ as variables $x,y$ then the two identities can be viewed as a simple type of strong Maltsev condition $\pmin(x,y) = \pmin(y,x) = x$ equipped with an order $x < y$. For every $\numdom{d}$ one can then define a unique (partial multi)function (by using monotone maps from $\{x,y\}$ to $\numdom{d}$) that satisfies these identities and which is undefined otherwise, and this function is precisely the concrete homomorphic image of $\pmin$ under monotone maps. 

A key property in the monotone maps setting is that of \emph{domain reversal}. This occurs when we, in addition to monotone maps, also have access to anti-monotone maps. The relation between monotone and anti-monotone maps is given by the family of order-reversing maps; maps of the form $r : \numdom{n} \to \numdom{n}$ with $r(i) = n-1-i$.

We now investigate when we have access to order-reversing maps.
\begin{lemma} \label{lemma:can_define_reverse}
    Let $\mo$ be the monotone map family. Let $\Gamma$ be a constraint language over $\numdom{d}$ that is not $\mo$-preserved by neither $\pmin$ nor $\pmax$. Then for each order-reversing map $r$, we have $r^{\bullet} \in \fgpp{\Gamma}{\mo}$.
\end{lemma}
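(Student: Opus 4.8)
The goal is to show that when $\Gamma$ over $\numdom{d}$ is $\mo$-preserved by neither $\pmin$ nor $\pmax$, every order-reversing map $r : \numdom{n} \to \numdom{n}$ has its graph $r^\bullet$ in $\fgpp{\Gamma}{\mo}$. By Theorem~\ref{thm:galois-wrapup}, it suffices to show that every multifunction in $\mpol(\Gamma,\mo)$ preserves $r^\bullet$ (interpreted over $\numdom{n}$ via identity maps, i.e.\ as an ordinary relation). The plan is to argue contrapositively: suppose some $f \in \mpol(\Gamma,\mo)$ does \emph{not} preserve $r^\bullet$, and derive that $f$ forces $\pmin$ or $\pmax$ to $\mo$-preserve $\Gamma$ --- contradicting the hypothesis. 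Actually it is cleaner to go the other way: show that the failure of both $\pmin$ and $\pmax$ to $\mo$-preserve $\Gamma$ gives us, via Proposition~\ref{prop:no_preserve_f}, two relations $P_{\pmin}, P_{\pmax} \in \fgpp{\Gamma}{\mo}$ that witness the non-preservation, and then combine these to $\mo$-fgpp-define $r^\bullet$.

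Concretely, I would first unpack what it means for $\pmin$ to fail to $\mo$-preserve $\Gamma$. Applying Proposition~\ref{prop:no_preserve_f} with $f = \pmin$ (domain $\numdom 2$, arity $2$, input/output pairs $(0,1)\mapsto 0$ and $(1,0)\mapsto 0$) yields a binary relation $P_{\min} \in \fgpp{\Gamma}{\mo}$ over $\numdom 2$ with $(0,1),(1,0) \in P_{\min}$ but $(0,0) \notin P_{\min}$. Similarly, non-preservation by $\pmax$ yields a binary $P_{\max} \in \fgpp{\Gamma}{\mo}$ over $\numdom 2$ with $(0,1),(1,0)\in P_{\max}$ but $(1,1)\notin P_{\max}$. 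Since $\mo$ is closed under composition with itself (and, via Proposition~\ref{prop:can_compose}, fgpp-definability composes), it then suffices to $\mo$-fgpp-define $r^\bullet$ over $\{P_{\min},P_{\max}\}$. The key observation is that a monotone map $m : \numdom n \to \numdom 2$ is exactly $[x \ge a]$ for some threshold $a$, so a single atom $P_{\min}([x \ge a],[y \ge b])$ or $P_{\max}([x \ge a],[y \ge b])$ imposes a ``one-sided'' constraint on the pair $(x,y)$: for $P_{\min}$, the atom forbids exactly the configurations where both $x<a$ and $y<b$ (together with possibly some of the three remaining Boolean configurations, depending on which further tuples $P_{\min}$ contains beyond the three it must contain), and dually $P_{\max}([x\ge a],[y\ge b])$ forbids configurations where both $x\ge a$ and $y\ge b$.

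The heart of the argument is then a purely combinatorial construction: the graph $r^\bullet = \{(i, n-1-i) : i \in \numdom n\}$ is a strictly decreasing ``staircase'' in the $n\times n$ grid, and I would write it as a conjunction of threshold constraints. For each $i \in \{1,\dots,n-1\}$ we want to express ``$x \ge i \Leftrightarrow y \le n-1-i$'', i.e.\ ``$x\ge i \Leftrightarrow y < n-i$''. The implication $x\ge i \Rightarrow y < n-i$ is ``not ($x\ge i$ and $y\ge n-i$)'', which is captured by $P_{\max}([x\ge i],[y\ge n-i])$ --- but only if $P_{\max}$ happens to contain no stray tuple making the atom too weak; the one tuple it must exclude is $(1,1)$, which is exactly the one we need excluded, and the atom will also possibly exclude some of $(0,0),(0,1),(1,0)$, which we must ensure is harmless. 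The reverse implication $y<n-i \Rightarrow x \ge i$, equivalently ``not ($x<i$ and $y<n-i$)'', is captured by $P_{\min}([x\ge i],[y\ge n-i])$, whose forbidden tuple $(0,0)$ is precisely what we need. Taking the conjunction over all $i$ of these atoms, a pair $(x,y)$ survives iff for every $i$ both implications hold, which pins down $y = n-1-x$; one then checks each point of $r^\bullet$ does survive every atom (using that $P_{\min},P_{\max}$ contain $(0,1)$ and $(1,0)$, and handling $(1,1)$ resp.\ $(0,0)$ by the fact that the corresponding threshold never simultaneously triggers on a point of $r^\bullet$). Finally one deals with the boundary maps (thresholds $0$ and $n$, which are constant maps --- these are monotone and may be used, or simply omitted).

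The main obstacle I anticipate is the possibility that $P_{\min}$ (resp.\ $P_{\max}$) contains tuples \emph{beyond} the three/one forced by the proposition --- in the extreme $P_{\min}$ could be all of $\numdom2^2\setminus\{(0,0)\}$, the ``Boolean OR'', and then $P_{\min}([x\ge i],[y\ge n-i])$ says only $x\ge i \vee y\ge n-i$, which is exactly the implication $y<n-i \Rightarrow x\ge i$ we wanted --- so this extreme is actually the clean case. The genuinely awkward case is when $P_{\min}$ is something like $\{(0,1),(1,0)\}$ (the Boolean $\ne$), forcing also $x<i \Leftrightarrow y\ge n-i$ type biconditions; but since we are \emph{conjoining} over all $i$ anyway and each such atom is still satisfied by every point of $r^\bullet$, the extra strength only helps. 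The one real check is that no atom we add rejects a point of $r^\bullet$: a point $(j, n-1-j)$ evaluated at threshold $i$ gives the Boolean pair $([j\ge i],[n-1-j\ge n-i]) = ([j\ge i],[j\le i-1]) = ([j\ge i],[j<i])$, which is always one of $(1,0)$ or $(0,1)$ --- and both of these lie in $P_{\min}$ and in $P_{\max}$ by construction. So every atom is automatically satisfied on $r^\bullet$, and the only work left is verifying that a non-point of $r^\bullet$ is killed, which follows because if $y \ne n-1-x$ then either $y > n-1-x$ (so for $i = n-1-y$ we get the pair $(1,1)$, killed by the $P_{\max}$ atom) or $y < n-1-x$ (so for $i = x$ we get $(0,0)$, killed by the $P_{\min}$ atom). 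Assembling these pieces gives $r^\bullet \in \fgpp{\{P_{\min},P_{\max}\}}{\mo} \subseteq \fgpp{\Gamma}{\mo}$, as desired.
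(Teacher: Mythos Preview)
Your approach is essentially the paper's: invoke Proposition~\ref{prop:no_preserve_f} on $\pmin$ and $\pmax$ to obtain binary relations over $\numdom{2}$ containing $\{01,10\}$ and excluding $00$ respectively $11$, then conjoin threshold atoms $[x\ge i],[y\ge n-i]$ over all $i$ to cut out $r^\bullet$. The paper streamlines by first intersecting the two witness relations into the single relation $R=\{01,10\}$, so that one atom per threshold suffices; your variant with two atoms per threshold works equally well, and your observation that a point $(j,n-1-j)$ always maps to $(0,1)$ or $(1,0)$ under any threshold pair is exactly the right check.

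Your final verification has two off-by-one slips. For $y>n-1-x$ you claim $i=n-1-y$ yields the Boolean pair $(1,1)$, but $[y\ge n-i]=[y\ge y+1]=0$; the correct choice is $i=n-y$, which lies in $[1,n-1]$ and gives $[x\ge n-y]=1$ (since $y\ge n-x$) and $[y\ge y]=1$. For $y<n-1-x$ you claim $i=x$ yields $(0,0)$, but $[x\ge x]=1$; take $i=x+1$ instead, giving $[x\ge x+1]=0$ and $[y\ge n-1-x]=0$. With these corrections the argument is complete.
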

\begin{proof}
    Since $\Gamma$ is not $\mo$-preserved by $\pmin/\pmax$, Proposition \ref{prop:no_preserve_f} shows that we can $\mo$-define relations $R_{\min}$ and $R_{\max}$ on the domain $\numdom{2}$ satisfying $\{01, 10\} \subseteq R_{\min} \subseteq \numdom{2}^2 \setminus \{00\}$ and $\{01, 10\} \subseteq R_{\max} \subseteq \numdom{2}^2 \setminus \{11\}$. Together, these form the binary relation $R = \{01, 10\}$.

    Now consider the domain $\numdom{d}$. For $1 \leq i \leq d-1$ we let $f_i \colon \numdom{d} \to \numdom{d}$ be the map $f_i(X) = \{0\}$ for $X = \{0, \ldots, i-1\}$ and $f_i(Y) = \{1\}$ for $Y = \{i, \ldots, d-1\}$. Then, observe that a constraint $R(f_i(x), f_{d-i}(y))$ implies that $x \in \{0 \ldots, i-1\}$ if and only if $y \in \{d-i, \ldots, d-1\}$, i.e., $x \leq i-1$ iff $y \geq d-i$. Putting together these constraints for $1 \leq i \leq d - 1$ correctly defines $r^{\bullet}$.
\end{proof}

\begin{proposition} \label{prop:anti_monotone_maps}
    Let $\mo$ and $\mo'$ respectively be the monotone and anti-monotone map families. Let $\Gamma$ be a constraint language is not $\mo$-preserved by neither $\pmin$ nor $\pmax$. Then, $\udcsp(\Gamma,\mo \cup \mo') \leq_{\rm FPT} \mcsp(\Gamma)$.
\end{proposition}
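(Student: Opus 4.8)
The plan is to reduce $\udcsp(\Gamma, \mo \cup \mo')$ to $\mcsp(\Gamma) = \udcsp(\Gamma, \mo)$ by getting rid of the anti-monotone maps, one domain size at a time. The key idea is that anti-monotone maps over $\numdom{n}$ are exactly compositions of a monotone map with the single order-reversing map $r_n \colon \numdom{n} \to \numdom{n}$, $r_n(i) = n - 1 - i$; concretely, $\mo' \subseteq \mo \circ \NN$ where $\NN$ is the map family whose only nonempty sets are $\NN^n_n = \{r_n\}$. So the goal is to invoke Proposition~\ref{prop:maps_fpt_reduction} with this $\NN$: its first hypothesis (each nonempty set is of the form $\NN^n_n$ and has size at most one) holds by construction, and its second hypothesis is precisely that $r_n^{\bullet} \in \fgpp{\Gamma}{\mo}$ uniformly in $n$ — which is exactly the content of Lemma~\ref{lemma:can_define_reverse} together with the observation that the fgpp-definition produced there (a conjunction of $d-1$ constraints using the explicitly described maps $f_i$) can be written down in polynomial time in $|r_n^{\bullet}| = \Theta(n)$. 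Wait — I should double-check a subtlety here: in Lemma~\ref{lemma:can_define_reverse}, $r$ is an order-reversing map on $\numdom{d}$, the base domain, whereas in our reduction we need the order-reversing map $r_n$ on $\numdom{n}$, the (large) instance domain. But the maps $f_i$ appearing there are from $\numdom{d}$ to $\numdom{d}$, so that version only handles the base-domain reversal; for the instance-domain reversal I instead apply the same argument with $\numdom{n}$ in place of $\numdom{d}$, using the analogous maps $f_i \colon \numdom{n} \to \numdom{2}$ for $1 \le i \le n-1$, which gives $r_n^{\bullet} \in \fgpp{\{R\}}{\mo}$ where $R = \{01, 10\}$; and since $R \in \fgpp{\Gamma}{\mo}$ by the first part of that lemma's proof, composing via Proposition~\ref{prop:can_compose} (using that $\mo$ is closed under composition with itself) yields $r_n^{\bullet} \in \fgpp{\Gamma}{\mo}$.

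So the concrete steps, in order, are as follows. First, set $\NN$ to be the map family with $\NN^n_n = \{r_n\}$ and all other sets empty, and verify $\mo' \subseteq \mo \circ \NN$: any anti-monotone $m \colon \numdom{a} \to \numdom{b}$ equals $(m \circ r_a^{-1}) \circ r_a$ where $m \circ r_a^{-1}$ is monotone, hence $m \in \mo^b_a \circ \NN^a_a$; consequently $\mo \cup \mo' \subseteq \mo \cup (\mo \circ \NN)$. Second, establish the hypotheses of Proposition~\ref{prop:maps_fpt_reduction}: the structural hypothesis on $\NN$ is immediate, and for the definability hypothesis I show $r_n^{\bullet} \in \fgpp{\Gamma}{\mo}$ with a polynomial-size definition, by first $\mo$-defining $R = \{01,10\}$ over $\Gamma$ on $\numdom{2}$ via Proposition~\ref{prop:no_preserve_f} applied to the assumption that neither $\pmin$ nor $\pmax$ $\mo$-preserves $\Gamma$, then $\mo$-defining $r_n^{\bullet}$ over $\{R\}$ via the $n-1$ constraints $R(f_i(x), f_{n-i}(y))$, and finally composing with Proposition~\ref{prop:can_compose}. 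Third, conclude $\udcsp(\Gamma, \mo \cup (\mo \circ \NN)) \leq_{\rm FPT} \udcsp(\Gamma, \mo) = \mcsp(\Gamma)$ from Proposition~\ref{prop:maps_fpt_reduction}, and chain with the trivial inclusion reduction $\udcsp(\Gamma, \mo \cup \mo') \leq_{\rm FPT} \udcsp(\Gamma, \mo \cup (\mo \circ \NN))$ (any instance of the former is literally an instance of the latter since $\mo \cup \mo' \subseteq \mo \cup (\mo \circ \NN)$). This gives the claimed reduction.

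The one point needing genuine care — and the place I expect to spend the most effort — is the polynomiality and uniformity of the fgpp-definition of $r_n^{\bullet}$: Proposition~\ref{prop:maps_fpt_reduction} demands a single polynomial $P$ bounding the construction time of $g^{\bullet}$'s definition for \emph{every} $g \in \NN$, i.e.\ for every $r_n$. This is fine because the definition has exactly $n-1$ atoms, each using one fixed relation $R$ of arity $2$ together with two of the explicitly described threshold maps $f_i, f_{n-i}$ (each storable in $O(n)$ space), plus the fixed-size $\mo$-definition of $R$ over $\Gamma$; so the whole thing is writable in $\mathrm{poly}(n)$ time, uniformly. A secondary, purely bookkeeping concern is that $\mo$ is closed under composition with itself — needed for Proposition~\ref{prop:can_compose} and implicitly for the clean form of Proposition~\ref{prop:maps_fpt_reduction}'s conclusion — but this is stated in the text ("This is the case in monotone maps, anti-monotone maps, and arbitrary maps") and is immediate since a composition of monotone maps is monotone. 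Everything else is a direct assembly of the propositions already proved.
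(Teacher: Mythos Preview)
Your proposal is correct and follows essentially the same approach as the paper's proof: define $\NN$ as the family of order-reversing maps, observe $\mo' \subseteq \mo \circ \NN$, invoke Lemma~\ref{lemma:can_define_reverse} to get $r_n^\bullet \in \fgpp{\Gamma}{\mo}$, and apply Proposition~\ref{prop:maps_fpt_reduction}. The paper dispatches this in three sentences, while you spell out the details more carefully---in particular the uniformity and polynomial-time constructibility of the fgpp-definition of $r_n^\bullet$, which the paper leaves implicit, and the domain-size subtlety in reading Lemma~\ref{lemma:can_define_reverse} (whose statement does already cover all order-reversing maps, despite the notational reuse of $d$ in its proof).
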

\begin{proof}
    Because of Lemma \ref{lemma:can_define_reverse}, we can define all order-reversing maps. We now apply Proposition \ref{prop:maps_fpt_reduction} with $\MM = \mo$ and $\NN$ the family of all order-reversing maps, noting that $\mo' = \mo \circ \NN$.
\end{proof}


\section{Unrestricted Maps}
\label{sec:always_hard}

We begin our study by considering the $\udcsp$ problem in the unrestricted map setting, i.e., the $\ucsp$ problem. We show in Section~\ref{sec:unrestr-hard} that $\ucsp(\Gamma)$ is essentially
always W[1]-hard, except for very simple settings. 
We use this result for discussing and analyzing various sources of hardness in Section~\ref{sec:unrestr-discussion}.

\subsection{Dichotomy Result}
\label{sec:unrestr-hard}

We begin by showing that Boolean equality relation in itself is sufficient to yield hardness.
The proof uses a standard hardness reduction in the area~\cite{MarxR09,MarxR14},
which has become a standard method for proving W[1]-hardness, especially
for \textsc{MinCSP}-type problems~\cite{Dabrowski:etal:ipec2023,DabrowskiJOOW23soda,KimKPW23fa3,KimMPSW24weighted}.

\begin{lemma} \label{lm:equality-hard}
    The problem $\ucsp(\{\mathsf{Eq}\})$
  is W[1]-hard when parameterized by the number of variables. 
\end{lemma}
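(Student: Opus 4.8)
The plan is to give an FPT-reduction (in fact a polynomial-time one) from \textsc{Multicoloured Clique}, which is \W{1}-hard, following the standard template used by Marx and Razgon. The engine of the reduction is two simple observations about what $\mathsf{Eq}$-constraints with arbitrary unary maps into $\numdom{2}$ express over a domain $\numdom{n}$. First, for any subset $A \subseteq \numdom{n}$ the single-variable constraint $\mathsf{Eq}([x \in \numdom{n}\setminus A],\,[x \in \emptyset])$ holds iff $x \in A$ (the right-hand map is constantly $0$, so it forces $[x \in \numdom{n}\setminus A] = 0$); hence we may freely restrict any variable to an arbitrary subset of its domain. Second, for any $A,B \subseteq \numdom{n}$ the constraint $\mathsf{Eq}([x \in A],\,[y \in B])$ holds iff $x \in A \Leftrightarrow y \in B$; hence we may impose arbitrary ``biconditional'' constraints between two variables.

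Concretely, given an instance $(G,k,\chi)$ of \textsc{Multicoloured Clique} with colour classes $V_1,\dots,V_k$, I would set $n := \max(1,|E(G)|)$ and fix an injection of $E(G)$ into $\numdom{n}$, identifying each edge with its index. For each unordered pair $\{i,j\} \subseteq [k]$ introduce one variable $x_{\{i,j\}}$ over $\numdom{n}$; this gives $\binom{k}{2} \le k^2$ variables, which will be the new parameter. Add the following constraints: (a) for each pair $\{i,j\}$, a constraint forcing $x_{\{i,j\}} \in E(V_i,V_j)$, the set of edges with one endpoint in $V_i$ and one in $V_j$ (using the first observation); and (b) for each $i \in [k]$, each two-element subset $\{j,l\} \subseteq [k]\setminus\{i\}$, and each vertex $v \in V_i$, let $A_v$ and $B_v$ be the sets of edges of $E(V_i,V_j)$ and of $E(V_i,V_l)$ incident to $v$, and add the constraint $\mathsf{Eq}([x_{\{i,j\}} \in A_v],\,[x_{\{i,l\}} \in B_v])$ (using the second observation).

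It then remains to verify that $G$ has a multicoloured $k$-clique iff the constructed $\ucsp(\{\mathsf{Eq}\})$ instance is satisfiable. In a solution, $x_{\{i,j\}}$ is an edge $e_{ij} \in E(V_i,V_j)$ by (a); fixing $i$, if $e_{ij}$ has $V_i$-endpoint $v$ then $e_{ij} \in A_v$, so for every $l \ne i$ the relevant constraint in (b) forces $e_{il} \in B_v$, i.e.\ $e_{il}$ also has $V_i$-endpoint $v$, using that $\{B_u : u \in V_i\}$ partitions $E(V_i,V_l)$. Writing $w_i$ for this common endpoint, each $e_{ij}$ has endpoints $w_i$ and $w_j$, so $\{w_i,w_j\} \in E(G)$ and $\{w_1,\dots,w_k\}$ is a multicoloured $k$-clique; conversely a multicoloured clique $\{w_1,\dots,w_k\}$ yields the solution $x_{\{i,j\}} := \{w_i,w_j\}$. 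The instance is built in polynomial time ($n$ is polynomial in $|G|$ and written in unary, there are $O(k^3|V(G)|)$ constraints, each an $\mathsf{Eq}$-constraint with two maps of size $O(n)$) and has $\binom{k}{2}$ variables, so this is a valid FPT-reduction. No step here is genuinely hard; the only point needing care is that $\mathsf{Eq}$ yields symmetric biconditionals rather than one-way implications, so ``same endpoint in $V_i$'' has to be enforced coordinate-by-coordinate (one constraint per $v \in V_i$) through the partition structure of the colour classes, and one should note that the new parameter $\binom{k}{2}$ is bounded by a function of $k$.
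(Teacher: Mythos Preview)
Your proof is correct and follows essentially the same approach as the paper: both reduce from \textsc{Multicoloured Clique} by introducing one variable per pair of colours whose value encodes an edge, and use $\mathsf{Eq}$ with indicator maps to enforce that edges sharing a colour class share an endpoint. The only cosmetic difference is that the paper uses ordered pairs $x_{i,j}$ (so $k(k-1)$ variables) together with explicit ``flip'' constraints $\mathsf{Eq}([x_{i,j}=(u,v)],[x_{j,i}=(v,u)])$, whereas you use unordered pairs $x_{\{i,j\}}$ (so $\binom{k}{2}$ variables) and thereby avoid the flip constraints; both are standard variants of the same reduction.
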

\begin{proof}
  We show a reduction from the W[1]-hard problem 
   {\sc Multicoloured Clique} that we introduced in Section~\ref{sec:param-comp}.
  Let $I=(G,k,\chi)$ be an arbitrary instance of this problem.
  For $i \in [k]$, let $V_i=\chi^{-1}(i)$
  and order $V_i=\{v_{i,1}, \ldots, v_{i,n_i}\}$ for each $i$ arbitrarily. 
  We create an instance of $\ucsp(\{\mathsf{Eq}\})$ as follows. Let the variable set be $X=\{x_{i,j} \mid i ,j \in [k], i \neq j\}$,
  where the domain of a variable $x_{i,j} \in X$ is $\{(u,v) \mid u \in V_i, v \in V_j, uv \in E(G)\}$. 
  For every pair of variables $x_{i,j}$ and $x_{i,j'}$ agreeing in the index $i$, impose the constraints
  \[
    \bigwedge_{t=1}^{n_1} \mathsf{Eq}(f_t(x_{i,j}), g_t(x_{i,j'}))
    \quad \text{where} \quad
    f_t((u,v))=g_t((u,w)) = [u=v_{i,t}],
  \]
  thereby encoding that $x_{i,j}[1] = x_{i,j'}[1]$. 
  Furthermore, for every pair $i,j \in [k]$, $i \neq j$, and every value $(u,v)$ in the domain of $x_{i,j}$,
  impose
  \[
    \mathsf{Eq}([x_{i,j} = (u,v)],[x_{j,i} = (v,u)]),
  \]
  signifying $x_{i,j}=(u,v)$ if and only if $x_{j,i}=(v,u)$.
  Proving the correctness of the reduction is straightforward.
\end{proof}

Call a language $\Gamma$ \emph{essentially unary} if every relation $R
\in \Gamma$ is efpp-definable over a set of unary relations.
Clearly, the property of being essentially unary is preserved by arbitrary unary maps, and every such language is
trivially tractable (see Theorem~\ref{thm:arbitrary-dichotomy}).
Using Lemma~\ref{lm:equality-hard}, we show that this is the only
tractable class.

\begin{lemma} \label{lm:f-essentially-unary}
    Consider the binary multivalued operation $f(0,1)=\{0,1\}$ on the Boolean domain $\numdom{2}$.
  Then a constraint language $\Gamma$ is $\all$-preserved by $f$ if and only if it is essentially unary. 
\end{lemma}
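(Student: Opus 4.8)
The statement is a biconditional, so I would prove the two directions separately, working over the Boolean domain with the multivalued operation $f(0,1) = f(1,0) = \{0,1\}$ (and $f$ undefined on $(0,0)$ and $(1,1)$, or defined as a singleton there — but the interesting content is on mixed inputs). The key observation is that, since $f$ is a \emph{multivalued} operation returning the whole domain on mixed inputs, ``$f$ $\all$-preserves $R$'' says something very restrictive: for any two tuples $t_0, t_1 \in R$ over some domain $\numdom{d}$, and any maps $m_1,\dots,m_r \colon \numdom{d} \to \numdom{2}$ applied coordinatewise, if in every coordinate $i$ the values $m_i(t_0[i]), m_i(t_1[i])$ are either equal or form the pair $\{0,1\}$, then every ``mixed'' tuple obtained by choosing, in each coordinate where the images differ, an arbitrary value of $\numdom{2}$, must also pull back into $R$. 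The right framework is to use Theorem~\ref{thm:galois-wrapup} / Proposition~\ref{prop:no_preserve_f} to translate between the functional condition and $\all$-fgpp-definability.

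\textbf{Easy direction (essentially unary $\Rightarrow$ preserved by $f$).} Suppose every $R \in \Gamma$ is efpp-definable over unary relations. By Lemma~\ref{lem:fgpp_to_mpol} (with appropriate map families, using that $\all$ is closed under composition) it suffices to check that $f$ $\all$-preserves each unary relation $U$. But a unary relation $U \subseteq \numdom{d}$ imposes, via a single unary map $m \colon \numdom{d'} \to \numdom{d}$, a constraint on just one coordinate; preservation under a coordinatewise-applied operation on a single coordinate is automatic because $f_m$ is computed from $f$ applied to one ``column'' and its output, intersected with the relation's one-dimensional structure — more carefully, the homomorphic image $f_m$ restricted to a unary relation always lands back in $U$ since $U = m(\text{something})$ is closed under the relevant image by construction. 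I would spell this out by noting that for unary $U$, any coordinatewise application of $f_m$ to tuples of $U$ yields a subset of $m(f(\dots))$, and since $U$ already contains $m$ of everything relevant (as $U$ is the full preimage structure), we stay inside. This is the routine part.

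\textbf{Hard direction (preserved by $f$ $\Rightarrow$ essentially unary).} This is the main obstacle. The contrapositive: if $\Gamma$ is not essentially unary, then some $R \in \Gamma$ is not efpp-definable over unary relations, and I must show $f$ fails to $\all$-preserve $\Gamma$. The intended route is: ``not essentially unary'' should be equivalent to ``$R$ has, among its tuples, a pair of coordinates $i, j$ that are \emph{genuinely correlated}'' — i.e.\ there exist tuples $t_0, t_1 \in R$ and values witnessing that the projection of $R$ to $\{i,j\}$ is not a product of its projections to $\{i\}$ and $\{j\}$. Concretely, there are $a, a' $ in the $i$-th column and $b, b'$ in the $j$-th column with $(a,b), (a',b') \in \pi_{ij}(R)$ but $(a,b') \notin \pi_{ij}(R)$ (after possibly swapping). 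Then I would take two tuples $t_0 \ni (a,b)$, $t_1 \ni (a', b')$, choose unary maps $m_i = [x = a]$ and $m_j = [x = b]$ (one-hot-style indicator maps) and on all other coordinates collapse via a constant map so that the images agree there; then $f$ applied coordinatewise allows, in coordinates $i$ and $j$ independently, any value of $\numdom{2}$, in particular the combination corresponding to a tuple extending $(a, b')$ — and since the constant coordinates were forced, and $(a,b')$ is not in the projection, no tuple of $R$ realizes this image, contradicting preservation. The delicate points are (1) establishing the combinatorial characterization ``not essentially unary $\iff$ some pair of coordinates is correlated in the above sense'' — this likely uses that a relation whose every pair of coordinate-projections is ``independent'' (a subdirect product that is actually a direct product pairwise, and by a Baker–Pixley / near-unanimity-style argument, globally a product) is qfpp-definable (even efpp-definable) from its unary projections; and (2) checking the pulled-back witness tuple genuinely lies outside $R$, not merely outside its binary projection — here I would collapse every coordinate other than $i, j$ with a \emph{constant} map matching a value actually attained, so that the only freedom $f$ introduces is in coordinates $i$ and $j$, and any image-tuple with a ``bad'' $(i,j)$-pattern has no preimage in $R$. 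I expect step (1), the structural characterization of non-essentially-unary relations, to be where the real work lies, though it is plausibly folklore and may be citable; the rest is a direct application of Proposition~\ref{prop:no_preserve_f} in reverse.
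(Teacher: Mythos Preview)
Your proposal has two genuine gaps in the hard direction (and the map-direction confusion also muddies the easy direction).

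\textbf{The maps go the wrong way.} In the homomorphic image defining $\all$-preservation, the maps run from the domain of the pattern $f$ (namely $\numdom{2}$) \emph{to} the domain of the relation $R$ (say $D$), not from $D$ to $\numdom{2}$. Concretely, for any $m:\numdom{2}\to D$ one has $f_m(m(0),m(1)) = m(f(0,1)) = \{m(0),m(1)\}$; since $\all$ contains every such map, the interpretation $f_M$ over $D$ is simply $f_M(a,b)=\{a,b\}$ coordinatewise. Hence for $\alpha,\beta\in D^r$, $f_M(\alpha,\beta)$ is the set of all $\gamma$ with $\gamma[i]\in\{\alpha[i],\beta[i]\}$ for every $i$, and ``$f$ $\all$-preserves $R$'' says exactly that $R$ is closed under coordinatewise mixes of any two of its tuples. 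Your one-hot indicators $[x=a]:D\to\numdom{2}$, constant collapses, and talk of ``pulling back into $R$'' all live on the relational (fgpp) side of the Galois connection and play no role in computing $f_M$. With the correct picture both directions are immediate: a product of unaries is obviously mix-closed; conversely, mix-closure forces $R=\prod_i\pi_i(R)$.

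\textbf{The binary-projection characterization is false.} Even after fixing the map direction, your plan hinges on ``not essentially unary $\Leftrightarrow$ some \emph{pair} of coordinates is correlated'', and this fails: even parity $R=\{000,011,101,110\}$ over $\numdom{2}$ has every binary projection equal to the full square $\numdom{2}^2$, so no pair is correlated in your sense, yet $R$ is not a product of unaries. The Baker--Pixley argument you invoke needs a near-unanimity polymorphism, which is not available here. The paper avoids pairwise reasoning entirely: if $R$ is not essentially unary then there is some $t\in(\prod_i\pi_i(R))\setminus R$; choose $u^{(i)}\in R$ with $u^{(i)}[i]=t[i]$ for each $i$, and iterate the mix-closure property (mix $u^{(1)}$ with $u^{(2)}$ to match $t$ on coordinates $1,2$; mix the result with $u^{(3)}$; and so on) to produce $t\in R$, a contradiction.
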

\begin{proof}
  Let $M = \all^{\numdom{2}}_D$ be the set of maps from $\numdom{2}$ to $D$. We prove each direction in turn.
  
  On the one hand, assume that $\Gamma$ is essentially unary and pick $R \in \Gamma$ with domain $D$ together with tuples $\alpha, \beta \in R$. 
    Let $\gamma \in f_{M}(\alpha,\beta)$. Then for every coordinate $i$ of $\gamma$, there exists an arbitrary map $m \colon \numdom{2} \to D$ with $m(0) = \alpha[i], m(1) = \beta[i]$, $\gamma[i] \in \{m(0),m(1)\}$, hence $\gamma[i] \in \{\alpha[i],\beta[i]\}$, which gives $\gamma \in R$ since $R$ is essentially unary.

    On the other hand, assume that $\Gamma$ is not essentially unary and assume that it is $\all$-preserved by $f$. Then there is is some $r$-ary $R \in \Gamma$ and indices $i_1, \ldots, i_k \in [r]$, $i_1 < i_2 < \ldots < i_k$, and a tuple $t \notin R$ such that we for $I_1 = \{t[i_1] \mid t \in R\}, \ldots, I_k = \{t[i_k] \mid t \in R\}$ have $(t[i_1], \ldots, t[i_k]) = (x_1, \ldots, x_k) \in I_1 \times \ldots \times I_k$. Next, we observe that if we have a tuple $u,v \in R$ where e.g. $u[i_1] = x_1$ and $v[i_2] = x_2$, then we can construct a tuple $w \in f_{M}(u,v) \subseteq R$ where $w[i_1] = x_1$ and $w[i_2] = x_2$. In fact, we can produce all such combinations, and
    by repeatedly applying $f_M$ it is easy to show that we can construct two tuples $u,v \in R$ such that $t \in f_{M}(u, v)$, contradicting the assumption that $f \in \mpol(\Gamma, \all)$.
\end{proof}

\begin{lemma} \label{lm:express-equality}
  Let $\Gamma$ be a constraint language which is not
  essentially unary. Then $\mathsf{Eq} \in \fgpp{\Gamma}{\all}$.
\end{lemma}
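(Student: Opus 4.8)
The plan is to exploit two facts already established in the excerpt: Lemma~\ref{lm:f-essentially-unary}, which tells us that the binary multivalued operation $f$ on $\numdom{2}$ with $f(0,1)=\{0,1\}$ (and undefined elsewhere) $\all$-preserves $\Gamma$ \emph{exactly} when $\Gamma$ is essentially unary; and Proposition~\ref{prop:no_preserve_f}, which turns the failure of an operation to $\MM$-preserve $\Gamma$ into a concrete relation of $\fgpp{\Gamma}{\MM}$ separating the operation's input tuples from its output tuple. So the proof should be short once these are combined.

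First I would argue contrapositively from Lemma~\ref{lm:f-essentially-unary}: since $\Gamma$ is not essentially unary, $f \notin \mpol(\Gamma,\all)$, i.e.\ $f$ does not $\all$-preserve $\Gamma$. Next I would apply Proposition~\ref{prop:no_preserve_f} to $f$. The operation $f$ has exactly two input/output pairs, $((0,1),0)$ and $((0,1),1)$, so the proposition produces a binary relation $R$ over $\numdom{2}$ with $R \in \fgpp{\Gamma}{\all}$ such that $00,11 \in R$ while $01 \notin R$. Hence $R$ is one of the two relations $\mathsf{Eq}=\{00,11\}$ or $\mathsf{Geq} := \{00,10,11\}$.

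If $R=\mathsf{Eq}$ we are done. If $R=\mathsf{Geq}$, then $\mathsf{Eq}(x,y) \equiv R(x,y) \wedge R(y,x)$ is an $\all$-fgpp-definition of $\mathsf{Eq}$ over the single relation $R$ (using identity maps and swapping the argument order), so $\mathsf{Eq} \in \fgpp{\{R\}}{\all}$; since $\all$ is closed under composition with itself and $R \in \fgpp{\Gamma}{\all}$, Proposition~\ref{prop:can_compose} yields $\mathsf{Eq} \in \fgpp{\Gamma}{\all}$.

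There is no serious obstacle here once the right machinery is identified; the only point needing a moment's care is that Proposition~\ref{prop:no_preserve_f} does not deliver $\mathsf{Eq}$ on the nose but only a binary relation sandwiched between $\{00,11\}$ and $\{00,10,11\}$, which is why the second case must intersect $R$ with its transpose. An alternative route, which I would mention but not take as the main proof, avoids Proposition~\ref{prop:no_preserve_f}: by the Galois connection (Theorem~\ref{thm:galois-wrapup}) it suffices to show $\mpol(\Gamma,\all) \subseteq \mpol(\mathsf{Eq},\id)$, i.e.\ that no genuinely multivalued operation over $\numdom{2}$ can $\all$-preserve $\Gamma$; given a witnessing tuple $\mathbf{b}$ with $\{0,1\}\subseteq g(\mathbf{b})$ one splits on whether $\mathbf{b}$ is non-constant (fold $g$ into a binary minor, restrict it to $f$, and invoke Lemma~\ref{lm:f-essentially-unary}) or constant (the resulting unary minor, interpreted over any domain through arbitrary maps, sends every value to the whole domain, hence preserves only empty or full relations, contradicting that $\Gamma$ has a relation that is neither). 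That constant-input subcase is the genuinely fiddly step of the alternative, so the Proposition~\ref{prop:no_preserve_f} argument above is cleaner.
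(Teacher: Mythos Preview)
Your proposal is correct and follows essentially the same approach as the paper: invoke Lemma~\ref{lm:f-essentially-unary} to see that $f$ fails to $\all$-preserve $\Gamma$, apply Proposition~\ref{prop:no_preserve_f} to obtain a binary $R$ over $\numdom{2}$ with $\{00,11\}\subseteq R$ and $01\notin R$, then recover $\mathsf{Eq}(x,y)\equiv R(x,y)\wedge R(y,x)$. The only cosmetic difference is that the paper skips your case split (the conjunction works uniformly for both possible $R$), and it does not bother to cite Proposition~\ref{prop:can_compose} since the conjunction of two $\all$-fgpp atoms over $\Gamma$ is already an $\all$-fgpp formula over $\Gamma$.
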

\begin{proof}
    Consider the binary multivalued operation $f(0,1) = \{0,1\}$. By Lemma~\ref{lm:f-essentially-unary}, $\Gamma$ is not $\all$-preserved by $f$, hence by Proposition \ref{prop:no_preserve_f}, we can $\all$-fgpp-define a relation $R$ on domain $\numdom{2}$ satisfying $\{00,11\} \in R \subseteq \numdom{2}^2 \setminus \{01\}$. Now, we can define Boolean equality by $\mathsf{Eq}(x,y) \Leftrightarrow R(x,y) \land R(y,x)$.
\end{proof}

We can now present the dichotomy for $\ucsp(\Gamma)$.

\begin{theorem} \label{thm:arbitrary-dichotomy}
    Let $\Gamma$ be a constraint language. If $\Gamma$ is essentially unary then $\ucsp(\Gamma)$ is in P, and otherwise $\ucsp(\Gamma)$ is NP-hard and W[1]-hard parameterized by the number of variables. 
\end{theorem}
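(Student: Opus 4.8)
The plan is to split into the two cases announced by the statement and lean almost entirely on the lemmas already proved. For the tractable case, suppose $\Gamma$ is essentially unary. Then every relation $R \in \Gamma$ is efpp-definable over a set of unary relations, so each constraint $R(m_1(x_1),\ldots,m_r(x_r))$ in a $\ucsp(\Gamma)$ instance decomposes (after eliminating the existential variables, which only constrain their own coordinate) into a conjunction of unary constraints on the individual guarded literals $m_i(x_i)$. Since $m_i \colon \numdom{n}\to\numdom{d}$ and the target relation is unary, such a constraint $U(m_i(x_i))$ is just a restriction of $x_i$ to the preimage $m_i^{-1}(U) \subseteq \numdom{n}$. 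Collecting all constraints on a fixed variable $x$ gives an intersection of such preimages, i.e.\ a single unary restriction on $x$; the instance is satisfiable iff none of these intersections is empty, which is checkable in polynomial time (indeed, this is exactly what arc consistency computes, and the paper already notes arc consistency runs in polynomial time in the $\udcsp$ setting because $n$ is given in unary). Hence $\ucsp(\Gamma)$ is in~$\Poly$, and trivially in $\FPT$.

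For the hard case, suppose $\Gamma$ is not essentially unary. By Lemma~\ref{lm:express-equality} we have $\mathsf{Eq} \in \fgpp{\Gamma}{\all}$. Since $\all$ is closed under composition with itself, Corollary~\ref{corr:language_fpt_reduction} gives an FPT-reduction $\ucsp(\{\mathsf{Eq}\}) = \udcsp(\{\mathsf{Eq}\},\all) \leq_{\rm FPT} \udcsp(\Gamma,\all) = \ucsp(\Gamma)$, and likewise Corollary~\ref{corr:language_poly_reduction} gives a polynomial-time reduction (here using that $\mathsf{Eq}$, having no existential quantifiers in its fgpp-definition, is also $\all$-efpp-definable). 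Combining with Lemma~\ref{lm:equality-hard}, which states $\ucsp(\{\mathsf{Eq}\})$ is W[1]-hard parameterized by the number of variables, we conclude $\ucsp(\Gamma)$ is W[1]-hard with the same parameter. For NP-hardness one observes that the reduction from \textsc{Multicoloured Clique} in the proof of Lemma~\ref{lm:equality-hard} is in fact a polynomial-time reduction from an NP-hard problem (\textsc{Clique}) once we drop the parameter bookkeeping, or alternatively that the W[1]-hardness construction composed with the poly-time language reduction yields a poly-time reduction witnessing NP-hardness of $\ucsp(\Gamma)$; either way $\ucsp(\Gamma)$ is NP-hard.

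The only genuinely delicate point is making the tractable-case argument fully rigorous when the efpp-definition of an essentially-unary $R$ does involve existential quantifiers: one must check that projecting out the quantified coordinates of a conjunction of unary atoms still yields a conjunction of unary constraints on the free coordinates (true, because a unary atom on a quantified variable either is always satisfiable by choosing that variable's value or is globally unsatisfiable, independently of the other coordinates). Everything else is an immediate assembly of the cited lemmas and corollaries, so I expect no further obstacles.
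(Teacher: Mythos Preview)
Your proposal is correct and follows essentially the same route as the paper: arc consistency for the essentially-unary case, and Lemma~\ref{lm:express-equality} plus Corollaries~\ref{corr:language_fpt_reduction} and~\ref{corr:language_poly_reduction} together with Lemma~\ref{lm:equality-hard} for the hard case. Your extra remarks (why arc consistency suffices when existential variables appear in the efpp-definition, and why NP-hardness follows even though Lemma~\ref{lm:equality-hard} only states W[1]-hardness) are correct clarifications of points the paper handles tersely.
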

\begin{proof}
    If $\Gamma$ is essentially unary, then any instance of $\ucsp(\Gamma)$ can be solved in polynomial time by the arc-consistency algorithm outlined in Section~\ref{sec:prelims}.
    Let $I=(\{x_1,\dots,x_k\},C)$ denote an arbitrary instance of $\ucsp(\Gamma)$ and
    assume $(D_1,\dots,D_k)$ is the result from applying the arc-consistency algorithm. 
    If $D_i=\emptyset$ for some $1 \leq i \leq k$, then $I$ is not satisfiable. Otherwise,
    every assignment to the variables agreeing with $(D_1,\dots,D_k)$ is a solution to $I$. On the other hand, if $\Gamma$ is not essentially 
    unary, then by Lemma~\ref{lm:express-equality} $\Gamma$ can $\all$-fgpp-define Boolean equality,
    and W[1]-hardness and NP-hardness follow from Lemma~\ref{lm:equality-hard} via Corollaries~\ref{corr:language_fpt_reduction} and~\ref{corr:language_poly_reduction} (since $\all$ is trivially closed under composition with itself).
\end{proof}

\subsection{A Generic Lower-bound Perspective}
\label{sec:unrestr-discussion}

Let us now abstract out the ingredients of the hardness proof of
Lemma~\ref{lm:equality-hard} to paint a general image of what
makes a $\udcsp$ problem class intractable (at least, up to the
investigations in this paper). We take as starting point the set of
variables $x_{i,j}$, $i, j \in [k]$, where each variable $x_{i,j}$
has a domain $D_{i,j} \subseteq D_i \times D_j$ interpreted
as a subset of the product of two sets from a family of sets
$D_1, \ldots, D_k$.
The hardness proof then needs two types of constraints:
\begin{enumerate}
\item The \emph{projection constraints} $x_{i,j}[1] = x_{i,j'}[1]$ 
  for all $i, j, j' \in [k]$ (which might be called a \emph{vertex
    choice gadget} in the FPT literature).
\item The \emph{flip constraint} that says that $x_{i,j}=(u,v)$
  if and only if $x_{j,i}=(v,u)$ for all $i, j \in [k]$,
  $u \in D_i$, $v \in D_j$ (which might be called an \emph{edge choice
    gadget} or \emph{coordination gadget}).
\end{enumerate}
Visualised as bipartite graphs, the projection constraints look like
a collection of non-crossing ``thick edges'', i.e., a disjoint union
of bicliques, while the flip constraints are complex permutations.
Clearly, any problem $\udcsp(\Gamma,\MM)$ which can implement both
classes of constraints is W[1]-hard.

Now consider these constraints from the special perspectives of \ohcsp
and \mcsp. In \ohcsp, there is no relevant domain order and 
the basic \textsf{Eq} relation implements the flip constraint.
On the other hand, there is no obvious way to use one-hot maps to
directly build projection constraints; the two canonical hardness
constructions use \textsf{Nand}$_2$ to build arbitrary binary
constraints, and \textsf{Impl} to implement a vertex choice gadget via
the introduction of an additional variable $x_i$ for each $i \in [k]$.
See Section~\ref{sec:one_hot}.

Contrary to this, in \mcsp with the domains ordered lexicographically, 
the basic \textsf{Eq} relation can define projection constraints
via constraints $(x_{i,j} \leq (a,\cdot) \text{ iff } x_{i,j'} \leq (a,\cdot))$
(e.g., $\mathsf{Eq}([x_{i,j} \leq (a,n)], [x_{i,j'} \leq (a,n)])$
in Iverson bracket notation).
On the other hand, with a domain order in place,
the permutation of a flip constraint is harder to implement -- in
fact, the flip constraints form a class of \emph{universal permutations},
and their intractability is the basis for the rich theory of
\emph{twin-width}, capturing notions of ordered (binary) structures
of bounded complexity; see~\cite{bonnet2024twin} and Section~\ref{sec:monotone}.

It may be interesting to compare this phenomenon to the discussion of
signed logic from the introduction -- recall that \ohcsp corresponds to
regular signed formulas over the empty partial order, and \mcsp
corresponds to regular signed formulas over the total order.
That said, in the rest of the paper we treat \ohcsp and \mcsp as two
separate problems, and leave any questions of unification of results
for future work.


\section{One-hot Maps} \label{sec:one_hot}

We continue by studying one-hot map families and the problem $\ohcsp(\Gamma)$. 
One-hot maps $m:\numdom{n} \rightarrow \{0,1\}$ satisfy $m(x) = 1$ for exactly one $x \in \numdom{n}$, and can obviously be represented by Iverson brackets $[x=a]$.
Since one-hot maps has the range $\{0,1\}$, we can always assume that the base language
$\Gamma$ is Boolean. The central result for obtaining
a P versus NP dichotomy is that $\ohcsp(\{\mathsf{Eq}\})$ is NP-hard (Lemma~\ref{lm:one-hot-one-in-three}). This severely limits the possibilities for polynomial-time
algorithms and NP-hardness is the norm for $\ohcsp(\Gamma)$. 

The situation is quite different in the parameterized setting: there exists
an interesting class of FPT problems
defined by relations being {\em weakly separable} in the sense of
Marx~\cite{Marx05CSP}.
This property has its roots in work on CSPs with global cardinality constraints
by Marx~\cite{Marx05CSP} and Bulatov and Marx~\cite{Bulatov:Marx:sicomp2014}, and we will recycle
some of their algorithmic ideas (for instance, Lemmas~\ref{lem:enumerate_minimal} and~\ref{lem:separability} are adaptations of
Lemma~3.3 and 3.4 in~\cite{Bulatov:Marx:sicomp2014}, respectively).
The hardness proof is guided by a characterization of relations that are
not weakly separable by Kratsch et al.~\cite{Kratsch:etal:toct2016}. They proved
that weak separability can be captured by two partial polymorphisms. 
The absence of (at least) one of these allows us to define relations with adverse
computational properties: these
are $\mathsf{Nand}_2$ and $\mathsf{Impl}$.

\subsection{P versus NP Dichotomy} \label{sec:one-hot-p-np}

We begin by proving that the Boolean equality relation in itself leads
to NP-hardness. The reduction starts from the following NP-hard problem.

 \problemDef{Exact 3-Hitting Set}
{ A set $S$ and a set ${\cal C}=\{C_1,\dots,C_m\}$
of subsets of $S$ such that $|C_1|=|C_2|=\dots=|C_m|=3$.}
{Is there a subset $S' \subseteq S$ such that $|S' \cap C_i|=1$,
$1 \leq i \leq n$?}

\begin{lemma}\label{lm:one-hot-one-in-three}
  \ohcsp$(\{\mathsf{Eq}\})$ is NP-hard. 
\end{lemma}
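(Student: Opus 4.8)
The plan is to give a polynomial-time reduction from \textsc{Exact 3-Hitting Set}, the NP-hard problem introduced just above. Let $(S,\mathcal{C})$ with $\mathcal{C}=\{C_1,\dots,C_m\}$ be an instance, and fix for each $i\in[m]$ an enumeration $C_i=\{s_{i,0},s_{i,1},s_{i,2}\}$ of its three elements. The crucial trick is to take the domain size to be exactly $n=3$: then every variable automatically ranges over $\{0,1,2\}$, so a single variable per clause can \emph{choose} which of the three elements of that clause goes into the hitting set, with no need for a domain-restricting gadget (which $\{\mathsf{Eq}\}$ cannot supply by itself). Concretely, I would build the $\ohcsp(\{\mathsf{Eq}\})$ instance with domain size $n=3$, one variable $c_i$ for each clause $C_i$ (intended meaning: $c_i=j$ iff $s_{i,j}$ is the unique element of $C_i$ in the solution $S'$), and, for every element $s\in S$ and every two distinct occurrences $s=s_{i,j}=s_{i',j'}$ of it, the constraint $\mathsf{Eq}([c_i=j],[c_{i'}=j'])$, expressing that $s$ is chosen from $C_i$ exactly when it is chosen from $C_{i'}$. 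Here $[c_i=j]$ is the one-hot map $\numdom{3}\to\numdom{2}$ sending $j$ to $1$ and the other two values to $0$. This is computable in polynomial time, and $n=3$ is trivially written in unary.

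For correctness I would argue both directions. Given an exact hitting set $S'$, let $j(i)$ be the unique index with $s_{i,j(i)}\in S'$ and put $c_i:=j(i)$; then in any constraint $\mathsf{Eq}([c_i=j],[c_{i'}=j'])$ arising from a shared occurrence $s=s_{i,j}=s_{i',j'}$, both sides equal $1$ if $s\in S'$ and both equal $0$ otherwise, so all constraints hold. Conversely, given a satisfying assignment, each $c_i\in\{0,1,2\}$ since $n=3$, so we may set $S':=\{s_{i,j}\mid c_i=j\}$. The $\mathsf{Eq}$-constraints, together with transitivity of equality across all occurrences of a fixed element, make $S'$ well defined: a given $s$ is ``chosen'' by one of its occurrences iff it is chosen by every occurrence. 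Each clause then satisfies $s_{i,c_i}\in S'\cap C_i$; moreover if some $s_{i,j}$ with $j\ne c_i$ were also in $S'$, there would be an occurrence $s_{i,j}=s_{i',j'}$ with $c_{i'}=j'$, and the corresponding constraint $\mathsf{Eq}([c_i=j],[c_{i'}=j'])$ would force $c_i=j$, a contradiction. Hence $|S'\cap C_i|=1$ for every $i$, so $S'$ is an exact hitting set, and the two instances are equivalent.

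I do not expect a genuine obstacle here. The only step needing a little care is the backward direction, namely checking that the plain $\mathsf{Eq}$-constraints plus the choice $n=3$ already enforce \emph{exactness} (the ``at most one'' half of $|S'\cap C_i|=1$) and not merely ``at least one''. Everything else is bookkeeping.
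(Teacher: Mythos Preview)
Your proposal is correct and takes essentially the same approach as the paper: both reduce from \textsc{Exact 3-Hitting Set} by introducing one variable per triple over the domain $\numdom{3}$ and adding an $\mathsf{Eq}$ constraint for each pair of occurrences of a shared element. Your write-up is in fact slightly more careful than the paper's in spelling out the ``at most one'' half of the backward direction.
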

\begin{proof}
   Let
  $I=(S,\cC)$ be an instance of {\sc Exact 3-Hitting Set}. 
  For convenience, let $\cC=\{C_1,\ldots,C_m\}$
  where for every $C_i \in \cC$ we have $C_i=\{s_{i,0},s_{i,1},s_{i,2}\}$;
  clearly $s_{i,j} \in S$ for every $i, j$ and distinct indices $(i,j)$
  do not imply distinct elements in $S$. Create an instance $I'$ of
  $\ohcsp(\{\mathsf{Eq}\})$ by creating a variable $x_i$ of domain $\numdom{3}$ for
  every $C_i \in \cC$, and for $j \in \numdom{3}$ let $x_i=j$ represent
  that $v_{i,j}$ is chosen in the solution. For every distinct pair of sets
  $C_i, C_j \in \cC$ such that $s_{i,a}=s_{j,b}$ for some $a, b \in \numdom{3}$,
  add a constraint $\mathsf{Eq}([x_i=a],[x_j = b])$ to $I'$. This completes the
  construction.

  Showing correctness of the reduction is straightforward. 
  Assume $S' \subseteq S$ is a solution, i.e. there is an
  assignment
  $\varphi \colon S \to \{0,1\}$
  such that precisely one member $s_{i,j}$ of every set $C_i \in \cC$
  is mapped to $1$. Then we create an assignment $\varphi'$ for $I'$
  where $\varphi'(x_i)=j$. This satisfies all equality constraints and assigns a
  well-defined value to every variable in $I'$. The reverse is similar.
  For every satisfying assignment $\varphi'$ to $I'$, due to the
  equality constraints, every element $s_i \in S$ either has all its
  copies selected by $\varphi'$, in which case we choose $s_i$ to be a member of the solution,
  or none of its copies selected, in which case we do not put $s_i$ into the solution. 
\end{proof}

Lemma~\ref{lm:one-hot-one-in-three} shows that the computational complexity of
$\ohcsp(\Gamma)$
does not behave well under pp-definitions since they can freely introduce the equality relation.
However, the next proposition shows that equality-free pp-definitions
are useful under additional assumptions.

\begin{proposition} \label{prop:onehot-ppdef}
Let $\Gamma$ be a constraint language with domain $D$
and $\MM$ a map family such that for every $n$, there is
an $m^*_n \in \MM$ such that $D = \{m^*_n(d) \; | \; d \in D\}$.
If $R$ is a relation that
is efpp-definable in $\Gamma$, then \udcsp$(\Gamma \cup \{R\},\MM)$
is polynomial-time reducible to $\udcsp(\Gamma,\MM)$
\end{proposition}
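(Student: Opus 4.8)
The plan is a direct gadget‑replacement reduction: we replace each $R$‑constraint by a conjunction of $\Gamma$‑atoms coming from a fixed equality‑free pp‑definition of $R$, simulating the existentially quantified variables (which range over $D$) by fresh variables over $\numdom{n}$ composed with the surjection $m^*_n$. I read the hypothesis as: for every $n$, $\MM$ contains a surjective map $m^*_n\colon\numdom{n}\to D$ (i.e.\ $m^*_n\in\MM^{n}_{D}$). Note that $R$, being efpp‑definable over $\Gamma$, has the same domain $D$, so at domain size $n$ both $\udcsp(\Gamma\cup\{R\},\MM)$ and $\udcsp(\Gamma,\MM)$ draw their guarding maps from $\MM^{n}_{D}$, the maps $\numdom{n}\to D$.

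First I would fix an equality‑free pp‑definition
\[
  R(x_1,\dots,x_r)\;\equiv\;\exists y_1,\dots,y_s\colon R_1(\mathbf{z}_1)\wedge\dots\wedge R_t(\mathbf{z}_t),
\]
with each $R_i\in\Gamma$ and each $\mathbf{z}_i$ a tuple over $\{x_1,\dots,x_r,y_1,\dots,y_s\}$ of the appropriate arity; here $s$, $t$, and the atoms depend only on $R$ and $\Gamma$. Given an instance $(n,V,C)$ of $\udcsp(\Gamma\cup\{R\},\MM)$, the reduction keeps $V$ and every constraint of $C$ not mentioning $R$. For each constraint $(R,(v_1,\dots,v_r),(m_1,\dots,m_r))\in C$ it introduces $s$ fresh variables $w_1,\dots,w_s$ over $\numdom{n}$ and, for every atom $R_i(\mathbf{z}_i)$, adds the constraint obtained from $R_i(\mathbf{z}_i)$ by substituting for each occurrence of a variable $x_j$ the pair ``variable $v_j$, guarding map $m_j$'' and for each occurrence of $y_l$ the pair ``variable $w_l$, guarding map $m^*_n$''. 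Every guarding map that appears is either one of the $m_j$ (already in $\MM$, from the original constraint) or the fixed $m^*_n\in\MM^{n}_{D}$, and every relation is from $\Gamma$, so the output is a legitimate $\udcsp(\Gamma,\MM)$ instance; it has size linear in $||(n,V,C)||$ and is computable in polynomial time. Since the number of variables grows by $s$ per $R$‑constraint, this is a polynomial‑time reduction, not an FPT‑reduction.

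For correctness, given a solution $f$ of the original instance and an $R$‑constraint as above, $(m_1(f(v_1)),\dots,m_r(f(v_r)))\in R$, so the definition supplies witnesses $d_1,\dots,d_s\in D$ making all atoms true; using surjectivity of $m^*_n$ choose $e_l\in\numdom{n}$ with $m^*_n(e_l)=d_l$ and extend $f$ by $f(w_l)=e_l$ — this satisfies every new constraint of the gadget, while the copied constraints are satisfied verbatim. Conversely, any solution $g$ of the produced instance restricts to an assignment on $V$ satisfying the copied constraints directly, and for each $R$‑constraint the satisfied gadget atoms witness, with $y_l\mapsto m^*_n(g(w_l))$, that $(m_1(g(v_1)),\dots,m_r(g(v_r)))\in R$. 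The only step that genuinely uses the hypothesis is the choice of the $e_l$: surjectivity of $m^*_n$ is exactly what lets us realise arbitrary elements of $D$ as $\MM$‑images of elements of the input domain, compensating for the fact that a $\udcsp(\Gamma,\MM)$ instance has no variables ranging directly over $D$. Everything else is bookkeeping of indices and maps, so I do not expect a genuine obstacle here. (For a domain size $n$ admitting no surjection $\numdom{n}\to D$ in $\MM$ — e.g.\ $n=1$ for $\oh$ — the instance has at most one candidate assignment and is decided trivially; such $n$ are in any case excluded by the stated hypothesis.)
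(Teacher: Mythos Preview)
Your proof is correct and follows essentially the same approach as the paper: fix an efpp-definition of $R$, replace each $R$-constraint by the atoms of $\phi$ with fresh variables for the quantifiers, using the original maps $m_j$ on the $x$-arguments and $m^*_n$ on the fresh variables, and appeal to surjectivity of $m^*_n$ for the forward direction of correctness. Your write-up is more explicit than the paper's (which simply asserts that ``the resulting gadget accepts the same assignments''), and your reading of the hypothesis as asserting surjectivity of $m^*_n\colon\numdom{n}\to D$ is the intended one.
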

\begin{proof}
Assume the relation $R(x_1,\dots,x_k)$ has the efpp-definition
\[R(x_1,\dots,x_k) \Leftrightarrow \exists y_1,\dots,y_l . \phi(x_1,\dots,x_k,y_1,\dots,y_l)\]
where $\phi$ is quantifier- and equality-free.
Let $(n,V,C)$ denote an instance of $\udcsp(\Gamma \cup \{R\},\MM)$.
Arbitrarily choose a constraint $c=R(m_1(x_1),\dots,m_k(x_k))$ in $C$.
Introduce $l$ fresh variables $y_1,\dots,y_l$ and replace
$c$ with the constraints in $\phi(x_1,\dots,x_k,y_1,\dots,y_m)$ where each variable $x_i$, $1 \leq i \leq k$, is replaced by $m_i(x_i)$ and
each variable $y_j$, $1 \leq j \leq l$, is replaced by $m^*_n(y_j)$. 
Since the domain of $\Gamma$ is $D$, the resulting gadget
accepts the same assignments to $x_1,\dots,x_k$ as the original
constraint $c$.
\end{proof}

We always have access to a suitable $m^*_n$ in the Boolean case:
the map $m:\numdom{n} \rightarrow \numdom{2}$ defined by $m(x)=1$ if and only if $x=1$ is a member of $\oh$ for arbitrary $n$.
 We continue with an algorithmic result.

\begin{lemma} \label{lem:one-hot-tract}
If $\Gamma \subseteq \efpp{\{\mathsf{Or}_2,0,1\}}{}$, then
$\ohcsp(\Gamma)$ is in P.
\end{lemma}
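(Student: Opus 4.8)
The plan is to reduce $\ohcsp(\Gamma)$ first to $\ohcsp(\{\mathsf{Or}_2,0,1\})$ and then to \textsc{2-Sat}, which is in \Poly. For the first reduction, note that since $\Gamma \subseteq \efpp{\{\mathsf{Or}_2,0,1\}}{}$ every relation $R\in\Gamma$ is efpp-definable over $\{\mathsf{Or}_2,0,1\}$, and in the Boolean case the map family $\oh$ admits a suitable surjective map $m^*_n$ for every $n$ (the map $[x=1]$; the degenerate case $n\le 1$ is trivial, with every variable forced). Hence Proposition~\ref{prop:onehot-ppdef}, applied once for each $R\in\Gamma$, gives $\ohcsp(\Gamma\cup\{\mathsf{Or}_2,0,1\})\leq_{\rm poly}\ohcsp(\{\mathsf{Or}_2,0,1\})$, and restricting to instances using only relations of $\Gamma$ yields $\ohcsp(\Gamma)\leq_{\rm poly}\ohcsp(\{\mathsf{Or}_2,0,1\})$. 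So it remains to place $\ohcsp(\{\mathsf{Or}_2,0,1\})$ in \Poly.

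Given an instance $(n,V,C)$ of $\ohcsp(\{\mathsf{Or}_2,0,1\})$ with $V=\{x_1,\dots,x_k\}$, and writing each one-hot map as $[x=a]$, every constraint is of one of three kinds: a \emph{positive disjunction} $(x_i=a)\vee(x_j=b)$ coming from $\mathsf{Or}_2$ (with possibly $i=j$), a \emph{ban} $x_i\ne a$ coming from the relation $0$, or a \emph{forcing} $x_i=a$ coming from the relation $1$. First I would compute $D_i:=\numdom{n}\setminus\{a:\ (x_i\ne a)\in C\}$ and reject if some $D_i=\emptyset$. Then I would build a \textsc{2-Sat} instance with Boolean variables $p_{i,a}$ (read ``$x_i=a$'') for $i\in[k],\ a\in\numdom{n}$, and clauses: $\neg p_{i,a}\vee\neg p_{i,a'}$ for all $i$ and $a\ne a'$ (``at most one value per variable''); $p_{i,a}\vee p_{j,b}$ for each positive disjunction; the unit clause $p_{i,a}$ for each forcing; and the unit clause $\neg p_{i,a}$ for each ban. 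This \textsc{2-Sat} instance has $O(kn)$ variables and $O(kn^2+|C|)$ clauses, hence polynomial size since $n$ is given in unary, and \textsc{2-Sat} is solvable in linear time, so the procedure runs in polynomial time. The claim to prove is: the $\ohcsp$ instance is satisfiable if and only if all $D_i$ are nonempty and this \textsc{2-Sat} instance is satisfiable.

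The forward direction is immediate, taking $p_{i,a}:=[f(x_i)=a]$ for a solution $f$. The backward direction is the crux. Given a satisfying assignment of the \textsc{2-Sat} instance, extend it to an assignment of $V$ by setting $x_i:=a$ for the unique $a$ with $p_{i,a}=1$ if such an $a$ exists (unique by the ``at most one'' clauses), and otherwise setting $x_i$ to an arbitrary element of $D_i$. Bans are respected: if $p_{i,a}=1$ then $\neg p_{i,a}$ is not a unit clause, so there is no ban $x_i\ne a$; if $x_i$ is set in the ``otherwise'' case then $x_i\in D_i$ avoids every banned value. Forcings are respected since $p_{i,a}=1$ forces $x_i:=a$ in the first case. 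Each positive disjunction $(x_i=a)\vee(x_j=b)$ holds because one of $p_{i,a},p_{j,b}$ is true; and if $p_{i,a}=0$, so that $x_i$ need not be set to $a$, then necessarily $p_{j,b}=1$, whence $x_j:=b$. The one point worth flagging --- and the only place the argument is more than mechanical --- is that the genuine ``exactly one value per variable'' requirement has an $n$-ary ``at least one'' part not expressible in \textsc{2-Sat}; we can drop it precisely because in the one-hot encoding the $\mathsf{Or}_2$-clauses are \emph{positive}, so assigning hitherto-unconstrained variables arbitrary values only enlarges the set of true literals $p_{i,a}$ and never falsifies a clause. This is exactly where the hypothesis that the base language embeds into $\{\mathsf{Or}_2,0,1\}$ (and not, say, into a language containing $\mathsf{Nand}_2$) is used.
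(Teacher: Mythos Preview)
Your proof is correct and follows essentially the same approach as the paper: reduce to $\ohcsp(\{\mathsf{Or}_2,0,1\})$ via Proposition~\ref{prop:onehot-ppdef}, then to \textsc{2-Sat} via one Boolean variable per (variable, value) pair, with ``at most one'' clauses and a direct translation of the three constraint types. Your version is in fact slightly more careful than the paper's in one respect: by precomputing $D_i$ and choosing the fallback value from $D_i$, you cleanly handle the edge case where a variable has every value banned but appears in no positive disjunction (so the \textsc{2-Sat} instance could be satisfiable by setting all its $p_{i,a}$ to false, while the $\ohcsp$ instance is not).
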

\begin{proof}
The result follows from combining Proposition~\ref{prop:onehot-ppdef}
with a polynomial-time algorithm for $\ohcsp(\{\mathsf{Or}_2,0,1\})$.
To this end,
we present a polynomial-time reduction to 2-SAT. 
Arbitrarily choose an instance $I=(n,V,C)$ of $\ohcsp(\{\mathsf{Or}_2,0,1\})$.
For each Iverson bracket $[x=a]$ appearing in $C$,
introduce a variable $x_a$. Add the clause $(\neg x_a \vee \neg x_b)$
for all $x_a,x_b$ with $a \neq b$. Do the following:

\begin{itemize}
\item
Add the clause $(x_a \vee y_b)$
for all constraints $\mathsf{Or}_2([x=a],[y=b])$.

\item
Add the clause $(x_a)$
for all constraints $1([x=a])$.

\item
Add the clause $(\neg x_a)$
for all constraints $0([x=a])$.
\end{itemize}

We claim that the resulting 2-SAT instance $I'$ is satisfiable if and only
if $I$ is satisfiable.

Assume $f$ is a solution to $I$. Assign 1 to $x_a$ if and only if
$f(x)=a$. This assignment satisfies all clauses of the type
$(\neg x_a \vee \neg x_b)$ (since  a variable cannot be assigned two distinct values)
and all clauses of the other types since $I$ is satisfiable.

Assume $f'$ is a solution to $I'$. Construct an assignment $f$
by setting $f(x)=a$ if and only if $f'(x_a)=1$. The clauses
of type $(\neg x_a \vee \neg x_b)$ guarantees that no assignment
conflict can occur; however, certain variables can be left unassigned
and we assign them arbitrary values.
Constraints of types $1([x=a)$ and $0([x=a])$ are obviously
satisfied by this assignment.
Constraints of type $\mathsf{Or}_2([x=a],[y=b])$ are satisfied
since it is clear that $f$ satisfies at least one of the Iverson brackets
in the constraint. We conclude that $I$ is satisfied by $f$.
\end{proof}

We can now prove the dichotomy result. The proof utilizes a few basic
facts concerning 
{\em Post's lattice}~\cite{pos41} of Boolean (relational) clones.
The interested reader can find an enjoyable exposition in~\cite{Bohler:etal:sigactnews2003,Bohler:etal:sigactnews2004}.

\begin{theorem} \label{thm:p-vs-np-oh}
  Let $\Gamma$ be a Boolean constraint language. Then
      $\ohcsp(\Gamma)$ is in P if and only if $\Gamma \subseteq \efpp{\{\mathsf{Or}_2, 0, 1\}}{}$. Otherwise, the problem is NP-hard.
\end{theorem}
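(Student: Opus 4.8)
The plan is to prove the two directions of the equivalence separately. The ``if'' direction is immediate: if $\Gamma \subseteq \efpp{\{\textsf{Or}_2,0,1\}}{}$ then $\ohcsp(\Gamma)$ is in $\Poly$ by Lemma~\ref{lem:one-hot-tract}, so nothing more is needed. All remaining work goes into showing that if $\Gamma \not\subseteq \efpp{\{\textsf{Or}_2,0,1\}}{}$ then $\ohcsp(\Gamma)$ is \NP-hard (which in particular rules out membership in $\Poly$), and here the argument splits into a Post's-lattice step and a few concrete reductions.

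For the hardness direction I argue by contraposition. Using the Galois connection $\efpp{\Delta}{} = \cclone{\Delta}_{\neq} = \Inv(\tmpol(\Delta))$ recalled in Section~\ref{sec:prelim_algebra}, the hypothesis $\Gamma \not\subseteq \efpp{\{\textsf{Or}_2,0,1\}}{}$ is equivalent to: some total multivalued operation on $\numdom{2}$ preserves $\textsf{Or}_2$, $0$ and $1$ but fails to preserve some relation of $\Gamma$. The co-clone $\langle\textsf{Or}_2,0,1\rangle$ sits low in Post's lattice (classically it is the co-clone of ``positive/unit $2$-clause'' relations), and the structural fact I would establish is a description of its equality-free version: a Boolean relation lies in $\efpp{\{\textsf{Or}_2,0,1\}}{}$ if and only if it equality-free pp-defines none of $\textsf{Eq}$, $\textsf{Nand}_2$, $\textsf{Or}_3$. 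One inclusion is routine, since $\textsf{Or}_2$, $0$ and $1$ are all preserved by the binary join $\vee$, by the ternary median (majority) operation, and by the multivalued operation $g$ on $\numdom{2}$ with $g(0,0)=\{0\}$, $g(1,1)=\{1\}$, $g(1,0)=\{1\}$, $g(0,1)=\{0,1\}$, whereas $\textsf{Nand}_2$, $\textsf{Or}_3$ and $\textsf{Eq}$ respectively fail to be preserved by one of these three, and closure under $\efpp{}{}$ propagates this obstruction. The reverse inclusion is the substantive part and I expect it to be the main obstacle: by Proposition~\ref{prop:no_preserve_f}, a failure of $\Gamma$ to be preserved by a member of a suitable finite generating set of $\tmpol(\{\textsf{Or}_2,0,1\})$ yields an efpp-definable witness relation over $\Gamma$, and one must verify that every such witness efpp-defines one of $\textsf{Eq}$, $\textsf{Nand}_2$, $\textsf{Or}_3$. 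This bookkeeping is made manageable by the fact that most apparent obstructions collapse to $\textsf{Eq}$: e.g.\ $\textsf{Eq}(x,y) \equiv \textsf{Impl}(x,y)\wedge\textsf{Impl}(y,x)$, and for the disequality relation $D=\{01,10\}$ one has $\textsf{Eq}(x,y)\equiv\exists z\colon D(x,z)\wedge D(z,y)$, while $D$ itself is recovered from $R_{1/3}$, from the not-all-equal relation $\numdom{2}^3\setminus\{000,111\}$, or from $\mathsf{Even}_3$ by identifying arguments and exploiting forced constants.

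It then remains to establish \NP-hardness of $\ohcsp(\{R\})$ for each $R \in \{\textsf{Eq},\textsf{Nand}_2,\textsf{Or}_3\}$ and to transfer it to $\Gamma$. For $R=\textsf{Eq}$ this is Lemma~\ref{lm:one-hot-one-in-three}. For $R=\textsf{Nand}_2$ I would reduce from $\kappa$-colouring: with domain size $n=\kappa$, one variable $x_v$ per vertex $v$, and the constraint $\textsf{Nand}_2([x_u=c],[x_v=c])$ for every edge $uv$ and every colour $c\in\numdom{\kappa}$, the satisfying assignments are exactly the proper $\kappa$-colourings of the graph. For $R=\textsf{Or}_3$ I would reduce from \textsc{$3$-SAT}: with domain $\numdom{2}$, one variable $x_i$ per propositional variable, and each clause $\ell_1\vee\ell_2\vee\ell_3$ replaced by the single constraint $\textsf{Or}_3(m_1(x_{i_1}),m_2(x_{i_2}),m_3(x_{i_3}))$ where $m_j=[x_{i_j}=1]$ for a positive literal and $m_j=[x_{i_j}=0]$ for a negated one, satisfying assignments correspond precisely to satisfying truth assignments. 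Finally, for whichever $R$ is efpp-definable over $\Gamma$, Proposition~\ref{prop:onehot-ppdef} applies with $\MM=\oh$: the base domain is $\numdom{2}$ and for every $n\ge 2$ the one-hot map $x\mapsto[x=1]$ is a surjection $\numdom{n}\to\numdom{2}$ (instances with $n\le 1$ being trivially decidable), so $\ohcsp(\{R\})\leq_{\rm poly}\ohcsp(\Gamma)$ and hence $\ohcsp(\Gamma)$ is \NP-hard. Combined with the ``if'' direction this yields the dichotomy; the only genuinely delicate point is the Post's-lattice step of the second paragraph, namely pinning down the finite set of minimal obstruction relations below $\langle\textsf{Or}_2,0,1\rangle_{\neq}$ and checking that each reduces into the three base cases treated above.
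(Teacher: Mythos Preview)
Your overall architecture is sound and converges on the paper's: the tractability direction via Lemma~\ref{lem:one-hot-tract}, the same three obstruction relations $\mathsf{Eq}$, $\mathsf{Nand}_2$, $\mathsf{Or}_3$, the same concrete reductions (3-colouring for $\mathsf{Nand}_2$, 3-SAT over $\numdom{2}$ for $\mathsf{Or}_3$, Lemma~\ref{lm:one-hot-one-in-three} for $\mathsf{Eq}$), and the transfer to $\Gamma$ via Proposition~\ref{prop:onehot-ppdef}. All of that is correct.

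The one place where you and the paper diverge is precisely the step you flag as delicate: establishing that every $R \notin \efpp{\{\mathsf{Or}_2,0,1\}}{}$ efpp-defines one of the three obstructions. Your plan is to exhibit a finite generating set of $\tmpol(\{\mathsf{Or}_2,0,1\})$ and then, via Proposition~\ref{prop:no_preserve_f}, check that every possible witness relation collapses to one of $\mathsf{Eq}$, $\mathsf{Nand}_2$, $\mathsf{Or}_3$. This is viable in principle, but you neither name the generating set nor carry out the verification; the three operations $\vee$, majority, $g$ you list serve only the routine direction, and the ``bookkeeping'' examples (e.g.\ $\mathsf{Impl}$, $D$, $R_{1/3}$, $\mathsf{Even}_3$ collapsing to $\mathsf{Eq}$) are suggestive but do not amount to an exhaustive case analysis. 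The paper sidesteps this work by invoking a known structural result on equality-free Boolean co-clones~\cite{10.1093/logcom/exaa079}: if $\mathsf{Eq} \notin \efpp{R}{}$ then $R$ is either essentially negative (whence either $\mathsf{Nand}_2 \in \efpp{R}{}$ or $R$ is essentially unary via Post's lattice) or essentially positive (whence $\mathsf{Or}_3 \in \efpp{R}{}$, again via Post's lattice, once $R \notin \efpp{\{\mathsf{Or}_2,0,1\}}{}$ is assumed). That route packages the case analysis you would otherwise have to perform by hand; your approach would be more self-contained but requires you to actually complete the enumeration.
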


\begin{proof}
If $\Gamma \subseteq \efpp{\{\mathsf{Or}_2,0,1\}}{}$, then 
$\ohcsp(\Gamma)$ is in P by Lemma~\ref{lem:one-hot-tract}.
 Assume instead that $\Gamma \not \subseteq \efpp{\{\mathsf{Or}_2, 0, 1\}}{}$ and let $R \subseteq \{0,1\}^{r}$ be an $r$-ary witnessing relation, i.e., $R \in \Gamma$ but $R \notin \efpp{\{\mathsf{Or}_2, 0, 1\}}{}$. 

  If $\mathsf{Eq} \in \efpp{R}{}$ then we get NP-hardness from Lemma~\ref{lm:one-hot-one-in-three}. It is then known~\cite[Lemma 7]{10.1093/logcom/exaa079} that if $\eq \notin \efpp{R}{}$, then $R$ is either (1) {\em essentially negative} and preserved by the operation $f(x,y,z) = 
  \min(x,\max(y,1-z))$ or (2) {\em essentially positive} and preserved by the operation $g(x,y,z) = \max(x,\min(y,1-z))$. Hence, the only two possible cases remaining for a complete classification is when $R$ is essentially negative or essentially positive. We make some observations.
First, $\efpp{\Gamma}{} \subseteq \efpp{\Gamma \cup \{\mathsf{Eq}\}}{}$ for any Boolean language $\Gamma$, the inclusion is strict if and only if $\mathsf{Eq} \notin \efpp{\Gamma}{}$, and there cannot exist any Boolean $\Delta$ such that $\efpp{\Gamma}{} \subsetneq \efpp{\Delta}{} \subsetneq \efpp{\Gamma \cup \{\mathsf{Eq}\}}{}$~\cite{lagerkvist2020a}. Second, if $\Gamma$ and $\Delta$ are two Boolean constraint languages where each relation in $\Gamma$ is {\em irredundant}, i.e., for each $s$-ary $S \in \Gamma$ and distinct $i,j \in [s]$ it is not the case that $t[i] = t[j]$ for every $t \in S$, then $\Gamma \subseteq \efpp{\Delta \cup \{\mathsf{Eq}\}}{}$ if and only if $\Gamma \subseteq \efpp{\Delta}{}$~\cite[Lemma 5]{lagerkvist2020a}. This, in particular, implies that if $\Gamma$ is irredundant and $\ohcsp(\Gamma)$ is in P then $\ohcsp(\Delta)$ is in P for every $\Delta \subseteq \efpp{\Gamma}{}$, and if $\ohcsp(\Gamma)$ is NP-hard then $\ohcsp(\Delta)$ is NP-hard for every $\Delta$ such that $\Gamma \subseteq \efpp{\Delta \cup \{\mathsf{Eq}\}}{}$. 
  
  We handle the remaining two cases in turn.

  \begin{enumerate}
    \item Essentially negative. We show that
     $\ohcsp(\{\textsf{Nand}_2\})$ is NP-hard. This follows by a reduction
from 3-{\sc Colourability}: consider the variable domain $\numdom{3}$ and note that
for $x,y \in \numdom{3}$, it holds that
\[x \neq y \Leftrightarrow \textsf{Nand}_2([x=0],[y=0]) \wedge \textsf{Nand}_2([x=1],
[y=1]) \wedge \textsf{Nand}_2([x=2],[y=2]).
\]
Thus, we assume that $\textsf{Nand}_2 \not\in \efpp{\Gamma}{}$.
    By using the above observations and Post's lattice, it follows that the only remaining case is when $R$ is essentially unary, but then $\ohcsp(\{R\})$ is in P by Theorem~\ref{thm:arbitrary-dichotomy}.  
    
    \item Essentially positive: if $R \in \efpp{\{\mathsf{Or}_2, 0, 1\}}{}$ then $\ohcsp(\{R\})$ is in P by Lemma~\ref{lem:one-hot-tract}. Otherwise, via Post's lattice, $\mathsf{Or}_3 \in \efpp{R}{}$. The 3-SAT-relations are efpp-definable in $\mathsf{Or}_3$ with domain $\numdom{2}$. For instance,
$(x \vee \neg y \vee z) \Leftrightarrow \mathsf{Or}_3([x=1],[y=0],[z=1])$. We conclude that
$\ohcsp(\{R\})$ is NP-hard.
  \end{enumerate}
\end{proof}

\subsection{FPT versus W[1] Dichotomy}
\label{sec:one-hot-fpt-w1}

We continue by presenting our FPT versus W[1] dichotomy for One-Hot-CSP$(\Gamma)$ when $\Gamma$ is finite and Boolean. 
We recall some terminology.  For streamlining the presentation, we introduce a special null-value $\bot$.
Given a  map $f:\numdom{n} \rightarrow \numdom{d}$, we define $f_{\bot}:\numdom{n,\bot}  \rightarrow \numdom{d}$ such that
$f_{\bot}(a)=f(a)$ if $a \in \numdom{d}$ and $f_{\bot}(\bot)=0$.
We define $\oh_{\bot}=\{f_{\bot} \; | \; f \in \oh\}$. Let $f : V \to [n]_\bot$ and $g : V \to [n]_\bot$ be two assignments. We say that $f$ is an \emph{extension} of $g$ if for any $v \in V$ we have $f(v) = g(v)$ if $g(v) \neq \bot$. We say that $f$ is a \emph{minimal satisfying extension} of $g$ if $f$ is satisfying and there is no other satisfying extension $h$ of $g$ such that $f$ is an extension of $h$. 
Finally, we say that an assignment $f : V \to \numdom{n,\bot}$ is \emph{trivial} if $f(v) = \bot$ for all $v \in V$ and a \emph{minimal satisfying assignment} if $f$ is non-trivial and not an extension of any other non-trivial satisfying assignment.

Consider a finite domain $D$ and an arbitrarily chosen element $d$ in $D$.
We say that a relation $R \subseteq D^a$ is $d$-\emph{valid} if $(d,\dots,d) \in D$. Two tuples $s, t \in R$ are $d$-\emph{disjoint} if $s_i=d$ or $t_i=d$ for every $1 \leq i \leq a$. The $d$-\emph{union} $s \cup_d t$ is the tuple $(u_1,\dots,u_a)$ where 
$u_i=s_i$ if $t_i=d$ and $u_i=t_i$, otherwise.

 A relation $R$ is \emph{weakly $d$-separable} if the following two conditions hold:
\begin{enumerate}
    \item For every pair of $d$-disjoint tuples $s, t \in R$, we have $s \cup_d t \in R$.
    \item For every pair of $d$-disjoint tuples $s, t$ with $s \in R$ and $s \cup_d t \in R$, we have $t \in R$.
\end{enumerate}

Let us illustrate with
{\em affine} relations.
A relation $R \subseteq \{0,1\}^r$ is called affine if it is an affine subspace of the $r$-dimensional vector space over GF[2] or, equivalently, $R$ is invariant under the
operation $a(x,y,z)=x-y+z \; ({\rm mod} \; 2)$.
Example~2.4 in \cite{Marx05CSP} shows that affine relations are weakly 0-separable, and
our forthcoming algorithmic result (Lemma~\ref{lem:one-hot-fpt-alg}) thus implies
that
$\ohcsp(\Gamma)$ is in FPT if $\Gamma$ is a set of affine relations.
A non-example is the relation $\textsf{Nand}_2=\{00,10,01\}$: pick the 0-disjoint
tuples $01$ and $10$ and note that their 0-union is the tuple $11$
which is not a member of $R$.

One may alternatively interpret weak separability algebraically.
Kratsch et al.~\cite[Section~4]{Kratsch:etal:toct2016} point out that a Boolean language $\Gamma$ is weakly 0-separable if and only if it is invariant
under the following two partial operations on $\numdom{2}$:

\begin{enumerate}
  \item $f_{\mathsf{Nand}}(0,0,0) = 0$, $f_{\mathsf{Nand}}(0,0,1) = 1$, $f_{\mathsf{Nand}}(0,1,0) = 1$, $f_{\mathsf{Nand}}(1,1,1) = 1$, and
  \item $g_{\mathsf{Impl}}(0,0,0) = 0$, $g_{\mathsf{Impl}}(0,1,1) = 0$, $g_{\mathsf{Impl}}(0,0,1) = 1$, $g_{\mathsf{Impl}}(1,1,1) = 1$.
\end{enumerate}

We will use the relational definition of weak 0-separability for algorithmic
purposes (Lemmas \ref{lem:zero_valid}--\ref{lem:one-hot-fpt-alg})
and switch to the algebraic definition when proving hardness (Theorem~\ref{thm:one-hot-fpt-w1}).

%
%
%

We first show that combining a 0-valid and weakly 0-separable constraint language
with one-hot maps is trouble-free.

\begin{lemma}\label{lem:zero_valid}
Let $R \subseteq \numdom{d}^a$ denote a relation that is 0-valid and weakly 0-separable. Then, the relation $R'(x_1,\dots,x_a) \equiv R(m_1(x_1),\dots,m_a(x_a))$ where $m_1,\dots,m_a \in \oh_{\bot}$
is $\bot$-valid and weakly $\bot$-separable.
\end{lemma}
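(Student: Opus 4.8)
The plan is to reduce the whole statement to a single coordinate-wise observation about how the guarding maps $m_1,\dots,m_a$ interact with the $\bot$-disjointness relation and the $\bot$-union operation; once that is in place, both parts of weak $\bot$-separability transfer mechanically from $R$ to $R'$.

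First I would dispose of $\bot$-validity. Since each $m_i \in \oh_\bot$ satisfies $m_i(\bot)=0$, applying $(m_1,\dots,m_a)$ to the all-$\bot$ tuple yields $(0,\dots,0)$, which lies in $R$ because $R$ is $0$-valid. Hence $(\bot,\dots,\bot)\in R'$, so $R'$ is $\bot$-valid.

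The crux is the following claim. For a tuple $x=(x_1,\dots,x_a)\in\numdom{n,\bot}^a$ write $m(x):=(m_1(x_1),\dots,m_a(x_a))\in\numdom{d}^a$, so that $x\in R'$ iff $m(x)\in R$. I claim: if $s,t\in\numdom{n,\bot}^a$ are $\bot$-disjoint, then (i) $m(s)$ and $m(t)$ are $0$-disjoint, and (ii) $m(s\cup_\bot t)=m(s)\cup_0 m(t)$. I would prove this coordinate by coordinate, splitting on which of $s_i=\bot$, $t_i=\bot$ holds (at least one must, by $\bot$-disjointness). For (i): if $s_i=\bot$ then $m_i(s_i)=0$, so the $0$-disjointness condition holds at coordinate $i$, and symmetrically if $t_i=\bot$. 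For (ii): if $t_i=\bot$ then the $\bot$-union takes $u_i=s_i$, so $m_i(u_i)=m(s)_i$, and since $m(t)_i=0$ the $0$-union also gives $m(s)_i$; if instead $s_i=\bot$, then $m(s)_i=0$, the $0$-union at $i$ equals $m(t)_i$, and a short check of the sub-cases $t_i=\bot$ versus $t_i\neq\bot$ shows $m_i$ of the $\bot$-union equals $m(t)_i$ as well. This is the only place any real checking happens, and the sole thing to be careful about is keeping the direction of the $d$-union convention ("$u_i=s_i$ if $t_i=d$, else $u_i=t_i$") straight.

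Finally I would derive the two separability conditions. For condition~1, take $\bot$-disjoint $s,t\in R'$; then $m(s),m(t)\in R$ are $0$-disjoint by (i), so $m(s)\cup_0 m(t)\in R$ by weak $0$-separability of $R$, and by (ii) this equals $m(s\cup_\bot t)$, giving $s\cup_\bot t\in R'$. For condition~2, take $\bot$-disjoint $s,t$ with $s\in R'$ and $s\cup_\bot t\in R'$; then $m(s)\in R$ and $m(s)\cup_0 m(t)=m(s\cup_\bot t)\in R$ with $m(s),m(t)$ being $0$-disjoint, so $m(t)\in R$ by the second clause of weak $0$-separability of $R$, hence $t\in R'$. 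I do not anticipate any genuine obstacle here; the argument is a formal transfer once the commutation claim (i)--(ii) is established.
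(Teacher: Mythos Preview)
Your proposal is correct and follows essentially the same approach as the paper's proof: both establish $\bot$-validity from $m_i(\bot)=0$, then push the two separability conditions through the maps via the observation that $\bot$-disjoint tuples map to $0$-disjoint tuples and that the maps commute with the respective unions. Your version is in fact cleaner and more explicit about the coordinate-wise case analysis than the paper's somewhat terse argument.
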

\begin{proof}
Recall that each map in $\oh_{\bot}$ maps $\bot$ to 0. This implies that $R'$ is $\bot$-valid
since $R$ is assumed to be 0-valid.
Arbitrarily choose two $\bot$-disjoint tuples $s'=(s'_1,\dots,s'_a)$ and $t'=(t'_1,\dots,t'_a)$ in $\numdom{d,\bot}^a$. To verify the first condition in the definition of weak separability,
we assume that $s' \in R$ and $s' \cup_{\bot} r' \in R$.
Let $s=(m_1(s'_1),\dots,m_a(s'_a))$ and  $t=(m_1(t'_1),\dots,m_a(t'_a))$. 
Note that $s,t$ are 0-disjoint tuples in $R$ and that $s' \cup_{\bot} t' = s \cup_0 t$.
The relation $R$ is weakly $0$-separable so $s \cup_0 t$ is in $R$. Hence,
the tuple $s' \cup_{\bot} t'$ is in $R'$. Verifying the second condition is similar.
\end{proof}

In the next two lemmas, we verify that solutions to $\udcsp(\Gamma,\oh_{\bot})$
can be constructed from minimal satisfying assignments.

\begin{lemma} \label{lem:enumerate_minimal}
Let $\Gamma$ be a finite and Boolean constraint language. There are functions $d_{\Gamma}(k)$ and $e_{\Gamma}(k)$ such that for any instance of $\udcsp(\Gamma,\oh_{\bot})$ with domain size $n$ and $k$ variables, every variable $v$ is non-$\bot$ in at most $d_{\Gamma}(k)$ minimal satisfying assignments and all these minimal satisfying assignments can be enumerated in time $e_{\Gamma}(k) \cdot n^{O(1)}$.
\end{lemma}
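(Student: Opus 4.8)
The plan is to enumerate the minimal satisfying assignments by a branching search of depth at most $k$, and to read the bound $d_\Gamma(k)$ off from the number of leaves of the search tree. Write $r_\Gamma := \max\{\ar(R) \mid R \in \Gamma\}$ and $m := |C|$. We may assume that the all-$\bot$ assignment is not satisfying: if it is, the instance is a (trivial) yes-instance and is handled separately, and moreover in that case the notion of minimal satisfying assignment degenerates (a one-variable assignment giving any value not distinguished by the one-hot maps of the instance is ``minimal''), so this normalization --- collapsing domain values indistinguishable in all maps of the instance, and testing the all-$\bot$ assignment up front --- is what the statement needs and is the point requiring the most care.

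The engine of the argument is the following monotonicity property of $\oh_{\bot}$: each map in $\oh_{\bot}$ sends $\bot$ to $0$, so if a partial assignment $q$ extends a partial assignment $p$ (meaning $q$ agrees with $p$ wherever $p$ is non-$\bot$), then for every constraint $c = R(f_1(y_1),\dots,f_r(y_r))$ the tuple that $c$ feeds into $R$ only gains $1$'s when passing from $p$ to $q$; hence if $p$ violates $c$ but $q$ satisfies it, some coordinate flips from $0$ to $1$, i.e.\ $q$ sets some variable $y_i$ that is still $\bot$ in $p$ to the value named by that coordinate's one-hot map. This yields the search: start with the all-$\bot$ partial assignment $p$; while some $c \in C$ is violated by $p$, choose one such $c$ and branch over its at most $r_\Gamma$ slots $f_i(y_i) = [y_i = a_i]$ currently evaluating to $0$, committing $y_i := a_i$ in the corresponding child (pruning the child if $y_i$ is already committed to a value $\neq a_i$, since then $c$ is permanently violated). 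Each step commits one previously-$\bot$ variable, so every root-to-leaf path commits at most $k$ variables, the tree has at most $r_\Gamma^{\,k}$ leaves, and the non-pruned leaves are satisfying partial assignments. By the monotonicity property, whenever $p$ is violated and $f$ is a satisfying assignment extending $p$, the slot of the chosen $c$ that $f$ turns on gives a valid (non-pruned) child consistent with $f$; hence the branch following any minimal satisfying assignment $f$ reaches a satisfying partial assignment $p^{\ast}$ with $f$ extending $p^{\ast}$, and since $f$ is minimal and the all-$\bot$ assignment is not satisfying, $p^{\ast}$ must equal $f$. So every minimal satisfying assignment occurs among the at most $r_\Gamma^{\,k}$ leaves.

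It remains to filter out non-minimal leaves and tally costs. A leaf $p^{\ast}$ is minimal iff --- using that $f$ extends $g$ exactly when $\mathrm{supp}(g) \subseteq \mathrm{supp}(f)$ and they agree on $\mathrm{supp}(g)$, and that $|\mathrm{supp}(p^{\ast})| \le k$ --- none of the at most $2^k$ restrictions of $p^{\ast}$ to a nonempty proper subset of its support (completed by $\bot$) is satisfying; this is at most $2^k$ tests, each evaluating the $m$ constraints. Consequently there are at most $r_\Gamma^{\,k}$ minimal satisfying assignments in total, so each variable is non-$\bot$ in at most $d_\Gamma(k) := r_\Gamma^{\,k}$ of them. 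The traversal visits $O(r_\Gamma^{\,k})$ nodes, each doing $O(m) \cdot n^{O(1)}$ work (locating a violated constraint; evaluating one constraint is $O(1)$ for fixed $\Gamma$), and the minimality filter costs $O(r_\Gamma^{\,k} 2^k) \cdot m \cdot n^{O(1)}$; as $m$ and all sizes are polynomial in the unary-encoded input, the total is $e_\Gamma(k) \cdot n^{O(1)}$ with $e_\Gamma(k) = O(r_\Gamma^{\,k} 2^k)$. The only genuinely delicate point, as flagged, is pinning down the boundary behaviour so that ``irrelevant'' domain values cannot produce unboundedly many spurious minimal assignments; everything else is bookkeeping.
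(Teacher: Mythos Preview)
Your core argument---find a violated constraint, branch over committing one of its currently-zero slots to its hot value, depth at most $k$, and follow a target minimal satisfying assignment down the tree---is exactly the paper's approach, and your monotonicity justification matches the paper's induction.

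The one substantive difference is the initialization. You assume the all-$\bot$ assignment is not satisfying and start the search there. The paper makes no such assumption: its first branching step sets an arbitrary variable to an arbitrary non-$\bot$ value ($nk$ branches), and only then enters the unsatisfied-constraint loop, reaching at most $k^k n$ leaves in total. Your assumption is not innocuous here: in the paper's application (Lemma~\ref{lem:one-hot-fpt-alg}) the instances are first reduced to be $0$-valid, so all-$\bot$ \emph{is} satisfying precisely when this lemma is invoked. Your proposed workaround (``if all-$\bot$ is satisfying, the instance is a trivial yes-instance'') does not help, since the downstream goal is a satisfying assignment with no $\bot$, not mere satisfiability. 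The fix is simply to adopt the paper's first step; your correctness argument then goes through verbatim, since that first step already guarantees the leaf is non-trivial.

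Your instinct that the $d_\Gamma(k)$ bound is the delicate point is well-founded---arguably more so than the paper acknowledges. In a $0$-valid instance, for any variable $v$ and any value $a$ that is not hot in any map applied to $v$, the assignment ``$v=a$, all else $\bot$'' is a minimal satisfying assignment, and there can be $\Omega(n)$ such $a$. So a per-variable bound depending only on $k$ fails as literally stated; the paper's own proof only establishes a total leaf count of $k^k n$ and does not separately argue $d_\Gamma(k)$. Your collapsing idea removes the irrelevant-value blowup but still leaves the count depending on the number of distinct hot values for $v$, hence on the number of constraints rather than on $k$ alone. Fortunately this is harmless for the paper: in the application only the FPT-time enumeration is used, and the enumerated assignments are then grouped by their supports, of which there are at most $2^{|V|}$.
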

\begin{proof}
    We use a branching algorithm. We start with setting any variable to any non-$\bot$ value ($nk$ branches). While there are unsatisfied relations, branch on assigning any not-yet-assigned variable in this relation to its `hot' values (at most $k$ branches). If this is not possible because all variables have been assigned a value, we have reached a dead end. If all relations are satisfied, we have reached a satisfying assignment. Since there are only $k$ variables, we can only go up to $k$ branching levels deep, so there are at most $k^k n$ branches in total.

    Let $t$ be any minimal satisfying assignment. We will now explain why $t$ is reached by the above procedure. We prove this by showing that there is a path down the branching tree that never results in a dead end and where $t$ is an extension of the current assignment throughout this branching path. Since $t$ is minimal and an extension of the satisfying assignment at the end of this path, we conclude that this path must end at $t$.
    
    As a base case, observe that $t$ is non-trivial, so it has at least one non-$\bot$ value. So, for the first branching step, we follow the path where this variable is set to this value.
    
    Now suppose that $t$ is an extension of the current assignment (say $s$) in one of the later branching steps. Let $R$ be the unsatisfied relation from this step. Since $t$ satisfies $R$ while $s$ does not, there is some variable $x_i$ in $R$ where $t$ and $s$ are mapped to different values in $\numdom{2}$ by its map $f_i$. Since $t$ is an extension of $s$, we must have $s(x_i) = \bot$ and $t(x_i) = c_i$ for some $c_i \neq \bot$. Since $0 = f_i(\bot) \neq f_i(c_i)$, we conclude that $f_i(c_i)$ must be 1 and thus $c_i$ is the hot value of $f_i$. This shows that we can follow the path where $x_i$ is set to $c_i$. This completes the induction.
\end{proof}

Let $f$ and $g$ denote functions from $A$ to $B$.
The {\em difference} $f \setminus g$ is the set difference of
$f$ and $g$ when viewed as subsets of $A \times B$.

\begin{lemma} \label{lem:separability}
    Let $\Gamma$ be a weakly 0-separable and $0$-valid Boolean constraint language. The following hold for any instance $I = (n,V,C)$ of $\udcsp(\Gamma,\oh_{\bot})$:
    \begin{enumerate}
        \item 
        Every satisfying assignment $f \colon V \to \numdom{n,\bot}$ is the $\bot$-union of pairwise disjoint minimal satisfying assignments, and
        \item 
        If there is a satisfying assignment $f$ with $f(v) = d$ for some variable $v \in V$ and $d \in \numdom{n}$, then there is a minimal satisfying assignment $f'$ with $f'(v) = d$.
    \end{enumerate}
\end{lemma}
\begin{proof}
    Since $\Gamma$ is 0-valid and weakly 0-separable, Lemma~\ref{lem:zero_valid} shows that 
    every relation $R'(x_1,\dots,x_a) \equiv R(m_1(x_1),\dots,m_a(x_a))$ where $m_1,\dots,m_a \in \oh_{\bot}$
    is $\bot$-valid and weakly $\bot$-separable. We now prove the parts in order.
    \begin{enumerate}
        \item We use induction on the number of non-$\bot$ values in $f$. If there is only one non-$\bot$ value, then $f$ is minimal by definition since we explicitly exclude the trivial assignment. Otherwise, there are two options. Either $f$ is minimal and we are done, or there exists a minimal satisfying assignment $g$ such that $f$ is an extension of $g$. Now consider $f' := f \setminus g$ and observe that $f'$ and $g$ are $\bot$-disjoint. Since $g$ contains at least one non-$\bot$ value, $f'$ contains fewer $\bot$-values than $f$, so we can apply the induction hypothesis. This shows that $f'$ is the $\bot$-union of some set of pairwise disjoint minimal satisfying assignments. Because of weak $\bot$-separability, adding $g$ to this set results in a solution for $f$.
        
        \item Since $f$ is the disjoint $\bot$-union of minimal satisfying assignments $f_1, \ldots, f_\ell$, there must be an $f_i$ with $f_i(v) = d$. 
    \end{enumerate}
\end{proof}

\noindent
Our final ingredient is a way of handling relations that
are not 0-valid. We will do this by branching over partial assignments with the aid
of the following.
If $R \subseteq \numdom{2}^r$, then
define \[R_{|(i,1)} = \{(a_1,\dots,a_{i-1},1,a_i,\dots,a_r) \; | \; (a_1,\dots,a_r) \in R\}.\]
That is, $R_{|(i,1)}$ is obtained from $R$ by assigning the $i$th position of every tuple the value 1. The 
relation $R_{|(i,0)}$ is defined analogously. 
By applying these two operations repeatedly, we may obtain $3^r$
(not necessarily distinct)
relations.
Given a finite constraint language $\Gamma$, we let $\Gamma^*$
denote $\Gamma$ extended with these restricted relations.
Marx has proven the following result.

\begin{lemma}[Lemma 2.6 in~\cite{Marx05CSP}] \label{lem:weakly-separable-restrictions}
If the Boolean constraint language $\Gamma$ is weakly 0-separable, then 
$\Gamma^*$ is weakly 0-separable, too.
\end{lemma}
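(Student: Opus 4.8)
The plan is to reduce the statement to a single-step closure property: it suffices to show that whenever a relation $R$ is weakly $0$-separable and $i$ is a valid index, both $R_{|(i,0)}$ and $R_{|(i,1)}$ are again weakly $0$-separable. Granting this, the lemma follows by induction on the number of restriction steps needed to build a given relation of $\Gamma^*$ from a relation of $\Gamma$, using that a constraint language is weakly $0$-separable precisely when each of its relations is. The only bookkeeping is that coordinate positions shift as further operations are applied, but the single-step claim holds for an arbitrary weakly $0$-separable relation and arbitrary valid index, so this causes no difficulty.

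The case $R_{|(i,1)}$ is essentially vacuous. In $R_{|(i,1)}$ the distinguished coordinate carries the constant value $1$, so any two tuples of $R_{|(i,1)}$ agree, on that coordinate, on the value $1$; in particular no two tuples of $R_{|(i,1)}$ are $0$-disjoint. Hence both hypotheses in the definition of weak $0$-separability (each of which quantifies over $0$-disjoint pairs) are never met, and $R_{|(i,1)}$ is weakly $0$-separable trivially. For $R_{|(i,0)}$ the distinguished coordinate carries the constant value $0$, and here one transports the property through that coordinate. Given $0$-disjoint tuples $s,t \in R_{|(i,0)}$, deleting the distinguished coordinate yields $0$-disjoint tuples $\bar s,\bar t \in R$, and a coordinate-wise check shows that $s \cup_0 t$ is exactly $\bar s \cup_0 \bar t$ with a $0$ re-inserted in the distinguished coordinate (both $s$ and $t$ have $0$ there, so their $0$-union does too). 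Condition (1): weak $0$-separability of $R$ gives $\bar s \cup_0 \bar t \in R$, hence $s \cup_0 t \in R_{|(i,0)}$. Condition (2): if in addition $s \cup_0 t \in R_{|(i,0)}$, then $\bar s,\ \bar s \cup_0 \bar t \in R$, so weak $0$-separability of $R$ yields $\bar t \in R$; and the distinguished entry of $t$ is forced to equal $0$ (it equals the distinguished entry of $s \cup_0 t$, which is $0$, since $s$ has $0$ there and $s,t$ are $0$-disjoint), whence $t \in R_{|(i,0)}$.

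I do not expect a genuine obstacle; the only point needing a little care is the verification, in condition (2) for $R_{|(i,0)}$, that the distinguished entry of $t$ must already be $0$ rather than being assumed. (As this is Lemma~2.6 of~\cite{Marx05CSP}, one could alternatively just invoke it.)
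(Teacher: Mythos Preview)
The paper does not give its own proof here; it only cites Lemma~2.6 of Marx~\cite{Marx05CSP}. So there is no in-paper argument to compare your approach against.

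Your treatment of $R_{|(i,1)}$ has a real gap. You claim both conditions are vacuous because no two tuples of $R_{|(i,1)}$ are $0$-disjoint. That disposes of condition~(1), which quantifies over $0$-disjoint pairs $s,t$ \emph{both} lying in the relation. But condition~(2) only places $s$ and $s\cup_0 t$ in the relation; the tuple $t$ ranges over all of $\{0,1\}^a$. Taking $t=(0,\dots,0)$ gives a $0$-disjoint pair with $s\cup_0 t=s\in R_{|(i,1)}$, so condition~(2) would force the all-zero tuple into $R_{|(i,1)}$---impossible, since the distinguished coordinate is constantly~$1$. Thus, with the relational definition read literally, a nonempty $R_{|(i,1)}$ is never weakly $0$-separable, and your single-step closure fails for $b=1$. (Your argument for $R_{|(i,0)}$ is essentially correct.)

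This reflects a mismatch between the relational phrasing reproduced in this paper and Marx's original setting rather than a flaw in your overall plan. The cleanest repair uses the algebraic characterisation the paper quotes from Kratsch et~al.: weak $0$-separability coincides with preservation under the partial operations $f_{\mathsf{Nand}}$ and $g_{\mathsf{Impl}}$, both of which satisfy $f(b,b,b)=b$ for $b\in\{0,1\}$. Hence fixing any coordinate to a constant~$b$ commutes with applying either operation, so each $R_{|(i,b)}$ inherits both partial polymorphisms. Your induction on the number of restriction steps then finishes the argument uniformly for $b\in\{0,1\}$.
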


Our FPT algorithm proceeds in three stages: (1) it removes constraints that are
not 0-valid, (2) it constructs a set of tentative solutions by enumerating minimal satisfying assignments, and (3) it searches for a solution that does not assign $\bot$ to any variable.

\begin{lemma} \label{lem:one-hot-fpt-alg}
Let $\Gamma$ be a finite Boolean constraint language that is weakly 0-separable.
Then, $\ohcsp(\Gamma)$ is in FPT.
\end{lemma}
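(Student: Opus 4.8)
The plan is to follow the three-stage outline stated just above the lemma: branch to get rid of constraints that are not $0$-valid, and then solve the residual $0$-valid, weakly $0$-separable instance using the minimal-satisfying-assignment machinery of Lemmas~\ref{lem:zero_valid}--\ref{lem:separability}. Let $(n,V,C)$ be an instance of $\ohcsp(\Gamma)$; equivalently, view it as an instance of $\udcsp(\Gamma,\oh_\bot)$ for which we seek a satisfying assignment that uses no $\bot$. Throughout we work over $\Gamma^*$, which is finite (each of the finitely many relations of $\Gamma$, of bounded arity, has at most $3^r$ restrictions, where $r$ bounds the arities in $\Gamma$) and, by Lemma~\ref{lem:weakly-separable-restrictions}, still weakly $0$-separable.

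\textbf{Stage~1 (branching to $0$-valid).} As long as some constraint, restricted by the partial assignment $\pi$ built so far, yields a relation $R'$ over the still-unassigned variables with $(0,\dots,0)\notin R'$, pick such a constraint and such a restricted relation $R'$ (prune the branch if $R'=\emptyset$, since then $\pi$ cannot be completed). Since $(0,\dots,0)\notin R'$, in any satisfying completion some literal $m_i(x_i)$ of an unassigned variable $x_i$ must equal $1$, so branch over the at most $r$ positions $i$ where this can happen, in each branch fixing $x_i$ to the unique hot value of $m_i$ (a concrete element of $\numdom{n}$) and recording the induced restrictions of all constraints on $x_i$ (still relations of $\Gamma^*$). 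Each step fixes one more variable, so the search tree has depth at most $|V|$ and at most $r^{|V|}$ leaves; at a leaf, constraints all of whose variables are fixed are checked directly (prune if violated), and for every remaining constraint the restriction to the currently-unassigned set $W$ is $0$-valid. A routine argument shows that the original instance has a $\bot$-free satisfying assignment iff, at some leaf with partial assignment $\pi$, the residual instance $I'$ on $W$ (constraints from $\Gamma^*$, maps from $\oh_\bot$) has a satisfying assignment assigning no $\bot$: one direction follows $\pi$ along the branch consistent with a given $\bot$-free solution, the other combines $\pi$ with a $\bot$-free solution of $I'$.

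\textbf{Stages~2--3 (solving the residual instance).} Fix a leaf; $I'$ is a $0$-valid, weakly $0$-separable instance of $\udcsp(\Gamma^*,\oh_\bot)$ on $W$ with $|W|\le|V|$. By Lemma~\ref{lem:zero_valid} every constraint of $I'$ is $\bot$-valid and weakly $\bot$-separable, so Lemma~\ref{lem:separability}(1) gives: every satisfying assignment of $I'$ is the $\bot$-union of pairwise $\bot$-disjoint minimal satisfying assignments, and, conversely, the $\bot$-union of any family of pairwise $\bot$-disjoint minimal satisfying assignments is again satisfying (property~1 of weak $\bot$-separability applied to each constraint, by induction on the family size). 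As $\bot$-disjoint assignments have disjoint supports, $I'$ has a $\bot$-free satisfying assignment iff $W$ can be partitioned into the supports of a subfamily of minimal satisfying assignments of $I'$. Using Lemma~\ref{lem:enumerate_minimal} (with the finite Boolean language $\Gamma^*$), enumerate all minimal satisfying assignments of $I'$ in time $e_{\Gamma^*}(|W|)\cdot n^{O(1)}$; there are at most $|W|\cdot d_{\Gamma^*}(|W|)$ of them, since each is non-trivial and hence non-$\bot$ on some variable, and each variable is non-$\bot$ in at most $d_{\Gamma^*}(|W|)$ of them. Finally, a subset dynamic program computing, for each $U\subseteq W$, whether $U$ is a disjoint union of such supports, decides the partition question at $U=W$ in time $2^{|W|}\cdot|W|\cdot d_{\Gamma^*}(|W|)\cdot n^{O(1)}$. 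Running Stages~2--3 at each of the $\le r^{|V|}$ leaves gives total time $f(|V|)\cdot n^{O(1)}$, so $\ohcsp(\Gamma)$ is in $\FPT$.

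The truly delicate points are already discharged by the cited lemmas — that weak separability survives restriction (Lemma~\ref{lem:weakly-separable-restrictions}) and survives composition with one-hot maps (Lemma~\ref{lem:zero_valid}), that minimal satisfying assignments are few per variable and enumerable in FPT time (Lemma~\ref{lem:enumerate_minimal}), and that every solution decomposes into them (Lemma~\ref{lem:separability}). Thus the main thing left to argue carefully here is the bookkeeping of Stage~1 — in particular that one should only branch on constraints whose restriction to the \emph{unassigned} variables fails to be $0$-valid, so that each branch genuinely fixes a fresh variable and the leaves really yield $0$-valid residual instances — together with the ``packing'' direction that gluing disjoint minimal satisfying assignments never violates a constraint, which is exactly property~1 of weak $\bot$-separability. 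Neither of these should pose a real obstacle; the proof is essentially an assembly of the preceding lemmas.
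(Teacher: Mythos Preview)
Your proposal is correct and follows essentially the same three-stage approach as the paper's proof: branch to eliminate non-$0$-valid constraints (working over $\Gamma^*$ via Lemma~\ref{lem:weakly-separable-restrictions}), enumerate minimal satisfying assignments of the residual $\bot$-valid instance (Lemmas~\ref{lem:zero_valid}--\ref{lem:separability}), and then search for a partition of the variable set into supports of such minimal assignments. Your version is slightly more explicit on two points the paper glosses over --- that branching must be on constraints whose restriction to the \emph{unassigned} variables is non-$0$-valid (so each step fixes a fresh variable), and that the ``packing'' direction needs property~1 of weak $\bot$-separability --- and you use a subset DP where the paper just says ``brute-force enumeration'', but these are cosmetic differences.
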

\begin{proof}
We show the result for the constraint language $\Gamma^*$. Lemma~\ref{lem:weakly-separable-restrictions} guarantees that this language
is weakly 0-separable.
Assume $r$ is the maximum arity of any relation in $\Gamma^*$.
    Let $I=(n,V,C)$ be an arbitrary instance of $\ohcsp(\Gamma^*)$.
    We first verify that this instance can be reduced to at most $p \leq r^{|V|}$ $\ohcsp(\Gamma^*)$
    instances $I_1,\dots,I_p$ such that $I_i$, $1 \leq i \leq p$, only contains constraints $R(\bar{x})$ such that $R$ is 0-valid.
    Moreover, the procedure does not increase the
number of variables.
    
    We use a branching algorithm. If
$C$ contains a constraint $R([x_1=a_1],\dots,[x_m=a_m])$ such that $R$ is not 0-valid, then
at least one $x_i \in \{x_1,\dots,x_m\}$ must be assigned the corresponding value $a_i$. 
Thus, we branch in $m \leq r$ directions by
setting $x_1=a_1$, $x_2=a_2$, $\dots$, and $x_m=a_m$.
If we consider, for example, the assignment $x_1=a_1$, then it yields the
constraint $R_{|(1,1)}([x_2=a_1],\dots,[x_m=a_m])$ and we know that the relation
$R_{|(1,1)}$ is in $\Gamma^*$. 
If the assignment becomes inconsistent in a branch --- a variable is assigned two distinct values ---
then we deem this branch unsuccessful and abort it.
Otherwise, we repeat the procedure
until every constraint is 0-valid and we output the corresponding instance of $\ohcsp(\Gamma^*)$.
The search tree of the algorithm has height at most $|V|$ so it has at most $r^{|V|}$
leaves and, thus, we have at most $r^{|V|}$
instances built over 0-valid relations to consider. It is clear that at least one
of these instances are satisfiable if and only if the original instance is satisfiable.
    
    Now, let $(n,V,C)$ be one of these preprocessed instances.
    We view it as an instance $I_{\bot}$ of $\udcsp(\Gamma^*,\oh_{\bot})$, apply Lemma~~\ref{lem:zero_valid}, and conclude that $I_{\bot}$ is $\bot$-valid and
    weakly $\bot$-separable. Next,
    we enumerate all minimal satisfying assignments using Lemma~\ref{lem:enumerate_minimal}. By Lemma~\ref{lem:separability}, every satisfying assignment is a $\bot$-union of pairwise disjoint minimal satisfying assignments, and any such $\bot$-union is satisfying.
    
    We finally want to find a satisfying assignment that does not contain the value $\bot$, or equivalently, a set of minimal satisfying assignments such that every $v \in V$ is not $\bot$ in exactly one of these assignments. To do this, we associate every minimal satisfying assignment $f$ with a subset $S_f \subset V$, defined as the set of all $v \in V$ where $f(v) \neq \bot$. There are at most $2^{|V|}$ such subsets. By brute-force enumeration, we can determine whether there exists a set of disjoint subsets whose union is $V$ in FPT time.
\end{proof}

We combine the previous algorithmic results with two hardness results 
for obtaining a full parameterized complexity classification.
The 
case division is based on the two partial operations $f_{\mathsf{Nand}}$ and $g_{\mathsf{Impl}}$ that we presented
in the beginning of this section. Thus,
violating $f_{\mathsf{Nand}}$ allows us to efpp-define $\textsf{Nand}_2$-like relations and obtain reductions from {\sc Independent Set} while
violation of $g_{\mathsf{Impl}}$ allows us to efpp-define $\textsf{Impl}$-like relations and obtain reductions from {\sc Clique}.

\begin{theorem} \label{thm:one-hot-fpt-w1}
Let $\Gamma$ be a finite Boolean constraint language. If $\Gamma$ is weakly 0-separable, then $\ohcsp(\Gamma)$ is in FPT. Otherwise, $\ohcsp(\Gamma)$ is W[1]-hard.
\end{theorem}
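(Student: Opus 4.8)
The FPT direction is already in hand: Lemma~\ref{lem:one-hot-fpt-alg} shows that if $\Gamma$ is weakly $0$-separable then $\ohcsp(\Gamma)$ is in FPT. So the entire task is the hardness direction: assuming $\Gamma$ is \emph{not} weakly $0$-separable, show $\ohcsp(\Gamma)$ is W[1]-hard. The plan is to use the algebraic characterization of weak $0$-separability recalled just before the theorem statement: $\Gamma$ is weakly $0$-separable iff it is invariant under both partial operations $f_{\mathsf{Nand}}$ and $g_{\mathsf{Impl}}$ on $\numdom{2}$. Hence if $\Gamma$ is not weakly $0$-separable, at least one of these two partial polymorphisms fails, and I split into the two cases accordingly.

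\textbf{Case 1: $f_{\mathsf{Nand}} \notin \ppol(\Gamma)$.} Here I want to efpp-define (over $\Gamma$) a relation behaving like $\mathsf{Nand}_2$ and reduce from \textsc{Independent Set}. Since $f_{\mathsf{Nand}}$ is not a partial polymorphism of some $R \in \Gamma$, the standard gadget argument (applying $f_{\mathsf{Nand}}$ coordinatewise to three tuples of $R$ and reading off a column relation) produces an equality-free (quantifier-free, possibly higher-arity after pinning columns) pp-definition over $\{R\}$ of a binary Boolean relation $R'$ with $00,01,10 \in R'$ and $11 \notin R'$ — i.e.\ $R' = \mathsf{Nand}_2$, or a relation from which $\mathsf{Nand}_2$ is recovered by conjunction/pinning. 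One must be slightly careful because we cannot introduce equality atoms, but the construction in Proposition~\ref{prop:no_preserve_f} gives exactly a $\{R\}$-efpp-definable relation separating the offending input/output pattern of $f_{\mathsf{Nand}}$; from it $\mathsf{Nand}_2$ is efpp-definable. Then I reduce \textsc{Independent Set} parameterized by $k$ to $\ohcsp(\{\mathsf{Nand}_2\})$: create one variable $x_i$ per vertex slot $i \in [k]$ over domain $\numdom{n}$ (domain elements $=$ vertices of $G$), and for every non-edge $\{u,v\}$ and every $i \ne j$ post $\mathsf{Nand}_2([x_i = u],[x_j = v])$, plus $\mathsf{Nand}_2([x_i=u],[x_j=u])$ for all $i\neq j$ to forbid repeats; solutions correspond to $k$-cliques... wait — to independent sets of size $k$ (no two chosen vertices are adjacent). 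Actually I reduce from \textsc{Clique} symmetrically by forbidding \emph{edges}; either way W[1]-hardness follows. Then transfer hardness from $\ohcsp(\{\mathsf{Nand}_2\})$ to $\ohcsp(\{R\})$ via Proposition~\ref{prop:onehot-ppdef} (a suitable surjective $m^*_n \in \oh$ exists in the Boolean case, as noted after Proposition~\ref{prop:onehot-ppdef}), hence to $\ohcsp(\Gamma)$.

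\textbf{Case 2: $f_{\mathsf{Nand}} \in \ppol(\Gamma)$ but $g_{\mathsf{Impl}} \notin \ppol(\Gamma)$.} Symmetrically, the failure of $g_{\mathsf{Impl}}$ on some $R \in \Gamma$ yields, via Proposition~\ref{prop:no_preserve_f}, an efpp-definable binary relation $R'$ with $00,01,11 \in R'$, $10 \notin R'$ — i.e.\ $\mathsf{Impl}$ (or a relation recovering it by conjunction). I then reduce \textsc{Multicoloured Clique} to $\ohcsp(\{\mathsf{Impl}\})$: for each colour class $i$ take a variable $x_i$ over the vertices of that class, and for each pair $i\ne j$ take an "edge variable" $e_{ij}$ whose domain enumerates the edges of $G$ between classes $i$ and $j$; use $\mathsf{Impl}([e_{ij}= uv],[x_i = u])$ and $\mathsf{Impl}([e_{ij}=uv],[x_j=v])$ to enforce consistency between the chosen edge and its endpoints (an $\mathsf{Impl}$ in both "directions" pins the chosen edge's endpoints to the chosen vertices), giving a $k + \binom{k}{2}$-variable instance whose solutions are exactly $k$-cliques. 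Then push hardness to $\ohcsp(\{R\})$ and to $\ohcsp(\Gamma)$ exactly as in Case~1 via Proposition~\ref{prop:onehot-ppdef}.

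\textbf{Main obstacle.} The delicate point is making the efpp-definitions in both cases genuinely \emph{equality-free} while still extracting a clean $\mathsf{Nand}_2$ / $\mathsf{Impl}$: a naive coordinatewise-closure argument tends to want equality atoms to merge repeated columns. Proposition~\ref{prop:no_preserve_f} is built precisely to hand this to us, so the work is in verifying that the separating relation it produces efpp-defines $\mathsf{Nand}_2$ (resp.\ $\mathsf{Impl}$) by a short conjunction and possible coordinate pinning, and that pinning coordinates is harmless in the one-hot setting (a pinned Boolean literal $[x=a]$ is still realizable, and constant unary relations are available or can be efpp-introduced once $\mathsf{Nand}_2$ or $\mathsf{Impl}$ is present). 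The reductions themselves and the transfer through Proposition~\ref{prop:onehot-ppdef} are routine and do not blow up the number of variables beyond a function of $k$, so the FPT-reduction conditions hold. Combining the FPT side (Lemma~\ref{lem:one-hot-fpt-alg}) with the two hardness cases yields the dichotomy.
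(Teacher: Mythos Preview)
Your overall approach matches the paper's: split on which of $f_{\mathsf{Nand}}$, $g_{\mathsf{Impl}}$ fails on some $R\in\Gamma$, extract a $\mathsf{Nand}_2$- or $\mathsf{Impl}$-like relation by identifying arguments of $R$ according to the column patterns $(t_1[i],t_2[i],t_3[i])$, and reduce from \textsc{Independent Set} respectively \textsc{Clique}. The reductions you sketch (modulo the edge/non-edge wobble in Case~1) are essentially the paper's.

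The gap is exactly where you flag it, and your proposed fix does not close it. Identifying arguments of $R$ does not in general produce $\mathsf{Nand}_2$ on the nose: the columns with pattern $(0,0,0)$ or $(1,1,1)$ survive as extra constant arguments, so you land on one of
\[
S_1=\mathsf{Nand}_2,\quad S_2=\{000,010,100\},\quad S_3=\{001,011,101\},\quad S_4=\{0001,0101,1001\}
\]
(and analogously $T_1,\ldots,T_4$ in the $\mathsf{Impl}$ case). Your suggestion that the constant columns can be pinned because ``constant unary relations can be efpp-introduced once $\mathsf{Nand}_2$ or $\mathsf{Impl}$ is present'' is circular: you only have $S_i$, not $\mathsf{Nand}_2$. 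And if you instead efpp-define $\mathsf{Nand}_2$ from $S_i$ with a quantifier (e.g.\ $\exists z\,S_2(x,y,z)$), the transfer you cite (Proposition~\ref{prop:onehot-ppdef}) introduces a fresh existential variable \emph{per constraint} and is only a polynomial reduction, not an FPT one. (Also, Proposition~\ref{prop:no_preserve_f} is not quite the right hook: it concerns $\MM$-preservation and hands you a relation whose arity equals the number of input/output pairs of $f$, i.e.\ arity $4$ here, not a binary one.)

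The paper's fix works at the \ohcsp level rather than the base-language level: show $\ohcsp(\mathsf{Nand}_2)\leq_{\rm FPT}\ohcsp(S_i)$ directly by introducing one (for $S_2,S_3$) or two (for $S_4$) \emph{globally shared} fresh variables. For example, for $S_2$ pick any $c\in\numdom{n}$, add a single variable $y_0$, and replace each $\mathsf{Nand}_2([x=a],[y=b])$ by $S_2([x=a],[y=b],[y_0=c])$; any assignment with $y_0\neq c$ makes the third slot $0$ uniformly. This costs $O(1)$ variables in total, so it is FPT. The final transfer from $\ohcsp(S_i)$ to $\ohcsp(R)$ is then harmless precisely because $S_i$ is obtained from $R$ by argument identification alone (quantifier-free), so each $S_i$-constraint becomes a single $R$-constraint with no new variables.
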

\begin{proof}
If $\Gamma$ is weakly 0-separable, then the result follows from
Lemma~\ref{lem:one-hot-fpt-alg}. Assume that $\Gamma$ is not weakly 0-separable.
This implies, as noted earlier, that
there exists an $r$-ary relation $R \in \Gamma$ such that $R$ is not preserved by 
\begin{enumerate}
   \item $f_{\mathsf{Nand}}(0,0,0) = 0$, $f_{\mathsf{Nand}}(0,0,1) = 1$, $f_{\mathsf{Nand}}(0,1,0) = 1$, $f_{\mathsf{Nand}}(1,1,1) = 1$, or
  \item $g_{\mathsf{Impl}}(0,0,0) = 0$, $g_{\mathsf{Impl}}(0,1,1) = 0$, $g_{\mathsf{Impl}}(0,0,1) = 1$, $g_{\mathsf{Impl}}(1,1,1) = 1$.
\end{enumerate}
We handle each case case in turn and we use the parameterized versions of {\sc Independent Set} and {\sc Clique} (see Section~\ref{sec:param-comp}) as the
 starting points for our reductions.

First, assume that $R$ is not invariant under $f_{\mathsf{Nand}}$. Then there exists $t_1, t_2, t_3 \in R$ such that 
$f_{\mathsf{Nand}}(t_1, t_2, t_3) = t \notin R.$ 
Since the application of $f$ is not projective (in the sense that $f_{\mathsf{Nand}}(t_1, t_2, t_3)  \in \{t_1, t_2, t_3\}$)  there exist distinct $i,j \in [r]$ such that 
$$\{(t_1[i], t_2[i], t_3[i]), (t_1[j], t_2[j], t_3[j])\} = \{(0,0,1), (0,1,0)\}.$$ Furthermore, for every other $k \in [r]$ it holds that 
$$(t_1[k], t_2[k], t_3[k]) \in \{(0,0,1), (0,1,0), (0,0,0), (1,1,1)\}$$ 
since otherwise the application of $f_{\mathsf{Nand}}$ would not have been defined. By identifying arguments it follows that $S \in \efpp{R}{}$ for $S \in \{S_1, S_2, S_3, S_4\}$ where
\[
\begin{aligned}
S_1 &= \{00,\ 01,\ 10\} = \mathsf{Nand}_2, \\
S_2 &= \{000,\ 010,\ 100\}, \\
S_3 &= \{001,\ 011,\ 101\}, \: {\rm and} \\
S_4 &= \{0001,\ 0101,\ 1001\}.
\end{aligned}
\]
 That is, we can define $\mathsf{Nand}_2$ augmented with constant arguments. 
 
 We begin by showing that $\ohcsp(\mathsf{Nand}_2)$ is W[1]-hard by a reduction from \textsc{Independent Set} and the other cases then follow by simple reductions. Hence, let $(G,k)$ be an instance of \textsc{Independent Set} and with $V(G) = \{v_0, \ldots, v_{n-1}\}$ for $n \geq 1$. We introduce $k$ variables $x_1, \ldots, x_k$ and construct an instance of $\ohcsp(\mathsf{Nand}_2)$ over domain $\numdom{n}$. For every $i \in \numdom{n}$ we introduce the constraints 
\[\bigwedge_{k_1, k_2 \in [k], k_1 \neq k_2} \mathsf{Nand_2}([x_{k_1} = i], [x_{k_2} = i])\] and for every edge $\{v_i, v_j\} \in E(G)$ the constraints 
\[\bigwedge_{k_1, k_2 \in [k], k_1 \neq k_2} \mathsf{Nand_2}([x_{k_1} = i], [x_{k_1} = j]).\] 
The correctness is clear from the construction and we turn to the relations $S_2, S_3, S_4$:
\begin{enumerate}
  \item $\ohcsp(\mathsf{Nand}_2) \leq_{{\rm FPT}} \ohcsp(S_2)$: given an instance with variables $x_1, \ldots, x_k$ over domain $\numdom{n}$ (where we without loss of generality assume that $n \geq 2$ since $n = 1$ is trivial) we introduce one fresh variable $y_0$, arbitrarily choose $c \in \numdom{n}$, and replace each constraint $\mathsf{Nand}([x_i = d], [x_j = d'])$ with $S_2([x_i = d], [x_j = d'], [y_0 = c])$.
  \item $\ohcsp(\mathsf{Nand}_2) \leq _{{\rm FPT}}\ohcsp(S_3)$: entirely analogous to the above case.
  \item $\ohcsp(\mathsf{Nand}_2) \leq_{{\rm FPT}} \ohcsp(S_4)$: similar to the above cases but we introduce two variables $y_0$ and $y_1$ instead.
\end{enumerate}
We now continue with the second case where $R$ is not preserved by $g_{\mathsf{Impl}}$. By following the above arguments one can prove that $T_1, T_2, T_3, T_4 \in \efpp{R}{}$ for 
\[
\begin{aligned}
T_1 &= \{00,\ 01,\ 11\} = \mathsf{Impl}, \\
T_2 &= \{000,\ 010,\ 110\}, \\
T_3 &= \{001,\ 011,\ 111\}, \: {\rm and}\\
T_4 &= \{0001,\ 0101,\ 1101\}.
\end{aligned}
\]
Thus, it is possible to define $\mathsf{Impl}$ augmented with constant arguments. We can then prove W[1]-hardness of $\ohcsp(\mathsf{Impl})$ as follows. Let $(G,k)$ denote an arbitrary instance of parameterized \textsc{Clique} where we without loss of generality assume that $V(G) = \{v_0,\dots,v_{n-1}\}$ for some $n \geq 1$. We introduce $k$ variables $x_0, \ldots, x_{k-1}$ and $k \choose 2$ variables $\{e_{\{i,j\}} \mid i,j \in \numdom{k}, i \neq j\}$, all over domain $\numdom{n}^2$. 
Here, the intention is that $x_1, \ldots, x_k$ encode the selection of the elements in the $k$-clique, and should thus have domain $V(G)$ which we encode by $(0, 0), \ldots, (0, n-1)$. To force the edge variables to not take values outside $E(G)$ we for each $(a,b) \in \numdom{n}^2 \setminus \{(c,d) \; | \; \{v_c,v_d\} \in E(G)\}$ and every $e_{\{i,j\}}$ variable introduce the constraint $\mathsf{Imp}([e_{\{i,j\}} = (a,b)], [e_{\{i,j\}} = (b,a)])$.
Next, for each edge $\{v_a, v_b\} \in E$ introduce constraints $$\bigwedge_{i,j \in [k], i \neq k} \mathsf{Impl}([e_{\{i,j\}} = (a,b)], [x_i = (0,a)])$$ and  
$$\bigwedge_{i,j \in [k], i \neq k} \mathsf{Impl}([e_{\{i,j\}} = (a,b)], [x_j = (0,b)]).$$ Similarly to the $\mathsf{Nand}_2$ case above it is easy to show that hardness transfers to $\ohcsp(T_2)$, $\ohcsp(T_3)$, and $\ohcsp(T_4)$, and we conclude that $\ohcsp(R)$ is W[1]-hard.
\end{proof}

The algorithm underlying Theorem~\ref{thm:one-hot-fpt-w1} is quite flexible and can be
extended in various directions. One may, for instance, use it for solving the
$t$-Hot-CSP$(\Gamma)$ problem where we have maps $m : \numdom{n} \to \numdom{2}$ such that $m(x) = 1$ for at most $t$ values $x \in \numdom{n}$. This problem is in FPT when $\Gamma$ is weakly 0-separable: we only need
to slightly generalize the brute-force steps in the algorithm from Theorem~\ref{thm:one-hot-fpt-w1}.
Since the hardness result is applicable also for $t$-Hot-CSP$(\Gamma)$, we arrive at
a full FPT/W[1] dichotomy. The P/NP-dichotomy for One-Hot-CSP$(\Gamma)$ can also
be generalized to $t$-Hot-CSP$(\Gamma)$.
It is not difficult to prove that $2$-Hot-CSP$(\{\mathsf{Or}_2\})$ is NP-hard by a reduction from 3-{\sc Colourability}:
consider the variable domain $[3]$ and note that
\[x \neq y \Leftrightarrow \mathsf{Or}_2([x \in \{0,1\}],[y \in \{0,1\}])
\wedge
 \mathsf{Or}_2([x \in \{0,2\}],[y \in \{0,2\}])
\wedge
 \mathsf{Or}_2([x \in \{1,2\}],[y \in \{1,2\}]).
\]
Additionally, there is a simple polynomial-time algorithm for $t$-Hot-CSP$(\{0,1\})$.
We thus get a P/NP-dichotomy
for $t$-Hot-CSP$(\Gamma)$, $t \in \Nat$, by
combining these observations with the classification in Section~\ref{sec:one-hot-p-np}.


\section{Monotone Maps} \label{sec:monotone}

We now turn our attention towards monotone map families and the problem $\mcsp(\Gamma)$. Determining the classical complexity of this problem turns out to be relatively straightforward and we obtain a complete dichotomy in Section~\ref{sec:monotone_classical}. The parameterized complexity is, as expected, harder, but in the light of Theorem~\ref{thm:galois-wrapup} one should expect this question to be addressable by $\mo$-preserving polymorphisms. We may also assume that the language in question is not covered by the tractable cases in Section~\ref{sec:monotone_classical} since the parameterized complexity is not interesting in these cases.

We tackle this by introducing the {\em connector polymorphism} in Section~\ref{sec:line} whose presence/absence is intrinsically linked to whether the source language $\Gamma$ can express all permutations. Indeed, we manage to show that if $\Gamma$ is not preserved by the connector polymorphism then $\Gamma$ can express all permutations in which case $\mcsp(\Gamma)$ is W[1]-hard. Whether the presence of the connector polymorphism is sufficient for an FPT algorithm is an intriguing question
and we successfully resolve this (in Section~\ref{sec:line_fpt}) for binary source languages by a twin-width style dynamic programming algorithm. 
We continue in Section~\ref{sec:non_binary} by investigating other avenues for proving a higher-arity FPT/W[1] dichotomy by studying additional consequences of the connector polymorphism.
%
Finally, we want to
demonstrate that $\mcsp(\Gamma)$ restricted to binary constraints is a powerful formalism in its own right. We do so by  showcasing that parts of the parameterized dichotomy for \textsc{MinCSP}~\cite{KimKPW23fa3} can be greatly simplified by the $\mcsp(\Gamma)$ framework (Section~\ref{sec:boolean_mincsp}).

%

\subsection{P versus NP Dichotomy} \label{sec:monotone_classical}

The classical complexity of $\mcsp(\Gamma)$ can be completely captured by the three pattern polymorphisms $\pmin$, $\pmax$ and $\pmedian$ from Definition \ref{def:minmaxmedian}. We will show that the presence of any of these polymorphisms results in a polynomial-time algorithm based on arc consistency, while the absence of all three results in NP-hardness via a reduction from {\sc Exact 3-Hitting Set}. We begin with the presence of $\pmin$ or $\pmax$.

\begin{lemma} \label{lem:minandmaxisinP}
    Let $\Gamma$ be a constraint language. Suppose that $\pmin$ or $\pmax$ $\mo$-preserve $\Gamma$. Then, $\mcsp(\Gamma)$ is in $P$.
\end{lemma}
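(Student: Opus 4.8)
The plan is to show that if $\Gamma$ is $\mo$-preserved by $\pmin$ (the case of $\pmax$ being symmetric by reversing the order on $\numdom{n}$), then arc consistency decides $\mcsp(\Gamma)$ in polynomial time. Recall from Example~\ref{example:min} that the concrete homomorphic image of $\pmin$ under monotone maps into a domain $\numdom{n}$ is precisely the ordinary $\min$ operation on $\numdom{n}$, which is total. So the hypothesis says: every constraint $R(m_1(x_1),\dots,m_r(x_r))$ arising in an instance has the property that its satisfying assignments are closed under coordinatewise $\min$ on $\numdom{n}$. First I would run the polynomial-time arc-consistency procedure described in Section~\ref{sec:statement} (which runs in polynomial time since $n$ is given in unary and the maximum arity $r$ is a fixed constant), obtaining reduced domains $(D_1,\dots,D_k)$ with each $D_i \subseteq \numdom{n}$. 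If some $D_i = \emptyset$, report ``no''. Otherwise I claim the assignment $f(x_i) := \min D_i$ is a solution, which gives the algorithm.

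The key step is to verify that $f$ defined by taking the minimum of each reduced domain is satisfying. Fix a constraint $c = R(m_1(x_1),\dots,m_r(x_r))$ with $m_i$ monotone. By arc consistency, for each variable $x_i$ occurring in $c$ and each value $d \in D_i$ there is a tuple of $c$ consistent with $x_i = d$; in particular, for each $i$ there is a satisfying assignment $g_i$ of $c$ with $g_i(x_i) \in D_i$ and $g_i(x_j) \in D_j$ for all other $j$ (arc consistency guarantees the witnessing tuple respects the current domains). Now take the coordinatewise minimum $g := \min(g_1,\dots,g_r)$ over the variables of $c$. Since the min-closure of $c$ holds (the $\mo$-preservation hypothesis applied to $R$ gives exactly that the relation $R(m_1(x_1),\dots,m_r(x_r)) \subseteq \numdom{n}^r$ is invariant under $\min$ on $\numdom{n}$), $g$ satisfies $c$. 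And for each variable $x_i$ of $c$, $g(x_i) = \min_j g_j(x_i) \le g_i(x_i)$; more carefully, I want $g(x_i) = \min D_i = f(x_i)$. This requires a small argument: $\min D_i$ is attained by some $d^* \in D_i$, and the witness $g_{i^*}$ for that specific value satisfies $g_{i^*}(x_i) = \min D_i$, while every other $g_j$ has $g_j(x_i) \ge \min D_i$; hence $g(x_i) = \min D_i = f(x_i)$. So $g$ agrees with $f$ on all variables of $c$, and therefore $f$ satisfies $c$. As $c$ was arbitrary, $f$ is a solution.

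The main obstacle — really the only place needing care — is making precise that min-closure at the level of the abstract pattern $\pmin$ transfers to the concrete constraints in an instance. This is where Example~\ref{example:min} and the definition of $\MM$-preservation do the work: ``$\pmin$ $\mo$-preserves $\Gamma$'' unpacks to ``for every relation $(E,R) \in \Gamma$ and every $n$, the concrete image $(\pmin)_{\mo^E_n}$ preserves the lifted relation'', and $(\pmin)_{\mo^E_n}$ is the total operation $\min$ on $\numdom{n}$; composing monotone maps $m_i \colon \numdom{n} \to E$ with $\min$ on $\numdom{n}$ commutes appropriately (monotone maps are $\min$-homomorphisms), so the relation $\{(a_1,\dots,a_r) \in \numdom{n}^r : (m_1(a_1),\dots,m_r(a_r)) \in R\}$ is $\min$-closed. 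Once this is spelled out, the rest is the standard ``arc consistency plus a semilattice polymorphism solves CSP'' argument, adapted to the $\udcsp$ setting. For $\pmax$, apply the same reasoning after replacing the order on $\numdom{n}$ by its reverse (under which monotone maps become $\max$-homomorphisms, anti-monotone become monotone, etc.), or equivalently observe directly that arc consistency followed by taking $\max D_i$ works.
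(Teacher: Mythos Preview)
Your proposal is correct and follows essentially the same approach as the paper: run arc consistency, then assign each variable the minimum of its reduced domain, and verify this assignment constraint-by-constraint using that the lifted constraints on $\numdom{n}$ are closed under coordinatewise $\min$. The paper phrases the verification as a proof by contradiction (assume some constraint fails, then applying $\pmin$ to the $r$ arc-consistency witnesses produces the offending tuple inside the relation), while you phrase it directly, but the content is identical. Your additional paragraph explaining why the abstract pattern $\pmin$ yields concrete $\min$-closure on $\numdom{n}$ (via monotone maps being $\min$-homomorphisms, or equivalently via the concrete homomorphic image computation of Example~\ref{example:min}) makes explicit a step the paper leaves implicit; note only that your initial specification of $g_i$ as having ``$g_i(x_i) \in D_i$'' should from the outset be ``$g_i(x_i) = \min D_i$'', which your ``small argument'' then correctly supplies.
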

\begin{proof}
    We prove the statement for $\pmin$. The proof for $\pmax$ is analogous but with the domain ordering reversed. Let $I = (n, V, C)$ be an instance of $\mcsp(\Gamma)$ where $|C| = m$, $n$ domain size, and $V = \{x_1,\dots,x_k\}$. 
    We apply the polynomial-time arc-consistency algorithm from Section~\ref{sec:csp} to $I$
    and let $(D_1,\dots,D_k)$ denote the vector of variable domains it returns.
    If $D_i=\emptyset$ for some $1 \leq i \leq k$, then $I$ has no solution.

    Otherwise, we claim that we can set every variable to the smallest remaining value in its domain. Suppose to the contrary that some constraint $R(x_1, \ldots, x_r) \in C$ is not satisfied, and let $c_1, \ldots, c_r$ be their values. For each $i$ with $1 \leq i \leq r$, there exists, by arc consistency, a tuple $(c'_1, \ldots, c'_r)$ satisfying $R$ with $c'_i = c_i$ and where $c'_j \in D_j$ ($1 \leq j \leq r$). Since each $c_j$ is the smallest remaining value in the domain of $x_j$, we have $c'_j \geq c_j$. Now, consider applying the $\pmin$ pattern to all $r$ tuples achieved this way. For each $i$, this is allowed and results in $c_i$ since this value is reached, all other values are larger, and $\mo$ preserves domain ordering. Since $\pmin$ $\mo$-preserves $\Gamma$, the resulting tuple $(c_1, \ldots c_n)$ is in $R$, which is a contradiction and proves the claim. This completes the proof.
\end{proof}

For $\pmedian$, we observe that any interpretation of $\median$ satisfies \[\median(x,x,y) = \median(x,y,x) = \median(y,x,x) = x\] and is thus a \emph{majority polymorphism}. We use the following result.


\begin{theorem}[Theorems 61 and 68 in \cite{DBLP:conf/dagstuhl/BartoKW17}]
If a constraint language $\Gamma$ has a majority polymorphism, then CSP$(\Gamma)$
can be solved by the singleton arc-consistency procedure.
\end{theorem}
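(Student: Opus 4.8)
This is the classical ``majority polymorphism $\Rightarrow$ bounded width'' result, and the plan is to reconstruct the two ingredients that Theorems~61 and~68 of~\cite{DBLP:conf/dagstuhl/BartoKW17} combine. Fix a majority operation $m\colon D^3\to D$ that is a polymorphism of $\Gamma$. The first and main step is the $2$-\emph{decomposition lemma}: every relation $R\subseteq D^r$ preserved by $m$ is determined by its binary projections, i.e.\ $a\in R$ whenever $a|_{\{i,j\}}\in\mathrm{pr}_{\{i,j\}}(R)$ for all $i\le j$ in $[r]$. I would prove this by induction on a coordinate set $S\subseteq[r]$, with inductive statement ``if every pair of coordinates from $S$ extends to a tuple of $R$, then $a|_S$ extends to a tuple of $R$''. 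The cases $|S|\le 2$ are immediate. For $|S|\ge 3$ one fixes three coordinates $s_1,s_2,s_3\in S$, applies the hypothesis to the $(|S|-1)$-sets $S\setminus\{s_1\}$, $S\setminus\{s_2\}$, $S\setminus\{s_3\}$ to get $t_1,t_2,t_3\in R$ agreeing with $a$ off $s_1$, $s_2$, $s_3$ respectively, and notes that on each coordinate of $S$ at least two of $t_1,t_2,t_3$ carry the value of $a$; hence $m(t_1,t_2,t_3)\in R$ (since $m$ is a polymorphism) agrees with $a$ throughout $S$.

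The second step turns this into an algorithm. Since primitive positive definitions preserve polymorphisms, every relation pp-definable from $\Gamma$ is again $m$-closed, hence $2$-decomposable; applied to an instance $I=(V,C)$, this says the ``solution relation'' on $V$ (and on every subset of $V$) is $2$-decomposable. From here the standard Feder--Vardi argument gives that $\Gamma$ has \emph{strict width $2$}, so a suitable local-consistency closure decides satisfiability; I would then check that singleton arc consistency already suffices, using that SAC is polynomial-time computable for bounded-arity languages (Section~\ref{sec:csp}). Concretely: enforce SAC; if a domain becomes empty, report unsatisfiable; otherwise build a solution by repeatedly choosing a variable, fixing it to a value that survives SAC, and re-enforcing SAC, the point being that $2$-decomposability of the binary relations derived from the instance prevents any domain from being emptied, so the greedy process terminates with a satisfying assignment, and once all domains are singletons arc consistency forces every constraint to hold.

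The only delicate point is the last one: essentially all of the substance is in the $2$-decomposition lemma, and the passage to an algorithm is routine ``bounded width'' bookkeeping, except that one must argue that the comparatively weak SAC procedure --- rather than the full $(2,3)$-(path-)consistency closure normally invoked for majority polymorphisms --- already certifies satisfiability. That step again routes through the majority operation and the $2$-decomposability of $m$-closed relations, and is precisely what Theorems~61 and~68 of~\cite{DBLP:conf/dagstuhl/BartoKW17} supply; in the write-up I would cite them directly rather than reprove this part.
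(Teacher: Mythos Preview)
The paper does not prove this theorem at all: it is stated as a citation (Theorems~61 and~68 of~\cite{DBLP:conf/dagstuhl/BartoKW17}) and used as a black box to derive Corollary~\ref{corr:medianisinP}. Your sketch of the Baker--Pixley $2$-decomposition argument followed by the bounded-width/SAC analysis is a correct outline of the classical proof, and you yourself note that you would cite the reference rather than reprove the SAC step; that is exactly what the paper does, so no proof is expected here.
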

\begin{corollary} \label{corr:medianisinP}
    Let $\Gamma$ be a constraint language. Suppose that $\pmedian$ $\mo$-preserves $\Gamma$. Then, $\mcsp(\Gamma)$ is in $P$.
\end{corollary}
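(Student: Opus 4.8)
The plan is to reduce an arbitrary instance of $\mcsp(\Gamma)=\udcsp(\Gamma,\mo)$ to an ordinary finite-domain CSP over the input domain $\numdom{n}$ whose constraint relations all admit the usual ternary median operation on $\numdom{n}$ as a majority polymorphism, and then invoke the cited theorem that singleton arc-consistency decides such instances.

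First I would describe the reduction. Let $(n,V,C)$ be an instance and let $r$ bound the (constant) maximum arity of relations in $\Gamma$. For each constraint $(R,(v_1,\dots,v_r),(m_1,\dots,m_r))$ with $R\subseteq\numdom{d}^r$ and $m_i\in\mo$, form the \emph{pullback} relation
\[
  R'=\{(a_1,\dots,a_r)\in\numdom{n}^r \mid (m_1(a_1),\dots,m_r(a_r))\in R\},
\]
i.e.\ the relation defined by the single functionally guarded atom $R(m_1(x_1),\dots,m_r(x_r))$; thus $R'\in\fgpp{\Gamma}{\mo}$, and $R'$ can be listed explicitly in time $O(n^r)$, which is polynomial since $n$ is given in unary. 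Replacing each constraint by $R'(v_1,\dots,v_r)$ produces, in polynomial time, an equivalent CSP instance over the fixed domain $\numdom{n}$.

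The key step is to show that each pullback $R'$ is invariant under the standard median $\median_{\numdom{n}}$. Since $\mo\circ\mo=\mo$ and $\pmedian\in\mpol(\Gamma,\mo)$ by hypothesis, Lemma~\ref{lem:fgpp_to_mpol} (with $\MM=\NN=\mo$) gives $\pmedian\in\mpol(R',\mo)$; by definition this means that the concrete homomorphic image $\pmedian_{\mo^{\numdom{3}}_{\numdom{n}}}$ preserves $R'$. It remains to identify this homomorphic image with $\median_{\numdom{n}}$: any monotone $h\colon\numdom{3}\to\numdom{n}$ satisfies $h(0)\le h(1)\le h(2)$, and $\pmedian$ is defined only on permutations of $(0,1,2)$, where it returns the middle element $1$, so for any $x,y,z$ (sorted, say, as $p\le q\le s$) the map $h$ with $(h(0),h(1),h(2))=(p,q,s)$ realises $x,y,z$ in some order and yields $h(1)=\median(x,y,z)$, while no other value can arise. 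Hence $\pmedian_{\mo^{\numdom{3}}_{\numdom{n}}}$ is exactly the total, single-valued majority operation $\median_{\numdom{n}}$, which therefore preserves every $R'$. Equivalently, one can bypass the Galois machinery and verify directly that a monotone map commutes with ternary median, $m(\median(a,b,c))=\median(m(a),m(b),m(c))$, and apply this coordinatewise.

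Finally I would conclude: the reduced instance is a CSP over $\numdom{n}$ whose finitely many constraint relations are invariant under the common majority operation $\median_{\numdom{n}}$, hence by the cited theorem it is decided by the singleton arc-consistency procedure, which runs in polynomial time because all constraint arities are bounded by $r$ and $n$ is given in unary. The only point requiring genuine care is the identification of the homomorphic image $\pmedian_{\mo^{\numdom{3}}_{\numdom{n}}}$ with the ordinary median (equivalently, that monotone maps commute with median); everything else is bookkeeping, and the overall structure mirrors that of Lemma~\ref{lem:minandmaxisinP}.
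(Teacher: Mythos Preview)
Your proposal is correct and follows essentially the same approach as the paper: pull back each constraint through its monotone guards to a relation over $\numdom{n}$, observe that these pullbacks are preserved by the ordinary $\median$ operation on $\numdom{n}$, and then invoke the cited theorem on singleton arc-consistency for majority languages. The paper's own proof is terser on the key step---it simply asserts that $\pmedian$ $\mo$-preserving $\Gamma$ means each pulled-back constraint is preserved by $\median_{\numdom{n}}$---whereas you spell out this identification carefully (either via Lemma~\ref{lem:fgpp_to_mpol} or via the direct fact that monotone maps commute with ternary median); this extra detail is sound and arguably makes the argument more self-contained.
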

\begin{proof}
    Let $I = (n, V, C)$ be an instance of $\mcsp(\Gamma)$. We now convert this into a conventional CSP on the domain $\numdom{n}$ by replacing every constraint (of, say, arity $r$) with its set of satisfying assignments in $\numdom{n}^r$. Since $\pmedian$ $\mo$-preserves $\Gamma$, each of these constraints is preserved by the conventional $\median$ polymorphism on domain $\numdom{n}$. This is a majority polymorphism, hence the new CSP can be solved by the singleton arc-consistency procedure.
\end{proof}

We continue with NP-hardness. The first step is to use Proposition \ref{prop:anti_monotone_maps} which allows us to include anti-monotone maps as guarding functions. The next key step in our proof is the presence of certain permutations. Recall that we write $\sigma_{abc}$ for the permutation $\sigma(0) = a$, $\sigma(1) = b$, $\sigma(2) = c$.

\begin{lemma} \label{lem:no_median_perm_2}
    Let $\Gamma$ be a language that is not $(\mo\cup\mo')$-preserved by $\pmedian$.
    Then $\fgpp{\Gamma}{\mo\cup\mo'}$ contains all reversal permutations.
\end{lemma}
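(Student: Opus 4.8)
The plan is to use the Galois machinery of Section~\ref{sec:algebra} to first extract the single binary relation $\{01,10\}$ over $\numdom{2}$ --- which is the reversal permutation on $\numdom{2}$ --- and then bootstrap to reversal permutations on all domains. The starting point is a comparison of how $\pmin$, $\pmax$ and $\pmedian$ behave under $\mo\cup\mo'$. Over any $\numdom{d}$, the concrete homomorphic image $(\pmedian)_{(\mo\cup\mo')^{\numdom{3}}_{\numdom{d}}}$ is exactly the ordinary median on $\numdom{d}$: monotone maps $\numdom{3}\to\numdom{d}$ already realise $\median$ on every triple, and anti-monotone maps contribute nothing since $\median$ is symmetric. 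Likewise both $(\pmin)_{(\mo\cup\mo')^{\numdom{2}}_{\numdom{d}}}$ and $(\pmax)_{(\mo\cup\mo')^{\numdom{2}}_{\numdom{d}}}$ equal the binary multifunction $p_d$ with $p_d(a,b)=\{a,b\}$ for all $a,b$ (the constant map gives $\{a\}$ on the diagonal, and the two order-embeddings $\numdom{2}\hookrightarrow\{a,b\}$, one monotone and one anti-monotone, give both values off the diagonal). The crucial point is then: if $\Gamma$ were $(\mo\cup\mo')$-preserved by $\pmin$, every $R\in\Gamma$ over $\numdom{d}$ would be preserved by $p_d$, and a straightforward coordinate-by-coordinate induction shows that a relation preserved by $p_d$ must coincide with the Cartesian product $\prod_i\pi_i(R)$ of its projections; such a product is preserved by the coordinatewise median, so $\Gamma$ would be $(\mo\cup\mo')$-preserved by $\pmedian$. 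Contrapositively, the hypothesis forces $\Gamma$ to be $(\mo\cup\mo')$-preserved by neither $\pmin$ nor $\pmax$.

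Next I would apply Proposition~\ref{prop:no_preserve_f} twice. With $f=\pmin$ (domain $\numdom{2}$, input/output pairs $((0,1),0)$ and $((1,0),0)$) it produces a relation $R_{\min}\in\fgpp{\Gamma}{\mo\cup\mo'}$ over $\numdom{2}$ with $\{01,10\}\subseteq R_{\min}\subseteq\numdom{2}^2\setminus\{00\}$, and with $f=\pmax$ it produces $R_{\max}\in\fgpp{\Gamma}{\mo\cup\mo'}$ with $\{01,10\}\subseteq R_{\max}\subseteq\numdom{2}^2\setminus\{11\}$. Their conjunction $Q(x,y)\equiv R_{\min}(x,y)\wedge R_{\max}(x,y)$ is exactly $\{01,10\}$, the reversal permutation on $\numdom{2}$, and since $\mo\cup\mo'$ is closed under composition, $Q\in\fgpp{\Gamma}{\mo\cup\mo'}$ by Proposition~\ref{prop:can_compose}. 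For an arbitrary domain $\numdom{n}$ the reversal permutation $r_n$, $r_n(i)=n-1-i$, is then defined by
\[
  r_n^\bullet(x,y)\ \equiv\ \bigwedge_{i=1}^{n-1} Q\bigl([x\ge i],\,[y\ge n-i]\bigr),
\]
where each $[x\ge i],[y\ge n-i]\colon\numdom{n}\to\numdom{2}$ is monotone; testing the right-hand side at $i=x$ and $i=x+1$ confirms that it holds precisely when $x+y=n-1$. A final application of Proposition~\ref{prop:can_compose} places every $r_n^\bullet$ in $\fgpp{\Gamma}{\mo\cup\mo'}$, which is the claim.

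The step that needs the most care is the ``crucial point'' in the first paragraph: one must notice that, in sharp contrast with plain $\mo$-preservation, being $(\mo\cup\mo')$-preserved by $\pmin$ is restrictive enough to force every relation of $\Gamma$ to be a Cartesian product of unary relations, and hence to force $(\mo\cup\mo')$-preservation by $\pmedian$ as well. Once that implication is in hand, the rest is a direct invocation of Proposition~\ref{prop:no_preserve_f} together with the elementary gadget for reversal permutations. One could instead apply Proposition~\ref{prop:no_preserve_f} directly to $f=\pmedian$, obtaining an arity-$6$ relation over $\numdom{3}$ whose three argument-columns lie in it while the constant-$1$ tuple does not, but turning that relation into $\{01,10\}$ appears to require a considerably more delicate choice of guarding maps, so the route through $\pmin$ and $\pmax$ is the one I would take.
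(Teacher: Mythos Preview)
Your proposal is correct and follows essentially the same route as the paper: show that $\Gamma$ cannot be $(\mo\cup\mo')$-preserved by $\pmin$ (equivalently $\pmax$), apply Proposition~\ref{prop:no_preserve_f} to extract binary relations over $\numdom{2}$ excluding $00$ respectively $11$, intersect to get $\{01,10\}$, and then build all reversals exactly as in Lemma~\ref{lemma:can_define_reverse}.

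The only noteworthy difference is in how you justify the first implication. You compute that $(\pmin)_{(\mo\cup\mo')}=(\pmax)_{(\mo\cup\mo')}=p_d$ with $p_d(a,b)=\{a,b\}$ and then argue that $p_d$-preservation forces $R=\prod_i\pi_i(R)$, hence $\median$-preservation. The paper instead observes that ``guarding function symmetry'' makes $(\mo\cup\mo')$-preservation by $\pmin$ and $\pmax$ equivalent, and then invokes the identity $\median(x,y,z)=\min(\max(x,y),\max(y,z),\max(z,x))$. This is really the same observation viewed differently: since $\min$ and $\max$ are both selections of the multifunction $p_d$, preservation by $p_d$ immediately gives preservation by both $\min$ and $\max$, and then the identity yields $\median$. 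Your Cartesian-product detour is correct but slightly longer than necessary. Similarly, where you invoke Proposition~\ref{prop:no_preserve_f} twice (once for $\pmin$, once for $\pmax$), the paper invokes it once and obtains the second relation by composing all guarding functions with the order reversal; since $(\pmin)_{(\mo\cup\mo')}=(\pmax)_{(\mo\cup\mo')}$, these two manoeuvres produce the same pair of relations.
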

\begin{proof}
    The conventional $\median$ can be written in terms of $\min$ and $\max$ using
    \[
        \median(x,y,z) = \min(\max(x,y),\max(y,z),\max(z,x))
    \]
    Now, if $\Gamma$ were $(\mo\cup\mo')$-preserved by $\pmin$, then because of guarding function symmetry it would be $(\mo\cup\mo')$-preserved by $\pmax$ as well, and thus by $\pmedian$, which is a contradiction. Hence, $\Gamma$ is not $(\mo\cup\mo')$-preserved by $\pmin$.
    
    Proposition \ref{prop:no_preserve_f} now shows that we can $(\mo\cup\mo')$-fgpp-define a relation $R$ on the domain $\numdom{2}$ satisfying $\{01, 10\} \in R \subseteq \numdom{2}^2 \setminus \{00\}$. By reversing the order of all guarding functions, we additionally find a relation $R'$ satisfying $\{01, 10\} \in R \subseteq \numdom{2}^2 \setminus \{11\}$. Together, $R \land R'$ exactly defines the binary reversal $\sigma_{10}$. All other reversal permutations follow analogously to Lemma \ref{lemma:can_define_reverse}.
\end{proof}

\begin{lemma} \label{lem:no_median_perm_3}
    Let $\Gamma$ be a language that is not $(\mo\cup\mo')$-preserved by $\pmedian$.
    Then $\efpp{\Gamma}{\mo\cup\mo'}$ contains all permutations on the domain $\numdom{3}$.
\end{lemma}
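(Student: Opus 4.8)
The plan is to produce one non-reversal permutation of $\numdom{3}$ and then generate all of $S_{3}$ by composition. First I would record what Lemma~\ref{lem:no_median_perm_2} gives us: the binary reversal $\sigma_{10}=\{01,10\}$ and the reversal $\sigma_{210}=\{02,11,20\}$ on $\numdom{3}$ are in $\fgpp{\Gamma}{\mo\cup\mo'}$; composing $\sigma_{10}$ with the anti-monotone negation map yields Boolean equality $\{00,11\}$, and feeding this to the constant and threshold maps $\numdom{3}\to\numdom{2}$ produces every interval unary relation on $\numdom{3}$ (in particular the singletons $\{0\},\{1\},\{2\}$) and the identity relation $\{00,11,22\}$ on $\numdom{3}$. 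Next I would observe that the permutation graphs on $\numdom{3}$ lying in $\efpp{\Gamma}{\mo\cup\mo'}$ are closed under inversion (swap coordinates) and composition, via $(\sigma\tau)^{\bullet}(x,z)\equiv\exists y\,(\tau^{\bullet}(x,y)\wedge\sigma^{\bullet}(y,z))$ and the fact that efpp-definitions compose (the quantifier-free case being Proposition~\ref{prop:can_compose}). Since $S_{3}$ is generated by $\sigma_{210}$ together with any permutation outside $\{\mathrm{id},\sigma_{210}\}$, it then suffices to place one $3$-cycle (or the transposition $\sigma_{021}$) in $\efpp{\Gamma}{\mo\cup\mo'}$.

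To manufacture such a permutation I would harvest a concrete relation from the failure of $\pmedian$ through a convenient pattern. The key point is that the ternary pattern $f^{*}$ on $\numdom{3}$ with input/output pairs $((0,1,2),1),((1,2,0),1),((2,0,1),1)$ has concrete homomorphic image $f^{*}_{(\mo\cup\mo')^{3}_{d}}$ equal to the coordinate-wise median on $\numdom{d}$ for every $d$: the images of these three patterns under a monotone map are the three cyclic rotations of a sorted triple, under an anti-monotone map the three cyclic rotations of its reverse, so together they realize every ordering of an arbitrary triple, always with the middle coordinate as output. Hence $\Gamma$ is not $(\mo\cup\mo')$-preserved by $f^{*}$ either, and Proposition~\ref{prop:no_preserve_f} delivers a ternary relation $R\in\fgpp{\Gamma}{\mo\cup\mo'}$ on $\numdom{3}$ with $T:=\{(0,1,2),(1,2,0),(2,0,1)\}\subseteq R$ and $(1,1,1)\notin R$.

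The technical core is turning $R$ into a non-reversal permutation graph. The guiding idea: for monotone/anti-monotone maps $g_{1},g_{2}\colon\numdom{3}\to\numdom{3}$ and the constant-$1$ map $c$, the constraint $R(g_{1}(x),g_{2}(y),c(x'))$ forbids exactly the cells $(x,y)$ with $(g_{1}(x),g_{2}(y),1)\notin R$; since preimages of $1$ under such maps are arbitrary intervals of $\numdom{3}$ (in particular arbitrary singletons), conjunctions of these constraints delete a controlled family of cells of $\numdom{3}\times\numdom{3}$, while dually the projection $\exists x'\,R(x,y,x')$ always contains the $3$-cycle $\{01,12,20\}$. In the extreme case $R=\numdom{3}^{3}\setminus\{(1,1,1)\}$, conjoining over the cells outside $\sigma_{021}^{\bullet}=\{00,12,21\}$ defines $\sigma_{021}^{\bullet}$ exactly; in the extreme case $R=T$, the projection is already $\{01,12,20\}$; in either case one then composes the result with $\sigma_{210}$ to obtain all of $S_{3}$ on $\numdom{3}$.

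The main obstacle is that Proposition~\ref{prop:no_preserve_f} only pins $R$ between $T$ and $\numdom{3}^{3}\setminus\{(1,1,1)\}$, and for intermediate $R$ the cell-deletion constraints can be too strong (deleting a cell of the intended graph) while the projection is too weak. I expect to close this gap in one of three ways: (i) normalize $R$ first — intersect it with its images under coordinate reversals and with the singleton relations, reaching a canonical shape near $(1,1,1)$, using that $R\supseteq T$ is preserved throughout; (ii) a short case analysis on which tuples of the form $(\ast,\ast,1)$ belong to $R$; or (iii) route the deleted boxes through a quantified third coordinate tied to $x,y$ via $\{01,10\}$ rather than through the constant map, making the construction insensitive to the behaviour of $R$ off $T$. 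Once one non-reversal permutation of $\numdom{3}$ lies in $\efpp{\Gamma}{\mo\cup\mo'}$, closure under composition finishes the proof.
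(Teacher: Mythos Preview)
Your setup is correct and in one respect slicker than the paper's: using the three-entry pattern $f^{*}$ with $(\mo\cup\mo')$-maps to realise the full median, you land via Proposition~\ref{prop:no_preserve_f} on a \emph{ternary} relation $R$ on $\numdom{3}$ with $T=\{012,120,201\}\subseteq R$ and $111\notin R$, whereas the paper applies the proposition to the six-entry pattern $\pmedian$ and obtains a $6$-ary relation carrying the same information. Your preliminary observations (Boolean equality, interval unaries, identity, closure of permutation graphs under composition) are all fine.

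The gap is exactly where you flag it: you do not actually carry out any of (i)--(iii), and none of them is obviously going to work as stated. Option (iii) in particular is too vague; the projection $\exists z\,R(x,y,z)$ can be all of $\numdom{3}^{2}$, and it is not clear how ``routing deleted boxes through $\{01,10\}$'' repairs this uniformly over all intermediate $R$. Option (i) fails as written because the coordinatewise reversal of $T$ is disjoint from $T$, so intersecting $R$ with its reversal need not preserve $T$.

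The paper's resolution works verbatim (and more simply) with your ternary $R$. Take the monotone maps $f,g\colon\numdom{3}\to\numdom{2}$ with $f(0)=f(1)=0,\ f(2)=1$ and $g(0)=0,\ g(1)=g(2)=1$, and form
\[
R_{\mathrm{cycle}}(x,y,z)\ \equiv\ R(x,y,z)\ \wedge\ \sigma_{10}(f(x),g(y))\ \wedge\ \sigma_{10}(f(y),g(z))\ \wedge\ \sigma_{10}(f(z),g(x)).
\]
Each $\sigma_{10}$-atom allows exactly the pairs $\{01,02,11,12,20\}$, and the three cyclic constraints together admit precisely $\{012,120,201,111\}$ (if any coordinate is $2$ the next is $0$, hence the remaining one is $1$; if any is $0$ the previous is $2$; otherwise all are $1$). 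Since $T\subseteq R$ and $111\notin R$, this yields $R_{\mathrm{cycle}}=T$. Existentially quantifying $z$ gives $\sigma_{120}^{\bullet}=\{01,12,20\}\in\efpp{\Gamma}{\mo\cup\mo'}$, and together with $\sigma_{210}$ this generates all of $S_{3}$ as you argue. So your plan becomes a complete proof once you replace the three speculative options by this single $\sigma_{10}$-narrowing step.
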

\begin{proof}
    Since $\Gamma$ is not closed under $\pmedian$, Lemma \ref{lem:no_median_perm_2} shows that we can $\mo$-fgpp-define the permutations $\sigma_{10}$ and $\sigma_{210}$ on domains $\numdom{2}$ and $\numdom{3}$. Meanwhile, Proposition \ref{prop:no_preserve_f} shows that we can $\mo$-fgpp-define a 6-ary relation $R$ on the domain $\numdom{3}$ satisfying $\{001122, 120201, 212010\} \in R \subseteq \numdom{3}^6 \setminus \{111111\}$.

    We now construct a new ternary relation $R_\textrm{cycle}$ on $\numdom{3}^3$. Let $(x,y,z)$ be the coordinates. We define $R_\textrm{cycle}$ as follows:
    \[
        R_\textrm{cycle}(x,y,z) \equiv R(i(x),r(z),i(y),r(y),i(z),r(x)) \land \sigma_{10}(f(x),g(y)) \land \sigma_{10}(f(y),g(z)) \land \sigma_{10}(f(z),g(x))
    \]

    where $r,i \colon \numdom{3} \to \numdom{3}$ are, respectively, the order-reversing permutation and identity function over $\numdom{3}$, $f$ is the monotone guarding function satisfying $0 \mapsto 0, 1 \mapsto 0, 2 \mapsto 1$, and $g$ instead satisfies $0 \mapsto 0, 1 \mapsto 1, 2 \mapsto 1$. We now claim that $R_\textrm{cycle}$ is satisfied by precisely the relations 012, 120, 201.

    Because of the chosen guarding functions, each relation $\sigma_{10}$ is satisfied by precisely five tuples: $01, 02, 11, 12, 20$. Hence, if a solution to $R_\textrm{cycle}$ contains a 2, the next coordinate must be a 0, and the remaining coordinate can then only be a 1. Similarly, if it contains a 0, then the previous coordinate must be a 2 and the final coordinate a 1. This leaves only four tuples that potentially satisfy $R_\textrm{cycle}$: 012, 120, 201, 111. Of these, 111 is excluded because of $R'$. Finally, the tuples 012, 120, 201 are valid; they satisfy all relations. This completes the claim.

    Overall, we note that variables $x$ and $y$ together precisely form the graph of the permutation 120. Hence, using $z$ as a quantified variable, we obtain a $\mo\cup\mo'$-efpp-definition of $\sigma_{120}$. Together with the reversal permutation 210, this generates all permutations on \numdom{3}, completing the proof.
\end{proof}

We now show that the presence of these permutations implies hardness.

\begin{lemma} \label{lem:perm_3_hardness}
    Let $\Gamma$ be a constraint language that $(\mo\cup\mo')$-efpp-defines all permutations on the domain $\numdom{3}$ and the binary reversal relation $\sigma_{10}$. Then $\mcsp(\Gamma)$ is NP-hard.
\end{lemma}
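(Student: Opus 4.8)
The plan is to reduce from {\sc Exact 3-Hitting Set} (shown NP-hard in the paper), encoding each element of the ground set by a variable with domain $\numdom{3}$, and using permutations on $\numdom{3}$ together with the binary reversal to enforce the ``exactly one element per triple'' condition. First I would set up the instance: given $I = (S, \cC)$ with $\cC = \{C_1, \ldots, C_m\}$ and $C_i = \{s_{i,0}, s_{i,1}, s_{i,2}\}$, I introduce one variable $x_i$ with domain $\numdom{3}$ for each clause $C_i$, with the intended reading that $x_i = j$ means ``$s_{i,j}$ is the chosen element of $C_i$''. The core difficulty is the same as in Lemma~\ref{lm:one-hot-one-in-three} and Lemma~\ref{lm:equality-hard}: I must enforce consistency, namely that whenever $s_{i,a} = s_{j,b}$, the choices agree, i.e.\ ($x_i = a \iff x_j = b$). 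Since our maps are monotone rather than one-hot, the trick is to encode the equivalence $[x_i = a] \iff [x_j = b]$ using a permutation relation: a permutation $\sigma$ on $\numdom{3}$ with $\sigma(a) = b$ applied via the constraint $\sigma^\bullet(m(x_i), m'(x_j))$, composed with monotone guards, does not directly give $[x_i=a]\iff[x_j=b]$ because $\sigma^\bullet$ relates the \emph{exact} value of $x_i$ to the exact value of $x_j$ on all three values at once, which is more than we want.

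The cleaner approach is therefore to encode the hitting-set choice differently. Since {\sc Exact 3-Hitting Set} wants a subset $S' \subseteq S$ with $|S' \cap C_i| = 1$, I instead introduce, for each element $s \in S$, a Boolean-flavoured variable — but we only have domain $\numdom{3}$, so I would reuse the monotone-to-Boolean maps $[\cdot \geq 1]$ and $[\cdot = 2]$ available on $\numdom{3}$. Concretely, following the structure of the proof of Lemma~\ref{lem:no_median_perm_3}, I would build an ``exactly-one-of-three'' constraint $R_{1/3}$ over $\numdom{3}^3$ out of the available permutations: using $\sigma_{10}$-type constraints composed with appropriate monotone guarding functions, one can express a constraint on three variables that is satisfied precisely by the tuples $\{(1,0,0),(0,1,0),(0,0,1)\}$ when the variables are restricted to $\{0,1\}$ (which in turn is forced by further unary-map constraints, themselves fgpp-definable from a reversal permutation as in Lemma~\ref{lemma:can_define_reverse}). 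Then for each clause $C_i$ I place an $R_{1/3}$ constraint on the three element-variables $y_{s_{i,0}}, y_{s_{i,1}}, y_{s_{i,2}}$, and the element-variables are shared across clauses so consistency is automatic — no flip/equality gadget is needed at all. A solution to this $\mcsp$ instance then corresponds exactly to the set $S' = \{s : y_s \geq 1\}$ (after projecting each $y_s$ to a Boolean value), and conversely.

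The steps in order: (1) invoke Proposition~\ref{prop:anti_monotone_maps} — legitimate because if $\pmin$ or $\pmax$ $\mo$-preserved $\Gamma$ then so would $\pmedian$, contradicting the hypothesis via Lemma~\ref{lem:minandmaxisinP}/Corollary~\ref{corr:medianisinP}, so we may freely use anti-monotone maps and hence (Lemma~\ref{lemma:can_define_reverse}) all order-reversing maps; (2) using the hypothesised $(\mo\cup\mo')$-efpp-definability of all permutations on $\numdom{3}$ plus $\sigma_{10}$, assemble the gadget relation $R_{1/3}$ over $\numdom{3}$ together with the unary ``$x \in \{0,1\}$'' restrictions, closing the construction under $\efpp{\Gamma}{\mo\cup\mo'}$ — here I lean on Proposition~\ref{prop:can_compose} (and its efpp-analogue via Corollary~\ref{corr:language_poly_reduction} plus Proposition~\ref{prop:language_poly_reduction}) to fold everything back into a polynomial-time reduction to $\mcsp(\Gamma)$ itself; (3) carry out the reduction from {\sc Exact 3-Hitting Set} as sketched and verify correctness in both directions, which is routine. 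I expect the main obstacle to be step (2): constructing $R_{1/3}$ (or an equivalent consistency gadget) from permutations and reversals using only monotone and anti-monotone guarding functions, and making sure the guarding functions genuinely land in $\mo \cup \mo'$ — the delicate point is that a permutation relation constrains the exact value in every coordinate simultaneously, so one must interleave several such constraints with carefully chosen threshold maps (as in the $R_\textrm{cycle}$ construction of Lemma~\ref{lem:no_median_perm_3}) to isolate the desired $\{0,1\}$-behaviour. Once that gadget is in hand, the reduction from {\sc Exact 3-Hitting Set} and the correctness argument are straightforward.
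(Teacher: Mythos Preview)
You miss the paper's key simplifying trick and end up with an unnecessarily hard subproblem that you leave open.

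The observation you need is this: once $\Gamma$ $(\mo\cup\mo')$-efpp-defines every permutation on $\numdom{3}$, Proposition~\ref{prop:maps_poly_reduction} (with $\NN$ the family of all permutations on $\numdom{3}$) lets you assume access to every guarding function in $\mo \circ \NN$ --- and $\mo \circ \NN$ is \emph{all} maps with domain $\numdom{3}$, since any $f \colon \numdom{3} \to \numdom{d}$ factors as $m \circ \sigma$ with $\sigma$ a permutation sorting the three output values and $m$ monotone. With arbitrary maps on $\numdom{3}$ available, your first (per-clause) encoding works immediately: for $C,D \in \cC$ sharing an element, the coordination ``$v_C = a \Leftrightarrow v_D = b$'' is just $\sigma_{10}(f(v_C), g(v_D))$ where $f,g$ are the one-hot-style maps $f(a) = 1$, $f(x) = 0$ otherwise, $g(b) = 0$, $g(x) = 1$ otherwise. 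No $R_{1/3}$ gadget is needed at all; you were right to start with the per-clause encoding and wrong to abandon it.

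Your fallback route via $R_{1/3}$ is not obviously wrong, but you acknowledge that constructing $R_{1/3}$ from binary permutation relations using only monotone and anti-monotone guards is the ``main obstacle'', and you do not carry it out. That is a genuine gap, not a routine step. Separately, your step~(1) reasoning is off: the hypothesis of the lemma says nothing about $\pmedian$ failing, so you cannot derive the absence of $\pmin$/$\pmax$ from it the way you suggest. The paper sidesteps this entirely --- it first adds $\sigma_{10}$ to $\Gamma$ via Corollary~\ref{corr:language_poly_reduction}, then expands the map family via Proposition~\ref{prop:maps_poly_reduction}, and the reduction to $\mcsp(\Gamma)$ never needs Proposition~\ref{prop:anti_monotone_maps} at this stage (that bridge is crossed elsewhere, in Theorem~\ref{thm:p-vs-np-mono}).
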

\begin{proof}
    Since $\Gamma$ $(\mo\cup\mo')$-efpp-defines $\sigma_{10}$ we use Corollary \ref{corr:language_poly_reduction} and Proposition \ref{prop:can_compose} to assume without loss of generality that $\sigma_{10} \in \Gamma$.
    Let $\NN$ be the map family containing all permutations on the domain $\numdom{3}$. Since $\Gamma$ $\mo$-efpp-defines all these functions, Proposition \ref{prop:maps_poly_reduction} shows that, without loss of generality, we have access to all guarding functions from $\mo \circ \NN$, which consists of all functions whose domain is $\numdom{3}$.

     We show a reduction from the NP-hard problem {\sc Exact 3-Hitting Set} (that was defined in Section~\ref{sec:one-hot-p-np}).
    Let $I = (S,\cC)$ be an arbitrary instance of 
      this problem with $S=\{s_1,\dots,s_m\}$. For every set
    $C \in \cC$, create a variable $v_C$ with domain $\numdom{3}$. For a set $C = \{s_i,s_j,s_k\} \in \cC$ with $i < j < k$, we interpret the corresponding variable $v_C$ as follows: 
    $v_C$ is assigned $0$
    if $s_i$ is chosen, it is assigned $1$ if $s_j$ is chosen, and it is assigned $2$
    if $s_k$ is chosen.
    For every pair of sets $C,D \in \cC$ that shares some element, we now need to coordinate  $v_C$ and $v_D$. Assume for instance, that $C=\{s_1,s_2,s_3\}$
    and $D=\{s_2,s_4,s_5\}$. Then, we need to enforce that $v_C=1$ if and only if
    $v_D=0$.
To this end, we add the constraint $\sigma_{10}(f(v_c),g(v_d))$, with $f$ the function mapping $1$ to 1 and everything else to 0 and with $g$ the function mapping $0$ to 0 and everything else to 1. This idea can clearly be generalized
to all sets appearing in $\cC$ and
this completes the reduction.

    It follows from the construction that every solution to $I$ can be used for obtaining a solution to the new instance; simply assign the variables 
    values
    according to the interpretation of domain values given above.   
    Conversely, 
    let $f$ be a satisfying assignment to the new instance. Cycle through all
    variables $v_C$ and add the corresponding element to the solution $S'$ of $I$.
    This implies that $S'$ intersects exactly one element in every set $C \in \cC$
    due to the coordination constraints.
    Finally, the reduction is clearly polynomial, which completes the proof.
\end{proof}

By combining the results in this section we obtain the following dichotomy for the combination of monotone and anti-monotone maps.

\begin{theorem} \label{thm:p-vs-np-antimono}
    Let $\Gamma$ be a constraint language. If $\Gamma$ is $(\mo\cup\mo')$-preserved by $\pmedian$, then $\udcsp(\Gamma,\mo\cup\mo')$ is in P. Otherwise, $\udcsp(\Gamma,\mo\cup\mo')$ is NP-hard.
\end{theorem}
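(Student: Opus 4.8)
The plan is to read the two halves of the dichotomy off the results already established in this section, the only genuinely new point being that for the median the distinction between monotone and anti-monotone guards is immaterial.

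For the tractable direction, suppose $\pmedian$ $(\mo\cup\mo')$-preserves $\Gamma$. I would argue exactly as in Corollary~\ref{corr:medianisinP}: given an instance $(n,V,C)$ of $\udcsp(\Gamma,\mo\cup\mo')$, replace each constraint $R(m_1(x_1),\dots,m_r(x_r))$ by its set of satisfying tuples in $\numdom{n}^r$, obtaining a conventional CSP over $\numdom{n}$. The key point is that each resulting relation is preserved by the conventional $\median$ on $\numdom{n}$. For a coordinate guarded by a monotone map $m_i$ this is the usual identity $\median(m_i(a),m_i(b),m_i(c))=m_i(\median(a,b,c))$; for a coordinate guarded by an anti-monotone map the same identity holds, because $\median$ is symmetric under reversing the order of its arguments, so an order-reversing map still commutes with it. Hence the conventional CSP has the majority polymorphism $\median$ and is solved by singleton arc consistency. (In particular the $\pmin$/$\pmax$ cases need not be singled out here: if $\pmin$ $(\mo\cup\mo')$-preserved $\Gamma$ then, by guarding-function symmetry, so would $\pmax$, hence so would $\pmedian$.)

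For the hardness direction, suppose $\pmedian$ does not $(\mo\cup\mo')$-preserve $\Gamma$. By Lemma~\ref{lem:no_median_perm_2}, $\fgpp{\Gamma}{\mo\cup\mo'}$ — and therefore $\efpp{\Gamma}{\mo\cup\mo'}$ — contains all reversal permutations, in particular the binary reversal $\sigma_{10}$. By Lemma~\ref{lem:no_median_perm_3}, $\efpp{\Gamma}{\mo\cup\mo'}$ contains all permutations on $\numdom{3}$. Thus $\Gamma$ meets the hypotheses of Lemma~\ref{lem:perm_3_hardness}, which yields NP-hardness of $\mcsp(\Gamma)=\udcsp(\Gamma,\mo)$; since every instance of $\udcsp(\Gamma,\mo)$ is also an instance of $\udcsp(\Gamma,\mo\cup\mo')$, the latter is NP-hard as well.

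I do not expect a real obstacle: all the heavy lifting sits in Lemmas~\ref{lem:no_median_perm_2}, \ref{lem:no_median_perm_3} and~\ref{lem:perm_3_hardness}. The only place demanding care is the tractable direction, where one must check that anti-monotone guards do not break the median argument — this is precisely the order-reversal symmetry of $\median$ noted above — and, relatedly, that the proof of Corollary~\ref{corr:medianisinP} transfers verbatim: that proof never uses monotonicity of the guards beyond the commutation identity $\median(m(a),m(b),m(c))=m(\median(a,b,c))$, which we have just granted for anti-monotone $m$ too.
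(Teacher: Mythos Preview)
Your proof is correct and follows essentially the same route as the paper: invoke Corollary~\ref{corr:medianisinP} for the tractable side and chain Lemmas~\ref{lem:no_median_perm_2}, \ref{lem:no_median_perm_3}, \ref{lem:perm_3_hardness} for the hard side. You are in fact more careful than the paper on two points: you explicitly justify why the SAC argument extends to anti-monotone guards (via the order-reversal symmetry of $\median$), whereas the paper simply cites Corollary~\ref{corr:medianisinP} which is stated only for $\mo$; and you explicitly note that Lemma~\ref{lem:perm_3_hardness} yields hardness for $\mcsp(\Gamma)$ and then pass to $\udcsp(\Gamma,\mo\cup\mo')$ by inclusion.
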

\begin{proof}
    If $\Gamma$ is $(\mo\cup\mo')$-preserved by $\pmedian$, then it is also $\mo$-preserved by $\pmedian$ hence $\udcsp(\Gamma,\mo\cup\mo')$ is in P by Corollary \ref{corr:medianisinP}.
    Otherwise, Lemma \ref{lem:no_median_perm_3} shows that we can $\mo$-efpp-define all permutations from the domain $\numdom{3}$. In turn, Lemma \ref{lem:perm_3_hardness} shows that $\udcsp(\Gamma,\mo\cup\mo')$ is NP-hard.
\end{proof}

Additionally, we obtain the following dichotomy for $\mcsp(\Gamma)$.

\begin{theorem} \label{thm:p-vs-np-mono}
    Let $\Gamma$ be a constraint language. If $\Gamma$ is $\mo$-preserved by $\pmin$, $\pmax$ or $\pmedian$, then $\mcsp(\Gamma)$ is in P. Otherwise, $\mcsp(\Gamma)$ is NP-hard.
\end{theorem}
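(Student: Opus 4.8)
The plan is to assemble the theorem from results already proved in this section, so no new machinery is needed. For the tractable direction, if $\Gamma$ is $\mo$-preserved by $\pmin$ or by $\pmax$ then $\mcsp(\Gamma)$ is in P by Lemma~\ref{lem:minandmaxisinP}, and if $\Gamma$ is $\mo$-preserved by $\pmedian$ then $\mcsp(\Gamma)$ is in P by Corollary~\ref{corr:medianisinP}; these three cases cover every language to which the hypothesis of the theorem applies.

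For the hardness direction, suppose $\Gamma$ is $\mo$-preserved by none of $\pmin$, $\pmax$, $\pmedian$. The goal is to reduce the known-hard problem $\udcsp(\Gamma,\mo\cup\mo')$ to $\mcsp(\Gamma)$ in polynomial time. First I would invoke Proposition~\ref{prop:anti_monotone_maps}: since $\Gamma$ is $\mo$-preserved by neither $\pmin$ nor $\pmax$, that proposition gives $\udcsp(\Gamma,\mo\cup\mo') \leq_{\rm FPT} \mcsp(\Gamma)$. I then observe that the reduction behind it --- obtained from Proposition~\ref{prop:maps_fpt_reduction} applied with $\NN$ the family of order-reversing maps, whose graphs are $\fgpp{\Gamma}{\mo}$-definable by Lemma~\ref{lemma:can_define_reverse} in time polynomial in the domain size --- merely adds one fresh variable per original variable and runs in polynomial time; hence it is in particular a polynomial-time many-one reduction, which is what transfers classical NP-hardness.

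Next I would establish that $\udcsp(\Gamma,\mo\cup\mo')$ is indeed NP-hard, which by Theorem~\ref{thm:p-vs-np-antimono} amounts to checking that $\Gamma$ is not $(\mo\cup\mo')$-preserved by $\pmedian$. This follows from a simple monotonicity observation: $\MM$-preservation only becomes harder to satisfy as $\MM$ grows (if $M \subseteq M'$ then $f_M(x) \subseteq f_{M'}(x)$, so $f_{M'}$ being a polymorphism of a relation implies $f_M$ is one), and since $\mo \subseteq \mo\cup\mo'$, failing to be $\mo$-preserved by $\pmedian$ forces failing to be $(\mo\cup\mo')$-preserved by $\pmedian$. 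Theorem~\ref{thm:p-vs-np-antimono} then gives NP-hardness of $\udcsp(\Gamma,\mo\cup\mo')$, and composing with the reduction above yields NP-hardness of $\mcsp(\Gamma)$.

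The only point requiring real care --- and the one I would flag as the main (minor) obstacle --- is the mismatch between the parameterized phrasing of Proposition~\ref{prop:anti_monotone_maps} and the classical hardness we want to conclude: I must verify explicitly that the FPT-reduction used there is computable in polynomial time and blows up the instance only polynomially, so that it doubles as the polynomial-time reduction needed to propagate NP-hardness. Once that is checked, the remainder of the argument is pure bookkeeping with the $\MM$-preservation relation.
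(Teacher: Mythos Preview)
Your proposal is correct and follows essentially the same route as the paper: tractability from Lemma~\ref{lem:minandmaxisinP} and Corollary~\ref{corr:medianisinP}, hardness by reducing $\udcsp(\Gamma,\mo\cup\mo')$ to $\mcsp(\Gamma)$ via domain reversal and invoking Theorem~\ref{thm:p-vs-np-antimono}. Your explicit verification that the FPT-reduction of Proposition~\ref{prop:anti_monotone_maps} is in fact polynomial-time, and your monotonicity argument for why failure of $\mo$-preservation by $\pmedian$ implies failure of $(\mo\cup\mo')$-preservation, are both spelled out more carefully than in the paper's own proof, which simply asserts these points.
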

\begin{proof}
    If $\Gamma$ is $\mo$-preserved by $\pmin$, $\pmax$ or $\pmedian$, then $\mcsp(\Gamma)$ is in P by Lemma \ref{lem:minandmaxisinP} and Corollary \ref{corr:medianisinP}.
    Otherwise, Lemma \ref{lemma:can_define_reverse} shows that we have domain reversal. Hence, $\mcsp(\Gamma)$ reduces to $\udcsp(\Gamma,\mo\cup\mo')$. Since $\pmedian$ does not $\mo$-preserve $\Gamma$, it also does not $(\mo\cup\mo')$-preserve it, hence Theorem \ref{thm:p-vs-np-antimono} shows that $\mcsp(\Gamma)$ is NP-hard.
\end{proof}

\subsection{Towards a Parameterized Complexity Dichotomy: \\the Connector Polymorphism}
\label{sec:line}

We now turn to the parameterized complexity of $\mcsp(\Gamma)$ which turns out to be a significantly more interesting question and which requires the new algebraic theory developed in Section~\ref{sec:algebra}. First, recall from Section~\ref{sec:unrestr-discussion} that tractability/hardness of $\mcsp(\Gamma)$ is intrinsically linked to whether $\Gamma$ can fgpp-define all permutations (over all possible domains). Second, we manage to describe this via an ordered polymorphism condition and define an operation called the {\em connector polymorphism}. Third, we prove that the connector polymorphism is a necessary condition for FPT in the sense that if $\Gamma$ is {\em not} invariant under the connector polymorphism then $\Gamma$ can fgpp-define all permutations, in turn implying that $\mcsp(\Gamma)$ is W[1]-hard. 

As a simplifying assumption we in the parameterized complexity setting may assume that $\Gamma$ is domain-reversible, i.e., that we in the context of fgpp-definability have access to both monotone and anti-monotone maps. To see this, recall that if $\pmin \in \mpol(\Gamma, \mo)$ or $\pmax \in \mpol(\Gamma, \mo)$ then $\mcsp(\Gamma)$ is in P (Lemma~\ref{lem:minandmaxisinP}) but if $\pmin \notin \mpol(\Gamma, \mo)$ and $\pmax \notin \mpol(\Gamma, \mo)$ then $\Gamma$ is domain-reversible from Lemma~\ref{lemma:can_define_reverse}. 


\begin{example} \label{ex:flip}
Recall from Section~\ref{sec:unrestr-discussion} that W[1]-hardness for $\mcsp(\Gamma)$ is linked to being able to define all permutations, since this makes it possible to define  a series of universal permutations (such as the flip permutations) which can be used as gadgets in W[1]-hardness reductions. 
  For a concrete example of this scenario consider the relation 
\[
  \mathsf{Even}_4(a,b,c,d) \equiv a+b+c+d = 0 \pmod 2.
\]
Then we claim that $\mcsp(\{\mathsf{Even_4}\})$ is W[1]-hard. First, for any $a,b \in D$ let $R_{(a \leftrightarrow b)} = \{(i,j) \in D^2 \mid (i = a) \leftrightarrow (j = b)\}$ be the relation constraining its first argument to be $a$ if and only if its second argument is $b$.  We can implement
$R_{(x=i \leftrightarrow y=j)}$ as
\[
  \mathsf{Even}_4([x \geq i], [x \geq i+1], [y \geq j], [y \geq j+1]).
\]
We can then define an arbitrary permutation $\sigma$ by for each $(a,b) \in \sigma^\bullet$ introducing a constraint $R(x,y)$. This generates all permutations so we can implement the flip permutation, and get W[1]-hardness by the reduction considered in Section~\ref{sec:unrestr-hard}.
\end{example}

On a side-note, $\textsf{Even}_4$ is an affine relation so the results in Section~\ref{sec:one-hot-fpt-w1} tell us that the problem 
$\ohcsp(\{\textsf{Even}_4\})$ is in FPT. Furthermore,
we know that $\ohcsp(\{\textsf{Nand}_2\})$ is W[1]-hard. This relation is closed under $\min$ so
$\mcsp(\{\textsf{Nand}_2\})$ is in P by Lemma~\ref{lem:minandmaxisinP} and thus in FPT.
This demonstrates that the sets of constraint languages that are in FPT for
$\ohcsp$ and $\mcsp$, respectively, are incomparable.

We now want to define an algebraic condition which implies that we can fgpp-define all permutations. With the construction of Example~\ref{ex:flip} in mind we for every permutation $\sigma \colon \numdom{n} \to \numdom{n}$ and each $(a,b) \in \sigma^\bullet$ need to force $x$ to equal $a$ if and only if $y$ equals $b$. If we then work over the domain $\numdom{3}$ then we interpret the domain values as $\{<,=,>\}$ and define
define the relation $R_3 = \{(0,0), (0,2), (1,1), (2,0), (2,2)\}$ over $\numdom{3}$ (visualized in Figure~\ref{fig:r3}). It is then straightforward to prove that $R_3$ generates all permutations. 

\begin{lemma} \label{lemma:r3_can_define}
  Let $\sigma \colon \numdom{n} \to \numdom{n}$ be a permutation. Then $\sigma^\bullet \in \fgpp{\{R_3\}}{\mo}$.
\end{lemma}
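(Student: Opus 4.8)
The plan is to mimic the construction sketched in Example~\ref{ex:flip}, but to replace the $\mathsf{Even}_4$-gadget by a single application of $R_3$. The key observation is that the domain $\numdom{3}$ can record, for a variable $x$ over $\numdom{n}$ and a fixed threshold $c \in \numdom{n}$, the \emph{sign} of $x-c$. Concretely, define $f_c \colon \numdom{n} \to \numdom{3}$ by $f_c(z) = 0$ if $z < c$, $f_c(z) = 1$ if $z = c$, and $f_c(z) = 2$ if $z > c$. Each such $f_c$ is nondecreasing, hence monotone, so $f_c \in \mo_3^n$.

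Given a permutation $\sigma \colon \numdom{n} \to \numdom{n}$, I would propose the definition
\[
  \sigma^\bullet(x,y) \;\equiv\; \bigwedge_{(a,b) \in \sigma^\bullet} R_3\bigl(f_a(x), f_b(y)\bigr).
\]
This is quantifier-free, equality-free, uses only the relation $R_3$, and employs only monotone guarding maps, so it is a valid $\mo$-fgpp-definition. Inspecting $R_3 = \{00, 02, 11, 20, 22\}$ shows that the single atom $R_3(f_a(x), f_b(y))$ expresses exactly ``$x = a \leftrightarrow y = b$'': if $x = a$ then $f_a(x) = 1$, and since $(1,1)$ is the only tuple of $R_3$ with first coordinate $1$, this forces $f_b(y) = 1$, i.e. $y = b$; if $x \neq a$ then $f_a(x) \in \{0,2\}$, and all of $(0,0),(0,2),(2,0),(2,2)$ lie in $R_3$, so the atom permits precisely $y \neq b$.

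It then remains to verify that the conjunction defines exactly $\sigma^\bullet$. For the easy inclusion, suppose $(x,y) = (a,b) \in \sigma^\bullet$ and consider an arbitrary conjunct indexed by $(a',b') \in \sigma^\bullet$: if $a' = a$ then $b' = b$ (as $\sigma$ is a function), and $(f_a(a), f_b(b)) = (1,1) \in R_3$; if $a' \neq a$ then $b' \neq b$ (as $\sigma$ is injective), so $(f_{a'}(a), f_{b'}(b)) \in \{0,2\}^2 \subseteq R_3$. Hence $(a,b)$ satisfies every conjunct. Conversely, suppose $(x,y)$ satisfies every conjunct; put $a = x$ and $b = \sigma(a)$. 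The conjunct indexed by $(a,b)$ reads $R_3(f_a(x), f_b(y)) = R_3(1, f_b(y))$, which forces $f_b(y) = 1$, i.e. $y = b = \sigma(x)$, so $(x,y) \in \sigma^\bullet$.

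I do not expect a genuine obstacle here: the only points requiring care are checking that each $f_c$ is monotone (immediate from its definition) and that $R_3$ indeed encodes the biconditional $x = a \leftrightarrow y = b$ under the sign-encoding, which is a finite case check on the five tuples of $R_3$.
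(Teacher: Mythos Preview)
Your proposal is correct and takes essentially the same approach as the paper: the same threshold maps $f_c$ into $\numdom{3}$ and the same conjunction $\bigwedge_{(a,b)\in\sigma^\bullet} R_3(f_a(x),f_b(y))$. Your correctness argument is in fact more detailed than the paper's, which simply asserts that each atom expresses $x=a \leftrightarrow y=b$ without excluding other assignments.
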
  

\begin{proof}
  For $t \in \numdom{n}$ let $f_t \colon \numdom{n} \to \numdom{3}$ be the ``threshold'' map defined by $f_t(x) = 0$ if $x < t$, $f_t(t)=1$, and $f_t(x)=2$ if $x > t$. This map is clearly monotone so $f_t \in \mo$ for each $t \in \numdom{n}$.
  For each $(a,b) \in \sigma^\bullet$ we then introduce a constraint $R_3(f_a(x), f_b(y))$, stating that $x$ equals $a$ if and only if $y$ equals $b$, but which does not remove any other assignment. 
\end{proof}

\begin{figure}[tbp]
\centering
\begin{minipage}{0.5\textwidth}
\centering
\begin{tikzpicture}[scale=1, every node/.style={scale=1}]
  \node[draw, circle] (L1) at (0,4) {0};
  \node[draw, circle] (L2) at (0,2) {1};
  \node[draw, circle] (L3) at (0,0) {2};
  \node[draw, circle] (R1) at (4,4) {0};
  \node[draw, circle] (R2) at (4,2) {1};
  \node[draw, circle] (R3) at (4,0) {2};
  \draw (L1) -- (R1); 
  \draw (L1) -- (R3); 
  \draw (L2) -- (R2); 
  \draw (L3) -- (R1); 
  \draw (L3) -- (R3); 
  \draw[red, dashed, thick] (L1) -- (R2);
  \draw[red, dashed, thick] (L3) -- (R2);
  \draw[red, dashed, thick] (R1) -- (L2);
  \draw[red, dashed, thick] (R3) -- (L2);
\end{tikzpicture}
\end{minipage}%
\begin{minipage}{0.5\textwidth}
\centering
\renewcommand{\arraystretch}{2}
\begin{tabular}{@{}c|ccc@{}}
\toprule
  & \textbf{0} & \textbf{1} & \textbf{2} \\
\midrule
\textbf{0} & 1 & \textcolor{red}{0} & 1 \\
\textbf{1} & \textcolor{red}{0} & 1 & \textcolor{red}{0} \\
\textbf{2} & 1 & \textcolor{red}{0} & 1 \\
\bottomrule
\end{tabular}
\end{minipage}
\caption{Bipartite graph and adjacency matrix representation of the relation $R_3$ over $\numdom{3}$. The red, dashed edges and the zeroes represent the four conditions broken by the connector polymorphism.}
\label{fig:r3}
\end{figure}
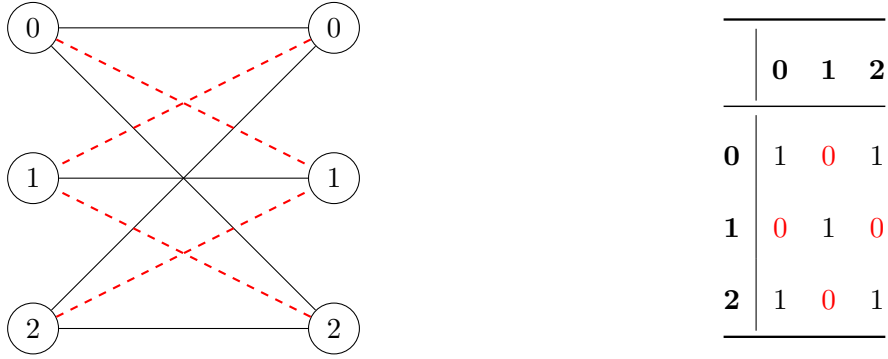

In particular we for each $n \geq 1$ can define the flip relation from Section~\ref{sec:unrestr-discussion} and obtain W[1]-hardness for $\mcsp(\Gamma)$ which can fgpp-define all permutations. Hence, we are now interested in languages which {\em cannot} fgpp-define $R_3$, and via Theorem~\ref{thm:galois-wrapup} we know that this question can be studied on the operational side. Hence, we define the simplest possible ordered polymorphism which breaks $R_3$, where the intuition is that a simple operation over $[3]$ which does not preserve $R_3$ is given by $f(00, 02, 11, 20, 22) = 01 \notin R_3$. In the following definition, recall that we for a multifunction $f$ and a map family $\MM$ write $f_{\MM}$ for the homomorphic image of $f$ induced by $\MM$.

\begin{definition} \label{def:line}
  The {\em connector pattern} is defined by $\pcon(0, 0, 1, 2, 2) = 0$ and $\pcon(0, 2, 1, 0, 2) = 1$ over $\numdom{3}$. For each $d \geq 2$ we define the {\em connector function} (denoted $\fcon_d$) for the operation $\pcon_{\mo \cup \mo'}$.
\end{definition}

If $\pcon \in \mpol(\Gamma, \mo \cup \mo')$ (i.e., for every $R \in \Gamma$ over domain $\numdom{d}$, $\fcon_d \in \mpol(R)$) then $\Gamma$ is said to have the {\em connector property}. Note that if $\Gamma$ does not have the connector property then there is $R \in \Gamma$ over domain $\numdom{d}$ such that $\fcon_d \notin \mpol(R)$.

\begin{example}
  Consider all monotone and anti-monotone maps from $\numdom{3}$ to $\numdom{3}$. Each such map in combination with any of the two defining identities $\pcon(0, 0, 1, 2, 2) = 0$ and $\pcon(0, 2, 1, 0, 2) = 1$ of the connector pattern gives an entry in the function $\fcon_3$. For example, $\fcon_3$ is defined for:
  \begin{enumerate}
    \item
    $\fcon_3(0, 0, 0, 0, 0) = 0$, $\fcon_3(1, 1, 1, 1, 1) = 1$, $\fcon_3(2, 2, 2, 2, 2) = 2$ (by constant maps),
  \item
    $\fcon_3(0, 0, 1, 2, 2) = 0$ and $\fcon_3(2,2,1,0,0) = 2$ (by identity and order reversal),
  \item
    $\fcon_3(0, 0, 0, 2, 2) = 0$ and $\fcon_3(2,2,0,0,0) = 2$ (by $m(0) = m(1) = 0, m(2) = 2$ and order reversal).
  \end{enumerate}
\end{example}  

By construction, $R_3$ is not invariant under $\fcon_3$, but as we will now prove the relationship between the connector polymorphism and $R_3$ is significantly stronger.

\begin{lemma} \label{lemma:not_line_gives_r3}
  Let $\Gamma$ be a constraint language. If $\Gamma$ does not have the connector property then $R_3 \in \fgpp{\Gamma}{\mo \cup \mo'}$.
\end{lemma}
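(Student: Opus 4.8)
The plan is to exploit the Galois connection (Theorem~\ref{thm:galois-wrapup}) together with the structure-extraction tool Proposition~\ref{prop:no_preserve_f}. Since $\Gamma$ does not have the connector property, by definition there is some $R\in\Gamma$ over a domain $\numdom{d}$ with $\fcon_d\notin\mpol(R)$; equivalently, the pattern $\pcon$ does not $(\mo\cup\mo')$-preserve $\Gamma$. Proposition~\ref{prop:no_preserve_f} then hands us, for the fixed enumeration of input/output pairs of $\pcon$ --- namely the two pairs $((0,0,1,2,2),0)$ and $((0,2,1,0,2),1)$ --- a relation $S\in\fgpp{\Gamma}{\mo\cup\mo'}$ of arity $2$ on domain $\numdom{3}$ such that the two ``argument rows'' lie in $S$ while the ``output row'' does not. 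Reading the two input columns of $\pcon$ (the $j$-th column is $(x_{1j},x_{2j})$), the first input/output pair contributes the column $(0,0)$ with output $0$ and the second contributes the column $(0,2)$ with output $1$; more carefully, assembling all five coordinate-columns of the pattern gives us $\{(0,0),(0,2),(1,1),(2,0),(2,2)\}\subseteq S$ while $(0,1)\notin S$. In other words, $R_3\subseteq S\subseteq \numdom{3}^2\setminus\{(0,1)\}$.

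The second step is to drive $S$ down to exactly $R_3$ using the available monotone and anti-monotone maps. The obstruction set that $R_3$ forbids is $\{(0,1),(1,0),(1,2),(2,1)\}$; we already know $(0,1)$ is excluded from $S$. To exclude the remaining three forbidden pairs I would post-compose $S$ with suitable order-reversing maps: for instance, applying the order-reversing permutation $r(x)=2-x$ on $\numdom{3}$ to the first coordinate of $S$ turns the forbidden pair $(0,1)$ into $(2,1)$, so $S(r(x),y)$ excludes $(2,1)$ and still contains $R_3$ (which is invariant under swapping $0\leftrightarrow 2$ in either coordinate). Similarly $S(x,r(y))$ excludes $(0,1)$ read the other way, i.e. $(0,1)\mapsto(0,1)$... so one must track the image of the whole pair: $S(x,r(y))$ forbids $(0,1)$ pulled back, which is $(0,1)$ with $y$ replaced --- concretely it forbids the pair $(0,1)$ of the new relation iff $(0, r(1))=(0,1)\in$ complement, giving again a pair; the point is that the four elements $\{(0,1),(1,0),(1,2),(2,1)\}$ form a single orbit under the group generated by the two coordinate-wise order reversals $x\mapsto 2-x$, $y\mapsto 2-y$ (this group has order $4$), while $R_3$ itself is fixed setwise by that group. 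Hence intersecting the four relations $S(x,y)$, $S(r(x),y)$, $S(x,r(y))$, $S(r(x),r(y))$ --- each of which is $(\mo\cup\mo')$-fgpp-definable over $\Gamma$ since order-reversing maps are anti-monotone (or compositions thereof), and the conjunction of fgpp-definable relations is fgpp-definable --- yields a relation that still contains $R_3$ but excludes all four forbidden pairs, and therefore equals $R_3$. This shows $R_3\in\fgpp{\Gamma}{\mo\cup\mo'}$.

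The main obstacle, and the step requiring care, is the bookkeeping in the first step: one must check that the coordinate-columns of $\pcon$, assembled via Proposition~\ref{prop:no_preserve_f} (whose statement produces a relation whose rows are indexed by the input/output pairs and whose columns are indexed by the arity of $\pcon$, i.e.\ by $1,\dots,5$), genuinely realize $R_3$ as a subset and genuinely exclude $(0,1)$ --- this is exactly the content of the remark accompanying Definition~\ref{def:line} that ``$R_3$ is not invariant under $\fcon_3$'' via $f(00,02,11,20,22)=01$, so the witness is tailor-made. The only subtlety beyond that is confirming that anti-monotone maps are legitimately available: but since $\pcon$ does not $(\mo\cup\mo')$-preserve $\Gamma$ we are already working in the $\mo\cup\mo'$ world, and the order-reversing maps $r$ are themselves anti-monotone, so no appeal to Lemma~\ref{lemma:can_define_reverse} is even needed here --- they may be used directly as guarding functions, and Proposition~\ref{prop:can_compose} (with $\MM=\mo\cup\mo'$, which is closed under composition with itself) guarantees the resulting conjunction remains in $\fgpp{\Gamma}{\mo\cup\mo'}$.
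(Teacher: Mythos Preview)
Your overall strategy---extract a relation $S$ via Proposition~\ref{prop:no_preserve_f} and then symmetrize it down to $R_3$---is exactly the paper's approach, and your application of Proposition~\ref{prop:no_preserve_f} is correct. However, there is a genuine computational error in the symmetrization step.

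You claim that the four forbidden pairs $\{(0,1),(1,0),(1,2),(2,1)\}$ form a single orbit under the group generated by the two coordinate-wise reversals $x\mapsto 2-x$ and $y\mapsto 2-y$. This is false: since $2-1=1$, the pair $(0,1)$ is \emph{fixed} by $y\mapsto 2-y$, and the orbit of $(0,1)$ under this group is only $\{(0,1),(2,1)\}$. The other orbit is $\{(1,0),(1,2)\}$. Concretely, if $S=R_3\cup\{(1,0),(1,2)\}$ (which satisfies $R_3\subseteq S\subseteq\numdom{3}^2\setminus\{(0,1)\}$ and hence is a possible output of Proposition~\ref{prop:no_preserve_f}), then $S$ is invariant under both coordinate reversals, so your conjunction $S(x,y)\land S(r(x),y)\land S(x,r(y))\land S(r(x),r(y))$ equals $S$, not $R_3$.

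The fix is to also use the coordinate \emph{swap} $(x,y)\mapsto(y,x)$, under which $(0,1)\mapsto(1,0)$; this connects the two orbits. The paper uses the four constraints $S(x,y)$, $S(r(x),r(y))$, $S(y,x)$, $S(r(y),r(x))$, which respectively exclude $(0,1)$, $(2,1)$, $(1,0)$, $(1,2)$, while $R_3$---being symmetric and invariant under simultaneous reversal---survives all four. With this correction your proof goes through and matches the paper's.
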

\begin{proof}
    Since $\Gamma$ does not have the connector property, it is not $(\mo \cup \mo')$-preserved by $\pcon$. Hence, Proposition \ref{prop:no_preserve_f} shows that we can $(\mo \cup \mo')$-fgpp-define a binary relation $R$ on the domain $\numdom{3}$ satisfying $\{00,02,11,20,22\} \subseteq R \subseteq \numdom{3}^2 \setminus \{01\}$.
    We now define
    \[
        R_3(x,y) \equiv R(i(x), i(y)) \land R(r(x), r(y)) \land R(i(y), i(x)) \land R(r(y), r(x))
    \]
    where $r,i \colon \numdom{3} \to \numdom{3}$ are, respectively, the order-reversing permutation and identity function over $\numdom{3}$. We note that the tuples $\{00, 02, 11, 20, 22\} \subseteq R_3$, while the four occurrences of $R$ use $\{01\} \notin R$ to respectively exclude the tuples $\{01,21,10,12\}$. We conclude that $R_3$ is indeed the required relation.
    Finally, Propositions \ref{prop:can_compose} and \ref{prop:language_fpt_reduction} show that $R_3 \in \fgpp{\Gamma}{\mo \cup \mo'}$.
\end{proof}

By combining the results proven so far we obtain the following W[1]-hardness condition.

\begin{theorem} \label{thm:monotone-w1}
  Let $\Gamma$ be a domain-reversible constraint language over $\numdom{d}$.
  If $\Gamma$ does not have the connector property then $\mcsp(\Gamma)$ is W[1]-hard.
\end{theorem}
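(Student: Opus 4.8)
The plan is to chain together the results already established in this section. The statement asserts that if $\Gamma$ is domain-reversible and lacks the connector property, then $\mcsp(\Gamma)$ is W[1]-hard. I would proceed in three steps, each invoking a prior result.

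First, since $\Gamma$ does not have the connector property, Lemma~\ref{lemma:not_line_gives_r3} gives that $R_3 \in \fgpp{\Gamma}{\mo \cup \mo'}$. Because $\Gamma$ is assumed domain-reversible, we have access to both monotone and anti-monotone maps when fgpp-defining over $\Gamma$; more precisely, by Proposition~\ref{prop:anti_monotone_maps} (using that $\Gamma$ is not $\mo$-preserved by $\pmin$ or $\pmax$, which is exactly domain-reversibility via Lemma~\ref{lemma:can_define_reverse}), we have $\udcsp(\Gamma, \mo \cup \mo') \leq_{\rm FPT} \mcsp(\Gamma)$. So it suffices to prove that $\udcsp(\{R_3\}, \mo)$, or equivalently $\udcsp(\Gamma \cup \{R_3\}, \mo \cup \mo')$, is W[1]-hard, and then transfer hardness down to $\mcsp(\Gamma)$ using Corollary~\ref{corr:language_fpt_reduction} together with Proposition~\ref{prop:can_compose}.

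Second, I would show $\udcsp(\{R_3\}, \mo)$ is W[1]-hard directly. By Lemma~\ref{lemma:r3_can_define}, $R_3$ $\mo$-fgpp-defines the graph $\sigma^\bullet$ of every permutation $\sigma \colon \numdom{n} \to \numdom{n}$. In particular, following the discussion in Section~\ref{sec:unrestr-discussion} and Example~\ref{ex:flip}, this lets us implement the projection (``vertex choice'') constraints and the flip (``coordination'') constraints that drive the canonical W[1]-hardness reduction from \textsc{Multicoloured Clique} used in Lemma~\ref{lm:equality-hard}: the projection constraints $x_{i,j}[1] = x_{i,j'}[1]$ are built from equality-like constraints over lexicographically ordered product domains (expressible once we can force ``$x$ equals $a$ iff $y$ equals $b$'', which $R_3$ does via threshold maps), and the flip constraint between $x_{i,j}$ and $x_{j,i}$ is a permutation of the product domain, hence a composition of the permutation graphs $\sigma^\bullet$ that $R_3$ defines. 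Assembling these exactly as in Lemma~\ref{lm:equality-hard} yields an FPT-reduction from \textsc{Multicoloured Clique} to $\udcsp(\{R_3\}, \mo)$, so the latter is W[1]-hard.

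Third, combine: $\textsc{Multicoloured Clique} \leq_{\rm FPT} \udcsp(\{R_3\}, \mo) \leq_{\rm FPT} \udcsp(\Gamma, \mo \cup \mo') \leq_{\rm FPT} \mcsp(\Gamma)$, where the middle reduction uses $R_3 \in \fgpp{\Gamma}{\mo\cup\mo'}$ and Corollary~\ref{corr:language_fpt_reduction} (the map family $\mo \cup \mo'$ is closed under composition with itself), and the last uses Proposition~\ref{prop:anti_monotone_maps}. The main obstacle is the bookkeeping in the second step: one must be careful that the product-domain encoding of the \textsc{Multicoloured Clique} reduction only needs the ``$x=a$ iff $y=b$'' gadget plus permutation graphs — both supplied by $R_3$ via monotone threshold maps — and that the number of variables stays bounded in terms of $k$ (it does, since we introduce one variable per ordered pair of colour classes, exactly as in Lemma~\ref{lm:equality-hard}), while the domain size is polynomial in the input. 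Everything else is a direct appeal to already-proven lemmas.
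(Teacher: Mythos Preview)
Your proposal is correct and follows essentially the same approach as the paper: use Lemma~\ref{lemma:not_line_gives_r3} to get $R_3 \in \fgpp{\Gamma}{\mo \cup \mo'}$, then Lemma~\ref{lemma:r3_can_define} to fgpp-define all permutations (in particular the flip and projection constraints from Example~\ref{ex:flip}) yielding W[1]-hardness of $\udcsp(\Gamma,\mo\cup\mo')$, and finally Proposition~\ref{prop:anti_monotone_maps} to transfer hardness to $\mcsp(\Gamma)$. You spell out the reduction chain and the role of Corollary~\ref{corr:language_fpt_reduction} more explicitly than the paper does, but the argument is the same.
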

\begin{proof}
First, if $\fcon_d \notin \mpol(\Gamma)$ then Lemma~\ref{lemma:not_line_gives_r3} 
implies that $R_3 \in \fgpp{\Gamma}{\mo \cup \mo'}$. Then we can fgpp-define all permutations (Lemma~\ref{lemma:r3_can_define}) which gives hardness of $\udcsp(\Gamma, \mo \cup \mo')$ since we can define the flip relation and projection relation from Example~\ref{ex:flip}. Last, we use Proposition~\ref{prop:anti_monotone_maps} to obtain an $\FPT$-reduction to $\mcsp(\Gamma)$.
\end{proof}

Next, we investigate additional consequences of the connector polymorphism, beginning with a structural property which all relations invariant under the connector polymorphism must satisfy.

\begin{lemma} \label{lemma:lineprop-biclique}
  Let $R \subseteq \numdom{k}^2$ be preserved by $\fcon_k$,
  and let $a_1,a_2,b_1,b_2 \in D$ be such that $\{a_1,a_2\} \times \{b_1,b_2\} \subseteq R$.
  Then $R \cap ([a_1,a_2] \times [b_1,b_2])$ is a (not necessarily spanning) biclique.
\end{lemma}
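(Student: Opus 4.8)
The claim is that if $R \subseteq \numdom{k}^2$ is preserved by $\fcon_k$ and $\{a_1,a_2\} \times \{b_1,b_2\} \subseteq R$, then $R \cap ([a_1,a_2] \times [b_1,b_2])$ is a biclique; that is, there are sets $A \subseteq [a_1,a_2]$ and $B \subseteq [b_1,b_2]$ with $R \cap ([a_1,a_2]\times[b_1,b_2]) = A \times B$. By symmetry of the roles of $a_1,a_2$ (and of $b_1,b_2$) we may assume $a_1 \le a_2$ and $b_1 \le b_2$.

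The first step I would take is to set up the right instance of the connector pattern. Recall $\pcon(0,0,1,2,2)=0$ and $\pcon(0,2,1,0,2)=1$ over $\numdom{3}$, and $\fcon_k$ is its homomorphic image under monotone and anti-monotone maps from $\numdom{3}$ to $\numdom{k}$. The key observation is that for any $a_1 \le c \le a_2$ in $[a_1,a_2]$ there is a monotone map $g$ with $g(0)=a_1$, $g(1)=c$, $g(2)=a_2$, and an anti-monotone map $g'$ with $g'(0)=a_2$, $g'(1)=c$, $g'(2)=a_1$; similarly for $[b_1,b_2]$ with target element $d$. So $\fcon_k$, applied coordinatewise to five tuples arranged according to the five input columns $(0,0,1,2,2)$ and $(0,2,1,0,2)$, outputs a tuple whose coordinates are the chosen "middle" values. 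Concretely, I want to deduce: if $(a_1,b_1),(a_1,b_2),(c,d),(a_2,b_1),(a_2,b_2) \in R$ for some $c \in [a_1,a_2]$, $d \in [b_1,b_2]$, then $(c, d') \in R$ where $d'$ is the element of $[b_1,b_2]$ designated by the second defining identity — i.e. using column pattern $(0,2,1,0,2)$ in the second coordinate, which picks out value $b_?$'s... I need to be careful here: the first identity sends column $(0,0,1,2,2)$ to output $1$ ("middle"), and the second sends $(0,2,1,0,2)$ to output $1$. So choosing for the first coordinate the map with middle value $c$ applied to pattern $(0,0,1,2,2)$ gives inputs $(a_1,a_1,c,a_2,a_2)$ and output $c$; choosing for the second coordinate the map with middle value $d$ applied to pattern $(0,2,1,0,2)$ gives inputs $(b_1,b_2,d,b_1,b_2)$ and output $d$. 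Hence from the five $R$-tuples $(a_1,b_1),(a_1,b_2),(c,d),(a_2,b_1),(a_2,b_2)$ we get $(c,d)\in R$ again — not yet new information. The useful move is to mix: take column patterns that differ between the two coordinates, e.g. first coordinate uses $(0,0,1,2,2)$ (output = middle $c$) while second coordinate uses $(0,2,1,0,2)$ in a different grouping, so that the five input tuples are $(a_1,b_1),(a_1,b_2),(c,d),(a_2,b_1),(a_2,b_2)$ — I would instead feed tuples $(a_1,b_1),(a_1,b_2),(c,\cdot),(a_2,b_2),(a_2,b_1)$ etc. The upshot I want to extract is the cross-product closure statement: \emph{if $(c,b_1),(c,b_2) \in R$ and $(a_1,d),(a_2,d)\in R$ with $a_1\le c\le a_2$, $b_1\le d\le b_2$, then $(c,d)\in R$}, and dually.

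With such a "rectangle-closure" lemma in hand, the structural conclusion follows by a short argument. Let $A = \{c \in [a_1,a_2] : (c,b_1)\in R\}$ and note by symmetry (feeding $b_1$ vs $b_2$) this equals $\{c : (c,b_2)\in R\}$; similarly let $B = \{d \in [b_1,b_2] : (a_1,d)\in R\} = \{d : (a_2,d)\in R\}$. Both $A,B$ contain the two endpoints. The rectangle-closure lemma immediately gives $A \times B \subseteq R$. For the reverse inclusion $R \cap ([a_1,a_2]\times[b_1,b_2]) \subseteq A\times B$: if $(c,d)\in R$ with $c\in[a_1,a_2]$, $d\in[b_1,b_2]$, then applying the connector with this tuple in the "middle" column together with the four corner tuples $\{a_1,a_2\}\times\{b_1,b_2\}$ and suitable monotone/anti-monotone maps forces $(c,b_1)\in R$ (take the second-coordinate maps so the middle column contributes value $b_1$, not $d$) and $(a_1,d)\in R$, so $c\in A$ and $d\in B$. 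This yields the equality $R\cap([a_1,a_2]\times[b_1,b_2]) = A\times B$, which is exactly a (not necessarily spanning) biclique in the induced subgrid.

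The main obstacle, and the part needing genuine care rather than routine calculation, is pinning down exactly which monotone/anti-monotone maps realize each of the needed implications — i.e. verifying that for each of the "corner + middle $\Rightarrow$ new point" and "corner + arbitrary point $\Rightarrow$ projects to corners" deductions there really is a legal assignment of maps to the five input positions of $\pcon$ (respecting which positions carry $0,1,2$ in each identity) that produces the claimed input tuples from tuples known to lie in $R$ and outputs the claimed element. Since $\pcon$ has two defining identities and we may use the order-reversing (anti-monotone) maps as well as monotone ones, there is enough flexibility, but it must be checked that the chosen maps are consistent across the two coordinates simultaneously (a single pair of maps $(m_1,m_2)$, one per coordinate, applied to the same five $R$-tuples). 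I would organize this as one clean lemma producing the rectangle-closure and projection facts, then derive the biclique structure in two lines as above. No deep machinery beyond Definition~\ref{def:line} and the meaning of homomorphic image under $\mo\cup\mo'$ is required.
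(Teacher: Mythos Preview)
Your plan is correct and will go through. The projection fact (corners plus any interior tuple $(c,d)$ yield $(c,b_1),(c,b_2),(a_1,d),(a_2,d)\in R$) and the rectangle-closure fact (from $(c,b_1),(c,b_2),(a_1,d),(a_2,d)$ plus corners derive $(c,d)$) are both realisable by concrete choices of monotone/anti-monotone maps; for instance, rectangle closure follows from feeding $t_1=(c,b_1),\,t_2=(c,b_2),\,t_3=(a_1,d),\,t_4=(a_1,b_1),\,t_5=(a_1,b_2)$, with the first coordinate matching the pattern $(0,0,1,2,2)$ via the anti-monotone map $0\mapsto c,\,1\mapsto a_1,\,2\mapsto a_1$ (output $c$) and the second coordinate matching $(0,2,1,0,2)$ via the monotone map $0\mapsto b_1,\,1\mapsto d,\,2\mapsto b_2$ (output $d$). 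The well-definedness of $A$ (that $(c,b_1)\in R\Leftrightarrow(c,b_2)\in R$) is a special case of the projection fact, as you note.

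The paper's proof uses the same projection fact but then takes a graph-theoretic route: it observes that the projection fact forces every non-isolated vertex in the bipartite graph $R\cap([a_1,a_2]\times[b_1,b_2])$ into a single connected component, and then shows this component is $P_4$-free by a separate connector application (if $\{u,u'\}\times\{v,v'\}$ has exactly three pairs in $R$, the connector fills in the fourth), invoking the standard fact that a connected $P_4$-free bipartite graph is complete bipartite. Your approach bypasses this detour by constructing $A$ and $B$ explicitly and proving the double containment directly. Both arguments rest on identical connector applications; yours is marginally more self-contained, while the paper's leans on a familiar structural characterisation.
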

\begin{proof}
  Let $p \in [a_1,a_2]$ be such that $(p,q) \in R$ for some $q \in [b_1,b_2]$. 
  Then $a_1 \leq p \leq a_2$ and $b_1 \leq q \leq b_2$, so the connector polymorphism is applicable and the four distinct applications gives us that $(p,b_1), (p,b_2), (a_1,q), (a_2,q) \in R$. In particular, $p, q$ are in
  the same connected component as $a_1,a_2,b_1,b_2$. Let $G$ be the graph
  corresponding to $R \cap ([a_1,a_2] \times [b_1,b_2])$; then $G$ has a
  unique non-trivial connected component. Assume that $G$ contains an induced
  path on 4 vertices $P_4$, i.e., there are values $u \leq u' \in [a_1,a_2]$ and $v \leq v' \in [b_1,b_2]$ such that $|R \cap (\{u,u'\} \times \{v,v'\})|=3$. By the inversion symmetry of the connector polymorphism (it is defined for both monotone and anti-monotone maps from the pattern to $\numdom{k}$), we assume without loss of generality
  that $(u',v') \notin R$. But then the connector polymorphism applies on $u \leq u' \leq a_2$ and $b_1  \leq v \leq v'$ and shows $(u',v') \in R$, contrary to assumption.
  We conclude that the non-trivial component of $G$ is a biclique. 
\end{proof}

We now complete the classification by showing that any language that can define all permutations is not invariant under $\pcon$.

\begin{lemma} \label{lemma:all_permutations}
  Let $\Gamma$ be a constraint language where $\sigma^\bullet \in \fgpp{\Gamma}{\mo}$ for every permutation $\sigma \colon \numdom{n} \to \numdom{n}$ (for every $n \geq 1$). 
Then $\Gamma$ does not have the connector property.
\end{lemma}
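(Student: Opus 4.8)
The plan is to prove the contrapositive: assuming $\Gamma$ has the connector property, I will exhibit a permutation $\sigma$ with $\sigma^\bullet \notin \fgpp{\Gamma}{\mo}$, contradicting the hypothesis.

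First I would extract the algebraic consequences. Since $\Gamma$ fgpp-defines every permutation, it in particular fgpp-defines the transposition $\sigma_{10}$ on $\numdom{2}$; applying $\pmin$ (resp.\ $\pmax$) coordinatewise to the two tuples of $\sigma_{10}^\bullet$ yields $(0,0)\notin\sigma_{10}^\bullet$, so by Theorem~\ref{thm:galois-wrapup} neither $\pmin$ nor $\pmax$ lies in $\mpol(\Gamma,\mo)$. Hence Lemma~\ref{lemma:can_define_reverse} applies and $\Gamma$ is domain-reversible, so every binary relation fgpp-definable over $\Gamma$ with monotone maps is also $(\mo\cup\mo')$-fgpp-definable. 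Now using $\pcon\in\mpol(\Gamma,\mo\cup\mo')$ together with Lemma~\ref{lem:fgpp_to_mpol} (taken with $\MM=\NN=\mo\cup\mo'$, noting $(\mo\cup\mo')\circ(\mo\cup\mo')=\mo\cup\mo'$), every $R\in\fgpp{\Gamma}{\mo\cup\mo'}$ is preserved by the connector function of its own domain; by Lemma~\ref{lemma:lineprop-biclique} this means every binary relation fgpp-definable over $\Gamma$ has the biclique property.

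The combinatorial core is to turn the biclique property plus the finiteness of $\Gamma$ into an absolute bound on the grid complexity of the definable binary relations. Write $a$ for the maximum arity and $D$ for the maximum domain size occurring in $\Gamma$ and put $C=aD$. A binary $R\subseteq\numdom{n}^2$ fgpp-definable over $\Gamma$ is a conjunction $\bigwedge_i S_i$, where each $S_i$ arises from a single atom $R_i(\dots)$ with $R_i\in\Gamma$ and all occurring variables among $\{x,y\}$; since each guarding function on $x$ (and on $y$) is an (anti)monotone step function, $S_i$ is a constant-block $0/1$ matrix with at most $C$ row-intervals and at most $C$ column-intervals, and $S_i$ again has the biclique property. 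I would then show that if $R$ had a $K\times K$ grid division in which every cell is non-constant for $K$ large enough, then choosing a $1$-witness in every cell and applying a two-dimensional Erd\H{o}s--Szekeres/Ramsey argument (lexicographically ordering the witnesses, in the spirit of the Fishburn--Graham theorem used later) produces two rows, two columns and a sub-box realizing an all-ones $2\times 2$ corner together with an induced $P_4$ inside some single $S_i$ --- contradicting the biclique property of $S_i$. Consequently every binary relation in $\fgpp{\Gamma}{\mo}$ has grid-rank at most some absolute $g(a,D)$.

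Finally I would invoke the standard fact that permutation matrices have unbounded grid-rank: for $n$ sufficiently large there is a permutation $\sigma$ of $\numdom{n}$ (for instance a layered permutation realizing a $(g(a,D)+1)$-fold grid division with every cell non-empty, or simply a generic permutation) whose graph $\sigma^\bullet$ has grid-rank exceeding $g(a,D)$. As $\sigma^\bullet$ is binary and, by hypothesis, lies in $\fgpp{\Gamma}{\mo}$, this contradicts the bound just established. Hence $\Gamma$ cannot fgpp-define all permutations, which is the desired conclusion. The step I expect to be the main obstacle is precisely the combinatorial claim that the biclique property caps the grid-rank of binary fgpp-definable relations: one has to push the cell-witnesses into a monotone pattern before the biclique property becomes applicable, so that a genuine $2\times 2$ all-ones corner (not merely four unrelated $1$-entries) is produced; this is the same phenomenon that the later twin-width/projected-grid-rank material formalizes.
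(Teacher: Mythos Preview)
Your strategy --- bound the grid-rank of every binary relation in $\fgpp{\Gamma}{\mo}$, then exhibit a permutation of higher grid-rank --- is valid and is essentially Lemma~\ref{lem:line_gridrank}, which the paper proves later. The paper's proof of the present lemma is more direct: it works with the flip permutation on $\numdom{n}^2$, picks a constraint $C$ in a witnessing definition that excludes the specific entry $((n/4,n/2),(n/2,n/2))$, and uses the block structure of the flip to place a $K_{2,2}$ inside $C$ on four ``fat'' intervals (each containing all values $(i,\cdot)$ for some $i$); Lemma~\ref{lemma:lineprop-biclique} then forces $C$ to be a biclique on the spanned box and hence to contain the excluded entry, a contradiction. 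No grid-rank notion is invoked. Your approach buys a reusable structural bound (exactly what is needed for Theorem~\ref{thm:twinwidth}); the paper's buys a short self-contained argument for this lemma alone.

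Two corrections to your execution. First, the step you flag as the obstacle is simpler than you describe: no Erd\H{o}s--Szekeres or Ramsey argument is needed. In a $(2C+1)$-division of $R=\bigwedge_i S_i$ with every cell of rank $\geq 2C+1$, plain pigeonhole gives, for each $S_i$, row- and column-blocks among the first $C$ and among the last $C$ on which all guarding functions of $S_i$ are constant; since each such corner cell contains a $1$ of $R$, $S_i$ is constant $1$ there, and Lemma~\ref{lemma:lineprop-biclique} forces $S_i$ to have rank $\leq 1$ on the central cell $(C+1,C+1)$. The entrywise AND of rank-$\leq 1$ $0/1$ matrices is again rank $\leq 1$, contradicting the assumed rank. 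Your ``find an induced $P_4$ in some single $S_i$'' framing is not the right mechanism, since different $0$-entries of $R$ may come from different $S_i$. Second, the domain-reversibility detour is unnecessary: since $\mo\subseteq\mo\cup\mo'$ and $\mo\circ(\mo\cup\mo')=\mo\cup\mo'$, Lemma~\ref{lem:fgpp_to_mpol} with $\NN=\mo\cup\mo'$ already gives that every $R\in\fgpp{\Gamma}{\mo}$ is preserved by the connector function on its own domain.
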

\begin{proof}[Proof sketch.]
  Let $n$ be a sufficiently large constant and consider the flip
  relation on domain $\numdom{n}^2$ from Section~\ref{sec:unrestr-discussion}, i.e., the bijection $\sigma$ that maps $(a,b)$ into $(b,a)$ for arbitrary $(a,b) \in  \numdom{n}^2$.
  By assumption, $\sigma^\bullet \in \fgpp{\Gamma}{\mo}$, so 
  let $F$ be a witnessing fgpp-definition.
  Since each $R \in \Gamma$ has bounded arity and finite domain, each
  constraint $C$ in $F$ divides the domain $\numdom{n}^2$ into $O(1)$ intervals
  such that values in each interval are treated identically by $C$.
  Let $I_1 < \ldots < I_c$ and $I_1' < \ldots < I_d'$ be the partition
  induced by $C$ of the domain of $x$ and $y$, respectively. 
  Assume that we can find $0 \leq a_1<a_2<a_3 \leq c$
  and $0 \leq b_1 < b_2 < b_3 \leq d$ such that $\sigma$ maps some
  element of $I_{a_2}$ to some element of $I_{b_2}'$, and likewise
  maps elements of both $I_{a_1}$ and $I_{a_3}$ to elements of
  both $I_{b_1}'$ and $I_{b_3}'$, and furthermore $C$ excludes a pair
  $(p,q)$ where $p \in I_{a_1}$ and $q \in I_{b_2}'$.
  Then using only the maps corresponding to these intervals,
  we can define a relation on domain $\numdom{3}$ that contains
  $00$, $02$, $20$, $22$, and $11$ but not $01$.

  We claim that since $\Gamma$ is finite, there must be a constraint
  in $F$ which admits such intervals. Consider a constraint $C$ that
  excludes the pair $((n/4,n/2),(n/2,n/2))$ (assuming for simplicity
  that $n$ is a multiple of 4). Then $C$ is equivalent to a binary
  constraint $R(x,y)$ with $R \subseteq \numdom{n}^2$, and as above,
  induces partitions of the domain into intervals $I_1 < \ldots < I_j$ 
  for $x$ and $I_1' < \ldots < I_\ell'$ for $y$, for constants $j, \ell=O(1)$. 
  Let $I_a$ and $I_b'$ be the first intervals of $x$ and $y$, respectively,
  that contain all values $(i,\cdot)$ for some $i$, and similarly
  let $I_c$ and $I_d'$ be the last intervals of $x$ and $y$, respectively, 
  that contain all values $(i,\cdot)$ for some $i$.
  Furthermore let $(n/4,n/2) \in I_p$, $(n/2,n/4) \in I_q'$, and
  $(n/2,n/2) \in I_r \cap I_s'$. Using a high enough value of
  $n=\Omega(j+\ell)$ we can assume $a \leq p \leq r \leq c$
  and $b \leq q \leq s \leq d$. 
  For $v \in \{a,b,c,d\}$ let $i_v \in [n]$ be a value such that
  $I_v$ respectively $I_v'$ contains all values $(i_v,\cdot)$.
  Then $R(x,y)$ contains a 4-cycle $C_4$ on $(I_a, I_c) \times (I_b', I_d')$
  using edges $(i_a,i_b), (i_a,i_d), (i_c,i_b), (i_c,i_d)$ in $X$
  and their flips $(i_b,i_a), \ldots, (i_d,i_c)$ in $Y$. 
  But then Lemma~\ref{lemma:lineprop-biclique} implies that $R$ in
  $(I_a \cup \ldots \cup I_c) \times (I_b' \cup \ldots \cup I_d')$
  is a biclique, and the pairs $((n/4,n/2),(n/2,n/4)) \in (I_p,I_q')$ 
  and $((n/2,n/2),(n/2,n/2)) \in (I_r,I_s')$ contradict the non-existence
  of $((n/4,n/2),(n/2,n/2))$ in $R(x,y)$. 
\end{proof}

In summary, a domain-reversible language $\Gamma$ thus defines all permutations (meaning that $\mcsp(\Gamma)$ is W[1]-hard) if and only if $\Gamma$ is not invariant under the connector polymorphism. Hence, the connector polymorphism is a necessary condition for tractability which we now explore further.

\subsection{FPT versus W[1] Dichotomy for Binary Relations}
\label{sec:line_fpt}

In this section, we look at the case where the constraint language satisfies the connector property and all relations are \emph{binary}, which means that all relations are of arity 1 and 2 only. We shall prove that this implies that $\mcsp(\Gamma)$ can be solved in FPT time. The main result is the following theorem.

\begin{theorem}\label{thm:twinwidth}
Let $\Gamma$ denote a binary constraint language.
If $\Gamma$ satisfies the connector property, then $\mcsp(\Gamma)$ 
can be solved in FPT time.
\end{theorem}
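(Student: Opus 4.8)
The plan is to show that for a binary language $\Gamma$ with the connector property, every relation $(\mo\cup\mo')$-fgpp-definable over $\Gamma$ is structurally tame enough that the associated instance structures have bounded twin-width; then solve $\mcsp(\Gamma)$ by a standard dynamic-programming algorithm over a contraction sequence. First, by Lemma~\ref{lemma:can_define_reverse} (or the remarks at the start of Section~\ref{sec:line}), we may assume $\Gamma$ is domain-reversible, since otherwise $\pmin$ or $\pmax$ $\mo$-preserves $\Gamma$ and $\mcsp(\Gamma)$ is already in P by Lemma~\ref{lem:minandmaxisinP}; hence we have access to both monotone and anti-monotone guarding maps, and we may work with $\udcsp(\Gamma,\mo\cup\mo')$ via Proposition~\ref{prop:anti_monotone_maps}.

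The key structural step is to bound the \emph{grid-rank} of every binary relation $R\in\fgpp{\Gamma}{\mo\cup\mo'}$ by a constant depending only on $\Gamma$. Concretely, an instance of $\mcsp(\Gamma)$ with $k$ variables and domain $\numdom{n}$ is an ordered binary structure on the point set $V\times\numdom{n}$ (or a disjoint union of $k$ ordered copies of $\numdom{n}$), where each constraint $R(m_i(x),m_j(y))$ with $m_i,m_j$ monotone contributes a relation on the block $\{x\}\times\numdom{n}$ versus $\{y\}\times\numdom{n}$ that, after collapsing each $m_i$-fiber, is a sub-pattern of $R$. The crucial observation — this is where Lemma~\ref{lemma:lineprop-biclique} does the work — is that preservation by $\fcon_k$ forces any such binary relation, restricted to an ``interval box'' spanned by a $2\times 2$ all-ones sub-pattern, to be a biclique; iterating this over the interval partition induced by the constraint shows that $R$ decomposes into a bounded number (depending only on $\max_R|\mathrm{dom}|$, hence on $\Gamma$) of ``monotone staircases of bicliques,'' which has bounded grid-rank. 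One then argues, exactly as in the proof sketch of Lemma~\ref{lemma:all_permutations} run in reverse, that if some binary $(\mo\cup\mo')$-fgpp-definable relation had grid-rank growing with $n$, one could extract from a large high-rank grid a sub-relation on $\numdom{3}$ containing $00,02,20,22,11$ but not $01$, i.e.\ (a pp-reduct of) $R_3$, contradicting the connector property via Lemma~\ref{lemma:not_line_gives_r3} — or more directly, contradicting the biclique decomposition just obtained. So $\Gamma$ has bounded grid-rank in the sense required, and by the results of Bonnet et al.~\cite{bonnet2024twin} the ordered structures arising as $\mcsp(\Gamma)$-instances have twin-width bounded by $f(\Gamma)$, with a contraction sequence computable in FPT time (in fact polynomial time once $n$ is in unary, since $k$ is the parameter and the domain is part of the structure).

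Finally, given a width-$d$ contraction sequence of the instance structure, one runs a standard dynamic program: the structure is binary and ordered, a satisfying assignment is a choice of one point $(x,f(x))$ per variable block, and satisfaction of each binary constraint is a local ``allowed error pattern'' condition between two blocks; twin-width-based DP (as used for first-order model-checking / deciding CSP-type properties on bounded-twin-width classes) evaluates the existential sentence ``$\exists$ one point per block, all constraints satisfied'' in time $g(d,k)\cdot \mathrm{poly}(\|I\|)$. Since $d=f(\Gamma)$ is constant and $k$ is the parameter, this is FPT. The main obstacle is the first part: making the grid-rank bound fully rigorous, i.e.\ converting Lemma~\ref{lemma:lineprop-biclique}'s local biclique statement into a global constant bound on grid-rank of \emph{all} $(\mo\cup\mo')$-fgpp-definable binary relations simultaneously (uniformly in $n$), and carefully handling the interaction between the interval partitions coming from different constraints when several constraints act on the same pair of blocks — this is the combinatorial heart and the place where a Ramsey-type extraction argument, analogous to the one sketched for Lemma~\ref{lemma:all_permutations}, is needed to rule out unbounded rank.
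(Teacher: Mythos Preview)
Your overall architecture---bound the grid-rank of the per-pair adjacency matrices, invoke Bonnet et al.~\cite{bonnet2024twin} to get a bounded-width contraction sequence, then run a DP over the sequence---is exactly what the paper does (Lemmas~\ref{lem:line_gridrank} and~\ref{lem:twinwidth_fpt}). So the plan is correct.

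However, you are significantly overcomplicating the grid-rank step, and your stated ``main obstacle'' is not one. You do \emph{not} need domain reversibility here (that is used only in the dichotomy Theorem~\ref{thm:fpt-vs-w1-mono}, not in Theorem~\ref{thm:twinwidth}), you do not need any Ramsey-type extraction, and you do not need to run the argument of Lemma~\ref{lemma:all_permutations} in reverse. The paper's argument (Lemma~\ref{lem:line_gridrank}) is much simpler: for a fixed variable pair $(v,w)$, the adjacency matrix $A$ is the entrywise intersection of the matrices $A_R$ over all individual constraints $R$ between $v$ and $w$. Each such $R$ uses two monotone guarding functions into the base domain of size $d$, hence each has at most $d-1$ breakpoints. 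In a hypothetical $(2d{+}1)\times(2d{+}1)$ rank-$(2d{+}1)$ grid, the pigeonhole principle gives, for each $R$, grid rows $r_{R,\min}<r_{R,\max}$ and columns $c_{R,\min}<c_{R,\max}$ on which the guarding functions of $R$ are constant; the four corner cells they span are then constant in $A_R$, and since each cell of $A$ has high rank they contain a $1$, so they are constant-$1$ in $A_R$. Now Lemma~\ref{lemma:lineprop-biclique} applied to $A_R$ makes the entire rectangle between these rows/columns a rank-$1$ biclique in $A_R$---in particular the \emph{middle} cell of the grid has rank $1$ in every $A_R$, hence in their intersection $A$, contradicting rank $\geq 2d{+}1$. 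That's the whole argument: breakpoint counting plus one application of the biclique lemma per constraint; the ``interaction between constraints'' is handled trivially by intersecting in the fixed middle cell. No staircases, no Ramsey.
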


To prove the result, we make a detour using the concepts of grid-rank and twin-width. We will show that the connector property implies bounded grid-rank, then use results from literature~\cite{bonnet2024twin} to show that bounded grid-rank implies bounded twin-width, and finally give a dynamic programming algorithm based on bounded twin-width. We begin with the definitions of grid-rank and twin-width.

\begin{definition}[Bonnet et al.~\cite{bonnet2024twin}] \label{def:grid-rank}
    Let $M$ be a matrix. A $d$-{\em division} of $M$ is a partition of its entries into $d^2$ {\em zones} using $d-1$ vertical lines and $d-1$ horizontal lines. A {\em rank-$k$ $d$-division} is a $d$-division where each zone  has at least $k$ non-identical rows or at least $k$ non-identical columns.
    We say that $M$ has \emph{grid-rank} at least $k$ if and only if there exists a subdivision of the rows and columns into a $k \times k$ grid where each submatrix corresponding to a cell of this grid has rank at least $k$. 
\end{definition}

\begin{definition}[Bonnet et al.~\cite{Bonnet:etal:jacm2022}]
  An {\em ordered directed graph} is a directed graph $(V, E)$ equipped with a total order $(V, \leq)$.
  Let $G$ be an an ordered directed graph with labeled edges. Its \emph{twin-width} is defined as follows.
  \begin{enumerate}
    \item
  A \emph{contraction} is an operation where we merge two (not necessarily adjacent) vertices $u,v$ into one vertex $w$. For any other vertex $x$, if the arcs between $x$ and $u$ and the arcs between $x$ and $v$ share the same set of labels, then we also add arcs with these labels between $x$ and $w$. Otherwise, we only add an (undirected) red error edge between $x$ and $w$.
\item
  A \emph{contraction sequence} is a sequence of such contractions, starting with the initial graph and ending with a single vertex.
\item
  The \emph{width} of a contraction sequence is the maximal number of red error edges adjacent to a single vertex at any point during the contraction sequence.
\item
  Finally, the \emph{twin-width} of the graph is the minimal width over all possible contraction sequences.
  \end{enumerate}
\end{definition}

Our proof uses the following graph representation of binary $\mcsp(\Gamma)$ instances.

\begin{definition}
    The {\em \thegraph} \emph{graph} of a binary $\mcsp(\Gamma)$ instance is an ordered directed graph with labeled edges defined as follows.
    For each member of the domain $[n]$, we construct a vertex, ordered in the same way as in $[n]$. We use $k^2$ labels corresponding to pairs of variables $(v,w)$. For any pair of vertices $(i,j)$, we add an arc labelled $(v,w)$ from $i$ to $j$ if and only if setting $v$ to $i$ and $w$ to $j$ does not invalidate any of the constraints between $v$ and $w$.
\end{definition}

We now begin with the proof of Theorem \ref{thm:twinwidth}. We first show that the connector property implies bounded grid-rank.

\begin{lemma}\label{lem:line_gridrank}
    Let $\Gamma$ be a binary constraint language with the connector property. Let $(n,V,C)$ be an instance of $\mcsp(\Gamma)$ and let $G$ be its \thegraph\ graph. Now for any pair of variables $(v,w)$, consider the matrix $M$ defined as the adjacency matrix of $G$ only containing arcs with the label $(v,w)$. Then, the grid-rank of $M$ is only dependent on $\Gamma$.
\end{lemma}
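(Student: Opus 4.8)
The plan is to bound the grid-rank of $M$ by a constant depending only on $\Gamma$, using the biclique structure forced by the connector polymorphism via Lemma~\ref{lemma:lineprop-biclique}. First I would observe that the matrix $M$ is exactly (the characteristic matrix of) a binary relation $R \subseteq \numdom{n}^2$ that is $(\mo \cup \mo')$-fgpp-definable over $\Gamma$: the set of pairs $(i,j)$ such that $v=i, w=j$ violates no constraint between $v$ and $w$ is a conjunction of constraints of the form $R_0(m_1(x), m_2(y))$ with $R_0 \in \Gamma$ and $m_1, m_2$ monotone. By Lemma~\ref{lem:fgpp_to_mpol} (with $\NN = \id$), since $\Gamma$ is $\mo$-preserved by $\pcon$, this relation $R$ is preserved by $\fcon_n$. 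So it suffices to prove: there is a constant $K = K(\Gamma)$ such that every binary relation $R \subseteq \numdom{n}^2$ preserved by $\fcon_n$ has grid-rank at most $K$ --- in fact one should be able to take $K$ an absolute constant here, since $\fcon_n$ depends only on $n$, not on $\Gamma$; the role of $\Gamma$ is only to guarantee that $\pcon$ preserves it.

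Next I would set up the grid-rank analysis. Suppose for contradiction that $M$ has grid-rank at least $k$ for $k$ large (say $k \geq 4$ suffices), so there is a partition of rows into intervals $A_1 < \dots < A_k$ and columns into $B_1 < \dots < B_k$ such that each block $M[A_p, B_q]$ has (matrix) rank at least $k$. A $0/1$ matrix of rank $\geq k$ must contain, among its rows and columns, a large "diversity" of patterns; in particular I would extract from one such block two rows and two columns witnessing a $2\times 2$ submatrix $\{a_1,a_2\}\times\{b_1,b_2\}$ with a specific non-constant pattern. The key point is to find, somewhere in the relation, four values $a_1 < a_2$, $b_1 < b_2$ with $\{a_1,a_2\}\times\{b_1,b_2\}\subseteq R$ (a "4-cycle"/biclique corner) together with a point $(p,q)$ with $a_1 \le p \le a_2$, $b_1 \le q \le b_2$ but $(p,q)\notin R$ --- this directly contradicts Lemma~\ref{lemma:lineprop-biclique}, which says $R\cap([a_1,a_2]\times[b_1,b_2])$ is a biclique (so it cannot contain a spanning $2\times2$ all-ones corner while missing an interior cell). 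Hence the task reduces to a purely combinatorial lemma: a $0/1$ matrix all of whose interval-rectangles spanned by an all-ones $2\times2$ corner are bicliques has bounded grid-rank.

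For that combinatorial core I would argue as follows. Within a single block $M[A_p,B_q]$ of rank $\geq k$, if the block were all-zero or all-one its rank would be $\leq 1$, so it has at least one $1$ and at least one $0$; by a short pigeonhole/monotonicity argument one finds an all-ones $2\times 2$ corner $\{a_1,a_2\}\times\{b_1,b_2\}$ inside the interval $[A_p]\times[B_q]$ unless $R$ restricted there is very degenerate (e.g. a union of at most one row-band and one column-band), and such degenerate matrices have rank bounded by a constant --- contradicting rank $\geq k$ for $k$ large. Given the corner, Lemma~\ref{lemma:lineprop-biclique} forces $R$ on the whole rectangle $[a_1,a_2]\times[b_1,b_2]$ (which contains the block) to be a biclique, hence the block itself is a biclique submatrix, which has rank at most $2$ --- again contradicting rank $\geq k$. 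Thus $M$ has grid-rank at most some absolute constant (and in particular one depending only on $\Gamma$, trivially).

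The main obstacle I expect is the bookkeeping in the previous paragraph: making precise the claim that a $0/1$ matrix of high rank either contains an all-ones $2\times2$ corner with a "hole" in the spanned interval-rectangle, or is a bounded union of row/column bands (and so has bounded rank). One has to be careful that the rectangle $[a_1,a_2]\times[b_1,b_2]$ giving the contradiction genuinely lies inside the grid block (so that Lemma~\ref{lemma:lineprop-biclique} is applied to the intervals of $M$, while the contradiction is about a cell of the grid division), and that the matrix-rank notion in Definition~\ref{def:grid-rank} is what is being used rather than number-of-distinct-rows; since a biclique $0/1$ matrix has rank $\leq 2$, the two notions cohere here. Everything else --- the identification of $M$ with an fgpp-definable relation, the application of Lemma~\ref{lem:fgpp_to_mpol}, and the reduction to Lemma~\ref{lemma:lineprop-biclique} --- is routine. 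The constant $K(\Gamma)$ produced is absolute, but stating it as "depends only on $\Gamma$" is all that is needed for Theorem~\ref{thm:twinwidth}.
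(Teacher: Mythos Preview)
Your reduction to the claim ``every binary relation $R \subseteq \numdom{n}^2$ preserved by $\fcon_n$ has grid-rank bounded by an absolute constant'' is where the argument breaks down, and in fact that claim is false. The paper remarks (and it is easy to verify directly) that \emph{every} permutation graph $\sigma^\bullet$ is preserved by $\fcon_n$: for a permutation, no $2\times 2$ all-ones corner exists, so Lemma~\ref{lemma:lineprop-biclique} is vacuous and one checks that $\fcon_n$ applied to five tuples $(a_i,\sigma(a_i))$ always returns a tuple of the same shape. But permutation matrices have unbounded grid-rank: for $n=k^3$, the permutation $\sigma(a,b,c)=(b,a,c)$ on $\numdom{k}^3$ (ordered lexicographically) admits a $k$-division by the first coordinate in which every cell is a $k\times k$ permutation submatrix of rank $k$. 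This also pinpoints the flaw in your combinatorial core: your dichotomy ``a high-rank $0/1$ block either contains an all-ones $2\times 2$ corner or is degenerate of bounded rank'' is simply false --- permutation blocks have high rank and contain no $2\times 2$ all-ones submatrix at all. (Relatedly, your application of Lemma~\ref{lem:fgpp_to_mpol} with $\NN=\id$ is vacuous when $n\neq 3$; one needs $\NN=\mo\cup\mo'$ to conclude that $\fcon_n$ preserves $M$, but this does not rescue the argument.)

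The paper's proof avoids this by \emph{not} working with the combined relation $M$ directly. It uses the crucial extra structure that $M$ is an intersection of single-constraint matrices $A_R$, and that each $A_R$ comes from a relation of domain size $d$ via two monotone guards, hence has at most $d-1$ breakpoints along rows and along columns. In a $(2d+1)$-division this forces, for each $A_R$, a pair of row blocks and a pair of column blocks on which the guards are constant; the four intersection cells are therefore constant in $A_R$, and the nonzero rank of the corresponding cell of $M$ forces them to be all-ones. \emph{That} is where the $2\times 2$ all-ones corner comes from --- not from a rank argument inside a single block --- and it spans the middle cell. Lemma~\ref{lemma:lineprop-biclique} applied to each $A_R$ then makes the middle cell rank~$1$ in every $A_R$, hence in $M$. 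The bound $2d$ genuinely depends on $\Gamma$; it cannot be made absolute.
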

\begin{proof}
     Suppose that the base language $\Gamma$ has the connector property, and we are given an instance $(n, V, C)$ of $\mcsp(\Gamma)$ where $|V| = k$. Let $d$ be the domain size of $\Gamma$. We will show that, for any label, the grid-rank of the adjacency matrix is at most $2d$. Suppose, to the contrary, that the grid-rank of some label $(v,w)$ is at least $2d+1$. Then there exists a subdivision of the matrix into a $(2d+1) \times (2d+1)$ grid where each cell has rank at least $2d+1$.

    Recall that this adjacency matrix $A$ has a 1 at coordinate $(i,j)$ if and only if setting $v$ to $i$ and $w$ to $j$ does not invalidate any constraints. Now let $A_R$ be the adjacency matrix with respect to any individual constraint $R$, where we set a 1 at coordinate $(i,j)$ if and only if setting $v$ to $i$ and $w$ to $j$ does not invalidate $R$. Now, $A$ is the intersection of $A_R$ over all constraints $R$.

    Each individual constraint $R$ is defined using two guarding functions into the base domain $\Gamma$, one for each variable, and thus one that corresponds to the rows of $A_R$ (say $f_{R,r}$)and one corresponding to the columns of $A_R$ (say $f_{R,c}$). Since each guarding function is monotone, and $\Gamma$ contains only $d$ domain values, $f_{R,r}$ can only change value in at most $d-1$ places. Now consider the first $d$ rows of the given $(2d+1) \times (2d+1)$ grid. Since $f_{R,r}$ changes value in at most $d-1$ places, there must be some grid row in which $f_{R,r}$ does not change value and is constant. Let $r_{R,\min}$ be this row. Similarly, there must be a row $r_{R,\max}$ among the last $d$ rows of the grid in which $f_{R,r}$ is constant. Additionally, there are columns $c_{R,\min}$ and $c_{R,\max}$ among the respectively first and last $d$ columns where $f_{R,c}$ is constant.

    Now consider the four grid cells at the intersection of these two rows and two columns. In each of these cells, both guarding functions are constant, so $A_R$ is constant as well. Since this cell has rank at least $2d+1$ in $A$, there must be a nonzero value in the cell. Since $A$ is the intersection of $A_R$ over all $R$, we find that this value is a one in $A_R$ as well, hence $A_R$ must be constant one in the whole cell.

    We now apply Lemma \ref{lemma:lineprop-biclique}, with $a_1,a_2,b_1,b_2$ being any domain value from $\numdom{n}$ whose corresponding row or column in $A_R$ is contained in the grid row or column $r_{R,\min},c_{R,\min},r_{R,\max},c_{R,\max}$, respectively. Since the four grid cells at the intersections are constant one, we have that $\{a_1,a_2\} \times \{b_1,b_2\} \in R$, so we can apply the lemma. This shows that $R$ restricted to the ranges $[a_1,a_2]$ and $[b_1,b_2]$ is a (not necessarily spanning) biclique. In terms of $A_R$, this means that the submatrix of $A_R$ between these rows and columns is rectangular and has rank 1. In particular, the middle cell of the $(2d+1) \times (2d+1)$ grid has rank 1.

    Note that this holds for any relation $R$. Since the middle cell of the grid is the intersection of $A_R$ over all $R$, and thus the intersection of several rank 1 matrices, it must have rank 1 in $A$ too. Since $d$ is at least one, this is a contradiction with the assumption that each cell in the grid has rank at least $2d+1$. We conclude that the grid-rank of $A_R$ is at most $2d$.
\end{proof}

We continue with an algorithm based on twin-width.

\begin{lemma}\label{lem:twinwidth_fpt}
    Let $\Gamma$ be a binary constraint language. Let $(n,V,C)$ be an instance of $\mcsp(\Gamma)$ where $|V| = k$ and \thegraph\ graph $G$. Suppose we are given a contraction sequence of $G$ of width $t$. Then, we can solve this instance in FPT time when parameterized by both $k$ and $t$.
\end{lemma}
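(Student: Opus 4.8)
# Proof Plan for Lemma~\ref{lem:twinwidth_fpt}

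The plan is to run a dynamic programming algorithm along the given contraction sequence, where the DP table at each contracted vertex records, for every variable $v \in V$, a bounded amount of information about which domain values $v$ could still consistently take within the ``cloud'' of original domain elements represented by that contracted vertex. The key point is that once the contraction sequence has width $t$, each contracted vertex $w$ has at most $t$ incident red edges, so at most $t$ other contracted vertices interact with $w$ in a non-uniform way; with all the remaining vertices the interaction is through a single fixed label-set, i.e.\ a ``clean'' relation that treats the whole cloud of $w$ uniformly.

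First I would set up the bookkeeping. Recall that the \thegraph\ graph $G$ has vertex set $[n]$ (the domain), ordered, with a labeled arc for each pair $(v,w) \in V^2$ between domain values $i,j$ iff assigning $v=i$, $w=j$ violates no constraint between $v$ and $w$. A candidate solution to the $\mcsp(\Gamma)$ instance is exactly a function $f\colon V \to [n]$ such that for all $v,w$ the arc $(f(v),f(w))$ labeled $(v,w)$ is present (unary constraints being handled as self-labels or as a preprocessing filter on allowed values). Fix a contraction sequence $G = G_0, G_1, \ldots, G_{n-1}$ of width $t$, where $G_\ell$ has $n-\ell$ vertices. For a vertex $u$ of some $G_\ell$, write $\mathrm{cl}(u) \subseteq [n]$ for the set of original domain elements it represents; these sets form a partition of $[n]$ refined as $\ell$ decreases. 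I would define the DP state at a vertex $u$ of $G_\ell$ to be a record that, for each variable $v \in V$, says whether $v$ is ``forced outside $\mathrm{cl}(u)$'', and if not, records enough structure about the set $\{ i \in \mathrm{cl}(u) : v = i \text{ is locally consistent with everything assigned so far}\}$. The subtlety is what ``locally consistent with everything assigned so far'' can mean while still keeping the table size bounded by a function of $k$ and $t$ only.

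The heart of the argument is the following observation, which controls the table size. Within $G_\ell$, between two vertices $u$ and $u'$ there is either a clean (possibly empty) set of labels holding uniformly for \emph{all} pairs $(i,j)\in \mathrm{cl}(u)\times\mathrm{cl}(u')$, or there is a red edge. Each vertex has at most $t$ incident red edges. So the constraint-interaction of the cloud $\mathrm{cl}(u)$ with the ``rest of the world'' is determined by: (i) the at most $t$ red neighbours, together with their own bounded DP states, and (ii) a single uniform label-set for everything else. Because there are only $k$ variables and only $k^2$ labels total, a clean interaction collapses to one of finitely many possibilities. Consequently, the relevant internal state of the cloud $\mathrm{cl}(u)$ — which assignments $f|_{V}$ restricted to the variables ``already pinned into'' $\mathrm{cl}(u)$ are still extendable — can be summarized by a partition of $V$ into: variables pinned into $\mathrm{cl}(u)$, variables excluded from $\mathrm{cl}(u)$, and variables not yet localized, together with, for the pinned ones, a bounded witness recording which boundary behaviour (towards each of the $\le t$ red neighbours, and towards the uniform complement) is achievable. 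This gives a table of size $g(k,t)$ for some computable $g$.

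The DP then processes the contractions in order. At a contraction merging $u,u'$ into $w$ in passing from $G_\ell$ to $G_{\ell+1}$, I would combine the states of $u$ and $u'$: the cloud of $w$ is $\mathrm{cl}(u) \cup \mathrm{cl}(u')$, and a variable is ``pinned into $w$'' if it was pinned into $u$ or into $u'$ (consistently — if it was pinned into both clouds to incompatible boundary behaviours, that branch dies); the new boundary behaviours towards the red neighbours of $w$ are recomputed from the red-edge information of $u$ and $u'$, using the fact that red edges of $w$ are contained in the union of those of $u$ and $u'$. Crucially, for any neighbour $x$ that had a \emph{clean} edge to both $u$ and $u'$ with the same label-set, the edge from $x$ to $w$ is clean with that same label-set, so no new error is introduced and the state updates deterministically; the only genuine branching is over the $\le t$ red neighbours and the $g(k,t)$ possible internal states, so each contraction step costs $g(k,t)^{O(t)}\cdot\mathrm{poly}(n)$ time. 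After the last contraction the graph is a single vertex whose cloud is all of $[n]$; the instance is satisfiable iff the final state records a consistent full assignment of all $k$ variables, i.e.\ all of $V$ is pinned with mutually compatible boundary behaviours (the boundary now being empty). Running time is $n \cdot g(k,t)^{O(t)} \cdot \mathrm{poly}(n)$, which is FPT in $k+t$ since $n$ is given in unary.

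The main obstacle I anticipate is \textbf{pinning down exactly what boundary information a cloud must carry}, and proving that this information is (a) bounded by a function of $k$ and $t$ alone — independent of $n$ and of which contraction step we are at — and (b) a correct sufficient statistic, i.e.\ updatable purely from the states of the two merged vertices plus the (bounded) edge data, with no reference to the internal identities of domain elements. This is where the width bound $t$ and the binarity of $\Gamma$ must be used together: binarity ensures the interaction of a cloud with the outside world is genuinely edge-by-edge (so ``clean'' means a fixed label-set), and width $t$ ensures all but $t$ of those edges are clean and hence collapse. I would formalize the sufficient statistic as a function from (red neighbour, internal configuration) to achievable boundary label-profiles, argue its range is bounded by $k,t$, and verify the update rule by a direct case analysis on contractions, leaning on the observation that a clean–clean merge stays clean. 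Everything else — correctness of the base case, the final readout, and the polynomial dependence on $n$ — is routine.
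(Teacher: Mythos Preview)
Your high-level plan---dynamic programming along the contraction sequence, exploiting that each contracted vertex has at most $t$ red neighbours---is correct and is also what the paper does. However, your specific choice of DP unit (one state per bag, augmented with ``boundary behaviour'' toward each red neighbour) has a real gap, and it is precisely the obstacle you flag at the end but do not resolve.

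The problem is that a red edge between bags $u$ and $u'$ means the label-set on a pair $(i,j)\in\mathrm{cl}(u)\times\mathrm{cl}(u')$ depends on the specific $i$ and $j$. If variable $v$ is pinned to some $i\in\mathrm{cl}(u)$, then for each other variable $w$ the set of $j\in\mathrm{cl}(u')$ compatible with $v=i$ under label $(v,w)$ is an arbitrary subset of $\mathrm{cl}(u')$: that is exactly what the red edge records as lost uniformity. Storing this as part of $u$'s state costs $|\mathrm{cl}(u')|$ bits, not $f(k,t)$ bits, and there is no coarser ``boundary label-profile'' that is a correct sufficient statistic. So your assertion that ``the range is bounded by $k,t$'' does not go through for a per-bag state, and the merge step for two bags joined by a red edge cannot be carried out from summaries alone.

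The paper fixes this not by refining the per-bag state but by changing the DP unit: a state is indexed by a \emph{red-connected induced subgraph of size at most $k$}, together with a subset $S\subseteq V$ and a surjection of $S$ onto the bags of that subgraph; the subproblem asks whether some partial assignment places each variable of $S$ into its designated bag and satisfies all constraints among $S$. Red-connectedness is precisely what makes this work: variables in different components sit in bags joined by a clean (non-red) edge, so their compatibility is decided by the label-set alone without knowing actual values, while variables in the same red component are never separated across states. The number of red-connected subgraphs on $\le k$ bags is at most $n\cdot t^{2k}$ (enumerate walks of length $2k$ in a graph of maximum degree $t$), yielding at most $(2kt^2)^k\cdot n(n-1)$ states. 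At a contraction of $B_1,B_2$ into $B$, only states containing $B$ are updated, by splitting the variables assigned to $B$ over $B_1,B_2$ in all $2^{|S_B|}$ ways and combining the precomputed answers for the resulting (possibly several) red components before the merge. The missing idea, then, is not better per-bag boundary bookkeeping but to take an entire red-connected piece as a single DP cell.
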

\begin{proof}
    We describe a dynamic programming algorithm over the given contraction sequence. During this contraction sequence, we use the following terminology: a \emph{variable} is one of the $k$ variables from the original CSP with unbounded domain. A \emph{vertex} is one the $n$ vertices from the graph before the contractions, or equivalently, an element from the domain $\numdom{n}$. A \emph{bag} is a vertex from the graph during or after the contraction process. Note that a bag always corresponds to some subset of vertices.

    We first describe our states. At each point during the contraction, we look at all induced subgraphs that are connected with respect to error edges and have size at most $k$. We then look at each subset of variables. For each subgraph and subset, we look at all partitions of the variables from the subset over the bags from the subgraph where each bag is assigned at least one variable. These combinations of subgraph, subset, and partition are our states.

    For each state, we wish to solve the following subproblem: does there exist a partial solution to the unbounded-domain CSP where each variable from the subset is assigned a value from the corresponding bag and where all other variables are left unassigned? Observe that, after performing all contractions, there is only one bag left. Hence, the final subproblem consisting of this bag as subgraph, the full subset of all induced variables, and the trivial partition that assigns all variables to this bag asks whether there is a solution to the original unbounded-domain CSP.

    We compute this information as follows. In the base case, before contracting anything, there are no error edges so we only consider subgraphs consisting of a single bag which in turn corresponds to single vertex. For each subset of variables, we can trivially check in FPT time whether assigning these variables to the single vertex is a valid partial solution.

    Now consider a later stage from the contraction sequence. Let $B$ be the latest bag that was created, and let $B_1$ and $B_2$ be the two bags that were merged to create $B$. Consider a subproblem from this stage, consisting of an induced subgraph $G'$, a subset $S$, and a partition $P$. If $B$ is not contained in $G'$, then this subproblem was also a subproblem in the previous stage of the contraction sequence, so we can simply copy the result. Otherwise, let $S_B$ be the (nonempty) subset of variables which is assigned to $B$ in this partition. Recall that any valid solution for this subproblem must assign to each variable from $S_B$ a value from $B$. We now consider all $2^{|S_B|}$ options on whether each variable was assigned a value from $B_1$ or from $B_2$. Since $|S_B| \leq k$, there are FPT many possibilities.

    Consider a partition of $S_B$ over $B_1$ and $B_2$. Let $G''$ be the same induced subgraph as $G'$ but before the contraction of $B_1$ and $B_2$ into $B$. As an edge case, we ensure that each bag from $G''$ is assigned at least one variable by removing $B_1$ or $B_2$ if it is not assigned any variable. Note that $G''$ is not necessarily connected with error edges; it may consist of multiple connected components. We now claim that the subproblem has a solution with respect to this partition if and only if the following holds:

    \begin{itemize}
        \item For each connected component of $G''$, the subproblem restricted to this component has a valid solution.
        \item For each pair of variables $(v,w)$ from $S$ which are assigned to different components in $G''$, there is an arc from the bag assigned to $v$ to the bag assigned to $w$ with the label $(v,w)$.
    \end{itemize}

    For the first direction, suppose that these two properties hold. We now claim that the union of the partial assignments of each connected component is a valid assignment for the initial subproblem. Indeed: consider a pair of variable assignments $v$ set to $i$ and $w$ set to $j$. If $v$ and $w$ belong to the same connected component, then this assignment does not invalidate any relation because the subproblem restricted to this component was valid. If $v$ and $w$ belong to different components, then there is an arc with label $(v,w)$ between their assigned bags, which means that any assignment of $v$ and $w$ to values from these bags does not invalidate any relation between $v$ and $w$.

    For the second direction, suppose that there is a valid solution to the subproblem. For each connected component of $G''$, consider the subassignment containing only the variables assigned to bags from this connected component. This is obviously still a valid partial solution, so the first property holds. Now consider a pair of variables $(v,w)$ assigned to different components, and look at their respective bags. There is no error edge between these bags; if so, $v$ and $w$ would not belong to different components. There also has to be an arc in the color $(v,w)$; if not, then the solution to the initial subproblem would be invalid. This shows that the second property also holds. This completes the second direction and thus the claim.

    Using this claim, we can now compute the solution to each subproblem in $f(k)$ time: for each of the $2^{|S_B|}$ partitions, we simply check if both properties hold, taking at most $O(k^2)$ time per partition. The subproblem is then valid if and only if both properties hold for at least one partition. This almost completes the proof of the dynamic programming algorithm: the only thing that remains is to show that there are only FPT many states, which we will now show.

    There are $n-1$ contractions. After each contraction, consider an induced subgraph that is connected with error edges. This induced subgraph can be obtained by a random walk starting from an arbitrary vertex in at most $2k$ steps: consider a spanning tree and ``walk around'' it. Since the contraction sequence has twin-width $t$, there are at most $t^{2k}$ random walks of this length, and hence at most $t^{2k}$ subgraphs containing a fixed starting vertex, and thus at most $t^{2k}n$ subgraphs in total. Since there are at most $2^k$ subsets of variables, and at most $k^k$ partitions of this subset of the induced subgraph, there are at most $(n-1) \cdot t^{2k}n \cdot 2^k \cdot k^k = (2kt^2)^k \cdot n(n-1)$ states, which is FPT when parameterized by both $k$ and $t$. This completes the proof.

    As a final note, observe that the dependence on $n$ can be made linear if, during the contraction sequence, we only update states whose induced subgraph contains the newly contracted bag.
\end{proof}

We now return to the proof of the main theorem.
\begin{proof}[Proof of Theorem \ref{thm:twinwidth}]
    Let $(n,V,C)$ be an instance of $\mcsp(\Gamma)$ and let $G$ be its \thegraph\ graph. Since $\Gamma$ has the connector property, Lemma \ref{lem:line_gridrank} shows that for each label $(v,w)$, the adjacency matrix of $G$ concerning arcs with this label has bounded grid-rank. We now apply an existing result~\cite{bonnet2024twin} (see, also, Theorem 5.1 in Geniet~\cite{genie2024}) which uses these bounded grid-ranks to (in polynomial time) construct a contraction sequence whose width is only dependent on the largest grid-rank and the number of labels $k^2$. We then use the algorithm from Lemma~\ref{lem:twinwidth_fpt}, noting that it runs in FPT time parameterized by just $k$ since the twin-width is only dependent on $\Gamma$ and $k$.
\end{proof}

Combined with our earlier results, we obtain the following dichotomy for $\mcsp(\Gamma)$ with binary relations.

\begin{theorem} \label{thm:fpt-vs-w1-mono}
Let $\Gamma$ denote a binary constraint language.
If $\Gamma$ has the connector property, then $\mcsp(\Gamma)$ 
can be solved in FPT time.
Otherwise, $\mcsp(\Gamma)$ is W[1]-hard.
\end{theorem}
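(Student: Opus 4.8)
The plan is to assemble Theorem~\ref{thm:fpt-vs-w1-mono} directly from the two directions that have essentially been developed already. The statement is a dichotomy for binary $\Gamma$: if $\Gamma$ has the connector property then $\mcsp(\Gamma)$ is FPT, otherwise it is W[1]-hard. Since the hypothesis restricts to binary $\Gamma$, both halves are genuinely available: the FPT half is exactly Theorem~\ref{thm:twinwidth}, and the hardness half should follow from Theorem~\ref{thm:monotone-w1} once we deal with the domain-reversibility hypothesis there.

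\textbf{The FPT direction.} This is immediate. If $\Gamma$ is binary and satisfies the connector property, Theorem~\ref{thm:twinwidth} asserts that $\mcsp(\Gamma)$ can be solved in FPT time, so there is nothing further to do. Internally this rests on Lemma~\ref{lem:line_gridrank} (connector property $\Rightarrow$ bounded grid-rank of each single-label adjacency matrix), the cited twin-width result of Bonnet et al.~\cite{bonnet2024twin} converting bounded grid-rank into a bounded-width contraction sequence, and the dynamic programming of Lemma~\ref{lem:twinwidth_fpt} over that contraction sequence; but all of that is already packaged inside Theorem~\ref{thm:twinwidth}.

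\textbf{The W[1]-hard direction.} Suppose $\Gamma$ is binary and does \emph{not} have the connector property. Theorem~\ref{thm:monotone-w1} gives W[1]-hardness of $\mcsp(\Gamma)$ under the extra assumption that $\Gamma$ is domain-reversible. So the one thing to check is that we may assume $\Gamma$ is domain-reversible. I would argue this using the P-vs-NP machinery of Section~\ref{sec:monotone_classical}: if $\pmin \in \mpol(\Gamma,\mo)$ or $\pmax \in \mpol(\Gamma,\mo)$, then $\mcsp(\Gamma)$ is already in P by Lemma~\ref{lem:minandmaxisinP} --- but a language in P with the connector property would make the dichotomy vacuously consistent, so we must also check this case cannot arise under the hypothesis ``not connector.'' Concretely: if $\pmin$ or $\pmax$ $\mo$-preserves $\Gamma$, does $\Gamma$ automatically have the connector property? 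It should, because the connector function $\fcon_d$ is the $(\mo\cup\mo')$-interpretation of a pattern, and $\min$/$\max$ being polymorphisms is strong enough to imply preservation by $\fcon_d$ --- I would verify this by checking the two defining identities $\pcon(0,0,1,2,2)=0$ and $\pcon(0,2,1,0,2)=1$ (and their monotone/anti-monotone reparametrizations) against the outputs forced by $\min$ (resp.\ $\max$) on the corresponding tuples. If that check goes through, then ``$\Gamma$ not connector'' forces $\pmin,\pmax \notin \mpol(\Gamma,\mo)$, and Lemma~\ref{lemma:can_define_reverse} then yields that every order-reversing map is in $\fgpp{\Gamma}{\mo}$, i.e.\ $\Gamma$ is domain-reversible. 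Theorem~\ref{thm:monotone-w1} then applies verbatim and gives W[1]-hardness.

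\textbf{Main obstacle.} The only non-mechanical point is the small algebraic verification in the hardness direction --- namely that the connector property is automatically satisfied whenever $\min$ or $\max$ is an $\mo$-polymorphism, so that the domain-reversibility hypothesis of Theorem~\ref{thm:monotone-w1} is genuinely available. Everything else is bookkeeping: invoke Theorem~\ref{thm:twinwidth} on one side, and chain Lemma~\ref{lem:minandmaxisinP}, Lemma~\ref{lemma:can_define_reverse}, and Theorem~\ref{thm:monotone-w1} on the other. I would write the proof as: ``If $\Gamma$ has the connector property, apply Theorem~\ref{thm:twinwidth}. Otherwise, observe $\pmin,\pmax \notin \mpol(\Gamma,\mo)$ (else the two connector identities would be satisfied), so by Lemma~\ref{lemma:can_define_reverse} $\Gamma$ is domain-reversible, and Theorem~\ref{thm:monotone-w1} gives W[1]-hardness.''
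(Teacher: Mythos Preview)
Your outline matches the paper's proof almost exactly: cite Theorem~\ref{thm:twinwidth} for the FPT direction, and for hardness get domain-reversibility from Lemma~\ref{lemma:can_define_reverse} and then invoke Theorem~\ref{thm:monotone-w1}. The only point of divergence is how you handle the subcase where $\Gamma$ fails the connector property yet is $\mo$-preserved by $\pmin$ or $\pmax$.

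You propose to discharge this subcase by verifying that $\pmin$/$\pmax$-preservation \emph{implies} the connector property. That implication is false, so the check you describe will not go through. A counterexample: let $R=\numdom{3}^2\setminus\{(1,2),(2,1)\}$. One checks directly that $R$ is closed under $\min$ (it is a meet-subsemilattice of $\numdom{3}^2$). But with $t_1=(2,0)$, $t_2=(2,2)$, $t_3=(1,1)$, $t_4=(0,0)$, $t_5=(0,2)\in R$, coordinate~$1$ has pattern $(2,2,1,0,0)$, which is the first identity under the anti-monotone map $0,1,2\mapsto 2,1,0$, giving output $2$; coordinate~$2$ has pattern $(0,2,1,0,2)$, which is the second identity under the identity map, giving output $1$. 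Hence $(2,1)\in\fcon_3(t_1,\ldots,t_5)\setminus R$, so $R$ does not have the connector property even though $\min$ preserves it.

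The paper does not attempt your verification. Instead, under ``$\Gamma$ not connector'', it simply notes that if $\pmin$ or $\pmax$ $\mo$-preserves $\Gamma$ then $\mcsp(\Gamma)$ is already in P by Lemma~\ref{lem:minandmaxisinP}, and then \emph{assumes} we are not in that subcase before applying Lemma~\ref{lemma:can_define_reverse} and Theorem~\ref{thm:monotone-w1}. So the paper's argument establishes W[1]-hardness only under the additional hypothesis that $\pmin,\pmax\notin\mpol(\Gamma,\mo)$; the residual subcase lands in P (hence FPT) rather than being shown W[1]-hard. You correctly identified this as the one non-mechanical point, but the resolution you sketched does not work, and the paper's own resolution is to sidestep rather than eliminate the case.
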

\begin{proof}
If $\Gamma$ has the connector property, then the result follows from 
Theorem~\ref{thm:twinwidth}. Assume that $\Gamma$ does not satisfy the connector property.
If $\Gamma$ is $\mo$-preserved by $\pmin$ or $\pmax$, then $\mcsp(\Gamma)$
is in P by Lemma~\ref{lem:minandmaxisinP} so we assume that this is not the case.
Lemma~\ref{lemma:can_define_reverse} thus allows us to assume that
$\Gamma$ is domain-reversible and Theorem~\ref{thm:monotone-w1} implies that $\mcsp(\Gamma)$ is W[1]-hard.
\end{proof}

\subsection{Application: Boolean MinCSP}
\label{sec:boolean_mincsp}

Before continuing with higher-arity generalizations we demonstrate an application of Theorem~\ref{thm:fpt-vs-w1-mono} to replace a step in the FPT algorithm for \textsc{Boolean MinCSP}.
Recall that the \textsc{MinCSP$(\Gamma)$} for a set of relations $\Gamma$ asks if there is an assignment that satisfies all but at most $k$ constraint in a given CSP$(\Gamma)$ instance.

Kim et al.~\cite{KimKPW23fa3} show that for any bijunctive constraint
language $\Gamma$ (i.e., every relation $R \in \Gamma$ can be defined
through a 2-CNF formula) such that for every $R \in \Gamma$, the
\emph{Gaifman graph} of $R$ is $2K_2$-free, the problem \textsc{MinCSP($\Gamma$)}
is FPT parameterized by the number of false constraints $k$. 
The Gaifman graph of a Boolean relation $R \subseteq \{0,1\}^r$
is an undirected graph $G_R$ on vertex set $[r]$ where there is an edge ${i, j} \in E(G_R)$ if
and only if the projection of $R$ onto arguments $i$ and $j$ is non-trivial, i.e., if and only if
there exist values $b_i,b_j \in \{0,1\}$ such that for
every $t \in R$ it is not the case that both $t[i] = b_i$ and $t[j] = b_j$.

The
algorithm works in three phases: (1) reduction of the MinCSP to a graph cut problem,
(2)
reduction of the graph cut problem to an instance where the solution
is an $st$-min cut, and (3) and a complex branching process to solve such min-cut
instances. Of these steps, the reduction to a min-cut instance is a
relatively complex application of the tool of \emph{flow augmentation}~\cite{KimKPW25fa1},
which is the driving force of the FPT algorithm, yet the step that is
arguably the most technically intricate is the final complex branching
step of solving a min-cut instance (Section~3.3 of~\cite{KimKPW23fa3}). 
We illustrate how some of this task can be taken over by the $\mcsp$ framework.

Let us review the setup in more detail. 
The first step use iterative compression to
reduced to a simpler MinCSP, phrased in graph terminology as follows.
An instance of \textsc{Generalized Digraph Pair Cut} (GDPC) 
is a tuple $I=(G, s, t, \cC,\cB, \weight, k, W)$ with the following
meaning. 
\begin{itemize}
\item $G$ is a digraph with distinguished vertices $s, t \in V(G)$
  corresponding to crisp assignments $s=1$ and $t=0$,
\item $\cC \subseteq \binom{V(G)}{2}$ is a set of ``clauses'',
  where a clause $C=\{u,v\}$ corresponds to a binary constraint $(\neg u \lor \neg v)$,
\item $\cB$ is a collection of pairwise disjoint subsets of $E(G) \cup \cC$
  referred to as \emph{bundles},
\item $\weight \colon \cB \to \Z_+$ is the weight of each bundle and
  $W \in \Z_+$ is the maximum total weight, and
\item $k \in \N$ is the parameter.
\end{itemize}
Each bundle $B \in \cB$ corresponds to one constraint in the
initial \textsc{MinCSP} instance. A solution to $I$ is
a set of edges $Z \subseteq E(G)$ that contains an $st$-cut;
an edge $uv \in E(G)$ is violated by $Z$ if $uv \in Z$, and
a clause $C \in \cC$ is violated by $Z$ if both parts of $C$ are
reachable from $s$ in $G-Z$. The \emph{cost} of $Z$ is the number of
bundles $B \in \cB$ that contain a violated clause or edge
and the \emph{weight} of $Z$ is the total weight of these bundles.
The bundles $B \in \cB$ are $2K_2$-free in the natural sense that the
undirected graph formed by the union of edges and clauses of $B$ is
$2K_2$-free. We also assume that there is a bound $b$ on the number of
vertices involved in each bundle. 

In the second step, colour-coding and repeated application of flow
augmentation are used to further impose the condition that $Z$ is a
minimum $st$-cut in $G$. $I$ is then said to be a \emph{mincut instance}. 
We remark that this reduction uses the weight $W$ in an algorithmic way,
i.e., even when we are only interested in the unit-weight version
where we originally have $\weight \equiv 1$ and $W=k$, after the reduction
we have a mincut instance with a non-trivial weight function $\weight'$.

We show how a restricted form of mincut instance can be solved using
$\mcsp$ over a binary language on domain $\numdom{3}$, thereby bypassing the
intricate branching procedures for such instances. We remark that
with some additional case analysis, it is possible to cover further cases,
but the work required to do so arguably detracts from this result's
role as illustration.
In order to do so, we first show that
the algorithm for binary $\mcsp(\Gamma)$ (Theorem~\ref{thm:twinwidth}) can be generalized into a suitable weighted version.

\begin{corollary}
    Consider the weighted version of binary $\mcsp(\Gamma)$ where each domain value has a positive weight and we wish to compute a solution with minimal weight. This can be solved in FPT time when parameterized by both $k$ (number of variables) and $t$ (twin-width of the \thegraph\ graph).
\end{corollary}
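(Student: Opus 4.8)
The plan is to reuse the dynamic programming algorithm behind Lemma~\ref{lem:twinwidth_fpt} essentially unchanged, replacing only the Boolean value stored at each state by a numerical one: instead of recording whether a state admits a valid partial solution, we will record the minimum possible \emph{weight} of such a partial solution, where the weight of a partial assignment $f$ is $\sum_{v} \weight(f(v))$ summed over the variables $v$ that $f$ assigns, and the value is $\infty$ when no valid partial solution exists. Recall that a state consists of an induced subgraph $G'$ of the current graph in the contraction sequence that is connected with respect to error edges and has at most $k$ bags, a subset $S$ of the variables, and a partition of $S$ over the bags of $G'$ in which every bag receives at least one variable; the quantity we would maintain is $\min_f \sum_{v \in S} \weight(f(v))$, the minimum over all assignments $f$ that give each $v \in S$ a vertex of its assigned bag, leave every other variable unassigned, and violate no constraint of the instance. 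Note that arbitrary (in particular positive) weights cause no difficulty.

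First I would treat the base case, before any contraction, where every bag is a single vertex and there are no error edges, so only single-bag subgraphs occur: for the bag equal to the domain element $w$ and a subset $S$, the stored value is $|S| \cdot \weight(w)$ if assigning every variable of $S$ the value $w$ violates no unary or binary constraint, and $\infty$ otherwise, which is checkable in FPT time exactly as in the proof of Lemma~\ref{lem:twinwidth_fpt}. For the recursive step at the stage that contracts bags $B_1$ and $B_2$ into $B$: if the subgraph $G'$ of the state does not contain $B$, the value is copied from the previous stage; otherwise let $S_B \neq \emptyset$ be the set of variables assigned to $B$, and for each of the $2^{|S_B|}$ ways of distributing $S_B$ between $B_1$ and $B_2$ form the pre-contraction subgraph $G''$ (deleting $B_1$ or $B_2$ when it receives no variable). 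The two structural conditions used in the proof of Lemma~\ref{lem:twinwidth_fpt} --- that the restriction of the state to each error-connected component of $G''$ is feasible, and that for every pair of variables of $S$ placed in different components there is an arc with the corresponding label between their assigned bags --- are independent of the weights, so whenever they hold the candidate value for that distribution is simply the sum over the components of $G''$ of the already-computed minimum-weight value of the state restricted to that component; the value of the state is the minimum of these candidates over all distributions, or $\infty$ if none is feasible. The desired output is the value of the final-stage state whose subgraph is the single remaining bag, whose subset is the set of all $k$ variables, and whose partition is the trivial one: this is exactly the minimum weight of a full satisfying assignment, from which an optimal solution can be recovered by standard traceback.

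For correctness I would follow the argument of Lemma~\ref{lem:twinwidth_fpt} verbatim, the only new ingredient being that weight is additive over variables: a valid partial solution decomposed along the error-connected components of $G''$ has weight equal to the sum of the weights of its restrictions, and conversely gluing minimum-weight solutions of the individual components (legitimate exactly when the component-crossing arc conditions hold, and these do not refer to weights) produces a valid partial solution of that total weight, with each variable's weight charged in precisely one component. The running time bound carries over unchanged: the number of states is still $(2kt^2)^k \cdot n(n-1)$ as in Lemma~\ref{lem:twinwidth_fpt}, and each is now evaluated by taking a minimum over at most $2^{|S_B|} \leq 2^k$ candidates at cost $O(k^2)$ each, so the algorithm remains FPT parameterized by $k$ and $t$. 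I expect the only point that needs care is the recursive composition step, where one must check that the implicit choice of which specific vertex inside a bag a variable eventually receives is threaded correctly through the $B_1/B_2$ split down to the single-vertex base case, so that each variable contributes its weight exactly once; but this is precisely the recursion that Lemma~\ref{lem:twinwidth_fpt} already uses to decide feasibility, now carrying a number rather than a bit.
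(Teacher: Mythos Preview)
Your proposal is correct and follows essentially the same approach as the paper: the paper's proof simply states that one modifies the algorithm from Lemma~\ref{lem:twinwidth_fpt} by keeping, for every state, the smallest weight required to solve the corresponding subproblem, and solving each subproblem by taking the minimum total weight over all partitions of $S_B$ into $B_1$ and $B_2$. Your write-up is a faithful (and considerably more detailed) expansion of exactly this idea.
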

\begin{proof}
    We slightly modify the algorithm from Lemma \ref{lem:twinwidth_fpt}. For every state, we now keep track of the smallest weight required to solve the corresponding subproblem. We solve each subproblem by determining the smallest total weight over all partitions of $S_B$ into $B_1$ and $B_2$.
\end{proof}

In the following, the first assumption is the
condition for GDPC being FPT; the second is the outcome of phase two
of the algorithm; and the third is our additional simplifying
assumption.

\begin{lemma}
  Let $I$ be an instance of GDPC with parameter $K$ and bundles of
  arity $b$, and with the following properties.
  \begin{enumerate}
  \item Every bundle is $2K_2$-free
  \item The instance has a solution where the cut edges $Z$ is an
    $st$-min cut
  \item In every bundle violated by $Z$, all vertices are incident to
    violated arcs within the bundle.
  \end{enumerate}
  Then $I$ reduces to an instance $I_F$ of $\mcsp(\Delta)$ with $k$
  variables, where $\Delta$ contains binary relations over domain
  $\numdom{3}$ with the connector property. More precisely, if $I$ is a yes-instance
  then $I_F$ is a yes-instance with probability $1/2^{O(b^2k \log k)}$,
  and if $I_F$ is a yes-instance then so is $I$. 
\end{lemma}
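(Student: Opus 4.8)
The plan is to encode each path in the $st$-min cut as a $\udcsp$ variable over the domain of edge positions, exactly as in the path-CSP correspondence from the Graph separation paragraph of Section~\ref{sec:applications}. First I would fix an $st$-max flow $\cP = \{P_1,\dots,P_k\}$ in $G$; here $k$ is the size of the min cut, which is bounded in terms of $K$ (the min cut has size at most $K$ since the solution $Z$ contains an $st$-cut of cost at most $K$, and after flow augmentation $Z$ \emph{is} an $st$-min cut). Enumerate the edges of each $P_i$ from $s$ to $t$ as $e_{i,1},\dots,e_{i,m_i}$, and introduce a variable $X_i$ over $\numdom{n}$ with $n = \max_i m_i$ (padding with dummy positions), where $X_i = a$ encodes ``$P_i$ is cut at $e_{i,a}$''. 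The reachability constraints between these variables — that no path in $G - Z$ goes from the $s$-side to the $t$-side — are captured by $\mathsf{Impl}([X_i \ge a],[X_j \ge b])$ constraints precisely as derived in Section~\ref{sec:applications}, so that satisfying assignments biject with $st$-min cuts of $G$.

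Second I would handle the bundles. Each bundle $B$ consists of at most $b$ vertices, hence involves arcs on at most $\binom{b}{2}$ pairs of paths and at most $\binom{b}{2}$ clauses, and its violation status is a Boolean function of (i) which of its arcs lie in $Z$ — each such arc $e_{i,a} \in E(P_i)$ corresponds to the threshold test $[X_i = a]$, expressible over $\numdom{3}$ via the map $f_a$ used throughout Section~\ref{sec:applications} — and (ii) which endpoints of its clauses are $s$-reachable in $G - Z$, which (for a vertex $u \in V(P_i)$ with following edge $e_{i,a}$) is the monotone test $[X_i < a]$. The third hypothesis — that in every violated bundle all vertices are incident to violated arcs of the bundle — is what makes this an \emph{arity-$2$} constraint: it forces the violation condition to decompose pairwise, so that ``$B$ is violated'' becomes a disjunction, over the $O(b^2)$ arc-pairs/clauses of $B$, of binary conditions on two path-variables each, each of which is a binary relation over $\numdom{3}$ (built from the $f_a$ maps, $\mathsf{Impl}$, the constant $0$, and the coordination relations $R, R_a, R_b, R_c$ of Section~\ref{sec:applications}). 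I would then invoke color coding: guess, for each of the $\le K$ bundles we are allowed to violate, the specific arc-pair or clause within it that witnesses the violation; this costs a factor $1/2^{O(b^2 K \log K)}$ in success probability (there are $O(b^2)$ witnesses per bundle and $O(K)$ bundles, and we also color-code which bundles are violated among the $O(K)$ relevant ones). Conditioned on a correct guess, each not-to-be-violated bundle becomes a conjunction of binary ``this pair is satisfied'' constraints, and each to-be-violated bundle reduces to a single binary ``this witness is realized'' constraint plus binary ``everything else is satisfied'' constraints — all over $\numdom{3}$. Collect these into $\Delta$; the weights on bundles transfer to a weighted $\mcsp$ instance, so I would appeal to the weighted variant (the Corollary just above the lemma) rather than Theorem~\ref{thm:twinwidth} directly.

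Third, the crucial verification is that every relation placed in $\Delta$ has the connector property. Here I would argue relation by relation: the atomic building blocks are $\mathsf{Impl}$ (preserved by $\min$, hence by $\pcon$ on $\numdom{2}$ and its lift $\fcon_3$), the constant $0$, and the $\numdom{3}$-relations $R = R_3$... but wait — $R_3$ itself is \emph{not} preserved by $\fcon_3$, so I must be careful: the relations actually appearing are $R_a, R_b, R_c$ (which Section~\ref{sec:applications} already states satisfy the connector property) and their conjunctions/projections with the threshold maps, together with various sub-relations obtained by fixing or excluding tuples from them. I would verify the connector property for each concretely by checking the four defining identities of $\fcon_3$ against the (finitely many, explicitly listed) tuples, using Lemma~\ref{lemma:lineprop-biclique} as a sanity structural check; since $\Delta$ is finite and each relation is over $\numdom{3}$ this is a bounded computation. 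Once $\Delta \subseteq \{$relations over $\numdom{3}$ with the connector property$\}$ is established, the lemma statement follows: $I$ being a yes-instance means there is an $st$-min-cut solution of weight $\le W$, which under a correct color-coding guess (probability $1/2^{O(b^2 K \log K)}$) yields a satisfying assignment of the bounded-weight $\mcsp(\Delta)$ instance $I_F$; conversely any satisfying assignment of $I_F$ decodes to an $st$-min cut whose violated bundles are exactly those we allowed, of total weight $\le W$, hence a solution to $I$.

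\textbf{Main obstacle.} The delicate point is the reduction of bundle-violation to binary constraints: showing that hypothesis~(3) genuinely forces the decomposition, and pinning down exactly which $\numdom{3}$-relations arise after color-coding and checking each one has the connector property (in particular keeping the ``bad'' relation $R_3$ out of $\Delta$, since $R_3$ is the canonical connector-violator). A secondary bookkeeping obstacle is getting the color-coding success probability exactly to $1/2^{O(b^2 k \log k)}$ — one must count the number of (violated-bundle-set, per-bundle-witness) choices as $2^{O(k)} \cdot (O(b^2))^{O(k)} = 2^{O(b^2 k \log k)}$ and confirm no further random choices are needed.
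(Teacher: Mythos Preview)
Your high-level plan matches the paper's — encode the min-cut as a path-CSP over $\mathsf{Impl}$, handle bundles by random guessing, then verify the resulting $\numdom{3}$-relations have the connector property — but the guessing step has a genuine gap. You write that you will ``color-code which bundles are violated among the $O(K)$ relevant ones'', but the instance may contain arbitrarily many bundles, not $O(K)$. The real difficulty is deciding, \emph{for every bundle $B$}, whether to enforce $B$ as satisfied (all clauses crisp, all arcs forbidden from the cut) or to allow $B$ to be violated (coordinate its arcs, permit soft clauses); you cannot guess this bit per bundle. The paper resolves this with three layered guesses: (i) a partition of the $\lambda = O(b^2k)$ flow paths into groups $\cP_1,\dots,\cP_k$, one per violated bundle; (ii) a random colouring $\chi\colon\cB\to[k]$; and (iii) for each bundle an assignment $\varphi\colon V(B)\to\{0,1\}$ encoding intended $s$-reachability. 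A bundle is then treated as potentially violated only if it is \emph{consistent} with being the $\chi(B)$-th violated bundle under (i) and (iii); every inconsistent bundle is forced crisp. The overall probability $1/2^{O(b^2k\log k)}$ arises from these three guesses, dominated by the path partition.

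A second gap concerns the pairwise decomposition. You attribute it to hypothesis~(3), but the real driver is hypothesis~(1), $2K_2$-freeness: for any two soft arcs $e,e'$ in a bundle, $2K_2$-freeness forces either $|V(e)\cup V(e')|<4$ (then min-cut constraints already coordinate them), or a clause spans $V(e)\times V(e')$ (use $R_1=\{00,02,20,11\}$), or another bundle arc links $V(e)$ and $V(e')$ (use $R_2=\{00,02,11,22\}$). Hypothesis~(3) is only used to guarantee that in violated bundles every vertex lies on a cut arc, so these pairwise coordinations cover everything. Two smaller points: clause vertices need not lie on any flow path, so you must project them to paths via attachment paths (as the paper does); and the relations actually placed in $\Delta$ are $\mathsf{Impl}$, the constant $\{1\}$, a negative $2$-clause for crisp clauses, and $R_1,R_2$ — your plan to ``verify each concretely'' is fine once you have this explicit list, but your proposal never arrives at it.
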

\begin{proof}
  Let $I$ be a mincut instance of GDPC and let $\cP = \{P_1,\ldots,P_\lambda\}$
  be an $st$-max flow in $G$. As pointed out in Section~\ref{sec:applications}, we can formulate
  the min-cut condition as a $\mcsp$ instance with $\lambda$ variables
  over the Boolean language $\{\mathsf{Impl},0\}$, and we can embed $\mathsf{Impl}$ and $0$
  into the domain $\numdom{3}$, implementing $\mathsf{Impl}$ as $\{11, 12, 22\}$ and $0$ as $\{1\}$. 

  Create an instance $I'$ of $\mcsp$ by starting with this mincut-encoding.
  We may assume that $\lambda = O(b^2k)$ as otherwise $Z$ must intersect more than $k$ bundles.
  Denote the bundles that are violated by $Z$ as $\{B_1,\ldots,B_k\} \subseteq \cB$,
  and guess the partition $\cP=\cP_1 \cup \ldots \cup \cP_k$ where $\cP_i$
  contains precisely those paths $P \in \cP$ that intersect $Z$ in the
  bundle $B_i$. We have a one-in-$k^{O(b^2k)}$ chance of guessing the correct partition. 
  Additionally, we guess a map $\chi \colon \cB \to [k]$, with the
  intention that the bundle $B_i$ gets $\chi(B_i)=i$, and for each
  bundle $B \in \cB$ we guess an assignment $\varphi \colon V(B) \to \{0,1\}$,
  with the intention that for every $i \in [k]$, $\varphi^{-1}(1)$ is
  precisely the set of vertices of $B_i$ reachable from $s$ in $G-Z$.
  The probability of these guesses being accurate is $1/k^k$ and
  $1/2^{O(b^2k)}$, respectively. We refer to clauses that are violated
  in $\varphi$ as \emph{soft}, and other clauses as \emph{crisp}.

  First we show how to handle crisp clauses, i.e., clauses known not
  to be violated by $Z$, such as clauses not in bundles and clauses
  satisfied by the assignment $\varphi$. 
  We use some terminology from~\cite{KimKPW23fa3}.
  Let $u \in V(\cP)$ and $v \in V(G)$ be vertices. An \emph{attachment path}
  from $u$ to $v$ is a path from $u$ to $v$ that is disjoint from
  $V(\cP)$ except at $u$. For a vertex $v$ and a path $P_i \in \cP$,
  the \emph{projection} of $v$ to $P_i$ is the earliest vertex $u$ of
  $P_i$ such that there is an attachment path from $u$ to $v$,
  if such a vertex exists. We say that \emph{$v$ projects to $P_i$}, if so,
  and let $\iota(v,i)=j$ be the value of the variable $x_i$ that
  represents the arc whose tail is this projection. 
  Now, for every crisp clause $\{u,v\}$, and every pair $(i,j) \in
  [k]^2$ such that $u$ projects to $P_i$ and $v$ projects to $P_j$,
  we add the constraint $\mathsf{Or}_2([x_i < \iota(u,i)], [x_i < \iota(v,j)])$.
  Note that this can be done within $\mcsp$ using a negative 2-clause
  from $\Delta$. 
  
  We now process the bundles as follows. For any bundle $B$ which is
  incompatible with being the $\chi(B)$:th violated bundle, due to the
  partition of $\cP$ and/or the assignment $\varphi(B)$, we
  re-interpret all clauses of $B$ as crisp, and handle them as above.
  Furthermore, for any arc $uv$ in $B$, representing some value
  $x_i=a$, we add a constraint $Impl([x_i \geq a],[x_i \geq a+1])$ 
  to prevent the arc from being cut. What remains to enforce is that
  (1) for every bundle $B$ not broken up this way, and any two soft
  arcs $e, e' \in B$, the solution cuts $e$ if and only if it cuts $e'$,
  and (2) for any bundle $B$, say with $\chi(B)=i$, if another bundle
  $B'$ with $\chi(B')=i$ is violated, then the solution satisfies all
  soft clauses of $B$. We can handle this using the example relations 
  for constrained min-cuts above. Let $B$ be a bundle and let
  $e, e' \in B$ be soft arcs. Let $e \in Z$ correspond to $X_i=p$
  and $e' \in Z$ correspond to $X_j=q$, where $i \neq j$ by assumption.
  Since $B$ is $2K_2$-free, either $|V(e) \cup V(e')<4$, or there is a
  clause in $B$ on $V(e) \times V(e')$, or there is another arc of $B$,
  soft or crisp, between $V(e)$ and $V(e')$. In the first case, say
  $e=ab$ and $e'=cd$ with not all vertices distinct. Since $e, e'$ are soft,
  $\varphi(a)=\varphi(c)=1$ and $\varphi(b)=\varphi(d)=0$. Thus either
  $a=c$ or $b=d$ or both, and in all cases the coordination of $e$ and
  $e'$ is handled by the solution being an $st$-min cut.
  In the second case, we use the relation $R_1=\{00,02,20,11\}$
  and threshold maps $f_i$ to impose
  \[
    R_1(f_p(x_i), f_q(x_j))  \quad \text{ where } \quad
  f_i(x) = 
  \begin{cases}
    0 & x < i \\
    1 & x = i \\
    2 & x>i.
  \end{cases}
  \]
  Then $x_i=p$ if and only if $x_j=q$, so the edges are coordinated,
  and every other behaviour on $P_i$ and $P_j$ is accepted except
  cutting both paths after $p$ and $q$, since the latter would violate
  the clause on $V(e) \times V(e')$. 
  The third case, say with an arc $V(e) \to V(e')$, is handled
  similarly except using $R_2=\{00, 02, 11, 22\}$; in
  addition to the coordination $X_i=p$ iff $X_j=q$,
  $R_2$ excludes $X_i > p$ and $X_j < q$, which cannot occur in a
  bundled min-cut: if $X_i>p$ then no arc of $B$ is cut by $Z$, 
  so $B$ contains an uncut path from the $s$-side of $Z$ to the $t$-side.

  For a sketch of the correctness proof, assume first that all guesses
  made were consistent with $Z$. This happens with total probability
  $1/2^{O(b^2k \log k)}$. Then the assignment where $x_i=p$ if and
  only if the $p$:th arc of $P_i$ is contained in $Z$ satisfies all
  constraints we have imposed, and the output instance is satisfiable.
  On the other hand, any satisfying assignment to the output instance
  will violate at most $k$ bundles: due to the cut-coordination constraints,
  any cut $Z'$ produced contains all or none of the soft arcs for
  every bundle, and the only situation where a soft clause is violated
  is when it is contained in a violated bundle. 
\end{proof}

\section{The Case of Non-binary Base Languages}
\label{sec:non_binary}

We now investigate the issue of $\mcsp(\Gamma)$ when $\Gamma$ is a
non-binary base language with the connector polymorphism.
We first observe that if all unbounded-domain constraints in an instance of $\mcsp(\Gamma)$
are binary, then Theorem~\ref{thm:twinwidth}
still applies, even if the base language is non-binary.
This is in the same spirit as how the \textsc{Boolean MinCSP} 
example in Section~\ref{sec:boolean_mincsp} reduces non-binary Boolean
relations to non-Boolean binary relations. We formalize this
observation in Section~\ref{sec:monotone-binary-output}.

However, if we are interested in instances of $\mcsp(\Gamma)$ with
non-binary unbounded-domain constraints, then we need to move beyond the
notion of twin-width, since twin-width is intrinsically a property of
binary structures. Furthermore, unlike width notions such as
rank-width or treewidth, there is no immediate and generic way to
translate an $r$-ary relation into a combination of binary relations
that works well with bounded twin-width.

For example, one method of reducing hypergraphs into graphs
is to replace every edge $E$ in a hypergraph $H=(V,\EE)$
by a clique on the vertex set $E$. If the edge size is bounded
(e.g., $|E| \leq r$ for all $E \in \EE$) and $H$ is sparse,
then the resulting graph could have bounded tree-width,
which would imply tractability.
However, our relations can be very dense. For example, the 3-ary
Boolean relation \textsf{Even}$_3(x,y,z) \equiv (x+y+z=0 \mod 2)$
has the connector polymorphism, but a simple unbounded-domain constraint
such as $\mathsf{Even}_3([x \leq n/2], [y \leq n/2], [z \leq n/2])$
would in this representation yield a graph which is a clique.

Another option for a CSP representation is based on constraints, as follows.
Consider a representation with one vertex for every pair $(C,\alpha)$
where $C$ is a constraint $R(x_1,\ldots,x_r)  \equiv R'(f_1(x_1),\ldots,f_r(x_r))$,
and $\alpha \in R$ is a tuple accepted by the constraint.
Add an edge between two vertices $(C,\alpha)$ and $(C',\beta)$
if $\alpha$ and $\beta$ agree on the value of any variables shared
between $C$ and $C'$. Enumerate the values $\alpha$ in, for example,
lexicographical order. But now trivial constraints give graphs of
unbounded twin-width: Consider two binary constraints
$R_1(x,y)$ and $R_2(y,z)$, and assume for simplicity that $R_1$
and $R_2$ are just free relations $\numdom{n}^2$ (the same conclusion
arrives if we only assume that they contain large bicliques). 
Then the edges between $(R_1,\cdot)$ and $(R_2,\cdot)$
almost encode the flip relation, 
and as a relation over the ordered domains $\numdom{n}^2$
it has unbounded twin-width.


One potential direction to fix this would be to replace the role of
twin-width by a more restrictive and well-investigated width notion
such as rank-width, in which working with non-binary structures
could be less challenging. Unfortunately, we show in
Section~\ref{sec:unbounded-rankwidth} that this is not feasible. 
Another, more challenging approach, is to investigate a width notion for
general bounded-arity structures, that specialises to twin-width for
binary structures. We pursue this in Section~\ref{sec:projected-gridrank},
where we define the notion of \emph{projected grid-rank}, guided by
the restrictions imposed by the connector polymorphism.

As strong support for Conjecture~\ref{conjecture},
we show two results. First, for a finite language $\Gamma$,
$\fgpp{\Gamma}{\mo}$ has bounded projected grid-rank if and only if
$\Gamma$ has the connector property (see Lemma~\ref{lemma:projected-grid-rank}).
Second, and more impactful, we show in Section~\ref{sec:gridrankramsey} that projections
of relations with bounded projected grid-rank have bounded grid-rank.
That is, for any $k$-variable instance $I$ of $\mcsp(\Gamma)$ over a
language $\Gamma$ with the connector property, and any two variables $x_i, x_j$
in $I$, the binary relation resulting from projecting all satisfying
assignments of $I$ to $x_i$ and $x_j$ has twin-width bounded by $f(k)$.
This makes it highly unlikely that a W[1]-hardness proof would be
possible for such a language.
The proof of this result is a Ramsey argument, relating projected
grid-rank to the standard notion of grid-rank. 

\subsection{Binary Output Relations} \label{sec:monotone-binary-output}

We first confirm that the problem is FPT if restricted to binary unbounded-domain relations.

\begin{lemma} \label{lemma:to-binary}
  Let $\Gamma$ be a constraint language.
  There is a language $\Gamma'$ containing only binary relations
  such that for every binary relation $R \subseteq \numdom{n}^2$
  for some $n \in \N$, $R \in \fgpp{\Gamma}{\mo}$ if and only if
  $R \in \fgpp{\Gamma'}{\mo}$. 
\end{lemma}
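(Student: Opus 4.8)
The plan is to define $\Gamma'$ as the set of all binary relations that are $\mo$-fgpp-definable over $\Gamma$, but observe that this set is finite, so $\Gamma'$ is a legitimate (finite) constraint language. To see finiteness, note that a binary $\mo$-fgpp-definition over $\Gamma$ uses constraints of the form $R(f(x), g(y))$ with $R \in \Gamma$ of some fixed domain $\numdom{e}$, and monotone maps $f, g \colon \numdom{n} \to \numdom{e}$. Crucially, since $R$ has a fixed finite domain $\numdom{e}$, a monotone map $\numdom{n} \to \numdom{e}$ is entirely determined (as far as its interaction with $R$ is concerned) by the partition of $\numdom{n}$ it induces into at most $e$ intervals --- so up to the equivalence of "which interval a domain value lands in", there are only boundedly many behaviours. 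More precisely: every binary $R' \in \fgpp{\Gamma}{\mo}$ with domain $\numdom{n}$ is, after collapsing runs of identical rows and identical columns, equivalent to a binary relation on a domain of size at most $c$ for some constant $c = c(\Gamma)$ (bounded by the number of distinct ``interval types'' times the arity bound). Hence the set of binary relations in $\fgpp{\Gamma}{\mo}$, viewed up to the relevant notion of equivalence, is finite; let $\Gamma'$ be a finite set of representatives. Then by Theorem~\ref{thm:galois-wrapup} (or directly by construction) a binary $R$ is $\mo$-fgpp-definable over $\Gamma$ iff it is $\mo$-fgpp-definable over $\Gamma'$.

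First I would make the collapsing argument precise. Given a binary relation $S \subseteq \numdom{m}^2$, call two consecutive rows $i, i+1$ \emph{twins} if they are equal, and likewise for columns; collapsing all twin classes yields a relation $\widehat{S} \subseteq \numdom{m'}^2$ together with monotone surjections $\numdom{m} \to \numdom{m'}$ for each coordinate, and conversely $S$ is recovered from $\widehat{S}$ by composing with these (monotone) maps. The key point is: in any atom $R(f(x), g(y))$ of a binary $\mo$-fgpp-definition of a relation $R'$, since $f$ is monotone into $\numdom{e}$, the rows of the ``constraint matrix'' of $R(f(x), \cdot)$ are constant on each of the $\le e$ preimage intervals of $f$; intersecting over all the (boundedly many) constraints, the resulting relation $R'$ has at most $c(\Gamma)$ distinct rows and at most $c(\Gamma)$ distinct columns after collapsing, where $c(\Gamma)$ depends only on the maximum arity and maximum domain size occurring in $\Gamma$. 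So $\widehat{R'}$ lives on a domain of size at most $c(\Gamma)$, and there are only finitely many such relations.

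Then I would verify the two directions of the equivalence. For the ``only if'' direction: if $R \in \fgpp{\Gamma}{\mo}$ and $R$ is binary, then by the collapsing argument $\widehat R$ is a binary relation on a small domain and is itself $\mo$-fgpp-definable over $\Gamma$ (restrict the maps), hence $\widehat R \in \Gamma'$ by construction; and $R$ is recovered from $\widehat R$ by composing with monotone maps, which is an $\mo$-fgpp-definition over $\Gamma'$ (using Proposition~\ref{prop:can_compose} / the fact that $\mo$ is closed under composition). For the ``if'' direction: every relation in $\Gamma'$ is by definition in $\fgpp{\Gamma}{\mo}$, so if $R \in \fgpp{\Gamma'}{\mo}$ then by Proposition~\ref{prop:can_compose} (monotone maps compose) we get $R \in \fgpp{\Gamma}{\mo}$.

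The main obstacle I anticipate is getting the finiteness bound $c(\Gamma)$ genuinely clean: one must be careful that a single $\mo$-fgpp-definition may use unboundedly many atoms (one for each pair of domain values, say), but the \emph{number of distinct columns/rows} the conjunction can produce is still bounded, because each atom only ever refines the row/column partition into $\le e$ pieces and --- this is the crucial structural input --- \emph{adding more atoms of the same shape cannot increase the number of equivalence classes beyond the product of the per-atom bounds over the (finitely many) shapes available in $\Gamma$}. Spelling out that ``same shape'' equivalence and confirming the bound is finite is the one place where I would need to argue rather than just compute; everything else (composition of monotone maps, membership bookkeeping) is routine given the machinery already developed in Section~\ref{sec:algebra}.
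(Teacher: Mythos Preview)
Your central finiteness claim is false, and this breaks the construction. You assert that every binary $R' \in \fgpp{\Gamma}{\mo}$ has, after collapsing twin rows and columns, at most $c(\Gamma)$ distinct rows and columns for some constant $c(\Gamma)$. A counterexample is the half-graph $H_n = \{(i,j) \in \numdom{n}^2 : i \le j\}$, which is $\mo$-fgpp-definable over $\{\mathsf{Impl}\}$ via $\bigwedge_{i \in \numdom{n}} \mathsf{Impl}([x \ge i],[y \ge i])$ and has $n$ pairwise distinct rows and $n$ pairwise distinct columns. You correctly anticipated the obstacle---a definition may use unboundedly many atoms---but your proposed escape (``adding more atoms of the same shape cannot increase the number of equivalence classes beyond the product of per-atom bounds'') does not work: atoms built from the \emph{same} relation $R \in \Gamma$ but with \emph{different} monotone maps place their breakpoints at different positions in $\numdom{n}$, and the common refinement of these interval partitions can have unboundedly many parts. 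Hence the set of ``collapsed'' binary fgpp-definable relations is infinite, and your $\Gamma'$ is not a (finite) constraint language.

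The paper's construction of $\Gamma'$ sidesteps this by not attempting to capture all binary fgpp-definable relations; instead it captures all binary relations arising from a \emph{single atom}. For each $R \in \Gamma$ of arity $r$ and domain $\numdom{d}$, and each assignment $\iota \colon [r] \to [2]$ of argument positions to the two variables, a single constraint $R(f_1(x_{\iota(1)}), \ldots, f_r(x_{\iota(r)}))$ over $\numdom{n}$ has at most $r(d-1)$ breakpoints in each coordinate, so it factors through a binary relation over the fixed domain $\numdom{dr}$. The paper takes $\Gamma'$ to be the (finite) set of all binary relations over $\numdom{dr}$ lying in $\fgpp{\Gamma}{\mo}$. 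A binary fgpp-definition over $\Gamma$ is then rewritten \emph{atom by atom} as a definition over $\Gamma'$; the conjunction may still produce relations of unbounded row/column complexity, but that is fine---such relations are in $\fgpp{\Gamma'}{\mo}$, not in $\Gamma'$ itself. (As a smaller point: your notation $R(f(x),g(y))$ tacitly assumes $R$ is binary, but relations in $\Gamma$ may have higher arity, which is precisely what the map $\iota$ handles.)
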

\begin{proof}
  Let $R \in \Gamma$ be a constraint of arity $r \in \N$ and domain $\numdom{d}$, and consider
  all the individual binary unbounded-domain constraints $R'(x_1,x_2)$ one can form
  with $R$ over some domain \numdom{n}. These are structured as follows:
  \begin{enumerate}
  \item A map $\iota \colon [r] \to [2]$ that defines whether variable
    $x_1$ or $x_2$ is used in each position $i \in [r]$.
  \item A map $f_i \colon \numdom{n} \to \numdom{d}$ for every $i \in [r]$.
  \end{enumerate}
  There are finitely many options for $\iota$. For the second aspect,
  even though there is an unbounded number of different
  functions $f_i$, in a single constraint $R'(x_1,x_2)$ there are only
  $O(dr)$ distinct behaviours of values of $\numdom{n}$. That is,
  define equivalence classes $\equiv_i$, $i=1,2$, where
  $a \equiv_1 b$ if for every $c \in \numdom{n}$ we have
  $(a,c) \in R'$ if and only if $(b,c) \in R'$ (and similarly for
  $\equiv_2$ over the second position). Then $\equiv_1$ and $\equiv_2$
  both have at most $dr$ equivalence classes each: for each $i \in [r]$
  the function $f_i$ has at most $d-1$ ``breakpoints'',
  where $f_i(j) \neq f_i(j+1)$, and if two values $a, b \in \numdom{n}$
  lie between these breakpoints for every function $f_i$ in the definition,
  then they are equivalent. Thus, every such relation $R'(x_1,x_2)$
  can be represented over the domain $\numdom{dr}$ by keeping only one
  value per equivalence class (and maintaining the domain order).
  Thus we let $\Gamma'$ be the language containing all binary
  relations over the domain $\numdom{dr}$ in $\fgpp{\Gamma}{\mo}$.
  Then, for every constraint
  $R'(x_1,x_2) = R(f_1(x_{\iota(1)}), \ldots, f_r(x_{\iota(r)}))$
  over a domain $\numdom{n}$, 
  there are functions $g_i \colon \numdom{n} \to \numdom{dr}$
  and a relation $R'' \in \Gamma'$
  such that $R'(x_1,x_2) = R''(g_1(x_1), g_2(x_2))$.  
  Here, $g_i$ are simply the injective monotone functions on the
  equivalence classes of $R'$. Conversely, for every relation
  $R \in \fgpp{\Gamma'}{\mo}$ we have $R \in \fgpp{\Gamma}{\mo}$
  by Proposition~\ref{prop:can_compose}.
\end{proof}

Combining this with Theorem~\ref{thm:twinwidth} we get the promised result.

\begin{corollary}
  Let $\Gamma$ be a base language with the connector polymorphism. Then
  $\mcsp(\Gamma)$ restricted to instances where all unbounded-domain
  constraints are binary is FPT parameterized by the number of variables.
\end{corollary}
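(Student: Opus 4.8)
The plan is to derive the corollary by combining Lemma~\ref{lemma:to-binary} with Theorem~\ref{thm:twinwidth}. Lemma~\ref{lemma:to-binary} yields a \emph{binary} language $\Gamma'$, consisting of relations that are $\mo$-fgpp-definable over $\Gamma$ and live over a fixed finite family of domains (one domain $\numdom{dr}$ per arity/domain pair $(\numdom{d},r)$ occurring in $\Gamma$, and finitely many binary relations over each — so $\Gamma'$ is finite), with the property that every binary unbounded-domain constraint one can build from $\Gamma$ over some domain $\numdom{n}$ can be rewritten, using monotone guarding functions and without introducing new variables, as a $\Gamma'$-constraint on the same (at most two) variables over $\numdom{n}$ with the same satisfying assignments. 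Applying this rewriting to each constraint of a given instance of $\mcsp(\Gamma)$ whose unbounded-domain constraints are all binary therefore gives, in polynomial time and with the variable set (hence the parameter) unchanged, an equivalent instance of $\mcsp(\Gamma')$. So it suffices to invoke Theorem~\ref{thm:twinwidth} on $\Gamma'$, which requires $\Gamma'$ to have the connector property.

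The one substantive step is thus showing that the connector property is inherited by $\Gamma'$. Fix $R' \in \Gamma'$, over domain $\numdom{d'}$; by construction $R' \in \fgpp{\Gamma}{\mo}$. I apply Lemma~\ref{lem:fgpp_to_mpol} with $\MM = \mo$, $\NN = \mo \cup \mo'$, and $f = \pcon$. The composition of map families satisfies $\mo \circ (\mo \cup \mo') = \mo \cup \mo'$ (a monotone map composed with a monotone map is monotone, and with an anti-monotone map is anti-monotone), so the hypothesis ``$f$ $(\MM\circ\NN)$-preserves $\Gamma$'' of the lemma is exactly ``$\pcon$ $(\mo\cup\mo')$-preserves $\Gamma$'', which holds because $\Gamma$ has the connector property. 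The conclusion of the lemma is that $\pcon$ $\NN$-preserves $R'$, i.e.\ the concrete homomorphic image of $\pcon$ over $\numdom{d'}$ via $(\mo\cup\mo')$-maps — which is precisely $\fcon_{d'}$ of Definition~\ref{def:line} — preserves $R'$. As $R'$ was arbitrary, $\pcon \in \mpol(\Gamma',\mo\cup\mo')$, so $\Gamma'$ has the connector property.

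Combining the two ingredients: $\Gamma'$ is a finite binary language with the connector property, so by Theorem~\ref{thm:twinwidth} the problem $\mcsp(\Gamma')$ is FPT parameterized by the number of variables; composing with the parameter-preserving reduction above yields an FPT algorithm for instances of $\mcsp(\Gamma)$ all of whose unbounded-domain constraints are binary. The remaining points are bookkeeping rather than genuine obstacles: verifying finiteness of $\Gamma'$, handling a constraint that happens to involve only a single variable (pad it to arity two), and the map-family composition identity used above. The only place to be slightly careful is to use the $\NN = \mo\cup\mo'$ form of the Galois machinery (Lemma~\ref{lem:fgpp_to_mpol}) rather than Theorem~\ref{thm:galois-wrapup} directly, since the $\mpol(R',\id)$ appearing there would interpret $\pcon$ as the trivial empty multifunction on every domain other than $\numdom{3}$ and so would not suffice to transport the connector property to $\Gamma'$.
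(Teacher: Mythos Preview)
Your proposal is correct and follows essentially the same approach as the paper: reduce via Lemma~\ref{lemma:to-binary} to a finite binary language $\Gamma'$, verify that $\Gamma'$ inherits the connector property through the Galois machinery, and then invoke Theorem~\ref{thm:twinwidth}. The paper's own proof is slightly terser but structurally identical.

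One point worth highlighting: you are right to be careful about which Galois lemma to cite. The paper simply writes ``$\Gamma'$ has the connector polymorphism by Theorem~\ref{thm:galois-wrapup}'', but as you observe, that theorem with $\MM=\mo$ only yields $\pcon \in \mpol(R',\id)$, which is vacuous unless $R'$ has domain $\numdom{3}$. Your route through Lemma~\ref{lem:fgpp_to_mpol} with $\MM=\mo$ and $\NN=\mo\cup\mo'$, together with the identity $\mo\circ(\mo\cup\mo')=\mo\cup\mo'$, is the precise argument needed to transport $\pcon$-preservation to $\Gamma'$ over its actual domain. So your version is in fact more careful than the paper's on this step, though both intend the same underlying mechanism.
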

\begin{proof}
  Let $d$ and $r$ be the maximum domain size and arity of
  a relation in $\Gamma$. Let $\Gamma'$ over domain $\numdom{dr}$
  be as in Lemma~\ref{lemma:to-binary}. We claim that we can reduce an
  instance of $\mcsp(\Gamma)$ with only binary unbounded-domain constraints
  to an instance of $\mcsp(\Gamma')$ in polynomial time. Consider a
  constraint $R'(x_1,x_2) = R(f_1(x_{\iota(1)}), \ldots, f_r(x_{\iota(r)}))$
  for some $\iota \colon [r] \to [2]$ and guarding functions $f_i$.
  We can enumerate the domain breakpoints for $x_1$ and $x_2$ in
  linear time, and exhaustively enumerate the resulting binary
  relation $R''$ over domain $\numdom{dr}$, since there are only
  finitely many possibilities to check. This gives the relation
  $R'' \in \Gamma'$ that we replace $R$ by in the output instance.
  The number of variables is preserved and the reduction takes
  polynomial time. Furthermore, $\Gamma'$ has the connector polymorphism by
  Theorem~\ref{thm:galois-wrapup}, thus Theorem~\ref{thm:twinwidth}
  applies. 
\end{proof}

\subsection{Rank-width is Unbounded} \label{sec:unbounded-rankwidth}

We now investigate whether the role of twin-width in Theorem~\ref{thm:twinwidth}
can be replaced by a more classical width measure, such as rank-width.
Unfortunately, we find that the answer is negative.

Let us define the notions. \emph{Rank-width} and \emph{clique-width}
are width notions that capture graphs that are potentially dense, yet
have a regular structure; they can be seen as a generalization of
graphs of bounded tree-width to also include dense graphs.  
Clique-with and rank-width are functionally equivalent in that a graph
has bounded clique-width if and only if it has bounded rank-width,
although rank-width is quantitatively more powerful and appears easier
to compute; see Oum~\cite{Oum17rankwidth} for an overview. 

Rank-width is less powerful than twin-width (i.e., there are graph classes with
bounded twin-width and unbounded rank-width), but is better understood algorithmically.
The notion has also been generalized to structures other than graphs,
including matroids and hypergraphs~\cite{JeongKO21}.

Rank-width is defined as follows. 
Let $G$ be a graph on vertex set $V$. A \emph{branch decomposition of $V$} is a
pair $(T,\delta)$ where $T$ is a subcubic tree and $\delta$ is a
bijection between $V$ and the leaves of $T$. For each edge $e$ in $T$,
let $V=A_e \cup \overline{A_e}$ be the partition of $V$ induced by the
components of $T-e$, where the choice of $A_e$ versus $\overline{A_e}$ is arbitrary.
Let $f \colon 2^V \to \NN$ be a symmetric function (i.e., $f(V \setminus S)=f(S)$ for every $S \subseteq V$).
The \emph{$f$-width} of a branch decomposition $(T,\delta)$ of $V$
equals $\max_{e \in E(T)} f(A_e)$, and the \emph{$f$-branch-width} of $G$
is the maximum $f$-width over all branch decompositions of $V(G)$.
Then the rank-width of $G$ is the $f$-branch-width of $G$ where $f(S)$
is the rank of the submatrix $M[S,V \setminus S]$ of the adjacency
matrix $M$ of $G$, evaluated over GF$(2)$. 

We generalize this notion to the setting of having multiple graphs
(or, more generally, relations) over a common ground set, given our application.
That is, given a set of graphs $\GG=\{G_1, \ldots, G_k\}$
over a common vertex set $V$ and a branch decomposition $(T,\delta)$ of $V$,
we let $f(S)=\max_{G \in \GG} f_G(S)$ where $f_G(S)$ is the rank
function of $G$ as above, and consider the $f$-branch-width of $\GG$
for this function $f$. In other words, we have a notion of
\emph{simultaneous rank-width} of the collection of graphs $\GG$.
We show that there is a simple constraint language $\Gamma$
with the connector polymorphism
such that the set of symmetric, binary relations $\mo$-fgpp-definable
over $\Gamma$ has unbounded simultaneous rank-width, even when using
just two relations. 

Define the relation $R_D=\{01, 10, 12, 21\}$.
Then the language of binary relations over a domain $\numdom{n}$ definable
via monotone fgpp-definitions over $R_D$ and $\mathsf{Impl}$ has unbounded rank-width. 

The main obstruction is the following. For even $n \in \Nat$, 
let the \emph{$n$-diamond} be the binary relation
$R_n \subseteq \numdom{n}^2$ where

\[
\begin{array}{rclc}
R_n & = & \{(a,n/2-a+1) \mid a \in \numdom{n/2}\}  \; \cup \\
 & &  \{(a,n/2+a) \mid a \in \numdom{n/2}\} \; \cup \\
 & & \{(n/2+a, a) \mid a \in \numdom{n/2}\} \; \cup \\
 & &   \{(n/2+a, n-a+1) \mid a \in \numdom{n/2}\}.
\end{array}
\]

Illustrated as a matrix for $n=8$, it looks as follows.
\[
  R_8 =
  \begin{bmatrix}
    0 & 0 & 0 & 1 & 1 & 0 & 0 & 0 \\
    0 & 0 & 1 & 0 & 0 & 1 & 0 & 0 \\
    0 & 1 & 0 & 0 & 0 & 0 & 1 & 0 \\
    1 & 0 & 0 & 0 & 0 & 0 & 0 & 1 \\
    1 & 0 & 0 & 0 & 0 & 0 & 0 & 1 \\
    0 & 1 & 0 & 0 & 0 & 0 & 1 & 0 \\
    0 & 0 & 1 & 0 & 0 & 1 & 0 & 0 \\
    0 & 0 & 0 & 1 & 1 & 0 & 0 & 0 \\
  \end{bmatrix}
\]
Note that this matrix is symmetric, hence it defines a graph. 
Now, as an unordered structure, it is underwhelming -- $R_{4p}$
is isomorphic to $pC_4$, i.e., $p$ disjoint copies of 4-cycles --
but if the decomposition is required to preserve the underlying domain
order, then it is more complex.
In fact, $R_n$ is a canonical example of an ordered structure with
unbounded \emph{stretch-width}~\cite{BonnetD23stretchwidth}
which is a recent width measure for ordered graphs and more generally
ordered binary structures, whose power lies between clique-width and twin-width.
Thus, we can get the stronger result that binary relations in $\fgpp{R_D}{\mo}$ 
have unbounded stretch-width directly from~\cite{BonnetD23stretchwidth}.
However, for completeness, we present a direct proof against
simultaneous rank-width. 

For the result, let $H_n \subseteq \numdom{n}^2$ be the binary relation 
where $(i,j) \in H_n$ if and only if $i \leq j$. This corresponds
to the bipartite adjacency matrix of the \emph{half-graph}.
We observe the following.

\begin{figure}
  \centering
  \begin{tabular}{ccc|cc|ccc}
    \textbf{0} & \textbf{0} & \textbf{0} & 1 & 1 & \textbf{0} & \textbf{0} & \textbf{0} \\
    \hline
    0 & 0 & 1 & \textbf{0} & \textbf{0} & 1 & 0 & 0 \\
    0 & 1 & 0 & \textbf{0} & \textbf{0} & 0 & 1 & 0 \\
    1 & 0 & 0 & \textbf{0} & \textbf{0} & 0 & 0 & 1 \\
    1 & 0 & 0 & \textbf{0} & \textbf{0} & 0 & 0 & 1 \\
    0 & 1 & 0 & \textbf{0} & \textbf{0} & 0 & 1 & 0 \\
    0 & 0 & 1 & \textbf{0} & \textbf{0} & 1 & 0 & 0 \\
    \hline
    \textbf{0} & \textbf{0} & \textbf{0} & 1 & 1 & \textbf{0} & \textbf{0} & \textbf{0} \\    
  \end{tabular}  
  \caption{An illustration of the M-fgpp-definition of the diamond
    constraint $R_n$. Showing $n=8$ and $a=3$ in the definition. The
    boldfaced entries are set to zero by this constraint.}
  \label{fig:build-a-diamond}
\end{figure}

\begin{lemma}
  For every even $n \in \Nat$, $R_n, H_n \in \fgpp{\{R_D,\mathsf{Impl}\}}{\mo}$.
\end{lemma}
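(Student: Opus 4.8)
The plan is to give explicit $\mo$-fgpp-definitions of $R_n$ and $H_n$ over $\{R_D, \mathsf{Impl}\}$, using threshold maps. For $H_n$ the definition is immediate: since $H_n(x,y) \equiv (x \leq y)$, and for every threshold $t \in \numdom{n}$ the map $g_t(x) = [x \geq t]$ is monotone, we can write $H_n(x,y) \equiv \bigwedge_{t \in \numdom{n}} \mathsf{Impl}(g_t(y), g_t(x))$, which says ``$y \geq t$ implies $x \geq t$'' for every $t$, i.e.\ $x \geq y$; wait — we want $x \leq y$, so instead use $\mathsf{Impl}(g_t(x), g_t(y))$ for all $t$, enforcing $x \geq t \Rightarrow y \geq t$, equivalently $x \leq y$. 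This is clearly a valid $\mo$-fgpp-definition.

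The more interesting relation is $R_n$. The idea, as suggested by Figure~\ref{fig:build-a-diamond}, is to carve out $R_n$ from the ``double cross'' shape by a conjunction of constraints, each of which uses a single application of $R_D$ together with two monotone maps that collapse $\numdom{n}$ to $\numdom{3}$. Concretely, $R_D = \{01,10,12,21\}$ over $\numdom{3}$ forbids the pairs $00$, $02$, $20$, $22$, and $11$. For a parameter $a$ (as in the figure), take the monotone map $f_a \colon \numdom{n} \to \numdom{3}$ that sends $\{0,\ldots,a-1\}$ to $0$, the single value $a$ to $1$, and $\{a+1,\ldots,n-1\}$ to $2$ — or some coarser variant collapsing to two values when only a half-interval needs to be excluded; these are all monotone. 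A constraint $R_D(f_a(x), f_b(y))$ then excludes exactly the rectangle where $x$ lies in one of the designated intervals and $y$ in the corresponding one. First I would work out, for each of the four ``arms'' of the diamond, which rectangles of $\numdom{n}^2$ must be zeroed out and check that each such rectangle is of the product form $I \times J$ with $I, J$ each a union of at most two intervals (below/at/above a breakpoint), so that it is realizable by a single $R_D$-atom under suitable threshold maps; then $R_n$ is the conjunction of all these atoms over all breakpoints. One has to verify both directions: (i) every tuple of $R_n$ satisfies all atoms (no atom zeroes a $1$-entry), and (ii) every $0$-entry of $R_n$ is killed by at least one atom. Part (ii) is the combinatorial heart — it amounts to checking that the complement of $R_n$ inside $\numdom{n}^2$ is covered by the family of ``forbidden rectangles'' of $R_D$ pulled back along monotone 3-valued maps.

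The main obstacle I anticipate is precisely bookkeeping part (ii): the complement of $R_n$ is not a single rectangle but a union of $O(n)$ small anti-diagonal segments (and large empty blocks), and one must exhibit, for each forbidden entry $(i,j) \notin R_n$, an explicit pair of breakpoints $(a,b)$ and maps $(f_a, f_b)$ realizing an $R_D$-atom that excludes $(i,j)$ while sparing all of $R_n$. The cleanest route is probably to note that $R_n$'s support, read row by row, is ``monotone-bitonic'' in a precise sense (each row has exactly two $1$'s, and their positions move monotonically as the row index increases up to $n/2$ and then mirror), so the complement decomposes naturally into finitely many staircase regions, each of which is an intersection of half-planes of the form $\{x \lessgtr t\} \times \{y \lessgtr s\}$; each such half-plane pair is exactly what a $2$- or $3$-valued monotone map feeding into $R_D$ (or into $\mathsf{Impl}$, when only a half-plane rather than an L-shape is needed) produces. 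Once the decomposition is fixed, writing down the conjunction and checking $R_n \subseteq R''$ and $R'' \subseteq R_n$ (where $R''$ is the defined relation) is routine, and by Proposition~\ref{prop:can_compose} we may freely compose, so nothing more than monotone maps into $\numdom{3}$ is needed.
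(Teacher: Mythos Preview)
Your treatment of $H_n$ is correct and identical to the paper's: $H_n(x,y)\equiv\bigwedge_t \mathsf{Impl}([x\geq t],[y\geq t])$.

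For $R_n$, however, there is a genuine gap. The concrete maps you propose --- $f_a$ sending the single point $a$ to $1$ and the two half-lines to $0,2$ --- cannot work in any conjunction. With such an $f_a$ on $x$ and $f_b$ on $y$, the atom $R_D(f_a(x),f_b(y))$ allows precisely the thin cross $\{x=a\}\cup\{y=b\}\setminus\{(a,b)\}$, which has $2n-2$ points; but $|R_n|=2n$, so even a single such atom already excludes some tuple of $R_n$, and no conjunction of them can define $R_n$. Your escape clause ``or some coarser variant'' is therefore not a minor bookkeeping detail but where the entire content lies. Relatedly, your framing of ``for each forbidden rectangle, find an $R_D$-atom excluding it'' is backwards: an $R_D$-atom always kills five product regions simultaneously (the four corners and the centre of its induced $3\times 3$ grid), and the real task is to choose the grid so that \emph{all five} land in the complement of $R_n$.

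The paper's insight is to exploit the double reflection symmetry of $R_n$ (under $i\mapsto n-1-i$ in each coordinate) by taking \emph{symmetric} maps whose preimage of $1$ is a large centred interval: $f_a(i)=0$ for $i\leq a$, $f_a(i)=1$ for $a<i<n{-}a$, $f_a(i)=2$ for $i\geq n{-}a$. Then $R_D(f_a(x),f_{(n/2-1)-a}(y))$ is a ``thick cross'' whose four excluded corners sit inside the four corner staircases of $\overline{R_n}$ and whose excluded centre sits inside the interior hole of the diamond; none of the five touches $R_n$ itself. Ranging $a$ over $0,\dots,n/2-2$ sweeps these thick crosses so that their intersection is exactly $R_n$ (the figure illustrates one term). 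No $\mathsf{Impl}$-atoms are needed for $R_n$ at all.
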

\begin{proof}
  For $H_n$, we simply note
  \[
    H_n(x,y) \equiv \bigwedge_{i \in \numdom{n}} \mathsf{Impl}([x \geq i],[y \geq i]).
  \]
  For $R_n$, we can define
  \[
    R_n(x,y) = \bigwedge_{a=0}^{n/2-2} R_D(f_a(x), f_{(n/2-1)-a}(y)) \quad \text{where}
    \quad
    f_a(i) =
    \begin{cases}
      0 & i \leq a \\
      1 & a < i < n-a \\
      2 & n-a \leq i.
    \end{cases}
  \]
  Correctness is easily verified; see Figure~\ref{fig:build-a-diamond} for an illustration.
\end{proof}

We show that there is no branch decomposition of $\numdom{n}$ that
simultaneously has small rank-width for $R_n$ and $H_n$. 

\begin{theorem} \label{thm:unbounded_rankwidth}
  The set of relations $\{R_n,H_n\}$ has unbounded rank-width as $n \to \infty$. 
\end{theorem}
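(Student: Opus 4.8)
The plan is to show that any branch decomposition $(T,\delta)$ of $\numdom{n}$ that keeps the rank function of $H_n$ bounded is essentially forced to respect the linear order (up to reversal), and then argue that such an order-respecting decomposition must have large rank on $R_n$. First I would recall the standard fact, used throughout twin-width and stretch-width theory, that the half-graph $H_n$ is a canonical obstruction: if a branch decomposition has $f_{H_n}$-width at most $w$, then every edge $e$ of $T$ induces a partition $\numdom{n}=A_e\cup\overline{A_e}$ in which at most $w$ "boundary values" separate the two sides — more precisely, since the bipartite adjacency matrix of $H_n$ is the staircase matrix, a set $S\subseteq\numdom{n}$ with $\mathrm{rank}\,M_{H_n}[S,\overline S]\le w$ can be written as a union of at most $w+O(1)$ intervals of $\numdom{n}$. (This is just the observation that the half-graph's adjacency matrix restricted to $S\times\overline S$ has rank equal, up to $\pm1$, to the number of "alternations" of $S$ along the order.)

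Next I would use this interval structure to find, at some edge $e$ near a centroid of $T$, a balanced partition $\numdom{n}=A\cup\overline A$ where both $A$ and $\overline A$ decompose into at most $w+O(1)$ intervals, and each of $A,\overline A$ has size $\Omega(n)$. The key step is then to show that for such a partition the submatrix $M_{R_n}[A,\overline A]$ has rank $\Omega(n/w)$. For this I would exploit the explicit "four diagonal bands" structure of $R_n$: inside one of the $\Omega(w^{-2})$ fraction of $A$-interval / $\overline A$-interval pairs that are both large, the matrix $R_n$ looks like a segment of an anti-diagonal (a permutation submatrix of size $\Omega(n/w^2)$), because each of the four bands of $R_n$, when cut by two intervals one on each axis, contributes a shifted identity block. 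A permutation matrix of size $m$ has rank $m$ over $\mathrm{GF}(2)$, so $f_{R_n}(A)\ge\Omega(n/w^2)$. Choosing $n$ large relative to any fixed $w$ gives a contradiction, so the $f$-branch-width of $\{R_n,H_n\}$ (which is the max of the two rank functions) is unbounded as $n\to\infty$.

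The main obstacle I anticipate is the combinatorial bookkeeping in the middle step: translating "$f_{H_n}$-width $\le w$" into "$A$ is a union of few intervals" requires care about the exact relationship between $\mathrm{GF}(2)$-rank of a staircase submatrix and the number of order-alternations of $S$, and then one must ensure that a centroid edge of $T$ yields a genuinely balanced interval-partition rather than, say, one side being a single tiny interval. A clean way to handle the balance issue is: among all edges $e$ of $T$, pick one minimizing $\bigl|\,|A_e|-n/2\,\bigr|$; subcubicity of $T$ plus the fact that leaves are singletons guarantees that either $e$ already gives a balanced split or some child edge does, and in either case the $H_n$-bound forces few intervals on the smaller side, hence (taking complements of intervals) few intervals on the larger side too. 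Once the balanced few-interval partition is in hand, the rank lower bound for $R_n$ is a short explicit computation with the four diagonal bands, so I would present that as a small lemma and otherwise keep the argument at the level of the sketch above.
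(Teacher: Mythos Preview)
Your proposal is correct and follows essentially the same approach as the paper: use the half-graph $H_n$ to force any bounded-rank cut to decompose into $O(w)$ intervals, pick a balanced edge of $T$, and then exploit the diagonal-band structure of $R_n$ to find a large permutation submatrix (the paper phrases this as an ``induced matching'') across two large intervals on opposite sides. The only cosmetic difference is that the paper works over the bipartite $2n$-vertex model $V=\{u_i\}\cup\{v_j\}$, which introduces a small ``split indices'' bookkeeping step you avoid, and obtains the slightly sharper bound $\Omega(n/c)$ rather than your $\Omega(n/w^2)$.
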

\begin{proof}
  Let the vertex set be $V=\{u_i \mid i \in [n]\} \cup \{v_j \mid i \in [n]\}$.
  Let $H$ be the half-graph and $G$ the diamond graph on $V$ with the given bipartition.
  Let $(T,L)$ be a rank decomposition of $V$ where $L$ is a
  bijection between $V$ and the leaves of the tree.
  Assume that for every edge $e$ of $T$, the rank of the cut across
  $e$ is at most $c$ (for some $c$ independent of $n$) for both graphs.
  For an edge $e$ of $T$, let $V = A_e \cup B_e$ be the partition of
  $[n]$ induced by $e$; the order of $A_e$ versus $B_e$ is irrelevant.
  We first note that for every edge $e$ of $T$, the cut $A_e \cup B_e$
  must be approximately an ordered cut, in the following sense.
  
  Let $e$ be an edge of $T$. Let an \emph{alternating sequence} for $(A_e,B_e)$
  be a sequence $1 \leq i_1 \leq j_1 \leq \ldots \leq i_\ell \leq j_\ell \leq n$
  such that $u_{i_1} \in A_e$, $v_{j_1} \in B_e$, \ldots, $v_{j_\ell} \in B_e$.
  We note that every alternating sequence has $\ell \leq c$, since the biadjacency
  matrix of $H[u_{i_1},\ldots, v_{j_\ell}]$ contains an induced upper triangular
  matrix of rank $\ell$. The same holds for the reverse $(B_e,A_e)$. 

  This has two immediate consequences. First, note that at most $c$
  indices are split by $e$, i.e., for all but $c$ indices $i \in [n]$,
  $u_i$ and $v_i$ are on the same side of the cut. Second, given this fact
  (and ignoring the split indices), the cut $(A_e,B_e)$ is described by 
  at most $2c$ contiguous intervals, i.e., there is a partition
  $[n]-S=I_1 \cup \ldots \cup I_\ell$, where $S$ is the set of split indices
  and for $i<j$, all values of $I_i$ are smaller than all values of $I_j$,
  and where (without loss of generality) $I_1$ is in $A_e$, $I_2$ in $B_e$, {\em etc.}

  Now let $e$ be an edge of the tree such that $|A_e|, |B_e| \geq n/3$.
  Without loss of generality, we can find disjoint $I,J \subseteq [n]$
  with $|I|, |J| = \Omega(n/c)$, where $I<J$, $I \subseteq A_e$
  and $J \subseteq B_e$. Furthermore, there are $I' \subseteq I$
  and $J' \subseteq J$ with $|I'|, |J'| = \Omega(n/c)$,
  such that $I' \subseteq [n/2]$ or $I' \cap [n/2]=\emptyset$
  and the same for $J'$. But now $G$ contains an induced matching
  from $I'$ to $J'$, so the biadjacency matrix of $G$ across $e$
  has rank $\Omega(n/c)$. 
  
  If $n=\Omega(c^2)$, we have arrived at a contradiction. 
\end{proof}

\subsection{Projected Grid-rank}
\label{sec:projected-gridrank}

We now investigate the restrictions imposed by the connector property for
non-binary relations. Inspired by Lemma~\ref{lem:line_gridrank},
which shows that if $\Gamma$ has the connector property
then binary relations $\mo$-fgpp-definable over $\Gamma$
have bounded grid-rank, we define a generalization of grid-rank to
possibly non-binary relations, \emph{projected grid-rank},
and show that for any constraint language $\Gamma$,
the set of relations $\mo$-fgpp-definable over $\Gamma$
has bounded projected grid-rank if and only if
$\Gamma$ has the connector property.
For relations of arity $r=2$, projected grid-rank coincides with
regular grid-rank.


We note that the notion is much more restrictive than a direct,
na\"ive generalization of grid-rank to $r$-dimensional hypergrids.
Consider the following example.
Let the domain be $\numdom{n}$ and define a relation $R \subseteq \numdom{n}^3$ as
$R(x,y,z) \equiv (x+y=z)$. If we view this relation as a 3-dimensional grid,
then it does not even contain a $2 \times 2 \times 2$ subgrid with
every cell non-empty: let $(x_1,y_1,z_1)$ and $(x_2, y_2, z_2)$
be non-zero points in cells $(1,1,1)$ and $(1,1,2)$, respectively,
and let $(x,y,z)$ be a point in cell $(2,2,1)$.
Then $x>x_2$, $y>y_2$ and $z<z_2$.
However, $R$, together with arbitrary unary
constraints, defines a W[1]-hard language.

Indeed, let $S \subseteq [n]$ be a \emph{Sidon set}, i.e., a set such
that all sums $a+b$, $a, b \in S$ are pairwise distinct.
Such Sidon sets exist with cardinality $\Theta(\sqrt{n})$
and they can be constructed in polynomial time.

\begin{proposition}[\cite{erdos1941problem}]
    \label{prop:golombconstruction}
    Let $p \geq n$ be an odd prime. Then 
    $$S_n = \left\{ pa + (a^2 \bmod p) : a \in [0,n-1] \right\}$$ 
    is a Sidon set.
\end{proposition}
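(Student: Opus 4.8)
\emph{Proof plan.} Recall that a set of integers $S$ is a Sidon set exactly when all of its pairwise sums are distinct: whenever $x_1+x_2 = x_3+x_4$ with $x_1,x_2,x_3,x_4 \in S$, the multisets $\{x_1,x_2\}$ and $\{x_3,x_4\}$ coincide. Write $f(a) = pa + (a^2 \bmod p)$ and $r(a) = a^2 \bmod p \in [0,p-1]$, so that $S_n = \{f(a) : a \in [0,n-1]\}$. The plan is to show that $f(a)+f(b) = f(c)+f(d)$, for $a,b,c,d \in [0,n-1]$, forces $\{a,b\} = \{c,d\}$, via two ingredients: a \emph{carry-free} step that separates the high- and low-order parts of such a coincidence, and a finite-field computation.

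First I would dispense with the housekeeping. Since $p \ge n$, every $a \in [0,n-1]$ lies in $[0,p-1]$, and $f$ is strictly increasing there: for $a<b$ we have $f(b)-f(a) = p(b-a) + (r(b)-r(a)) \ge p - (p-1) = 1$. Hence $f$ is injective, $|S_n| = n$, and $S_n$ is computable in polynomial time (one modular squaring and one multiplication per element). For the carry-free step, note $f(a)+f(b) = p(a+b) + (r(a)+r(b))$ with $a+b \le 2(n-1)$ and $0 \le r(a)+r(b) \le 2(p-1)$; from $f(a)+f(b) = f(c)+f(d)$ one obtains $p\bigl((a+b)-(c+d)\bigr) = (r(c)+r(d)) - (r(a)+r(b))$, whose right-hand side is controlled by the range of the low-order parts. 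Bounding it pins $(a+b)-(c+d)$ to a narrow window, and a short case analysis (using $a,b,c,d \le n-1$ and $p \ge n$) forces $(a+b)-(c+d) = 0$, and therefore also $r(a)+r(b) = r(c)+r(d)$, as integers.

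With the equalities $a+b = c+d$ and $r(a)+r(b) = r(c)+r(d)$ in hand, I would reduce modulo $p$. Since $r(x) \equiv x^2 \pmod p$, the second becomes $a^2+b^2 \equiv c^2+d^2 \pmod p$, while the first gives $a+b \equiv c+d \pmod p$; squaring the latter and subtracting the former yields $2ab \equiv 2cd \pmod p$, so $ab \equiv cd \pmod p$ since $p$ is odd. Thus the reductions of $a,b$ and of $c,d$ form, in the field $\mathbb{F}_p$, the multiset of roots of one and the same monic quadratic $X^2 - \sigma X + \pi$, with $\sigma = (a+b)\bmod p$ and $\pi = ab\bmod p$; a degree-two polynomial over a field has at most two roots counted with multiplicity, so $\{a \bmod p, b \bmod p\} = \{c \bmod p, d \bmod p\}$ as multisets. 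As $a,b,c,d \in [0,p-1]$, this is $\{a,b\} = \{c,d\}$, hence $\{f(a),f(b)\} = \{f(c),f(d)\}$, completing the argument.

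The step I expect to be the main obstacle is the carry-free separation: showing that an integer coincidence $f(a)+f(b) = f(c)+f(d)$ genuinely splits into the two independent equalities, i.e.\ that the carrying between the term $p(a+b)$ and the low-order term $r(a)+r(b)$ leaves no slack, exploiting $a,b,c,d \in [0,n-1]$ together with $p \ge n$. Everything after that is the clean finite-field computation above, where oddness of $p$ enters only to halve $2ab \equiv 2cd$.
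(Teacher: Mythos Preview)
The paper does not prove this proposition; it simply cites it as a classical construction. So there is no paper's proof to compare against, and the relevant question is whether your argument stands on its own.

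It does not, and the gap is exactly where you anticipated. Your ``carry-free'' step asserts that a short case analysis using $a,b,c,d \le n-1$ and $p \ge n$ forces $(a+b)-(c+d)=0$. This is false. Take $p=11$ and $n=6$, so that $f(0),\ldots,f(5)$ are $0,12,26,42,49,58$. Then $f(2)+f(5) = 26+58 = 84 = 42+42 = f(3)+f(3)$, with $(a+b)-(c+d)=7-6=1$ and $(r(c)+r(d))-(r(a)+r(b)) = 18-7 = 11 = p$; equivalently, $42-26 = 58-42 = 16$, so two differences in $S_6$ coincide. Thus $S_6$ is \emph{not} a Sidon set, and the proposition as written is false --- the bound $r(a)+r(b)\le 2(p-1)$ is simply too weak to rule out a carry of $\pm 1$.

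The classical Erd\H{o}s--Tur\'an construction uses $2pa + (a^2 \bmod p)$ rather than $pa + (a^2 \bmod p)$. With coefficient $2p$, the low-order part $r(a)+r(b)\in[0,2p-2]$ is strictly less than $2p$, so the sum $2p(a+b)+(r(a)+r(b))$ genuinely determines $a+b$ and $r(a)+r(b)$ separately, and your finite-field computation (from $a+b\equiv c+d$ and $a^2+b^2\equiv c^2+d^2$ modulo $p$ to $ab\equiv cd$, hence equal root multisets of a common quadratic over $\mathbb{F}_p$) then goes through verbatim. So your method is the standard one and is correct for the corrected statement; it is the formula in the proposition, not your approach, that is at fault.
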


Proposition~\ref{prop:golombconstruction} 
shows that there is a Sidon set
containing $k$ positive integers and whose largest element is at most $2p^2$,
where $p$ is the smallest prime number larger than or equal to $k$. Bertrand's
postulate (see e.g. Chapter 2 in the book by Aigner and Ziegler~\cite{AignerZiegler18}) states that for every natural
number $m$ there is a prime number $p$ satisfying $m \leq p \leq 2m$. Hence,
a Sidon set $S=\{s_1,\dots,s_k\}$ and $s_k - s_1 \geq 8k^2$ can be generated in polynomial time.

We can use $S$ and $R$ to construct an arbitrarily complex binary relation
$R'$, such as the flip permutation, using a single local variable, via
\[
  R'(x,y) \equiv \exists z \colon (x \in S) \land (y \in S) \land (x+y=z) \land (z \in T),
\]
where $T \subseteq S+S$ contains precisely the set of sums $a+b$
such that $(a,b) \in R'$. Thus we can construct an arbitrary symmetric
relation over $S \times S$ where $|S|=\Theta(\sqrt{n})$.
From here, it is easy to complete the W[1]-hardness proof.

Instead, as the name implies, projected grid-rank is defined around
2-dimensional \emph{ordered projections} of a relation $R$. 
We provide the definitions; at the end of the section, we use them to
illustrate that the relation $x+y=z$ over $\numdom{n}$ is not
$\mo$-fgpp-definable for all $n$ over any finite language $\Gamma$ with the connector
property.

\begin{definition}
  Let $n, r \in \N$. 
  An \emph{ordered subset} of $\numdom{n}^r$ is a set $S \subseteq \numdom{n}^r$
  which admits a total ordering $s_1 \prec \ldots \prec s_m$
  which is order-isomorphic to an $m$-point combinatorial line,
  i.e., for each $j \in [r]$ either $s_1[j]<\ldots<s_m[j]$, or
  $s_1[j]=\ldots=s_m[j]$, or $s_1[j]>\ldots>s_m[j]$ holds.
\end{definition}

Grid-rank, that was defined earlier for matrices (Definition~\ref{def:grid-rank}), can be
extended to binary relations~\cite{bonnet2024twin}.

\begin{definition}
  Let $R \subseteq \numdom{n}^2$ be a relation.
  A \emph{$d$-division} of $R$ is a pair of ordered partitions
  $\numdom{n}=I_1 \cup \ldots \cup I_d = J_1 \cup \ldots \cup J_d$,
  where $I_i<I_j$ and $J_i < J_j$ for all $1 \leq i < j \leq d$.
  It has \emph{rank $k$} if every cell contains at least $k$ distinct
  rows and at least $k$ distinct columns. $R$ has 
  \emph{grid-rank $k$} if there exists a $k$-subdivision of rank $k$.
\end{definition}

The following is our generalization of grid-rank. 

\begin{definition}
  Let $A \cup B$ be a partition of $[r]$ for some $r \in \N$
  and let $R \subseteq [n]^r$ be a relation. A \emph{projected grid}
  of $R$ w.r.t.\ $A \cup B$ is a binary relation
  \[
    R' \subseteq S \times T \colon R'=\{(s,t) \in S \times T \mid s \cup t \in R\}
  \]
  where $S$ is an ordered subset of $[n]^A$, $T$ is an ordered subset of $[n]^B$,
  and $s \cup t$ denotes the tuple in $[n]^r$ whose entries in coordinates $A$
  match $s$ and whose entries in coordinates $B$ match $t$. 
  We say that $R$ has \emph{projected grid-rank $k$} if it has a
  projected grid with grid-rank at least $k$. 
\end{definition}

Following the proof of Lemma~\ref{lem:line_gridrank}, it is easy to show 
that relations $\mo$-fgpp-definable over a language with the connector property
have bounded projected grid-rank. 

\begin{lemma} \label{lemma:projected-grid-rank}
  Let $\Gamma$ be a finite constraint language.
  The set of relations $\fgpp{\Gamma}{\mo \cup \mo'}$ 
  has bounded projected grid-rank if and only if
  $\Gamma$ satisfies the connector property.
\end{lemma}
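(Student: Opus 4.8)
The plan is to prove both directions of the equivalence, mirroring the structure already established for binary relations but now carried through in the projected setting.

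\medskip

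\textbf{($\Rightarrow$, contrapositive): If $\Gamma$ lacks the connector property, then $\fgpp{\Gamma}{\mo \cup \mo'}$ has unbounded projected grid-rank.} Here I would invoke Lemma~\ref{lemma:not_line_gives_r3}, which gives $R_3 \in \fgpp{\Gamma}{\mo\cup\mo'}$, and then Lemma~\ref{lemma:r3_can_define} (together with Proposition~\ref{prop:can_compose}), which shows $\fgpp{\{R_3\}}{\mo} \supseteq \{\sigma^\bullet : \sigma \text{ a permutation}\}$, hence all permutation graphs lie in $\fgpp{\Gamma}{\mo\cup\mo'}$. It therefore suffices to observe that permutation graphs have unbounded grid-rank as ordered binary relations -- this is the canonical fact underlying twin-width theory (see~\cite{Bonnet:etal:jacm2022,bonnet2024twin}): for any $k$, the ``universal'' permutation on $\numdom{k^2}$ (a $k\times k$ block pattern, as in the flip relation from Section~\ref{sec:unrestr-discussion}) admits a $k$-division in which every cell has rank $\geq k$. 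Since a binary relation is trivially a projected grid of itself (take $A=\{1\}$, $B=\{2\}$, and $S=T=\numdom{n}$), these relations already witness unbounded projected grid-rank. This direction is essentially bookkeeping on top of results already in the paper.

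\medskip

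\textbf{($\Leftarrow$): If $\Gamma$ has the connector property, then $\fgpp{\Gamma}{\mo \cup \mo'}$ has bounded projected grid-rank.} This is the substantive direction, and I would adapt the argument of Lemma~\ref{lem:line_gridrank} from the binary case. Fix $R \in \fgpp{\Gamma}{\mo\cup\mo'}$ of arity $r$ over $\numdom{n}$, a partition $A \cup B$ of $[r]$, ordered subsets $S \subseteq \numdom{n}^A$ and $T \subseteq \numdom{n}^B$, and the associated projected grid $R' \subseteq S \times T$. Enumerate $S$ and $T$ in their combinatorial-line orderings. For each atom $C$ in the fgpp-definition of $R$, say $C = R_0(h_1(x_{\iota(1)}), \dots, h_s(x_{\iota(s)}))$ with $R_0 \in \Gamma$ over $\numdom{d}$ and each $h_j \in \mo \cup \mo'$, I would argue that along the ordered sequence $s_1 \prec \dots$ of $S$ the composite behaviour $j \mapsto h_j(s_i[\iota(j)])$ changes at most $d-1$ times: because $S$ is a combinatorial line, each coordinate of $s_i$ is monotone (or constant) in $i$, so $h_j$ composed with it is monotone-or-anti-monotone, hence has $\leq d-1$ breakpoints; summing over the $s$ positions of the atom gives $\leq s(d-1) = O(1)$ breakpoints total for $C$. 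The same holds for $T$. Thus, given a target grid-size $g = g(\Gamma)$ to be fixed, within any $(cg+1)\times(cg+1)$ candidate subdivision of $R'$ (where $c$ bounds the number of atoms times the max arity and domain), for every atom $C$ one can find a grid-row and grid-column in the first block and in the last block on which $C$ is constant. Intersecting over all atoms and applying Lemma~\ref{lemma:lineprop-biclique} exactly as in Lemma~\ref{lem:line_gridrank} -- the key point being that the relation obtained by restricting each coordinate to a "constant block" is again a binary relation of the relevant type, so the biclique/rank-1 conclusion for the central cell goes through -- shows the central cell of the subdivision has rank $1$. Hence no rank-$(cg+1)$ subdivision exists, i.e.\ the projected grid-rank of $R$ is $O_\Gamma(1)$, uniformly over the partition $A\cup B$, the domain size $n$, and the choice of ordered subsets $S, T$.

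\medskip

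The main obstacle I anticipate is not the breakpoint counting -- which is a direct generalization of the binary case -- but the bookkeeping needed to check that Lemma~\ref{lemma:lineprop-biclique} still applies after passing to a projected grid. Concretely, one must verify that when we fix a "constant block" $S' \subseteq S$ and $T' \subseteq T$ on which a given atom $C$ is constant, the restricted relation $R' \cap (S' \times T')$ (equivalently, the matrix cell) is governed by $R_0$ evaluated on a \emph{fixed} tuple of arguments except for whatever residual dependence survives -- and that this residual object is precisely of the form to which Lemma~\ref{lemma:lineprop-biclique} can be applied (a binary relation preserved by $\fcon$). Since $\fcon$ preserves $R_0$ and the guarding maps are monotone, the projection of $R_0$ onto the "split" coordinates is itself preserved by the appropriate connector function, so this works; but making the indexing fully rigorous -- tracking which coordinates in $A$ versus $B$ feed into which argument of $R_0$ -- requires care. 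I would state this as a separate claim inside the proof and keep the notation light by reusing the adjacency-matrix language of Lemma~\ref{lem:line_gridrank}.
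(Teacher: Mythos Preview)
Your approach matches the paper's in both directions, and your anticipated obstacle (checking that the projected binary relation inherits the connector property) is handled exactly as you suggest: since $S$ and $T$ are combinatorial lines, each guarding function composed with the coordinate map is monotone or anti-monotone, so the per-atom projected relation lies in $\fgpp{\Gamma}{\mo\cup\mo'}$ and has the connector property by the Galois connection.

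There is one genuine slip. You write the grid size as $(cg+1)$ ``where $c$ bounds the number of atoms times the max arity and domain''. But the number of atoms in an fgpp-definition is \emph{not} bounded by $\Gamma$ --- it can grow arbitrarily with the domain size $n$ --- so this would not yield a uniform bound. Fortunately you do not need it: the argument of Lemma~\ref{lem:line_gridrank} treats each atom \emph{separately}. For each atom $C$ you find constant blocks in the first and last $r_0 d$ grid-rows/columns (using only the per-atom breakpoint count you already computed), apply Lemma~\ref{lemma:lineprop-biclique} to conclude that atom's matrix is a biclique (rank $\leq 1$) on the central cell, and then note that an intersection of rank-$1$ $0/1$-matrices is again rank $\leq 1$. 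So the correct grid size is $2r_0 d + O(1)$, independent of the number of atoms. The paper's version streamlines this further by picking one non-entry $\alpha$ in the central cell with nonzero row and column, then working only with the single atom that excludes $\alpha$ --- but your per-atom-then-intersect route is equally valid once the bound is stated correctly.
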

\begin{proof}
  On the one hand, assume that $\Gamma$ does not satisfy the connector
  property. By Lemma~\ref{lemma:not_line_gives_r3},
  $\Gamma$ can $(\mo \cup \mo')$-fgpp-define the
  permutation-generating relation $R_3$, so by Lemma~\ref{lemma:r3_can_define},
  $\Gamma$ $(\mo \cup \mo')$-fgpp-defines the graph of every permutation.
  Thus, even the binary relations from $\fgpp{\Gamma}{\mo \cup \mo'}$
  have unbounded projected grid-rank. On the other
  hand, assume that $\Gamma$ has the connector property.
  Let $r_0$ be the maximum arity of a relation in $\Gamma$,
  let $d$ be the maximum size of a domain of a relation in $\Gamma$,
  and let $k=r_0d$. 
  We claim that every relation in $\fgpp{\Gamma}{\mo \cup \mo'}$ has
  projected grid-rank $O(k)$. Assume to the contrary, and
  let $R \subseteq \numdom{n}^r \in \fgpp{\Gamma}{\mo \cup \mo'}$ for
  some $n \in \N$.
  Let $\numdom{r}=A \cup B$ be a partition, $S$ and $T$ ordered
  subsets of $\numdom{n}^A$ respectively $\numdom{n}^B$,
  and let $R' \subseteq S \times T$ be a projected grid of $R$ w.r.t.\ $A \cup B$.
  Assume that $R'$ has a subdivision defined by intervals $I_1<\ldots<I_\ell$
  and $J_1<\ldots<J_\ell$ for $\ell \geq 2k+3$, where each cell has
  rank at least $\ell$. 
  The proof now follows Lemma~\ref{lem:line_gridrank}.
  Let $\alpha \in \overline{R}$ be an entry from cell $(k+2,k+2)$ such
  that both the row and the column of $\alpha$ are non-zero in the cell;
  this exists by assumption on the rank of the cell.
  Let $R''(f_1(X_{i_1}), \ldots, f_p(X_{i_p}))$ be a constraint
  from the definition of $R$ which eliminates the entry $\alpha$.
  Then both the sequences $S$ and $T$ take at most $k$ distinct values
  from $[d]^A$ resp.\ $[d]^B$ in their combined maps, due to their ordering.
  (Specifically, there are at most $(d-1)p < k$ breakpoints in the
  sequences where at least one map $f_j(X_{i_j})$ changes values.)  
  As in Lemma~\ref{lem:line_gridrank}, there are now coordinates $a_1, b_1 \in [k+1]$
  and $a_2, b_2 \in [k+3,2k+3]$ such that cells $I_{a_1}$, $I_{a_2}$,
  $J_{b_1}$ and $J_{b_2}$ are constant in all maps of this constraint. But then,
  $(I_{a_1},I_{a_2}) \times (I_{b_1},I_{b_2})$ forms a $K_{2,2}$ in $R'$,
  so by the connector property, $R'$ must form a not necessarily spanning biclique 
  over the rectangle $(I_{a_1} \cup \ldots \cup I_{a_2}) \times 
  (J_{b_1} \cup \ldots \cup J_{b_2})$ spanned by these cells. 
  This contradicts our assumptions about $\alpha$.
\end{proof}

Let us illustrate this notion using the relation $x+y=z$. First, we show that
we can support this relation over any finite, fixed domain size $d$. 

\begin{lemma}
  The relation $R \subseteq \numdom{d}^3$ defined as
  $R(x,y,z) \equiv (x+y=z)$ has the connector property.
\end{lemma}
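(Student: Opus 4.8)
The plan is to unwind the definition of $\fcon_d$ into an explicit combinatorial description and then verify preservation by a short finite case analysis. Recall that $\fcon_d=\pcon_{\mo\cup\mo'}$, so the graph of $\fcon_d$ is obtained by substituting, into the two defining identities $\pcon(0,0,1,2,2)=0$ and $\pcon(0,2,1,0,2)=1$, every monotone or anti-monotone map $m\colon\numdom{3}\to\numdom{d}$ for the pattern variables. Writing $p=m(0)$, $q=m(1)$, $r=m(2)$, a monotone $m$ gives $p\le q\le r$ and an anti-monotone $m$ gives $p\ge q\ge r$; in either case $(p,q,r)$ is a monotone sequence. Hence $\fcon_d$ is defined precisely on $5$-tuples of one of the two shapes
\[
  \text{(form A)}\quad (p,p,q,r,r)\ \mapsto\ p, \qquad\qquad \text{(form B)}\quad (p,r,q,p,r)\ \mapsto\ q,
\]
with $(p,q,r)$ a monotone sequence; one checks that whenever a tuple admits both readings (which happens only for constant tuples) the two outputs coincide, so nothing is lost by treating $\fcon_d$ as a partial function.

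To show $\fcon_d\in\mpol(R)$, I would take arbitrary $t_1,\dots,t_5\in R$ and assume $\fcon_d(t_1,\dots,t_5)\neq\emptyset$, i.e.\ that each of the three coordinate columns $(t_1[j],\dots,t_5[j])$, $j\in\{x,y,z\}$, has form A or form B (otherwise the claim is vacuous). Observe that the output in a form-A column is $t_1[j]$ and in a form-B column is $t_3[j]$. The proof then splits over the $2^3$ choices of form for the three columns, reduced by the $x\leftrightarrow y$ symmetry of $R$ to the cases $(A,A,A)$, $(A,A,B)$, $(B,B,A)$, $(B,B,B)$, $(A,B,A)$, $(A,B,B)$. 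In the all-$A$ (resp.\ all-$B$) case the output tuple equals $t_1$ (resp.\ $t_3$), so it lies in $R$. For each mixed case I would write out the five entries of every column, apply the defining equation $x+y=z$ to all five rows, and combine the resulting linear identities with the monotonicity constraint on the $z$-column. For example, in case $(B,B,A)$: since the $x$- and $y$-columns have form B we have $t_4[x]=t_1[x]$ and $t_4[y]=t_1[y]$, so from rows $1$ and $4$ we get $t_1[z]=t_1[x]+t_1[y]=t_4[x]+t_4[y]=t_4[z]$; as the $z$-column has form A the triple $(t_1[z],t_3[z],t_4[z])$ is monotone, and $t_1[z]=t_4[z]$ squeezes $t_3[z]=t_1[z]$; then row $3$ gives $o_x+o_y=t_3[x]+t_3[y]=t_3[z]=o_z$, so the output $(t_3[x],t_3[y],t_3[z])=t_3\in R$. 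The other mixed cases are handled in exactly the same way — each time the $z$-column's monotonicity forces a collapse that makes the output equal $t_1$ or $t_3$.

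Since $R$ is thus preserved by $\fcon_d$, the language $\{R\}$ has the connector property, as claimed. There is no conceptual obstacle here: the statement is a bounded verification driven by the affine rigidity of $R$ (any single row already determines the value of $z$ from those of $x,y$) together with the very specific argument patterns of $\fcon_d$. The only mildly delicate point is the bookkeeping in the mixed cases — choosing the right pair of rows to play off against each other and invoking the $z$-column monotonicity at the right moment — which I would organize by the tables of column entries described above rather than writing each case in prose.
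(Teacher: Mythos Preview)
Your proposal is correct. The decomposition into form A / form B columns is exactly the paper's ``type 1 / type 2'' classification of coordinates (the paper indexes the five input tuples by the pair of pattern values $t_{00},t_{02},t_{11},t_{20},t_{22}$ rather than $t_1,\dots,t_5$, but this is the same thing), and the affine rigidity of $R$ is the driving mechanism in both.

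The difference is only in how the mixed cases are dispatched. You handle them by explicit bookkeeping: in each of the four non-trivial form-combinations you pick two rows, derive an equality from $x+y=z$, and squeeze the middle value via the monotonicity of one column. The paper instead proves one structural fact that subsumes all cases at once: for each $i\in\{1,2\}$, there is never \emph{precisely one} coordinate failing to be of type $i$, since that would force two of the input tuples to differ in a single coordinate, which $x+y=z$ forbids. Together with ``every coordinate has some type'', this forces all three coordinates to share a common type, and the output is then literally one of the input tuples. So the paper's argument is a short parity/rigidity observation that avoids enumerating cases, while yours verifies each case directly; both buy the same conclusion, and your version arguably makes the mechanism more transparent at the cost of a bit more writing.
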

\begin{proof}
  Let $t_{00}, t_{02}, t_{11}, t_{20}, t_{22} \in R$ be such that
  $\fcon_d(t_{00},t_{02},t_{11},t_{20},t_{22})$ is defined. 
  Thus, in every argument $i \in \{x,y,z\}$,
  either $t_{00}[i]=t_{02}[i] \circ t_{11}[i] \circ t_{20}[i]=t_{22}[i]$ for some $\circ \in \{\leq, \geq\}$ or
  $t_{00}[i]=t_{20}[i] \circ t_{11}[i] \circ t_{02}[i]=t_{22}[i]$ for some $\circ \in \{\leq,\geq\}$.
  Refer to a coordinate as \emph{type 1} resp.\ \emph{type 2} if the
  first respectively second case applies; these types are not disjoint.
  Note that there is never precisely one variable \emph{not} of type
  $i$, for $i=1, 2$. Otherwise, for example, assume that $t_{00}[x] \neq t_{02}[x]$
  but $y$ and $z$ are both type 1. Then the tuples $t_{00}$ and $t_{02}$
  differ in only one coordinate, which is impossible.
  Since every coordinate has a type, this is only possible if all
  variables have the same type(s). But then, the connector polymorphism 
  is projective on the tuple, i.e., $\fcon_d(t_{00},\ldots,t_{22}) \in \{t_{00},\ldots,t_{22}\}$.
  Thus $R$ has the connector property. 
\end{proof}

On the other hand, the projected grid-rank of this relation increases with $d$.

\begin{lemma}
  The relation $x+y=z$ over a domain $\numdom{n}$ of arbitrary size has
  unbounded projected grid-rank. 
\end{lemma}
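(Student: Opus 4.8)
The plan is to show that for every $k\in\N$ there is a domain size $n$ and a projected grid of $R=\{(x,y,z)\in\numdom{n}^3\mid x+y=z\}$ which is literally a permutation matrix chosen to have grid-rank at least $k$; since $k$ is arbitrary this makes the projected grid-rank of the family $\{R\}_n$ unbounded. So there are two ingredients to assemble: first, a recipe that realises an \emph{arbitrary} permutation as a genuine projected grid of $R$ (no existential projection is allowed, so this must be done inside a single relation $R$ over a large enough domain), and second, the existence of permutation matrices of arbitrarily large grid-rank.

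For the first ingredient I would fix $m\in\N$ and a permutation $\sigma$ of $\numdom{m}$, take the partition $A\cup B=\{1,2,3\}$ of the coordinates of $R$ with $A=\{1\}$ (the variable $x$) and $B=\{2,3\}$ (the variables $y,z$), and set $S=\numdom{m}$, viewed as an ordered subset of $\numdom{n}^A=\numdom{n}$ (trivially order-isomorphic to a combinatorial line), together with
\[
  T=\{\,(im,\; im+\sigma(i))\mid i\in\numdom{m}\,\}\subseteq\numdom{n}^B=\numdom{n}^2 .
\]
The first coordinate of $T$ is strictly increasing in $i$, and since $(i+1)m+\sigma(i+1)-(im+\sigma(i))=m+\sigma(i+1)-\sigma(i)\ge 1$, so is the second; hence $T$ is order-isomorphic to a combinatorial line on $m$ points, as an ordered subset must be. Every entry occurring is at most $(m-1)m+(m-1)=m^2-1$, so I may take $n=m^2$. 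The induced projected grid is
\[
  R'=\{(s,t)\in S\times T\mid s\cup t\in R\}=\{\,(\sigma(i),(im,im+\sigma(i)))\mid i\in\numdom{m}\,\},
\]
because $s+im=im+\sigma(i)$ forces $s=\sigma(i)\in S$; thus $R'$, with $S$ and $T$ carried with their natural orders, is exactly the permutation matrix $P_\sigma$.

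For the second ingredient it suffices to pick $\sigma$ so that in the uniform $k$-division of $P_\sigma$ — partition $\numdom{m}$ and the index set of $T$ each into $k$ consecutive blocks of equal size — every one of the $k^2$ cells contains at least $k$ ones; as the ones of a permutation matrix lie in distinct rows and distinct columns, such a cell automatically has at least $k$ distinct rows and at least $k$ distinct columns, so $P_\sigma$ has grid-rank at least $k$. Taking $m=k^3$ and identifying $\numdom{m}$ with $\numdom{k}^3$ via base-$k$ digits, the digit rotation $\sigma(i_1k^2+i_2k+i_3)=i_2k^2+i_3k+i_1$ works: in the $k$-division into blocks of size $k^2$, the cell indexed by block-row $p$ and block-column $q$ is precisely the set of indices with $i_1=q$ and $i_2=p$, i.e.\ $k$ indices (one per value of $i_3$), sent to $k$ distinct rows. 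Hence $R$ over $\numdom{k^6}$ has projected grid-rank at least $k$, and letting $k\to\infty$ finishes the proof.

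The one genuinely delicate point is keeping $T$ a combinatorial line while still encoding an arbitrary permutation in $R'$; that is exactly why the $y$-coordinates are spaced out by a factor $m$, which forces the $z$-coordinates to stay monotone no matter how $\sigma$ permutes the offsets. Everything else — the collapse of $R'$ to a permutation matrix, and the equidistribution of the digit-rotation permutation — is a routine verification, so I do not expect any further obstacle.
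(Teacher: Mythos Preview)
Your proof is correct, but it takes a genuinely different route from the paper's. The paper chooses the partition $A=\{x,y\}$, $B=\{z\}$ and builds a single explicit ordered subset $S\subseteq\numdom{p^2}^2$ (pairs $(p^2-(2i+1)p-j,\;2ip+2j)$ in lex order over $(i,j)$) so that $x+y=p^2-p+j$ is independent of $i$; the resulting projected grid against $T=[p^2-p,p^2-1]$ is then $\Theta(p)$ copies of the identity stacked on top of each other, which the paper asserts has unbounded grid-rank. You instead take the complementary partition $A=\{x\}$, $B=\{y,z\}$, and more importantly you separate the argument into two clean pieces: (i) any permutation $\sigma$ of $\numdom{m}$ arises as a projected grid of $R$ over $\numdom{m^2}$ by spacing the $y$-values in $T$ by $m$ so that the $z$-coordinate stays monotone regardless of $\sigma$, and (ii) a specific choice of $\sigma$ (the digit rotation on $\numdom{k^3}$) has grid-rank at least $k$. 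Your approach is more modular and ties in nicely with the paper's recurring theme that defining arbitrary permutations is the canonical obstruction; the paper's version is terser but leaves the grid-rank verification of its repeated-identity pattern to the reader.
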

\begin{proof}
  Let $R(x,y,z) \equiv (x+y=z)$ over an arbitrarily large domain $\numdom{n}$.
  We demonstrate a projected grid in $R$ of arbitrary grid-rank.
  Let $A=\{x,y\}$ and $B=\{z\}$, and assume that $n=p^2$ for some $p \in \N$. 
  Let $S$ be a sequence which for every $i \in \numdom{p/2}$, $j \in \numdom{p}$
  contains the pairs $(p^2-(2i+1)p-j, 2ip+2j)$ visited in
  lexicographic order $(i,j)$, i.e., 
  \[
    (p^2-p,0) \prec (p^2-p-1,2) \prec \ldots \prec (p^2-2p+1,2p-2) \prec
    (p^2-3p, 2p) \prec (p^2-3p-1, 2p+2) \prec \ldots \prec (1,p^2-2),
  \]
  which is clearly an ordered subset of $\numdom{n}^A$. 
  For every such pair $(i,j)$, the sum $x+y=p^2-p+j$ is independent of $i$,
  hence the projected grid of $R$ on $S$ and $T=[p^2-p,p^2-1]$ 
  consists of an ascending sequence of length $p$ repeated $\Theta(p)$ times,
  which has unbounded grid-rank for growing $p$.
\end{proof}

\subsection{Bounded Projected Grid-rank Implies Bounded Twin-width}
\label{sec:gridrankramsey}

We conjecture that for every base language $\Gamma$ with the connector polymorphism,
$\mcsp(\Gamma)$ is FPT parameterized by the number of
variables (Conjecture~\ref{conjecture}).
To support this conjecture, we show that if $R$ is a $k$-ary relation
what is $\mo$-fgpp-definable over such a language $\Gamma$,
then the binary projected relation $\pi_{1,2} R$ has twin-width
bounded as a function of $k$.

Let us briefly illustrate why this is an important statement. Assume,
say, that there existed a 3-ary relation $R_n \in \fgpp{\Gamma}{\mo}$
over a domain $\numdom{n^2}$ such that $\pi_{1,2} R_n$ is the flip
relation (if the order $[n]^2$ is mapped into $\numdom{n^2}$ in
lexicographical order). Then there would be a reduction from
\textsc{Multicoloured Clique} with parameter $k$ 
into an instance with $\mcsp(\Gamma \cup \{\mathsf{Eq}\})$ with $O(k^2)$ variables,
following the standard reduction from Section~\ref{sec:unrestr-discussion}:
For every pair $i, j \in [k]$ with $i < j$,
in addition to the variables $v_{i,j}$ and $v_{j,i}$ as in the construction,
we introduce one locally quantified variable $z_{i,j}$
and implement the relation $R_n(v_{i,j}, v_{j,i}, z_{i,j})$. 
This is a priori not contradicted by the statement that binary
relations over $\Gamma$ have bounded twin-width, since the connector
polymorphism, being a partial polymorphism, is not closed under
existential projection, even of a single variable $z_{i,j}$.
However, the statements in this section refute such a construction,
since this would imply that $\pi_{1,2} R_n$ has unbounded twin-width.

More technically, we will show the following theorem.

\begin{theorem} \label{thm:projected-grid-rank-gr}
  There is a function $f \colon \N \times \N \to \N$ 
  such that the following holds:
  For all $d, k \in \N$ and every relation $R \subseteq \numdom{n}^k$
  with projected grid rank at most $d$, 
  the projection $\pi_{1,2} R$ has grid rank at most $f(d,k)$.   
\end{theorem}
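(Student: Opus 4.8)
I would argue by contraposition, and by symmetry of the pair it suffices to handle $\pi_{1,2}$. Fix $d,k$ and suppose $\pi_{1,2}R$ has grid-rank exceeding a bound $f(d,k)$ to be determined; the goal is then to exhibit a projected grid of $R$ of grid-rank more than $d$, contradicting the hypothesis. So $\pi_{1,2}R$ admits an $N$-division of rank $N$ with $N=f(d,k)+1$, where $f$ will be of tower/Ackermann type in $k$ (polynomial in $d$), and $N$ is taken large enough that a structured ``core'' survives all the refinements below. From this $N$-division, using a Marcus--Tardos-style argument of the sort standard in twin-width theory (cf.\ Bonnet et al.~\cite{bonnet2024twin} and the proof of Lemma~\ref{lem:line_gridrank}), I would extract sets of coordinate-$1$ values $A_1<\cdots<A_{d+1}$ and coordinate-$2$ values $B_1<\cdots<B_{d+1}$, each of enormous size, so that the restriction of $\pi_{1,2}R$ to $A_p\times B_q$ carries a large high-rank grid (concretely, a large permutation pattern inside each cell of a finer subdivision) for every $p,q$; the resulting core $M'=\pi_{1,2}R\cap\bigl((\bigcup_pA_p)\times(\bigcup_qB_q)\bigr)$ still has grid-rank at least $d+1$, and every entry of $M'$ lying in $\pi_{1,2}R$ is witnessed by an extension in $R$.

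Next I would regularize the witnesses. For each entry $(a,b)$ of $M'$ lying in $\pi_{1,2}R$, choose an extension $w(a,b)\in R$ with first two coordinates $a,b$, and regard the tail $(w_3,\dots,w_k)$ as a tuple-valued array over the two-dimensional core. Applying the two-dimensional (lexicographic) Erd\H{o}s--Szekeres theorem of Fishburn and Graham~\cite{FishburnG93lexicographic} once for each of the $k-2$ tail coordinates, interleaved with pigeonholing, I would pass to a still-large sub-core on which every tail coordinate $\ell$ is \emph{two-dimensionally monotone}: strictly increasing, strictly decreasing, or constant along the rows and, independently, along the columns of the sub-core. Each tail coordinate then falls into one of three types: (i) constant along columns, hence a monotone function of the row value alone; (ii) constant along rows, a function of the column value alone; or (iii) \emph{genuinely two-dimensional}, strictly monotone along both.

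I would then split into cases. If every tail coordinate is of type (i) or (ii), the witnesses \emph{factor}: put coordinate $1$ and all type-(i) coordinates into $A$, coordinate $2$ and all type-(ii) coordinates into $B$, and set $S=\{(a,u(a))\}$, $T=\{(b,v(b))\}$, where $u(a)$ collects the row-only coordinates and $v(b)$ the column-only coordinates of the witnesses. Each component of $S$ and of $T$ is monotone (or constant), so these are genuine ordered subsets, and $s\cup t\in R$ iff $(\mathrm{coord}_1(s),\mathrm{coord}_2(t))\in\pi_{1,2}R$; hence the projected grid $R'=\{(s,t)\in S\times T: s\cup t\in R\}$ \emph{equals} $M'$, whose grid-rank is at least $d+1$ --- a contradiction. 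If instead some tail coordinate $\ell$ is of type (iii), I would exploit its two-dimensionality directly, exactly as in the treatment of $x+y=z$ (whose only witness $z=x+y$ is genuinely two-dimensional, which is precisely why that relation has unbounded projected grid-rank): choose a sub-staircase of the core that runs \emph{with} the grain in coordinate $1$ but \emph{against} it in coordinate $2$ --- both still monotone, hence a valid ordered subset of $\numdom{n}^{\{1,2\}}$ --- along which $w_\ell$ oscillates into many short ascending runs; the projected grid with $A=\{1,2\}$ taken along this staircase and $T$ enumerating the range of coordinate $\ell$ (storing the remaining, by then type-(i)/(ii), coordinates on the appropriate side) is then an ascending sequence repeated $\Omega(\cdot)$ times, whose grid-rank grows without bound --- again a contradiction.

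The hard part is making the three scales consistent: $N=f(d,k)$ must be chosen so that the core surviving the Marcus--Tardos extraction and the $k-2$ rounds of two-dimensional Erd\H{o}s--Szekeres (together with the type-classification refinement) still has grid-rank exceeding $d$, which is what forces $f$ to be of tower/Ackermann type in $k$; and, more delicately, in the type-(iii) case one must exhibit a single sub-staircase that sawtooths \emph{all} genuinely two-dimensional coordinates simultaneously while remaining a legitimate ordered subset and leaving the remaining coordinates harmless, which needs one further layer of Ramsey to align their grains. This multi-layered Ramsey bookkeeping is the technical heart of the argument, and we defer the details.
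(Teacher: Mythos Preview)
Your high-level plan is in the right spirit, and your type-(i)/(ii) versus type-(iii) split loosely parallels the paper's diverse/non-diverse label dichotomy, but there is a genuine gap in how you regularize the witnesses. Fishburn--Graham requires a \emph{total} function on a cube $[N]^2$; your witness array $w_\ell(a,b)$ is only defined on the $1$-entries of $M'$ (the positions where $\pi_{1,2}R$ holds). If you extend $w_\ell$ arbitrarily to the $0$-entries and then apply FG, the resulting monotone subcube may freely mix real and fake witnesses, so you cannot conclude that the genuine witnesses in a single row agree in coordinate $\ell$ --- and that agreement is exactly what makes $u_\ell(a)$ well-defined and lets $S=\{(a,u(a))\}$ work. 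If instead you first pass to an all-$1$ subcube so that FG applies cleanly, then on that subcube the projected grid $R'$ coincides with the all-$1$ matrix, which has rank $1$; the high grid-rank you carefully preserved is destroyed at the very step where you need it. The type-(iii) case has the analogous problem: the $x+y=z$ example works because that relation is \emph{functional} in $z$, so the staircase yields a permutation-like projected grid; for a general $R$ each staircase point $(a_i,b_i)$ may admit many $\ell$-values, and your single chosen witness $w_\ell(a_i,b_i)$ pins down only one $1$-entry per row of the projected grid --- not enough to force high grid-rank, since the remaining entries could fill in to something of rank $1$.

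The paper's proof avoids both problems by never committing to a single witness per entry. It works instead with matrices whose entries are the full \emph{sets} of admissible values for the still-hidden coordinates (``weak projected grids with $k'$ hidden dimensions''), and inducts on $k'$. For one hidden coordinate $r$ it runs a dichotomy on the cells of a large division: either many cells have a \emph{diverse} label --- a single value $\ell$ of $r$ that already leaves rank $\geq d'$ in the cell --- in which case a lexicographic sweep (Fishburn--Graham applied to the grid of \emph{cell positions}, not to per-entry witnesses) absorbs $r$ into $S$ or $T$; or no cell has a diverse label, in which case every cell requires many distinct $r$-values, and one selects per-cell label sets with pairwise disjoint spans (Lemma~\ref{lemma:make-disjoint-spans}), applies FG to the spans, and again sweeps. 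This set-valued viewpoint, together with the diverse/non-diverse split, is the idea your proposal is missing; the ``bookkeeping'' you defer is not merely bookkeeping but this structural change.
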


Our desired conclusions about $\mcsp(\Gamma)$ are a consequence of this.

\begin{corollary} \label{corollary:twin_width_bounded}
  Let $\Gamma$ be a fixed, finite base language over a finite domain.
  If $\Gamma$ has the connector polymorphism, then for any relation
  $R \in \fgpp{\Gamma}{\mo}$ of arity $k$, 
  the projection $\pi_{1,2} R$ has twin-width bounded
  as $f_\Gamma(k)$. 
\end{corollary}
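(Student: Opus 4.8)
The plan is to derive the corollary by simply chaining together three facts already available at this point in the paper, so the proof proper will be short; the real work is hidden in \Cref{thm:projected-grid-rank-gr} and in the cited twin-width machinery.

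First I would observe that since $\Gamma$ is fixed and finite and has the connector property, \Cref{lemma:projected-grid-rank} applies: there is a constant $d_\Gamma \in \N$ (depending only on the maximum arity and maximum domain size of relations in $\Gamma$, via $d_\Gamma = O(r_0 d)$ in the notation of that lemma) such that every relation in $\fgpp{\Gamma}{\mo \cup \mo'}$ has projected grid-rank at most $d_\Gamma$. In particular, since $\fgpp{\Gamma}{\mo} \subseteq \fgpp{\Gamma}{\mo \cup \mo'}$, the given $k$-ary relation $R \in \fgpp{\Gamma}{\mo}$ has projected grid-rank at most $d_\Gamma$.

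Next I would feed this into \Cref{thm:projected-grid-rank-gr}: with the projected grid-rank of $R$ bounded by $d_\Gamma$ and $R$ of arity $k$, the binary projection $\pi_{1,2} R$ (viewed as a relation over $\numdom{n}$ with the inherited linear order) has grid-rank at most $f(d_\Gamma, k)$, where $f$ is the universal function provided by that theorem. Finally I would invoke the result of Bonnet et al.~\cite{bonnet2024twin} (equivalently, the explicit statement in Geniet~\cite{genie2024}) that a binary ordered structure whose adjacency matrix has grid-rank bounded by some constant has twin-width bounded by a function of that constant; applying this to $\pi_{1,2} R$ gives twin-width bounded by some $f_\Gamma(k)$ obtained by composing $f(d_\Gamma, \cdot)$ with the grid-rank-to-twin-width function, which depends only on $k$ once $\Gamma$ is fixed. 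This establishes the claimed bound.

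I do not expect a genuine obstacle here, since every ingredient is already in hand; the only point that needs a sentence of care is that $\pi_{1,2} R$ must be treated as an \emph{ordered} binary structure (so that grid-rank and the twin-width bound are meaningful), and that the order used is exactly the one inherited from $\numdom{n}$, which is also the order under which \Cref{thm:projected-grid-rank-gr} produces the grid-rank bound — so the two steps compose without any reindexing. All the substantive difficulty has been pushed into the Ramsey argument proving \Cref{thm:projected-grid-rank-gr}, which we may assume here.
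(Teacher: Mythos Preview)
Your proposal is correct and follows essentially the same three-step chain as the paper's proof: apply \Cref{lemma:projected-grid-rank} to bound the projected grid-rank of $R$ by a constant depending only on $\Gamma$, then \Cref{thm:projected-grid-rank-gr} to bound the grid-rank of $\pi_{1,2} R$ by $f(d_\Gamma,k)$, then the Bonnet et al.\ result to convert bounded grid-rank into bounded twin-width. Your version is slightly more explicit (noting the inclusion $\fgpp{\Gamma}{\mo} \subseteq \fgpp{\Gamma}{\mo \cup \mo'}$ and the compatibility of the orderings), but the argument is the same.
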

\begin{proof}
  Let $\Gamma$ be a base language with the connector polymorphism
  and $R \in \fgpp{\Gamma}{\mo}$ be a relation definable over $\Gamma$. 
  By Lemma~\ref{lemma:projected-grid-rank}, $R$ has bounded projected
  grid rank (depending only on $\Gamma$), hence by Theorem~\ref{thm:projected-grid-rank-gr}
  the relation $\pi_{1,2} R$ has grid rank bounded by some $f(\Gamma,k)$.
  Then $\pi_{1,2} R$ has bounded twin-width by~\cite{bonnet2024twin}.
\end{proof}

\subsubsection{Preliminary Tools}

The proof of Theorem~\ref{thm:projected-grid-rank-gr} uses Ramsey arguments; no effort has been made to minimize
the function dependency, as our interest is purely in the dependency
being bounded. The argument combines two aspects, the
existence of a large high-rank grid (cf.\ the applications of the 
Marcus-Tardos theorem in twin-width; see Bonnet et al.~\cite{bonnet2024twin,Bonnet:etal:jacm2022}
or Pilipczuk et al.~\cite{PilipczukSZ22compact}) 
with a requirement that dimensions $3, \ldots, k$ behave in some
regularly ordered way in the grid. This second aspect is closely
inspired by the ``two-dimensional Erd\H{o}s-Szekeres'' theorem
of Fishburn and Graham.

We use the following terminology (to reserve the term \emph{grid} to
its meaning in the rest of the paper). For sets $S_1,\dots,S_d$, refer to 
a set of the form $S_1 \times \dots \times S_d$ as a \emph{cube}. Write
$S^d$ for a product of the form $S \times \dots \times S$ with $d$ factors. A 
$[k]^d$-{\em subcube} of a cube
$S_1 \times \dots \times S_d$ is a subset of $S_1 \times \dots S_d$ of the form 
$S'_1 \times \dots \times S'_d$, where $S'_i$ is a
$k$-element subset of $S_i$.

\begin{theorem}[{\cite{FishburnG93lexicographic}}]
\label{thm:twodim-Erdos-Szekeres}
  For every $d, in \in \N$ there is $N(d,n) \in \N$ such that 
  for every $N \in \N$, $N \geq N(d,n)$, and every injective function
  $f \colon [N]^d \to \N$ there exists a subcube $A_1 \times \ldots \times A_d$
  of $[N]^d$ with $|A_1|=\ldots=|A_d|=n$ such that
  $f$ is monotone lexicographic in the subcube. 
\end{theorem}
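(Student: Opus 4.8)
The plan is to derive this from the \emph{product Ramsey theorem} (obtained by $d$-fold iteration of the ordinary hypergraph Ramsey theorem) together with a finite ``order-type'' analysis. In its needed form the product Ramsey statement reads: for all $d,n,r\in\N$ there is $N_0=N_0(d,n,r)$ such that every $r$-colouring of the set $\binom{[N_0]}{2}^d$ of $d$-boxes $\{a_1<b_1\}\times\cdots\times\{a_d<b_d\}$ admits $A_1,\dots,A_d\in\binom{[N_0]}{n}$ with $\binom{A_1}{2}\times\cdots\times\binom{A_d}{2}$ monochromatic. Given an injective $f\colon[N]^d\to\N$, colour each $d$-box $\beta$ by the linear order that $f$ induces on the $2^d$ values $\{f(c):c\in\beta\}$, which is one of at most $(2^d)!$ types. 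We may assume $n\ge 3$ (a lex-monotone subcube of side $n'\ge n$ contains one of side $n$); then, taking $N\ge N_0(d,n,(2^d)!)$ and applying product Ramsey, we obtain $A_1,\dots,A_d$ of size $n$ such that \emph{every} $2\times\cdots\times2$ subcube of $A_1\times\cdots\times A_d$ has one and the same $f$-order-type $\tau$. Relabel so each $A_i=[n]$.

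It then remains to prove the key combinatorial lemma: if $n\ge 3$ and every $2^d$-subcube of $f$ on $[n]^d$ has a common order-type $\tau$, then there is a permutation $\pi$ of the coordinates, orientations $\varepsilon\in\{+,-\}^d$, and a global direction, such that $f$ is monotone along the corresponding lexicographic order on $[n]^d$ (coordinate $\pi(1)$ most significant, each coordinate read increasingly or decreasingly according to $\varepsilon$); and indeed ``$\tau$ is a lexicographically monotone type'' is a local-to-global property, so local consistency forces global lex-monotonicity. I would prove the lemma by induction on $d$. The base case $d=1$ is exactly Erd\H{o}s--Szekeres (grid-consistency there means $f$ is monotone on $[n]$). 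For the step, fixing coordinate $1$ to two values gives two slices that are again grid-consistent $[n]^{d-1}$-cubes; using three values in coordinate $1$ one checks that all such slices share one order-type $\tau'$, so by induction each slice is lex-monotone with a \emph{common} priority order $\pi'$ on $\{2,\dots,d\}$, common orientations, and common direction. The type $\tau$ of a single box now records all cross-slice comparisons, and since \emph{every} box has type $\tau$ these interleavings are transitively consistent, which forces coordinate $1$ to slot into a well-defined position of $\pi'$ with a well-defined orientation (e.g.\ if $\tau$ places the whole low slice below the whole high slice, coordinate $1$ becomes most significant). Carrying out this interleaving bookkeeping is the main obstacle; for $d=2$ it reduces to the short verification that the $2\times2$ order-types occurring on a $3\times3$ grid are precisely the eight lexicographically monotone ones, and the general case is its straightforward but notation-heavy generalization.

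Combining the two parts, the monochromatic cube $A_1\times\cdots\times A_d$ has side $n$ and, by the lemma, $f$ is lexicographically monotone on it; this is the conclusion of the theorem with $N(d,n):=N_0(d,\max(n,3),(2^d)!)$. (One can instead bypass the product Ramsey theorem by a direct induction on $d$ that iterates Erd\H{o}s--Szekeres slice by slice and uses pigeonhole to synchronise the monotone sub-slices across the most significant coordinate, but the order-type lemma, or an equivalent of it, is needed in either approach, and routing through product Ramsey keeps the bookkeeping localised to a single box.)
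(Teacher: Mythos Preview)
The paper does not prove this theorem; it is quoted as a known result from Fishburn and Graham~\cite{FishburnG93lexicographic} (with a pointer to~\cite{BucicSTerdosszekeri} for sharper bounds), and only the two-dimensional case is actually used later. So there is no ``paper's own proof'' to compare against.

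That said, your route via the product Ramsey theorem is a legitimate and standard way to obtain the statement. Colouring each $2\times\cdots\times2$ box by the order type of $f$ on its $2^d$ corners and extracting a monochromatic $[n]^d$-subcube is exactly what product Ramsey provides, and your reduction to $n\ge 3$ is correct since any $n^d$-subcube of a lex-monotone cube is again lex-monotone. The substantive content sits entirely in your ``key combinatorial lemma'': that a single order type $\tau$ which is realised on \emph{every} $2^d$-box of an $[n]^d$-cube with $n\ge3$ must be one of the $2\cdot d!\cdot 2^d/2$ lex-monotone types, and then forces global lex-monotonicity. Your inductive sketch for this lemma is sound in outline. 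Two points worth making explicit when you write it out: first, the observation that the ``low-half'' and ``high-half'' restrictions of $\tau$ (to the $2^{d-1}$ corners with smallest, respectively largest, value in coordinate~$1$) must coincide uses the existence of an interior value in coordinate~$1$, i.e.\ genuinely needs $n\ge3$; second, the step ``coordinate~$1$ slots into a well-defined position of $\pi'$'' amounts to checking that the cross-slice comparisons recorded in $\tau$ are consistent with \emph{some} linear extension of the slice order, and that consistency across overlapping boxes rules out all non-lex interleavings --- for $d=2$ this is the $24\to 8$ elimination you mention, and in general it is the only place where real case analysis is needed.

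Your parenthetical alternative (direct induction on $d$ via iterated Erd\H{o}s--Szekeres and pigeonhole) is closer in spirit to how such results are often presented and gives much better bounds than routing through product Ramsey, but since the paper only needs existence of $N(d,n)$ either argument suffices.
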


Here, the ordering is lexicographic up to two modifications: a
permutation of the dimensions $[d]$ from most significant to least
significant, and a choice of direction (increasing or decreasing)
in every dimension. See also~\cite{BucicSTerdosszekeri} for better bounds.
We will only need the two-dimensional version. 

In certain cases, we also resort to use the more blunt {\em product Ramsey theorem}.

\begin{theorem}[Theorem 11.2.13 in \cite{Bodirsky:Book}] \label{thm:product-ramsey}
  For all positive integers $d$, $c$, $s$, and $m$, there is a
  positive integer $\ell$ such that for every colouring of the
  $[s]^d$-subcubes of $[\ell]^d$ with $c$ colours there exists a
  monochromatic $[m]^d$-subcube $G$ of $[\ell]^d$, 
  i.e., all the $[s]^d$-subcubes of $G$ have the same colour.
\end{theorem}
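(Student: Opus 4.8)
The plan is to prove the theorem by induction on the number of dimensions $d$, using the classical hypergraph Ramsey theorem (equivalently, the $d=1$ case of the statement itself) as a black box and spending one further application of it per dimension. First I would pass to the slightly more flexible ``rectangular'' version: for all $d, c, s, m$ there are $N_1, \ldots, N_d$ such that every $c$-colouring of the $[s]^d$-subcubes $S_1 \times \cdots \times S_d$ (with $S_i \subseteq [N_i]$, $|S_i| = s$) of $[N_1] \times \cdots \times [N_d]$ admits a monochromatic $[m]^d$-subcube. The square version as stated then follows by setting $\ell := \max_i N_i$ and restricting any colouring of the $[s]^d$-subcubes of $[\ell]^d$ to subcubes of $[N_1] \times \cdots \times [N_d] \subseteq [\ell]^d$.

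For the base case $d = 1$, a $[s]^1$-subcube of $[N_1]$ is an $s$-element set and a monochromatic $[m]^1$-subcube is an $m$-element set all of whose $s$-subsets share a colour, so $N_1 := R_s(m;c)$, the $c$-colour Ramsey number for $s$-uniform hypergraphs, works by definition.

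For the inductive step I would assume the rectangular version in dimension $d-1$ and fix $c, s, m$. Let $N_1, \ldots, N_{d-1}$ be the side lengths it provides for $(c,s,m)$, let $P$ be the (finite) number of $[s]^{d-1}$-subcubes of $[N_1] \times \cdots \times [N_{d-1}]$, put $C := c^P$ (the number of $c$-colourings of those subcubes), and set $N_d := R_s(m; C)$. Given a $c$-colouring $\chi$ of the $[s]^d$-subcubes of $[N_1] \times \cdots \times [N_d]$, assign to each $s$-set $B \subseteq [N_d]$ the induced $c$-colouring $\chi_B$ of the $[s]^{d-1}$-subcubes of $[N_1] \times \cdots \times [N_{d-1}]$ given by $\chi_B(A) := \chi(A \times B)$. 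Then $B \mapsto \chi_B$ is a $C$-colouring of the $s$-sets of $[N_d]$, so by the choice of $N_d$ there is $Y \subseteq [N_d]$ with $|Y| = m$ on which it is constant, i.e.\ $\chi_B = \psi$ for a single colouring $\psi$ and every $s$-set $B \subseteq Y$. Applying the induction hypothesis to $\psi$ gives a $\psi$-monochromatic $[m]^{d-1}$-subcube $A_1 \times \cdots \times A_{d-1}$, say of colour $\gamma$; I claim $A_1 \times \cdots \times A_{d-1} \times Y$ is the sought $[m]^d$-subcube. Indeed, an arbitrary $[s]^d$-subcube inside it is $A_1' \times \cdots \times A_{d-1}' \times B'$ with $A_i' \subseteq A_i$, $B' \subseteq Y$, and $\chi(A_1' \times \cdots \times A_{d-1}' \times B') = \chi_{B'}(A_1' \times \cdots \times A_{d-1}') = \psi(A_1' \times \cdots \times A_{d-1}') = \gamma$, the last step because $A_1' \times \cdots \times A_{d-1}'$ is a $[s]^{d-1}$-subcube of the $\psi$-monochromatic subcube $A_1 \times \cdots \times A_{d-1}$.

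I do not expect a serious obstacle: the only external ingredient is the finiteness of the hypergraph Ramsey numbers $R_s(m;c)$, and everything else is routine double counting and induction. The one delicate point is the ordering of the construction in the inductive step — the first $d-1$ side lengths must be pinned down before one counts the induced colourings $\psi$, so that $C$ is finite — but the construction above is arranged precisely so that this holds. (The resulting bound on $\ell$ grows tower-like in $d$, which is irrelevant for our purposes.)
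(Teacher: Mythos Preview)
Your proof is correct and is the standard induction-on-dimension argument for the product Ramsey theorem. Note, however, that the paper does not actually prove this statement: it is quoted as Theorem~11.2.13 from Bodirsky's book and used as a black box, so there is no ``paper's own proof'' to compare against. Your argument is precisely the classical one that reference would contain.
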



The proof of Theorem~\ref{thm:projected-grid-rank-gr} is simplified
by using a variation of projected grid rank:
up to some dependency on $k$, the ``strict'' order
requirement in the projected grid rank definition can be relaxed to a
\emph{weak} order requirement (Lemma~\ref{lemma:weak-order-suffices}). 

\begin{definition}
  A sequence $x_1, \ldots, x_n \in \N$ is \emph{weakly ordered}
  if either $x_1 \leq \ldots \leq x_n$ or $x_1 \geq \ldots \geq x_n$. 
  Let $n, r \in \N$. A \emph{weakly ordered subset} of $\numdom{n}^r$ 
  is a set $S \subseteq \numdom{n}^r$ with an ordering
  $s_1 \prec \ldots \prec s_m$ such that for every $i \in [r]$,
  the sequence $s_1[i], \ldots, s_m[i]$ is weakly ordered. 
\end{definition}

We relax the notion of a projected grid in the natural way. 

\begin{definition}
  Let $A \cup B$ be a partition of $[r]$ for some $r \in \N$
  and let $R \subseteq [n]^r$ be a relation. A \emph{weak projected grid}
  of $R$ w.r.t.\ $A \cup B$ is a binary relation
  \[
    R' \subseteq S \times T \colon R'=\{(s,t) \in S \times T \mid s \cup t \in R\}
  \]
  where $S$ is a weakly ordered subset of $[n]^A$, $T$ is a weakly ordered subset of $[n]^B$,
  and $s \cup t$ denotes the tuple in $[n]^r$ whose entries in coordinates $A$
  match $s$ and whose entries in coordinates $B$ match $t$. 
\end{definition}

To distinguish this from the regular definition, we refer to a
projected grid as a \emph{proper} projected grid.
For a relation $R \subseteq \numdom{n}^k$, a partition $[k]=A \cup B$,
and a weak projected grid $R' \subseteq S \times T$ with respect to~$A \cup B$,
we let $M_{S,T} \in \{0,1\}^{S \times T}$ denote the matrix where
$M[u,v]=[(u,v) \in R']$.
We refer to $R'$ as being \emph{defined by $(A,B,S,T)$}. 

We exploit a simplifying assumption from Bonnet et al.~\cite{bonnet2024twin}.
For $k \in \N$, $\NN_k$ refers to a class of eight structurally simple
matrices in $\{0,1\}^{k \times k}$; the details are not important to
us, except that they are of full rank.
Let $M \in \{0,1\}^{n \times n}$ be a matrix. 
A \emph{rank-$k$ Latin $d$-division} of $M$ is a $d$-division of $M$
into row parts $I_1 \cup \ldots \cup I_d$ and column parts $J_1 \cup \ldots \cup J_d$
such that for each $i \in [d]$, $I_i$ and $J_i$ are further divided
into $d$ parts $I_i = I_{i,1} \cup \ldots \cup I_{i,d}$
respectively $J_i = J_{i,1} \cup \ldots \cup J_{i,d}$
such that for every $i \in [d]$ there are permutations $\sigma^r_i \colon [d] \to [d]$
and $\sigma^c_i \colon [d] \to [d]$ such that the following hold.
\begin{enumerate}
\item For all $i, j \in [d]$, $M \cap (I_{i,\sigma^r(i)} \times (J_{j,\sigma^c(j)})) \in \NN_k$, and
\item all other submatrices $M \cap (I_{i,a} \times J_{j,b})$ are all-1 or all-0.
\end{enumerate}
We refer to the region $I_{i,\sigma^r(i)} \times J_{j,\sigma^c(j)}$ of $M$
as the \emph{active region} of cell $(i,j)$ of $M$. 
A \emph{rank $k$ Latin division} is a rank $k$ Latin $k$-division.

\begin{lemma} \label{lemma:weak-order-suffices}
  There is a function $f(k,d)$ such that the following holds. 
  Let $R \subseteq \numdom{n}^k$ be a relation which contains a
  weak projected grid with grid-rank $f(k,d)$. Then $R$ has projected
  grid-rank at least $d$. 
\end{lemma}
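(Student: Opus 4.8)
The goal is to upgrade a \emph{weak} projected grid of high grid-rank into a \emph{proper} projected grid of grid-rank at least $d$, at the cost of a Ramsey-type blow-up in the grid-rank parameter. The starting point is the simplifying assumption from~\cite{bonnet2024twin}: if the matrix $M_{S,T}$ of the weak projected grid has grid-rank at least some function of $k',d'$, then it contains a rank-$k'$ Latin $d'$-division for parameters of our choosing. So first I would fix target parameters and work backwards: I want a proper projected grid with grid-rank $\geq d$, i.e.\ a $d$-division of a suitable submatrix where each cell has rank $\geq d$, where now the row-index set $S' \subseteq \numdom{n}^A$ and column-index set $T' \subseteq \numdom{n}^B$ are \emph{strictly} (line-)ordered rather than merely weakly ordered. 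The weak ordering already guarantees that in each coordinate $j \in A$ (resp.\ $j \in B$) the sequence $s_1[j] \leq \cdots \leq s_m[j]$ (or the reverse) is monotone but possibly with repeats; the task is to thin out $S$ and $T$ to kill the repeats simultaneously in all coordinates while retaining a large high-rank subgrid.

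The key steps, in order. (1) Invoke the Latin-division structure theorem to get, inside $M_{S,T}$, a rank-$k$ Latin $D$-division for $D$ large (to be determined by the later Ramsey steps); each active region is a full-rank $\NN_k$-type matrix and all other blocks are constant. (2) Within the row-index sequence $S$, partition the coordinate positions $j \in A$ into those that are strictly increasing / strictly decreasing / constant along the chosen Latin row-blocks is too much to ask directly; instead, for the coordinates that still have ``plateaus'', apply the two-dimensional Erd\H{o}s--Szekeres theorem of Fishburn--Graham (Theorem~\ref{thm:twodim-Erdos-Szekeres}), or rather its iterated/multidimensional consequence, to find a subcube of row indices on which every coordinate $j \in A$ is \emph{either} strictly monotone \emph{or} constant. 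If a coordinate is constant on the whole surviving subset we can simply drop it from $A$ (it contributes nothing to distinguishing tuples and the remaining coordinates still carry the ordering); if it is strictly monotone we keep it. The surviving set $S'$ is then a genuine ordered subset of $\numdom{n}^{A'}$ for some $A' \subseteq A$, and symmetrically for $T'$ over some $B' \subseteq B$. (3) The delicate point is that thinning $S$ and $T$ must not destroy the high-rank cell structure: here I would use the product Ramsey theorem (Theorem~\ref{thm:product-ramsey}) on the $[s]^2$-subcubes of the Latin division to ensure that after passing to a large monochromatic subcube, each cell of the induced $d$-division still contains an $\NN_k$-type active region, hence has rank $\geq k \geq d$. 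Iterating the Erd\H{o}s--Szekeres step once for the row side and once for the column side, interleaved with the product-Ramsey step, and bookkeeping the parameter inflation, yields the function $f(k,d)$.

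\textbf{Main obstacle.} The hard part is the coordination between the three thinning operations: the Latin-division gives us a high-rank \emph{grid}, but its row/column blocks are tied to the original (weak) order, whereas the Erd\H{o}s--Szekeres monotonization re-selects indices in a way that a priori scrambles which original block an index came from. The plan to handle this is to apply Fishburn--Graham \emph{before} refining to a Latin division would be cleaner but loses the rank control; so instead I would apply it \emph{after}, but on a sufficiently large Latin $D$-division with $D \gg d$, and then use product Ramsey to recover a rank-$\geq k$ Latin $d$-division \emph{among the surviving blocks}. Concretely: first get a rank-$k$ Latin $D$-division; within it, monotonize the coordinates of $A$ (and of $B$) via the multidimensional Erd\H{o}s--Szekeres consequence, shrinking each block to a still-large set; then the surviving index sets in each of the $D$ blocks are still nonempty and ordered, and a pigeonhole / product-Ramsey argument over the $\binom{D}{d}$ choices of $d$ blocks keeps a rank-$\geq k$ $d$-division with the strict-order property. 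Tracking how each step multiplies the required grid-rank is routine but tedious; the genuinely subtle verification is that an $\NN_k$-matrix restricted to a large subset of rows and columns still has rank $\geq d$ — this holds because the $\NN_k$ matrices are, by construction, of full rank and ``persistently'' so under taking large induced submatrices (an all-but-triangular or permutation-like structure), which is exactly the property exploited in~\cite{bonnet2024twin}, so I would cite it rather than re-prove it.
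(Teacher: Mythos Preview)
Your plan has a genuine gap in step (2), where you invoke Fishburn--Graham to ``find a subcube of row indices on which every coordinate $j \in A$ is either strictly monotone or constant.'' This is a mismatch: $S$ is a one-dimensional weakly ordered sequence, not a $d$-dimensional cube, and Fishburn--Graham concerns a single injective function on a cube, not several weakly monotone sequences on a line. (The paper \emph{does} use Fishburn--Graham, but only later, in the ``hidden-dimension'' induction of Section~\ref{sec:gridrankramsey}, applied to label values on the $2$-dimensional grid of cells --- a different situation.) Even if you replace Fishburn--Graham by iterated one-dimensional pigeonhole on the block representatives, you still face the within-block versus between-block problem: a coordinate can be strictly increasing at the level of block representatives yet constant inside some blocks, so the union of your selected rows is still only weakly ordered. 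Your plan does not address this, and the product-Ramsey step (3) does not help --- restricting to a subcube of blocks automatically preserves the $\NN_k$ active-region structure, so no colouring is needed there; the difficulty is entirely in making the index sets strictly ordered.

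The paper's argument is both simpler and structurally different: it is an induction on the arity $k$, driven by a plateau dichotomy that you are missing. Starting from a Latin division, either some single value in some coordinate covers at least $f(k-1,d)$ whole blocks --- in which case one fixes that coordinate to that value and recurses on a relation of arity $k-1$ --- or every plateau is short. In the latter case, one constructs a new $T'$ by marking single columns (one per step, chosen from active regions so they are pairwise distinct) and discarding $f(k-1,d)+2$ column blocks between marks; the short-plateau assumption forces \emph{every} $B$-coordinate to change between consecutive marks, so $T'$ is a proper ordered subset. The resulting weak projected grid now has a properly ordered column side, which collapses $B$ to a single effective coordinate and reduces to arity $|A|+1 < k$. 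The rank is preserved because the marked columns, coming from active regions of distinct row blocks, are mutually distinct within each cell of the new division. This dichotomy --- ``long plateau: drop a coordinate'' versus ``short plateaus: skip past them'' --- is the crux, and it is absent from your proposal.
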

\begin{proof}
  We fix $d$ and prove the fact by induction on $k$.
  Let $R'$ be a weak projected grid defined by $(A,B,S,T)$,
  and assume that $M_{S,T}$ has grid rank at least $f(k,d)$,
  for a value $f(k,d)$ to be determined.
  Let $(I,J)$ be an $f(k,d)$-division of $M_{S,T}$ such that every
  cell has rank at least $f(k,d)$.
  By Lemma~23 of~\cite{bonnet2024twin}, we may assume that $(I,J)$ is Latin.

  If $k=2$, then only a proper projected grid is possible and there is
  nothing to show. Thus, assume $k>2$ and assume that the statement holds
  for every relation of arity less than $k$. If we can find a single
  value, in any coordinate, that covers at least $f(k-1,d)$ whole
  blocks, then we fix that value and proceed by induction.
  
  Otherwise, select a weak projected grid $M_{S',T'}$ of smaller dimension
  as follows. Assume $|B|>1$, otherwise swap the role of rows and columns
  in the following. Let $A'=A$ and $S'=S$.
  Repeat the following $p=f(|A|+1,d)^2$ times: Mark a column in the left-most column
  block from the active region in row block $1$ that is not constant
  in its active region; discard $f(k-1)+2$ column blocks; mark a
  column in the left-most column block from the active region in row
  block $2$ in the same way; and so on, until $p$ columns in the
  active regions of distinct row blocks have been marked. 
  Repeat this sweep a further $p$ times, starting over from row block $1$.
  Partition the columns in the selection according to the $p$ sweeps, and
  partition the rows so that each new block contains $p$ old blocks. 
  This is a $p$-division of the new weakly projected grid $M_{S',T'}$.
  Furthermore, in every cell of this division, there are marked columns
  from the active regions of $p$ distinct row blocks within its
  participating row blocks. These are mutually distinct, since every
  column is non-constant in its active region and constant in all
  other regions. Clearly, we can choose $f(k,d)$ large enough that
  the sequence of sweeps succeeds. Thus the result holds by induction. 
\end{proof}

\subsubsection{Proof of Theorem~\ref{thm:projected-grid-rank-gr}}

We are now ready to prove Theorem~\ref{thm:projected-grid-rank-gr}. 
We proceed by
induction over somewhat more refined objects we refer to as
\emph{weak projected grids with hidden dimensions}. 
Let $R \subseteq D_1 \times \ldots \times D_k$ be a $k$-ary
relation over ordered domains $D_i$, with projected grid rank at
most $d$, for some $d \in \N$.
Let $A, B \subseteq [k]$ be disjoint subsets,
let $S$ be a weakly ordered subset of $\prod_{i \in A} D_i$
and $T$ a weakly ordered subset of $\prod_{j \in B} D_j$.
Let $C=[k] \setminus (A \cup B)$ and $k'=|C|$,
and define $D'=\prod_{i \in C} D_i$. 
Then $(A,B,S,T)$ define a matrix $M_{S,T}$ with rows indexed
by $S$ and columns indexed by $T$ as follows.
If $C=\emptyset$, then $(A,B,S,T)$ defines a weak projected gid
and $M_{S,T}$ is just its matrix. Otherwise, the entries of $M_{S,T}$ are
\[
  M_{S,T}[u,v] = \{x \in D' \mid (u \cup v \cup x) \in R\}
\]
where $(u \cup v \cup x)$ is shorthand for the $k$-dimensional
vector whose entries are taken from $u$, $v$ and $x$ as appropriate.
We refer to $M_{S,T}$ as a \emph{weak projected grid with $k'$ hidden dimensions}. 
Let $M_\pi=M_{S,T}^\pi \in \{0,1\}^{S \times T}$ be the matrix
with entries $M_\pi[u,v]=[M_{S,T}[u,v] \neq \emptyset]$;
this is the \emph{projection} of $M_{S,T}$. 
We show by induction that there is a function $f(d,k) \in \N$
such that for every weak projected grid $M_{S,T}$ of $R$ with $k'$
hidden dimensions, the grid rank of $M_{S,T}^\pi$ is at most $f(d,k')$.

As a base case, consider $|C|=0$. Then as defined, $M_{S,T}^\pi=M_{S,T}$ is
just a weak projected grid of $R$, hence has grid rank at most $d$
by assumption. Thus the base case holds. Now consider a projected
grid $M_{S,T}$ with $k'$ hidden dimensions for some $k' > 0$,
and assume by induction that for every projected grid $M_{S',T'}$
with $\ell < k'$ hidden dimensions, $M_{S',T'}^\pi$ has grid rank at
most $f(d,\ell)$. We will show that if $M_{S,T}^\pi$ has sufficiently 
large grid rank, then there is a projected grid $M_{S',T'}$
with $k'-1$ hidden dimensions such that $M_{S',T'}^\pi$ has grid rank
more than $d'=f(d,k'-1)$, which finishes the induction step by contradiction. 

Let $r \in C$ be one of the hidden dimensions. 
We split into two cases, depending on how $M_{S,T}^\pi$ interacts with the
coordinate $r$. The induction is wrapped up at the very end of this section.

\paragraph*{Cells with diverse labels.}
Let $I_1 \cup \ldots \cup I_m$ and $J_1 \cup \ldots \cup J_m$
be the division of $M_{S,T}^\pi$ into row blocks and column blocks,
where $m=f(d,k')$. For $\ell \in D_r$ let $R_{r=\ell}$ the $(k-1)$-ary
relation formed by fixing coordinate $r$ to $\ell$. 
Say that label $\ell$ is \emph{diverse in cell $(i,j)$} 
if the projection of $R_{r=\ell}$ (instead of $R$) into
the cell $I_i \times J_j$ contains at least $d'$ distinct rows and
columns. 

\begin{lemma} \label{lemma:induction-rich-cells}
  There is a function $g(d')$ such that the following holds. Assume
  that there are $U, W \subseteq [m]$ with $|U|=|W|=g(d')$
  such that every cell $(i,j) \in U \times W$ has a label $\ell_{i,j}$
  that is diverse in the cell. Then $R$ has a weak projected subgrid
  with at most $k'-1$ hidden dimensions and grid rank at least $d'$. 
\end{lemma}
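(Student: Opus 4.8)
The goal is to extract, from a large family of cells each containing a \emph{single} diverse label, a sub-structure that is again a weak projected grid but with the coordinate $r$ either eliminated (hence one fewer hidden dimension) or turned into an ordered coordinate that can be merged into $A$ or $B$. The main obstacle is that the diverse labels $\ell_{i,j}$ may vary wildly across cells with no monotonicity, so we cannot directly fix a single value of $r$; the Ramsey machinery (Theorems~\ref{thm:twodim-Erdos-Szekeres} and~\ref{thm:product-ramsey}) is what tames this.

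First I would apply a two-step regularization. By the product Ramsey theorem (Theorem~\ref{thm:product-ramsey}), applied to a colouring of pairs $(i,j) \in U \times W$ that records which of the eight ``shapes'' $\NN_{d'}$ the diverse submatrix of $R_{r=\ell_{i,j}}$ realizes inside $I_i \times J_j$ (after passing to a rank-$d'$ Latin division of each cell, using Lemma~23 of~\cite{bonnet2024twin} as in Lemma~\ref{lemma:weak-order-suffices}), we may restrict to $U' \times W'$ with $|U'|=|W'|$ still large where all cells have the same shape. Then I would apply the two-dimensional Erd\H{o}s--Szekeres theorem of Fishburn and Graham (Theorem~\ref{thm:twodim-Erdos-Szekeres}) to the map $(i,j) \mapsto \ell_{i,j}$ (breaking ties arbitrarily to make it injective on a grid of representative indices), obtaining a subcube $U'' \times W''$ on which $\ell_{i,j}$ is monotone-lexicographic: that is, along $U''$ the values are weakly ordered, along $W''$ they are weakly ordered, and one direction dominates. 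This is exactly the ``dimensions behaving in a regularly ordered way'' phenomenon the paper alludes to.

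Given this regularity, I would build the new weak projected grid as follows. Take one representative row $s^*_i$ from each active region in row block $i \in U''$ and one representative column $t^*_j$ from each active region in column block $j \in W''$, chosen so that the entry $M_{S,T}[s^*_i, t^*_j]$ contains the diverse label $\ell_{i,j}$ and the cell submatrix is non-constant there. Now \emph{augment} $S$ by appending the $r$-coordinate: set $S' = \{\, s^*_i \cup \ell_{i,j} : \dots \,\}$ — but since $\ell_{i,j}$ depends on both $i$ and $j$, this does not literally work, so instead I would split into the two cases according to which side the monotonicity of $\ell$ is ``concentrated'' on. If $\ell_{i,j}$ depends (monotonically) essentially only on $i$ after the Erd\H{o}s--Szekeres step — i.e.\ the $W''$-direction is the less significant one and, by a further Ramsey/pigeonhole pass on $W''$, constant-in-$j$ on a large sub-block — then $r$ can be absorbed into $A$: define $S' = \{ s \cup \ell_i : s \in S \text{ restricted to block } i\}$, which is weakly ordered because both $s$ and $\ell_i$ are, and the resulting $M_{S',T}$ is a weak projected grid with $|C|-1$ hidden dimensions whose projection inherits the full rank-$d'$ grid structure from the diverse cells. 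Symmetrically if the dependence is concentrated on $j$, absorb $r$ into $B$. The genuinely mixed case is handled by observing that since each $\ell_{i,j}$ must be consistent across an entire row block (any cell's diverse label, when we fix it, gives a $(k-1)$-ary relation whose projection contains the required rows/columns), one of the two monotonicity directions can always be made dominant and then flattened by one application of Theorem~\ref{thm:product-ramsey} restricting to a sub-block where $\ell$ is constant in the non-dominant coordinate; I expect verifying this flattening carefully to be the most delicate bookkeeping in the argument.

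Finally I would choose the starting sizes: set $g(d')$ large enough that after the product-Ramsey pass we retain a grid of side $N(2, m')$ for the Erd\H{o}s--Szekeres application, and then large enough that the subcube surviving Erd\H{o}s--Szekeres, after the flattening pass, still has side $\geq m = f(d,k')$ — which is possible since all these are fixed functions of $d'=f(d,k'-1)$ and $k'$. The resulting weak projected grid with $k'-1$ hidden dimensions has grid rank at least $d'$ (each of its $\geq m$-by-$\geq m$ cells still contains the diverse-label submatrix of rank $\geq d'$), contradicting the induction hypothesis that such grids have grid rank at most $f(d,k'-1)=d'$ once we pass from $M_{S',T'}$ to $M_{S',T'}^\pi$ — here using that a diverse label forces the projection $M^\pi$ to have $\geq d'$ distinct rows and columns in the cell as well. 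This closes the induction step in this case; the complementary case (cells \emph{without} diverse labels, i.e.\ each cell's projection is ``thin'' in the $r$-direction) is treated separately in the paper.
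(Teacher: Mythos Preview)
Your high-level strategy is right, and using Fishburn--Graham (Theorem~\ref{thm:twodim-Erdos-Szekeres}) on the label array $(i,j)\mapsto\ell_{i,j}$ is a reasonable alternative to what the paper does (the paper instead applies the product Ramsey theorem to the \emph{order type} of $2\times 2$ label submatrices, which has the advantage of handling repeated labels cleanly and directly yielding the four cases constant / row-constant / column-constant / lexicographic). Your first Ramsey pass on the $\NN_{d'}$ shape is unnecessary---nothing downstream uses it---but it is harmless.

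The genuine gap is your treatment of the lexicographic case. Your proposal is to ``flatten'' by passing to a subcube where $\ell$ is constant in the non-dominant coordinate. This is impossible: if the labels are strictly lexicographic (say row-first, so $\ell_{i,j}<\ell_{i',j'}$ whenever $i<i'$, and $\ell_{i,j}<\ell_{i,j'}$ whenever $j<j'$), then no $2\times 2$ subcube has $\ell$ constant in either coordinate, and no amount of Ramsey will help. You cannot absorb $r$ into $A$ with a single label per row block, because the label you need in cell $(i,j)$ genuinely depends on $j$.

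The paper's way out is a non-obvious \emph{sweep} construction that you did not find. Work on a $(d')^2\times(d')^2$ subgrid $L'$ with row-first lexicographic labels. Keep $T'=T$ and absorb $r$ into the row side, but let the label \emph{vary along $S'$}: to rows in old row block number $id'+j$ (for $i,j\in\{0,\dots,d'-1\}$), attach the label $\ell_{id'+j,\,d'j+i}$. Because the order is row-first, as the row-block index increases the label increases regardless of which column block it came from, so this augmented $S'$ is weakly ordered. Now coarsen to a $d'\times d'$ division where super-row-block $i$ covers old row blocks $id',\dots,id'+d'-1$ and super-column-block $j'$ covers old column blocks $d'j',\dots,d'j'+d'-1$. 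In super-cell $(i,j')$, the row block $id'+j'$ carries precisely the label $\ell_{id'+j',\,d'j'+i}$, and old column block $d'j'+i$ lies inside super-column-block $j'$; hence this super-cell contains the original cell $(id'+j',\,d'j'+i)$ viewed through its own diverse label, giving rank $\ge d'$. That is the missing idea: rather than making $\ell$ constant in one direction, you exploit the row-dominance to let the column choice vary freely while still keeping the label sequence monotone along $S'$, and you arrange the indexing so that every super-cell is hit by at least one diagonal ``match'' between the label used and the column block present.
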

\begin{proof}
  Let $L \in D_r^{U \times W}$ be the matrix where $L[i,j]=\ell_{i,j}$
  is a label that is diverse in cell $(i,j)$. By a product Ramsey argument (Theorem~\ref{thm:product-ramsey}),
  if $g(d')$ is large enough then we can find a submatrix $L'$ of $L$
  of dimension $(d')^2 \times (d')^2$ such that every $2 \times 2$
  submatrix of $L'$ has the same \emph{order type}, with respect to
  the six possible comparisons of the labels in the submatrix.
  This will result in one of the following cases for $L'$.
  \begin{itemize}
  \item All labels are constant
  \item All rows are constant, and all labels are strictly ordered along the columns
  \item All columns are constant, and all labels are strictly ordered along the rows
  \item The labels are ordered in one of the eight possible lexicographical
    orders (row-first/column-first; increasing/decreasing along the rows;
    increasing/decreasing along the columns).
  \end{itemize}
  In the first three cases, we can easily sweep through all rows
  and columns of $M_{S,T}$ to create a weakly ordered subset that ``carries'' the
  hidden label dimension along with it -- that is, say that the label
  is constant along every row of $L'$. Then we let $S'=S$ and combine the label from $D_r$
  into $T$ in a weakly ordered way by ensuring, at every column $c$
  from a column belonging to a block $J_j$ indexed by $L'$, 
  that the value used in $D_r$ at $c$ comes from the label used in column $j$ of $L'$.
  Then, in each cell of $M_{S',T'}$ indexed in $L'$, the resulting weak projected grid
  has rank at least $d'$, implying a weak projected grid of rank $d'$.
  Hence, assume that we are processing a subcube $L'$ with lexicographic ordering.
  Assume w.l.o.g.\ that the order is row-first and let $T'=T$.
  Then let $S'$ contain all rows from $S$, combining the label from $D_r$ with the row
  index as follows: up until the end of the block indexed by entry $id'+j$
  of $L'$, let the label take the value from column block $d'j+i$ of $L'$.
  These label values are weakly ordered by assumption. Now consider the
  division of $S' \times T'$ into $d'$ blocks covering $d'$ entries from $L'$ each. 
  For each cell in this division, there is a cell
  $(i,j) \in U \times W$ such that $S'$ visits its rows using label $\ell_{i,j}$. 
  By assumption, this gives rank at least $d'$ within the cell.
\end{proof}

Thus, if we can find a large enough subgrid in the division of $M_{S,T}^\pi$
where every cell has a diverse label, then the induction applies.
We proceed with the non-diverse case. 

\paragraph*{Cells with no diverse labels.}
Now, assume that we can find a large subgrid in which no cell has a
diverse label. We first observe that a cell of high rank and with no
diverse label has a large collection of labels which induce mutually
distinct rows and columns.

\begin{lemma} \label{lemma:give-many-labels}
  There is a function $h(p)$ such that the following holds. Let
  $(i,j)$ be a cell with no diverse label and with at least $h(p)$
  distinct rows and at least $h(p)$ distinct columns, for some $p \in \N$. 
  Then there is a set $L$ of $p$ distinct labels, plus the following:
  \begin{enumerate}
  \item A selection of $p$ rows $r_1 < \ldots < r_p$ in the cell
    and a bijection $\sigma^r \colon [p] \to L$ such that the rows formed
    from the cell by taking row $r_t$ of $R_{r=\sigma^r(t)}$ are
    pairwise distinct.
  \item  A selection of $p$ columns $c_1 < \ldots < c_p$ in the cell,
    and a bijection $\sigma^c \colon [p] \to L$ such that the columns
    formed from the cell by taking column $c_t$ of $R_{r=\sigma^c(t)}$  are pairwise distinct.
  \end{enumerate}
\end{lemma}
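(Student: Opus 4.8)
The plan is to prove Lemma~\ref{lemma:give-many-labels} by extracting a large ``rainbow'' collection of labels from a high-rank, label-diverse-free cell. Fix the cell $(i,j)$ with at least $h(p)$ distinct rows and columns in the projection $M_{S,T}^\pi$. For each label $\ell \in D_r$, restricting to $R_{r=\ell}$ inside the cell yields a $0/1$ submatrix $M_\ell$, and the projection $M_\pi$ inside the cell is the entrywise OR of all the $M_\ell$. Since $M_\pi$ has at least $h(p)$ distinct rows, but no single label is diverse (so each $M_\ell$ has fewer than $d'$ distinct rows and columns), intuitively we must ``spread out'' the distinctness over many labels. First I would make this precise: start with $h(p)$ pairwise-distinct rows of $M_\pi$ in the cell; repeatedly pick a label $\ell_1$ that distinguishes some pair among our selected rows, record $\ell_1$ and the rows it separates, then restrict attention to a large subset of rows that are ``merged'' by $\ell_1$ (i.e.\ agree in $M_{\ell_1}$), and iterate. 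Because each $M_\ell$ has $<d'$ distinct rows, each step keeps a $1/d'$ fraction of the rows, so after $p$ steps we still have $h(p)/(d')^p$ rows provided $h(p)$ is chosen as a suitable tower-type function of $p$ and $d'$; crucially, at each step we must verify that as long as $\geq 2$ distinct projected rows remain, \emph{some} label separates them, which holds because $M_\pi$ is the OR of the $M_\ell$.

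The previous paragraph gives us labels $\ell_1, \ldots, \ell_p$ and, for each $t$, a pair of rows separated by $M_{\ell_t}$ but not by $M_{\ell_1}, \ldots, M_{\ell_{t-1}}$. To turn this into the clean statement of the lemma --- a selection of $p$ rows $r_1 < \ldots < r_p$ and a bijection $\sigma^r \colon [p] \to L$ with the rows $r_t$ of $R_{r=\sigma^r(t)}$ pairwise distinct --- I would argue as follows. For each $t$, pick one of the two separated rows; since the separation at step $t$ is invisible to the earlier labels, one can show by a short greedy/inductive argument (ordering the steps and choosing representatives carefully) that the chosen rows can be made pairwise distinct \emph{as rows of the respective restricted relations} and can additionally be sorted by their index in $S$ after possibly a further thinning by a Ramsey/pigeonhole step to fix the relative order of indices versus the order in which labels were discovered. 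If needed, one extra application of the product Ramsey theorem (Theorem~\ref{thm:product-ramsey}) or a two-dimensional Erd\H{o}s--Szekeres argument (Theorem~\ref{thm:twodim-Erdos-Szekeres}) cleans up the order type so that $r_1 < \ldots < r_p$ in $S$. The column statement is entirely symmetric, run on the column space of the cell, and the two selections can be made to use a common label set $L$ by intersecting two large label sets obtained independently (shrinking $p$ by at most a constant factor, absorbed into $h$).

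The main obstacle I anticipate is bookkeeping the \emph{interaction} between the row-side extraction and the column-side extraction while keeping a single shared label set $L$: naively running the row argument produces one set of $\geq p$ labels and the column argument another, and forcing them to coincide could cost exponentially in $p$ if done crudely. The fix is to run both extractions simultaneously on the same sequence of labels --- at each step choose a label that either separates two surviving rows or two surviving columns, and note that as long as the cell still has $\geq 2$ distinct rows \emph{or} $\geq 2$ distinct columns in the projection, such a label exists (again because the projection is the OR); since no label is diverse, each chosen label merges rows or columns by at most a factor $d'$, so a function $h(p)$ growing like a suitable iterated exponential in $p$ and $d'$ suffices, and after $2p$ steps we have collected $\geq p$ labels witnessing $\geq p$ row-separations and $\geq p$ column-separations. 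The remaining work --- choosing representative rows/columns, ensuring pairwise distinctness of the restricted rows/columns, and sorting the selected indices --- is then routine, using pigeonhole and, if necessary, one Ramsey application to normalize order types; no part of it requires new ideas beyond those already used in Lemma~\ref{lemma:weak-order-suffices} and Lemma~\ref{lemma:induction-rich-cells}.
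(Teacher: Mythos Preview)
Your approach diverges from the paper's in a way that creates a real gap. The paper does not use iterative shrinking; it runs a direct two-pass greedy. First it greedily selects labeled \emph{columns} $(c_1,\ell_1),\ldots,(c_q,\ell_q)$ that are pairwise distinct as vectors with the $c_i$ strictly increasing (this works because if every remaining labeled column were among the $t-1$ already chosen, the projection would have at most $2^{t-1}$ distinct columns past $c_{t-1}$). Since no label is diverse, each label contributes fewer than $d'$ distinct columns, so passing to a subsequence yields distinct labels as well. Now it restricts the cell to the OR over \emph{only} this label set $L$; the $q'$ distinct labeled columns force at least $\log_2 q'$ distinct rows in this restricted matrix, and the same greedy, now run on rows with labels drawn from $L$, produces the row selection. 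The common label set comes for free from the two-pass structure, so your third paragraph's simultaneous extraction is unnecessary.

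The substantive issue is your second paragraph. Your shrinking produces labels $\ell_1,\ldots,\ell_p$ and, at each step $t$, a pair $a_t,b_t$ with row $a_t$ of $M_{\ell_t}$ differing from row $b_t$ of $M_{\ell_t}$. But the lemma asks for rows $r_1<\ldots<r_p$ and a bijection $\sigma^r$ such that the vectors $(\text{row }r_t\text{ of }M_{\sigma^r(t)})$ are pairwise distinct \emph{as vectors in $\{0,1\}^{\text{columns}}$}. Picking $r_t\in\{a_t,b_t\}$ gives you $p$ two-element sets of vectors, and you need a system of distinct representatives. Nothing in the shrinking controls \emph{which} vectors these are: it is consistent with your construction that every pair $\{(\text{row }a_t,M_{\ell_t}),(\text{row }b_t,M_{\ell_t})\}$ equals the same two-element set $\{u,v\}$, in which case no SDR exists for $p\geq 3$. (One can argue this forces few distinct projection rows, but that argument has to be made, and it is not the ``short greedy/inductive argument'' you claim.) The paper's greedy sidesteps this entirely by selecting at each step a labeled row that is already distinct from all previously selected labeled rows, rather than a label that merely separates some pair.
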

\begin{proof}
  We note the following.
  The cell in $M_{S,T}^\pi$ is the Boolean sum of the matrices formed
  from $R_{r=\ell}$ over all possible labels $\ell$ (by definition of $M_{S,T}^\pi$).
  For a row $r$ in the cell, and a value $\ell \in D_r$, let $(r,\ell)$
  denote row $r$ in cell $(i,j)$ when defined from $R_{r=\ell}$ instead of $R$,
  and define $(c,\ell)$ analogously for a column $c$. If the rank
  $h(p)$ is sufficiently large, then for any $q$ we can find a
  sequence $(c_1,\ell_1)$, \ldots, $(c_q,\ell_q)$ of pairwise distinct columns
  where the $c_i$ are strictly increasing, and since no label is
  diverse in the cell we may also assume that the labels $\ell_i$ are
  pairwise distinct. Now consider the cell formed by only the labels
  $\ell_i$ selected in this process. This matrix has rank growing with
  $q$, hence we may repeat the argument and find, within this same
  label set, a sequence $(r_i,\ell_i')$ of distinct rows where $r_i$
  is strictly increasing and the labels $\ell_i'$ pairwise distinct. Return this final set of labels $\ell_i'$ as $L$. 
\end{proof}

Let us also make the following observation. The \emph{span} of a set of numbers $I$ is the interval $[\min I, \max I]$.

\begin{lemma} \label{lemma:make-disjoint-spans}
  Let $s, t \in \N$ and let $L_1, \ldots, L_t$ be a collection of
  label sets from $\N$, each with at least $st$ labels.
  There are label sets $L_i' \subseteq L_i$ with mutually disjoint
  spans such that $|L_i'|=s$ for each $i$.
\end{lemma}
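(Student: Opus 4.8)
\textbf{Proof plan for Lemma~\ref{lemma:make-disjoint-spans}.}
The statement is a purely combinatorial fact about intervals, and the natural approach is a greedy sweep over the label sets ordered by their smallest remaining elements, combined with a pigeonhole argument to guarantee that enough ``small'' labels survive in each set. First I would fix $s,t$ and observe that each $L_i$ has at least $st$ labels, so within each $L_i$ we may split into $t$ consecutive blocks of size $s$ (taking the labels of $L_i$ in increasing order and grouping the first $s$, the next $s$, and so on). The plan is to pick, for each $i$, exactly one of these $t$ size-$s$ blocks to be $L_i'$, and to do so in a way that the chosen blocks have pairwise disjoint spans.

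The key step is to process the sets in an order that makes the greedy choice work. Concretely, I would repeatedly select the set whose currently smallest available block has the smallest maximum element, commit that block as its $L_i'$, and then for every other set discard all blocks whose span overlaps $[\min L_i', \max L_i']$. The crucial bookkeeping point is that committing one block of one set can ``kill'' at most one block of any other set: since the committed block is an interval and the blocks within a fixed $L_j$ are listed in increasing order with disjoint index ranges, an interval can intersect the span of at most one such block only if that block straddles it, but more carefully a single committed interval $[\min L_i',\max L_i']$ can overlap the spans of at most \emph{two} consecutive blocks of another set $L_j$ (one whose span ends inside the interval and one whose span starts inside it) — and by always picking the block with the smallest maximum first, I can arrange that we only ever kill the at-most-one block of each other set whose span the committed interval reaches into. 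After $t$ rounds every set has been committed, and since at each of the $\le t$ rounds each surviving set loses at most one of its $t$ blocks, every set still has a block available when its turn comes. This yields $L_1',\ldots,L_t'$ with $|L_i'|=s$ and pairwise disjoint spans.

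I expect the main obstacle to be pinning down the exact discard rule so that the ``at most one block killed per set per round'' invariant is literally true; the sloppy version (an interval meets two consecutive blocks) would only give $|L_i|\ge 2st$, which is weaker than claimed. The fix is to be slightly cleverer: sort all the size-$s$ blocks across all sets by their \emph{maximum element}, then greedily accept a block if it is the first block (in this global order) belonging to a not-yet-committed set whose span is disjoint from all previously accepted blocks' spans. Because we sweep by increasing maximum, an accepted block's span lies entirely to the left of every not-yet-examined block's maximum, so it can only conflict with a later block of set $L_j$ if that later block's \emph{minimum} falls inside the accepted span; and for a fixed $L_j$ the blocks have strictly increasing minima, so among the blocks of $L_j$ not yet passed over, at most one has its minimum inside any single accepted interval. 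Hence each acceptance removes at most one future candidate block from each other set, and after processing we will have accepted one block from each of the $t$ sets, as each set entered with $t$ blocks and lost at most one per each of the at most $t-1$ earlier acceptances. This gives the lemma; I would then note that the same argument immediately extends to the use of this lemma later (selecting label sets with disjoint spans so that the hidden-dimension coordinate can be merged into a weakly ordered subset without creating order conflicts).
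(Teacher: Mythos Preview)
Your proposal (the fix via global max-sorting) is correct, with one minor imprecision: the reason an accepted block $B$ with span $[a,b]$ conflicts with at most one \emph{future} block of $L_j$ is not that the future block's minimum lies \emph{inside} $[a,b]$, but rather that any conflicting future block has $\max \geq b$ (by the sweep order) and $\min \leq b$ (by the overlap condition), so its span contains $b$; since the blocks of $L_j$ have pairwise disjoint spans, at most one of them does. With that correction your charging goes through: if $L_j$ were never committed, its $t$ skipped blocks would be charged to $t$ distinct accepted blocks, yet only $t-1$ other sets can supply accepted blocks.

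The paper's proof is shorter and avoids the block partition altogether. It sweeps $\N$ from left to right at the element level: take the smallest $x$ such that some $L_i$ already has $s$ elements $\leq x$, commit those $s$ elements as $L_i'$, delete all elements $\leq x$ from every other set, and repeat. By minimality of $x$, every remaining set had at most $s-1$ elements strictly below $x$, hence loses at most $s$ elements in the round; thus after $t-q$ rounds each of the $q$ surviving sets still has at least $qs$ elements and the process never stalls. Disjointness of spans is automatic, since each committed prefix ends at its threshold and the next round only sees elements strictly beyond it. Your block-level greedy reaches the same conclusion but with more bookkeeping (pre-partitioning, a global sort, and a charging argument); the element-level sweep is the cleaner route and makes the tight constant $st$ fall out with no effort.
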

\begin{proof}
  Repeat the following operation: Let $x$ be the smallest number such that
  one label set $L_i$ contains $s$ numbers $y \leq x$. Replace that label set
  $L_i$ by its prefix up to $x$, name this $L_i'$ and put it aside,
  and remove values up to $x$ from all other label sets. In every step, at
  most $s$ labels are removed from every remaining set. Therefore,
  when $q \leq t$ label sets remain, each contains at least $qs$ labels
  and the process can continue until termination. 
\end{proof}

We now give the inductive argument for the many-labels case.

\begin{lemma} \label{lemma:induce-poor-labels}
  There is a function $h(d')$ such that the following holds.
  Assume that the division of $M_{S,T}^\pi$ admits a $h(d') \times h(d')$ grid
  where the cells $(i,j)$ are given labels sets $L(i,j)$ with mutually
  disjoint spans, with $|L(i,j)| \geq d'$ for all $(i,j)$. 
  Then there is a weak projected subgrid of $R$ with at most $k'-1$
  hidden dimensions and grid rank at least $d'$. 
\end{lemma}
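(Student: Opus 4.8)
The plan is to reduce the number of hidden dimensions by one, carrying the chosen hidden coordinate $r$ into the rows (or, symmetrically, the columns) of a new weak projected grid while keeping its rank large. First I would apply Lemma~23 of~\cite{bonnet2024twin} to assume that the $h(d') \times h(d')$ division of $M_{S,T}^\pi$ is \emph{Latin}, so that each cell $(i,j)$ has a full-rank active region, and by a pigeonhole step over the (boundedly many) Latin row-permutations of the row blocks together with a matching reindexing of the column blocks, I may assume that the active region of cell $(i,j)$ lies in the $j$-th row sub-block $I_{i,j}$ of $I_i$ (with $I_{i,1}<\cdots<I_{i,d'}$ contiguous). Applying Lemma~\ref{lemma:give-many-labels} inside each active region then yields, for every cell, a label set $L(i,j)$ (of size at least $d'$, with mutually disjoint spans by hypothesis), a selection of $d'$ rows inside $I_{i,j}$ together with a bijection $\sigma^r_{ij}$ onto $L(i,j)$, and dually $d'$ columns with a bijection $\sigma^c_{ij}$.

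Next I would normalise the combinatorics by successive passages to sub-grids, starting from $h(d')$ chosen large enough to absorb all the losses below and still leave a $d' \times d'$ sub-grid. Apply the two-dimensional Erd\H{o}s--Szekeres theorem (Theorem~\ref{thm:twodim-Erdos-Szekeres}) to the injective map $(i,j) \mapsto \min L(i,j)$ so that the family $\{L(i,j)\}$ becomes monotone lexicographic; by the symmetry of the construction in rows versus columns I may assume it is ``row-block first'', i.e.\ $L(i,1) \prec \cdots \prec L(i,d') \prec L(i+1,1) \prec \cdots$ in span order. A one-dimensional Erd\H{o}s--Szekeres step on each $\sigma^r_{ij}$ then restricts it (and $L(i,j)$ and the chosen rows) to a sub-selection on which it is monotone, and a product Ramsey argument (Theorem~\ref{thm:product-ramsey}) colouring each cell by the direction of $\sigma^r_{ij}$ passes to a monochromatic sub-grid; after possibly reversing the orders of the row blocks and of the column blocks I may assume all $\sigma^r_{ij}$ are increasing and compatible with the span order.

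Then I would absorb $r$ into the rows: set $A' = A \cup \{r\}$, $B' = B$, keep $T' = T$ restricted to $J_1,\ldots,J_{d'}$, and let $S'$ consist, for each cell $(i,j)$, of its $d'$ chosen rows tagged in coordinate $r$ by $\sigma^r_{ij}$, listed in the order: row block $i$ ascending; within $I_i$, cell index $j$ ascending; within a cell, along $\sigma^r_{ij}$. The Latin/disjoint-span bookkeeping is exactly what makes $S'$ a \emph{weakly ordered} subset: the original $A$-coordinates are weakly ordered because within $I_i$ the cells occupy the disjoint contiguous blocks $I_{i,1}<\cdots<I_{i,d'}$ and $I_1<\cdots<I_{d'}$, while the coordinate $r$ is weakly ordered because its values increase inside each cell and each $L(i,j)$ precedes the next in the sweep. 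Using $M_{S',T'}^\pi[(u,\ell),v] = [\,\ell \in \pi_r M_{S,T}[u,v]\,]$, the $d'$ new rows contributed by cell $(i,j)$, read over the column block $J_j$, are precisely the rows ``row $r_t$ of $R_{r=\sigma^r_{ij}(t)}$'' that Lemma~\ref{lemma:give-many-labels} guarantees pairwise distinct; taking the new row blocks to be the contributions of $I_1,\ldots,I_{d'}$ and the new column blocks to be $J_1,\ldots,J_{d'}$, every new cell then has at least $d'$ pairwise distinct rows, so $M_{S',T'}^\pi$ is a weak projected grid with $k'-1$ hidden dimensions of grid-rank at least $d'$ (up to the standard relation between a cell's number of distinct rows and its rank, absorbed into the final function). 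If the lexicographic order found above is ``column-block first'', I instead carry out the mirror construction, absorbing $r$ into the columns via $\sigma^c_{ij}$.

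The step I expect to be the main obstacle is precisely this last piece of bookkeeping: producing a \emph{single} total order on $S'$ that is simultaneously weakly monotone in the original coordinates and in the freshly added hidden coordinate, while every cell of the induced division still inherits a full set of $d'$ pairwise distinct rows. The Latin-division normalisation (plus the pigeonhole on row-permutations) is what keeps the distinguished rows of cells sharing a row block from interleaving, and the two-dimensional Erd\H{o}s--Szekeres step is what aligns the a priori scattered label spans with that row order; forcing all the orientation choices to be mutually consistent is a finite case analysis, absorbed into the product Ramsey step and into the definitions of $h(d')$ and of the function $f(d,k)$ in Theorem~\ref{thm:projected-grid-rank-gr}.
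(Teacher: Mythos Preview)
Your argument has a genuine gap in the Latin normalisation step. You claim a ``pigeonhole step over the (boundedly many) Latin row-permutations'' lets you assume the active region of cell $(i,j)$ lies in the $j$-th row sub-block $I_{i,j}$. But in a Latin $h(d')$-division, each row block $I_i$ carries its own permutation $\pi_i$ of $[h(d')]$ specifying which sub-block is active for each column, and there are $h(d')!$ such permutations---not boundedly many. Pigeonholing over $h(d')$ row blocks cannot produce even two with the same $\pi_i$ once $h(d') \geq 3$, and reindexing the column blocks only fixes one $\pi_i$ at a time. Without this normalisation, the rows you select from cells $(i,1),\ldots,(i,d')$ (all lying in the same $I_i$) can interleave arbitrarily, so your listed order ``row block $i$, then $j$, then $t$'' is \emph{not} weakly monotone in the $A$-coordinates, and $S'$ fails to be a weakly ordered subset. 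This is precisely the ``main obstacle'' you flag at the end, and the mechanism you propose for handling it does not work.

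The paper's proof avoids this entirely by a simpler construction: it leaves $S' = S$ unchanged and absorbs the hidden coordinate $r$ into the \emph{column} side $T'$, selecting one (column, label) pair per column block via a sweep. After the two-dimensional Erd\H{o}s--Szekeres step yields a $(d')^3 \times (d')^3$ subgrid with column-major lexicographic ordering of the label spans, any label from any $L(\cdot,c)$ precedes any label from any $L(\cdot,c+1)$; hence the labels attached along the sweep are automatically monotone in the column index, irrespective of which row block each label came from. Grouping into $d'$ large blocks of $(d')^2$ column blocks each, the sweep visits $d'$ small cells per large block, and since each $L(i,j)$ offers $d'$ labels giving pairwise distinct columns (the Lemma~\ref{lemma:give-many-labels} data, which the calling context has already supplied), one can always pick a label yielding a column distinct from the previously chosen ones in the current large block. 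No alignment of active regions is needed, and none of your additional product-Ramsey or direction-normalisation steps are required.
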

\begin{proof}
  Since the label sets have disjoint spans, we can treat them as
  distinct integers, and use the two-dimensional Erd\H{o}s-Szekeres
  theorem (Theorem~\ref{thm:twodim-Erdos-Szekeres}) to find a subgrid of dimension $(d')^3 \times (d')^3$ in
  which the label intervals are lexicographically ordered. 
  Assume w.l.o.g.\ that the major order is along the columns (i.e.,
  for every column $c$, the intervals in column $c+1$ all come after
  all intervals in column $c$). Similarly to Lemma~\ref{lemma:induction-rich-cells},
  we may now form a $d'$-division consisting of $(d')^2$ blocks each,
  such that in each such large block, at least $d'$ distinct cells
  have been marked in a sweep. The only distinction is, when selecting
  a label $\ell \in D_r$ to use for coordinate $r$ in the sweep, 
  then we need to select a label $\ell$ from the corresponding set of
  labels $L(i,j)$ such that the resulting column is distinct from all previously chosen
  labelled columns in the large block. This is possible since each
  cell $(i,j)$ contains $d'$ distinct labels producing pairwise
  distinct columns; hence at every stage, one available label will
  give a column distinct from previously selected columns in the large
  block, and the disjoint spans guarantee that all selections yield
  a weakly ordered subset. Thus we have produced a weak projected grid
  with a $d'$-division of rank at least $d'$ in every cell, and with
  at most $k'-1$ hidden dimensions.
\end{proof}

\paragraph*{Conclusion of proof.}
We can now finish the induction. Let $d''=f(k,d')$ be sufficiently large,
and let $(I,J)$ be a $d''$-division of $M_{S,T}^\pi$ 
with rank $d''$ in each cell.
Define an auxiliary matrix $H \in \{0,1\}^{d'' \times d''}$,
where $H[i,j]$ indicates whether cell $I_i \times J_j$ has a diverse
label or not. By the basic product Ramsey argument, for any $p=g(k,d')$
we may assume that we have a monochromatic subcube $U \times W$ of $H$
with $|U|=|W|=p$. 
If this is a subcube where every cell has a diverse label, we invoke
Lemma~\ref{lemma:induction-rich-cells}, ensure $p \geq g(d')$ for the
function $g$ from this lemma, and produce a weak projected grid with
grid rank at least $d'$.
Otherwise, $U \times W$ forms a subcube where no cell has a diverse
label. Let $h(d')$ be the function from Lemma~\ref{lemma:induce-poor-labels}.
By beginning with a size $|U|=|W|=p$ larger than $h(d')$, we can use
Lemma~\ref{lemma:give-many-labels} to give label sets of length $h(d')^2d'$,
then restrict $U \times W$ to a smaller subcube $U' \times W'$
of dimension $h(d') \times h(d')$, 
then apply Lemma~\ref{lemma:make-disjoint-spans} to produce label sets
of length $d'$ with mutually disjoint spans. 
Lemma~\ref{lemma:induce-poor-labels} now finishes the induction.

\section{Discussion}
\label{sec:discussion}

We introduced the $\udcsp(\Gamma, \MM)$ framework to represent CSPs with few variables but with unbounded domain size. These problems have appeared in the literature as important, intermediate problems in both hardness and tractability contexts, but has not been systematically studied as an independent formalism. To be able to obtain general results we introduced a novel algebraic approach based on partial multifunctions and proved that the polymorphism side boiled down to partial multifunctions (definable as patterns) interpreted over different domains with the help of the map family $\MM$. In the monotone world these patterns turned out to be particularly simple and resulted in {\em order polymorphisms} where the ordering of the input arguments is sufficient to determine the output.

For $\ucsp$ and $\ohcsp$ we gave complete (parameterized and classical) complexity dichotomies. Here, the $\ohcsp$ problem turned out to have strong similarities with the problem of finding a solution of weight $k$. Curiously, the parameterized dichotomy turned out to be equivalent and given by weak separability, but the classical complexity differed, showing that the two problems are indeed computationally different. We continued with the richer $\mcsp(\Gamma)$ problem where we first gave a classical complexity dichotomy via the classical $\min, \max$ and $\median$ operations. In the parameterized setting we proved W[1]-hardness if $\Gamma$ does not have the connector property, and proved a complementary FPT result for binary base languages $\Gamma$. However, even the binary setting is rather expressive and we exemplify this by showing how one of the hardest steps in the \textsc{Boolean MinCSP} dichotomy by Kim et al.~\cite{KimKPW23fa3} can be simplified by a reduction to an $\mcsp$ problem. 

We conclude that the $\udcsp(\Gamma,\mo)$ framework is extremely expressive and, based on the connections we already made in this paper, seems to be just the right blend since it is rich enough to connect to powerful methods and deep questions in parameterized complexity while still being tameable with universal algebra.
We close the paper by discussing some future research directions.

\paragraph{A formal connection to parameterized \textsc{MinCSP}?}

Let us begin with the question that motivated the study. 
As mentioned in the introduction, there appears to be a connection
between the parameterized complexity of \textsc{MinCSP$(\Gamma)$}
parameterized by the number of false constraints
and the complexity of $\udcsp(\Gamma',\MM)$-problems expressible in the
\textsc{MinCSP}, for suitable $\Gamma'$ and $\MM$ (especially $\MM=\mo$).
It would be very interesting to work out this connection more formally.
For example, is there a direct algebraic proof that if $\Gamma$
belongs to one of the FPT classes for \textsc{Boolean MinCSP$(\Gamma)$},
then $\Gamma$ cannot express a suitably defined FPT-reduction
from any problem $\mcsp{\Gamma'}$ where $\Gamma'$ does not have the
connector property? 

We note that the classical complexity of \textsc{MinCSP} and \textsc{Valued CSP}
problems is characterized algebraically by \emph{fractional} or
\emph{weighted polymorphisms}~\cite{FullaZ16galois},
which are quite different from the partial polymorphism patterns
considered in this paper. Furthermore, they are too coarse-grained to 
distinguish between FPT and W[1]-hard cases of \textsc{MinCSP$(\Gamma)$}.
Consequentially, existing parameterized complexity characterizations
of \textsc{MinCSP$(\Gamma)$}-type problems have all been non-algebraic in
nature, either purely combinatorial in their dichotomy descriptions~\cite{ChitnisEM17,KimKPW23fa3,OsipovW23equality},
or giving only a partial dichotomy (e.g., classifying a finite list of cases)~\cite{Dabrowski:etal:ipec2023,OsipovPW24pointalgebra},
or else only considering the complexity up to constant-factor FPT approximations~\cite{BonnetEM16}.
The latter is a more well-behaved question, algebraically speaking,
than the existence of an exact FPT-algorithm since constant-factor FPT approximability is preserved under efpp-definitions.
More algebraically guided tools are expected to be required for FPT/W[1]-dichotomies
for \textsc{MinCSP$(\Gamma)$} going much beyond this list. 

For a particularly intriguing case, \textsc{Temporal CSPs} are a class
of infinite-domain CSPs over the ordered domain $(\Q, <)$.
As with finite-domain CSPs, there is a notion of a language-restricted
version \textsc{Temporal CSP$(\Gamma)$} for a constraint language $\Gamma$, 
in which case $\Gamma$ contains a finite set of relations
definable over $(\Q,<)$ in first-order logic. 
The set of tractable temporal CPSs was characterized by Bodirsky and K\'ara~\cite{BodirskyK10temporal}
(covered in a more modern way in the book of Bodirsky~\cite{Bodirsky:Book}).
A very natural follow-up question is to determine the temporal
languages $\Gamma$ such that the \textsc{MinCSP} over $\Gamma$ is FPT
parameterized by the solution cost. 
However, this task has proven highly challenging so far, as temporal CSPs
have significant structural complexity. 
The results of this paper should help in this regard. 

\paragraph{A dichotomy for non-binary relations.}

The most intriguing, and also the hardest, open question is whether our dichotomy for $\mcsp$ can be strengthened to arbitrary non-binary relations, and since our hardness condition is still valid the missing case is when $\Gamma$ has the connector property. Classical graph parameters such as rank-width or clique-width are not useful in this setting since we have proved that there exist families of languages with the connector property that have unbounded rank-width. We therefore investigated a higher-arity generalization of grid-rank, {\em projected grid-rank}, and proved that   the connector property coincides with bounded projected grid-rank. Through a non-trivial Ramsey argument we proved that  projected grid-rank coincides with bounded twin-width of binary projections. Constructively using this property seems hard, however, without simultaneously generalizing twin-width and contraction sequences to higher-arity relations, and via our Ramsey argument we know that we cannot actually escape twin-width. For a concrete example where the complexity of $\mcsp(\Gamma)$ is unknown, consider e.g. $\Gamma = \{R_{1/3}\}$ for $R_{1/3} = \{001, 010, 100\}$. This relation has the connector property but is non-binary, and could be a manageable starting point towards resolving the full dichotomy.


\paragraph{More classes of $\udcsp(\Gamma, \MM)$ problems.}
We concentrated on three map families: $\all$, $\mo$, and Boolean $\oh$. What other map families result in interesting $\udcsp(\Gamma, \MM)$ problems? For example, it seems likely that non-Boolean generalizations of $\oh$ could lead to problems capturing CSPs parameterized by solution size~\cite{Bulatov:Marx:sicomp2014}. 
Another application concerns the signed logics from Section~\ref{sec:applications}. They have been studied
with other partial orders in mind than those that we have considered, cf. \cite[Section 7]{Beckert:etal:survey2000} and \cite[Section 7]{Chepoi:etal:ejc2010}. Typical examples are the full class of lattices and subclasses such as distributive and modular lattices.
For even more general classes it is easy to formulate map families with respect to an algebraic structure, e.g., by associating the base domain $\numdom{d}$ and the input domain $\numdom{n}$ with groups (or some other algebraic structures) and then consider all group homomorphisms. This question can also be generalized with category theory where the basic algebraic objects (e.g., groups or orders) would correspond to categories, and the map family would correspond to sets of morphisms.

\section*{Acknowledgements}

Peter Jonsson and Jorke de Vlas are partially supported  by the Swedish Research Council (VR) under grant 2021-04371. Victor Lagerkvist is partially supported by VR under grant VR-2022-03214.
The authors wish to thank \'Edouard Bonnet, Marcin Pilipczuk and Szymon Toru\'nczyk 
for helpful discussions about non-binary twin-width notions.

\bibliographystyle{abbrv}
\bibliography{bib}

\end{document}